\documentclass[a4paper, 11pt]{article}
\setlength{\textwidth}{15.93cm} \setlength{\textheight}{24.62cm}
\oddsidemargin=-0.0cm \topmargin=-1.3cm

\usepackage{amsthm,amsmath,amssymb}
\usepackage{subcaption}
\usepackage[usenames,dvipsnames]{xcolor}
\usepackage[pdftex,breaklinks,colorlinks,
    citecolor={BlueViolet}, linkcolor={Blue},urlcolor=Maroon]{hyperref}
\usepackage{tikz,pgfkeys}
\usepackage{tkz-graph}
\usetikzlibrary{arrows.meta}
\usetikzlibrary{quotes}
\usetikzlibrary{decorations.pathreplacing}
\usepackage{graphicx}
\usepackage{charter,eulervm}%
\usepackage{multirow,booktabs,array}
\usepackage{tabularx,enumerate}
\usetikzlibrary{calc}
\usepackage[final,expansion=alltext,protrusion=true]{microtype}
\usepackage{tikz}

\newenvironment{tightcenter}
 {\parskip=0pt\par\nopagebreak\centering}
 {\par\noindent\ignorespacesafterend}

\usepackage{xargs}
\usepackage{xifthen}
\usepackage{framed}

\usepackage{ctable}

\newlength{\RoundedBoxWidth}
\newsavebox{\GrayRoundedBox}
\newenvironment{GrayBox}[1]%
   {\setlength{\RoundedBoxWidth}{.75\textwidth}
    \def\boxheading{#1}
    \begin{lrbox}{\GrayRoundedBox}
       \begin{minipage}{\RoundedBoxWidth}%
   }{%
       \end{minipage}
    \end{lrbox}%
    \begin{tightcenter}%
    \begin{tikzpicture}%
       \node(Text)[draw=black!20,fill=white,rounded corners,%
             inner sep=2ex,text width=\RoundedBoxWidth]%
             {\usebox{\GrayRoundedBox}};
        \coordinate(x) at (current bounding box.north west);
        \node [draw=white,rectangle,inner sep=3pt,anchor=north west,fill=white] 
        at ($(x)+(6pt,.75em)$) {\boxheading};
    \end{tikzpicture}
    \end{tightcenter}\vspace{0pt}%
    \ignorespacesafterend
}    

\newenvironment{problem}[2][]{\noindent\ignorespaces%
                                \FrameSep=6pt%
                                \parindent=0pt%
                \vspace*{-.5em}
                \ifthenelse{\isempty{#1}}{%
                  \begin{GrayBox}{#2}%
                }{%
                  \begin{GrayBox}{#2 parameterized by~{#1}}%
                }
                \newcommand\Prob{Output:}%
                \newcommand\Input{Input:}%
                \begin{tabular*}{\textwidth}{@{\hspace{.1em}} >{\itshape} p{1.2cm} p{0.85\textwidth} @{}}%
            }{
                \end{tabular*}%
                \end{GrayBox}%
                \vspace*{-.5em}
                \ignorespacesafterend
            }   

\theoremstyle{plain} 
\newtheorem{theorem}{Theorem}[section]
\newtheorem{lemma}[theorem]{Lemma}
\newtheorem{corollary}[theorem]{Corollary}
\newtheorem{proposition}[theorem]{Proposition}

\tikzstyle{filled vertex}  = [{circle,blue,draw,fill=black!50,inner sep=1pt}]  
\tikzstyle{uvertex} = [{violet, draw, fill=violet!50,inner sep=2pt}]  

\newcommand{\stpath}[2]{$#1$-$#2$ path}
\newcommand{\stsep}[2]{$#1$-$#2$ separator}
\newcommand{\comment}[1]{\textbackslash\!\!\textbackslash {\em #1}}

\title{Enumerating Maximal Induced Subgraphs}
\author{Yixin Cao\thanks{Department of Computing, Hong Kong Polytechnic University, Hong Kong, China. \href{mailto:yixin.cao@polyu.edu.hk} {\tt yixin.cao@polyu.edu.hk}.} }
\date{}

\begin{document}
\maketitle

\begin{abstract}
  Given a graph $G$, the maximal induced subgraphs problem asks to enumerate all maximal induced subgraphs of $G$ that belong to a certain hereditary graph class.  While its optimization version, known as the minimum vertex deletion problem in literature, has been intensively studied, enumeration algorithms are known for a few simple graph classes, e.g., independent sets, cliques, and forests, until very recently [Conte and Uno, STOC 2019].  There is also a connected variation of this problem, where one is concerned with only those induced subgraphs that are connected.  We introduce two new approaches, which enable us to develop algorithms that solve both variations for a number of important graph classes.  A general technique that has been proved very powerful in enumeration algorithms is to build a solution map, i.e., a multiple digraph on all the solutions of the problem, and the key of this approach is to make the solution map strongly connected, so that a simple traversal of the solution map solves the problem.  We introduce retaliation-free paths to certificate strong connectedness of the solution map we build.  Generalizing the idea of Cohen, Kimelfeld, and Sagiv [JCSS 2008], we introduce the $t$-restricted version, $t$ being a positive integer, of the maximal (connected) induced subgraphs problem, and show that it is equivalent to the original problem in terms of solvability in incremental polynomial time.  Moreover, we give reductions between the two variations, so that it suffices to solve one of the variations for each class we study.  Our work also leads to direct and simpler proofs of several important known results.
\end{abstract}

\section{Introduction}\label{sec:intro}

A vertex deletion problem asks to transform an input graph to a graph in a certain graph class $\mathcal{P}$ by deleting vertices.
Many classic optimization problems belong to the family of vertex deletion problems, and their algorithms and complexity have been intensively studied.
More often than not, the graph class is \emph{hereditary}, i.e., closed under taking  induced subgraphs.  Examples include complete graphs, edgeless graphs, acyclic graphs, bipartite graphs, planar graphs, and perfect graphs.
  For a  {hereditary} graph class, this problem is either NP-hard or trivial~\cite{lewis-80-node-deletion-np}.
An equivalent formulation of a vertex deletion problem toward $\mathcal{P}$ is to ask for a maximum induced subgraph that belongs to $\mathcal{P}$.
A plethora of algorithms, including approximation algorithms~\cite{lund-93-approximation-maximum-subgraph, cao-15-interval-deletion, cao-17-unit-interval-editing, cao-17-cluster-vertex-deletion, jansen-16-approximation-and-kernelization-chordal-deletion, agrawal-19-chordal-deletion}, exact algorithms~\cite{fomin-15-large-induced-subgraphs, fomin-19-exact-via-monotone-local-search}, and parameterized ones~\cite{cai-96-hereditary-graph-modification, cao-16-chordal-editing, cao-15-interval-deletion, cao-17-unit-interval-editing, jansen-16-approximation-and-kernelization-chordal-deletion, agrawal-19-chordal-deletion, agrawal-19-kernel-interval-vertex-deletion}, have been proposed for both formulations.

Yet another approach toward this problem is to enumerate, or generate or list, inclusion-wise maxim\emph{al} induced subgraphs that belong to the graph class $\mathcal{P}$, hence called the \emph{maximal induced $\mathcal{P}$ subgraphs} problem.
Trivially, one can always use an enumeration algorithm to solve the optimization version of the same problem.  A classical example of nontrivial use is to solve the coloring problem by enumerating maximal independent sets of the input graph \cite{lawler-76-chromatic,eppstein-03-coloring,byskov-04-coloring,bjorklund-09-set-partitioning-via-inclusion-exclusion}.  Indeed, the enumeration of maximal independent sets and the enumeration of maximal cliques, practically the same, are both well-studied classic problems \cite{akkoyunlu-73-enumeration-cliques, chiba-85-arboricity-and-subgraph-listing, makino-04-enumerating-maximal-cliques}.  They have also been used in solving the connected vertex cover problem and the edge dominating set problem \cite{fernau-06-edge-dominating-set}.
  In general, the {maximal induced $\mathcal{P}$ subgraphs} problem is motivated by the intrinsic difficulty in formulating a combinatorial optimization problem: There are always factors that are ill characterized or even omitted in the formulation \cite{eppstein-98-k-shortest-paths, marino-15-analysis-enumeration}.\footnote{After all, how many times you click ``I'm Feeling Lucky'' on Google.com?}
  This is particularly the case for the vertex deletion problems, of which the primary applications are the processing of noisy data, where modifications to a graph are meant to exclude outliers or to fix noise.
  
  Motivations of {maximal induced $\mathcal{P}$ subgraphs} problems also come from database theory \cite{cohen-08-all-maximal-induced-subgraphs}, where one is usually concerned with only induced $\mathcal{P}$ subgraphs that are connected.  Of any hereditary graph class $\mathcal{P}$, we can define a subclass by allowing only connected graphs in $\mathcal{P}$.  With very few exceptions, this naturally defined class is not hereditary, and hence this variation, called the \emph{maximal connected induced $\mathcal{P}$ subgraphs} problem, poses different challenges.  Indeed, as we will see, it is somewhat more difficult than the original one.

  Let $n$ denote the number of vertices in the input graph.  Since there might be an exponential number (on $n$) of \emph{solutions}, maximal vertex sets inducing subgraphs in $\mathcal{P}$, care needs to be taken when we talk about the running time of an enumeration algorithm.
For example, in a graph consisting of disjoint triangles, there are $3^{n/3}$ maximal independent sets.
 Johnson et al.~\cite{johnson-88-generating-maximal-independent-sets} defined three complexity classes for enumeration algorithms, namely, polynomial total time (polynomial on $n$ and the total number of solutions), incremental polynomial time (for all $s$, the time to output the first $s$ solutions is polynomial on $n$ and $s$), and polynomial delay (for all $s$, the time to output the first $s$ solutions is polynomial on $n$ and linear on $s$).
 Both maximal independent sets and maximal cliques can be enumerated with polynomial delay, and so are maximal bicliques (complete bipartite graphs) \cite{dias-05-generating-bicliques, gely-09-maximal-cliques-bicliques} and maximal forests \cite{schwikowski-02-enumerate-fvs}.
 See the survey~\cite{wasa-16-enumeration} and the recent results~\cite{conte-19-polynomial-delay}.

 Our algorithms are summarized below.  Since a connected cluster graph is a clique, the result for the maximal connected induced cluster subgraphs problem is already known.  Also,
 for a graph class that can be characterized by a finite set of forbidden induced subgraphs, algorithms for  the maximal induced $\mathcal{P}$ subgraphs problem, but not its connected variation, can be derived from Eiter and Gottlob \cite{eiter-95-hypergraph-transversals}.

\begin{theorem}\label{thm:all-classes}
  The maximal induced $\mathcal{P}$ subgraphs problem and its connected variation can be solved with polynomial delay for the following graph classes: interval graphs, trivially perfect graphs, split graphs, complete split graphs, pseudo-split graphs, threshold graphs, cluster graphs, complete bipartite graphs, complete $p$-partite graphs, and $d$-degree-bounded graphs.
  
  The maximal induced $\mathcal{P}$ subgraphs problem and its connected variation can be solved in incremental polynomial time for the following graph classes: wheel-free graphs, unit interval graphs, block graphs, 3-leaf powers, basic 4-leaf powers, and any graph class that can be characterized by a finite set of forbidden induced subgraphs. 
\end{theorem}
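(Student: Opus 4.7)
The plan is to prove Theorem~\ref{thm:all-classes} by first establishing three general tools and then applying them class by class. The first tool is a solution-map framework: I would define a multi-digraph on the set of all solutions whose arcs are induced by a local modification rule (e.g., swap one vertex, or extend along a chosen direction), and observe that once this map is strongly connected with polynomial-time constructible out-neighborhoods, a standard reverse-search-style traversal enumerates all solutions with polynomial delay. To certify strong connectedness, I would introduce retaliation-free paths---sequences of arcs from $S$ toward $S'$ which, once initiated, cannot be prematurely undone---and prove that the existence of such a path between any ordered pair of solutions already suffices.

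The second tool is the $t$-restricted variant of the problem, which prescribes up to $t$ vertices that must or must not appear in the output. Generalizing Cohen, Kimelfeld, and Sagiv, I would prove that solving the $t$-restricted version in incremental polynomial time is equivalent to solving the original problem in incremental polynomial time; this yields a flexible branching handle for classes where a direct solution map appears elusive. The third tool is a pair of reductions between the unrestricted and connected variations, so that for each class it suffices to attack only one of the two.

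With these pieces in place, the remainder is case analysis. For each class claimed at polynomial delay---interval, trivially perfect, split, complete split, pseudo-split, threshold, cluster, complete bipartite, complete $p$-partite, and $d$-degree-bounded graphs---I would design a class-specific local swap rule and exhibit a retaliation-free path between any ordered pair of solutions, typically by exploiting a canonical ordering (by left endpoint for intervals, by rank for trivially perfect) or a normal form tied to the class's structural characterization. For the incremental-polynomial classes---wheel-free, unit interval, block, 3-leaf powers, basic 4-leaf powers, and finitely-characterized classes---I would instead devise polynomial-time algorithms for the $t$-restricted version, branching on a forbidden configuration and recursing; the finitely-characterized case reduces to Eiter--Gottlob, so the real work lies with the structural classes. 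The main obstacle I anticipate is constructing retaliation-free paths for classes whose swap rules admit many symmetric choices that risk oscillation; overcoming this will likely require, for each class, a canonical form together with a monotonically decreasing structural parameter (for example, the symmetric difference to a target solution) which forces progress once a swap is taken.
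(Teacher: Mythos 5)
Your overall architecture (solution maps plus retaliation-free paths for the polynomial-delay classes, a restricted version plus reductions between the two variations for the incremental-polynomial classes) matches the paper's, but there is a genuine gap in how you define the $t$-restricted version, and that definition is load-bearing. You describe it as prescribing "up to $t$ vertices that must or must not appear in the output." In the paper the restriction is on the \emph{input}, not the output: the instance comes with a set $Z$ of $t$ vertices such that $G-z\in\mathcal{P}$ for \emph{every} $z\in Z$, equivalently every minimal forbidden set of $G$ contains all of $Z$. This is what makes the restricted problem genuinely easier --- it forces strong structure on the input (e.g., for chordal graphs with $|Z|=3$, every hole passes through all three vertices of $Z$, which pins down the components of $G-Z$ and reduces the problem to enumerating minimal separators of a chordal graph; for wheel-free graphs with $|Z|=5$ it forces a center/cycle bipartition and reduces to Schwikowski--Speckenmeyer's maximal-induced-forest enumeration). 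Prescribing output vertices gives no such structural leverage and does not obviously make the restricted problem any easier than the original, so the equivalence you assert and the "branch on a forbidden configuration and recurse" step for unit interval, block, leaf-power, and wheel-free graphs would not go through as planned. You would also need the separate fact that polynomial total time equals incremental polynomial time for the connected variation (the paper's Theorem~\ref{thm:connected-incp=totalp}), since the restricted version is only shown solvable in polynomial total time.

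On the polynomial-delay list, your route also diverges from the paper's in a way worth flagging: the paper only builds bespoke successor functions and retaliation-free paths for interval and trivially perfect graphs. For the other eight classes (split, complete split, pseudo-split, threshold, cluster, complete (bi/$p$-)partite, $d$-degree-bounded) it instead verifies the \textsc{cks} property --- that any graph with a maximal $\mathcal{P}$ set of size $n-1$ has only polynomially many maximal $\mathcal{P}$ sets --- and then invokes Cohen--Kimelfeld--Sagiv's theorem that a polynomial-time input-restricted version yields polynomial delay. Designing a swap rule and a retaliation-free path certificate separately for each of those eight classes, as you propose, is not wrong in principle, but it is far more work than needed and you give no indication of what the canonical orderings or monotone parameters would be for, say, pseudo-split or $d$-degree-bounded graphs; the counting argument via the \textsc{cks} property sidesteps all of that.
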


\paragraph{Solution maps and retaliation-free paths.}

In a seminal work, Schwikowski and Speckenmeyer~\cite{schwikowski-02-enumerate-fvs} proposed an algorithm for enumerating maximal induced forests, as well as its directed variation, enumerating maximal induced directed acyclic subgraphs.
They introduced a successor function, which, given a solution, i.e., a maximal induced forest of the input graph $G$, returns a set of other solutions of $G$.  In so doing they \emph{implicitly} built a multiple digraph $M(G)$ whose nodes are the solutions, and there is an arc from node $S_1$ to node $S_2$ if $S_2$ is one of the successors of $S_1$.
The key properties of $M(G)$ are
(1) the successor function can be calculated in polynomial time; and (2) $M(G)$ is strongly connected.
As a result, traversing $M(G)$ from any node, we can visit all the solutions in time polynomial on $n$ and linear on the number of solutions.  With standard bookkeeping mechanism, we can easily implement it with polynomial delay.  We call the multiple digraph on the solutions of an enumeration problem its \emph{solution map}.
G{\'{e}}ly et al.~\cite{gely-09-maximal-cliques-bicliques} later rediscovered this idea, and used it to re-analyze enumeration algorithms for maximal cliques and maximal bicliques.  Solution maps are actually very general and powerful.  Conte and Uno~\cite{conte-19-polynomial-delay} built solution maps to solve the maximal (connected) induced $\mathcal{P}$ subgraphs problem for several important graph classes.  In particular, they solved the connected variation for forests, and directed acyclic graphs.  (They also designed algorithms for enumerating maximal subgraphs, a direction that we will not pursue in the present paper, and other non-graphic problems.)

To solve an enumeration problem with a solution map consists in defining the successor function and proving that the implied solution map $M$ is strongly connected.  All the mentioned algorithms follow the same general scheme, although the details are quite problem specific.  For convenience, we may assume that the solutions are subsets of some ground set $U$; note that this is the case for the maximal (connected) induced $\mathcal{P}$ subgraphs problem, of which the ground set is the vertex set of the input graph.
\begin{description}
\item [Successors of a solution $S$:] For each $v\in U\setminus S$, define a sub-instance restricted to $S\cup \{v\}$, and find a set of solutions of this sub-instance.  The union of these $|U\setminus S|$ sets makes the successors of $S$.
\item [Strong connectedness:] For each solution $S^*$, define a specific metric and show that every other solution $S$ has a successor ``closer'' to $S^*$ than $S$ with respect to this metric.  
\end{description}
Two simple metrics have been devised and are very handy to use.  The first is the number of elements of $S^*$ absent in a solution, and the second is the ordered sequence of elements in a solution, with elements in $U$ numbered in a way that those in $S^*$ are the smallest.
In both metrics, $S^*$ is the smallest among all solutions, $0$ for the first metric and $\langle 1, 2, \ldots, |S^*|\rangle$ for the second, and hence it suffices to show that each solution $S$ has a successor with a smaller metric than $S$.
The existence of a path from $S$ to $S^*$ in $M$, hence the strong connectedness of $M$, will then follow from the finiteness of the ground set $U$.  As a matter of fact, it follows that the distance from any solution to another solution in $M$ designed as such is less than $|U|$.

 This recipe is so nice as long as we can find a good successor function.  However, the requirement that every other solution have a \emph{direct} successor that is closer to $S^*$, with respect to the metric decided by $S^*$, is too strong.  For some problems, it seems difficult to construct a successor function with this property.  To show strong connectedness of a solution map, after all, what is crucial is the reachability: We are concerned with whether a solution can reach $S^*$ instead of how long it takes to do so.  (Indeed, in the most ideal case, the solution map can be a simple directed cycle, in which the distance of a pair of solutions can be the number of solutions minus one.)

We introduce retaliation-free paths to certificate strong connectedness of a solution map.  We explain it with the second metric.  We allow all the successors of a solution $S$ to be lexicographic larger than $S$.  Let $s$ be the smallest element in $S^*\setminus S$.  If any successor $S'$ of $S$ contains $[s]$, meaning $\{1, \ldots, s\}$, then $S'$ is lexicographically smaller than $S$ and we are done.  Otherwise, we look for a solution that contains $[s]$ and is reachable from $S$ by a nontrivial path.  It is better to view this path as a transforming procedure.  In the first step, we choose a successor $S'$ of $S$ containing $s$; we call $s$ the \emph{gainer}, and elements in $[s-1]\setminus S'$ the \emph{victims} of $s$ (at this step).  We then try to add the victims of $s$ back, with the guarantee that $s$ is never removed during the whole procedure.
A victim $r$ of $s$ may become a gainer in a later step, hence introducing further victims.  In this case, we also ensure that $r$ is kept before all its victims are restored.  Such a path is thus called \emph{retaliation-free}.
(After we reach a solution $S_1$ containing $[s]$, we set for a solution containing $[s+1]$, in the new pursuit the element $s$ may become a victim some point.  The path from $S$ to $S^*$ we produced as such can be insanely long.) 

\begin{theorem}[Informal]
  \label{thm:retaliation-free}
  Let $M$ be the solution map of an enumeration problem.  If for any solution $S^*$, there is a metric on the solutions such that any solution $S$ can reach, via a retaliation-free path, a solution that is closer to $S^*$ than $S$ in this metric, then $M$ is strongly connected.
\end{theorem}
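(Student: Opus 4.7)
The plan is to establish strong connectedness of $M$ directly from its definition: show that for any ordered pair of solutions $(S,S^*)$ there is a directed walk from $S$ to $S^*$ in $M$. Fixing the target $S^*$, let $\mu=\mu_{S^*}$ be the metric supplied by the hypothesis. The hypothesis gives, for every $S\neq S^*$, a retaliation-free path in $M$ from $S$ to some solution $S'$ with $\mu(S')<\mu(S)$.

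The first step is to observe that a retaliation-free path is, by construction, a sequence of successor choices, i.e., a directed walk in $M$. Concatenating two such walks in $M$ again yields a walk in $M$, so it suffices to iterate the hypothesis. Starting from $S=S_0$, I would repeatedly apply the hypothesis to build a sequence $S_0,S_1,S_2,\dots$ of solutions with $\mu(S_0)>\mu(S_1)>\mu(S_2)>\cdots$, each consecutive pair linked by a retaliation-free walk in $M$.

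Next, I would argue termination. The set of solutions is finite, so $\mu$ takes finitely many values, ordered by a well-founded relation (for instance $\mathbb{N}$ in the first example metric, or the lexicographic order on bounded-length sequences in the second; in the abstract statement one just takes any well-founded relation compatible with ``closer than''). A strictly $\mu$-decreasing chain therefore cannot continue indefinitely: it must halt at some $S_k$ to which the hypothesis no longer applies. But the hypothesis applies to \emph{every} solution distinct from $S^*$, so $S_k=S^*$. Concatenating the retaliation-free walks $S_0\leadsto S_1\leadsto\cdots\leadsto S_k$ produces the desired walk from $S$ to $S^*$ in $M$, and since $S$ and $S^*$ were arbitrary, $M$ is strongly connected.

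The main subtlety—and the reason why the theorem is labelled informal—is pinning down that the ``retaliation-free path'' promised by the hypothesis is an actual walk in $M$, rather than merely a sequence of intermediate vertex sets that happen to lie in the solution space. Once the formal definition of retaliation-free path is stated in terms of arcs of $M$ chosen by the successor function (as the preceding narrative indicates), the proof reduces to the well-foundedness argument above and involves no problem-specific case analysis. A secondary bookkeeping point is requiring ``closer than'' to be a transitive, antisymmetric, well-founded relation with $S^*$ as the unique minimum; this is harmless since all concrete metrics used in the paper satisfy it.
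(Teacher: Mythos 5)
There is a genuine gap: your argument proves the statement only under a reading of the hypothesis that removes its content. You take the hypothesis to hand you, for each $S\neq S^*$, a \emph{finite} walk in $M$ ending at some $S'$ with $\mu(S')<\mu(S)$, and then concatenate and descend along a well-founded order. But that is exactly the standard solution-map recipe the paper is trying to relax: the point of retaliation-free paths is that the individual arcs of such a path are \emph{not} required to decrease the metric (successors are allowed to be lexicographically larger than their predecessors), so no well-founded descent on $\mu$ is available along the path itself. In the formal version (Theorem~\ref{thm:formal-retaliation-free}), the hypothesis only asserts that the greedy process defined by conditions (i)--(iii) --- take a successor containing the first missing victim of the current gainer sequence $\sigma_i$, then update $\sigma_{i+1}$ --- can always be continued; it does not assert that this process halts, let alone that it halts at a solution closer to $S^*$. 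Proving termination is the entire content of the theorem, and it is precisely the step you set aside as "pinning down the definition."

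The paper closes this gap with a combinatorial argument about the gainer sequences: it first shows that for any fixed $\sigma'\in\Sigma(S^*)$, the indices $i$ for which $\sigma'$ is a prefix of $\sigma_i$ form a consecutive block (this is where conditions (ii) and (iii) are really used, to show that a victim $i^*$, once restored, cannot be lost again while $\sigma'$ is still active), and then deduces that a non-terminating path would have some $\sigma_i$ repeated for all later steps, which is impossible because $D(\sigma_i)$ is finite and each step restores its first missing element. Your proposal contains no analogue of this; the remark about well-foundedness of the "closer than" relation is orthogonal to it. Strengthening the hypothesis so that finite walks to strictly closer solutions are given outright would make your proof valid but would also make the theorem useless for its applications: in Lemmas~\ref{lem:trivially-perfect-solution-map} and \ref{lem:interval-retaliation-path} only the local transition data (successor choices and victim functions) is verified, and the theorem is relied upon to convert that data into an actual finite walk.
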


We use Theorem~\ref{thm:retaliation-free} to develop enumeration algorithms for the maximal (connected) induced trivially perfect subgraphs problem and the maximal (connected) induced interval subgraphs problem, both running with polynomial delay.
The two problems are paradigmatic for the use of this technique.  The  main structures we use for interval graphs are the clique paths, which are linear, while for trivially perfect graphs, we work on their generating forests, which are hierarchical. 

A trivially perfect graph $G$ can be represented as a forest $F$, called its {generating forest}, such that two vertices $u$ and $v$ are adjacent in $G$ if and only if one of them is an ancestor of the other in $F$ \cite{wolk-62, yan-96-trivially-perfect}.  It is simpler if we consider the connected variation, where generating forests are trees.  For the purpose of transforming a solution $S$ to another solution $S^*$, we transform a generating tree $T$ of $G[S]$ to a generating tree $T^*$ of $G[S^*]$.
We may assume without loss of generality that the root of $T^*$ is already in $S$, though it may or may not be the root of $T$.  The $T^*$-ancestors of a vertex always induce a clique, and those in $T$ are lie on a path from the root, while it is hard to make any claim on the $T^*$-descendants of any vertex.  Therefore, it may seem quite obvious that we should proceed top-down.  This idea however does not work, because two vertices in $S\cap S^*$ may have different ancestor-descendant relationship in $T$ and $T^*$.  Consider, for example, $G$ being a path on four vertices, and the two common vertices in the only two solutions of $G$.  We add the vertices in $S^*\setminus S$ bottom-up, i.e., children before parents.
Let $v$ be the smallest vertex in $S^*\setminus S$ in the postorder in $T^*$.  It is safe to add $v$ if $v$ is a leaf of $T^*$, because all its ancestors have larger numbers than $v$, or if $v$ is a parent with a single child $v'$ in $T^*$, because $v'$ is in $S$ by the ordering and $v$ and $v'$ are true twins in $G[S^*]$.  The difficulty poses itself when $v$ has multiple children in $T^*$.  By the ordering, all $T^*$-descendants of $v$ are in $T$.  However, with a polynomial number of successors of $S$, it is difficult, if possible at all, to ensure that in one of the successors, precisely those  $T^*$-children of $v$ are kept as descendants of $v$.
The key observation that enables us to overcome this obstacle is that as long as we keep two children of $v$ in $T^*$, say $c_1$ and $c_2$, then the relationship between $T^*$-descendants of $v$ and $T^*$-ancestors of $v$ will be correctly maintained.  We afford to lose other children of $v$ and their descendants in this step, because they can be added back easily.  Let $S'$ be the new solution after adding $v$, then the $T^*$-descendants of $v$ that are not in $S'$ are the victims of the gainer $v$.  To add a victim $x$ of $v$ to a generating tree $T'$ for $G[S']$, we find a common ancestor of $c_1$ and $c_2$ in $T'$, which is not necessarily $v$ itself because a $T^*$-ancestor of $v$ may be a $T'$-descendant of $v$, and add $x$ under it.  We can make sure at least one successor gives this desired solution.  This ends with a retaliation-free path from $S$ to $S^*$.

Interval graphs are a very important and well-studied graph class \cite{booth-76-pq-tree, fulkerson-65-interval-graphs, hsu-95-recognition-cag, hsu-99-recognizing-interval-graphs}, and the interval vertex deletion problem receives a lot of attention \cite{bliznets-16-max-chordal-interval-subgraphs, cao-15-interval-deletion, cao-16-almost-interval-recognition, agrawal-19-kernel-interval-vertex-deletion}.  It is known that the maximal cliques of an interval graph can be arranged in a linear manner, called a clique path \cite{fulkerson-65-interval-graphs}.  Let $G$ be the input graph, and $S$ a solution of $G$.  We build a clique path $\mathcal{K}$ for $G[S]$, and for each vertex $v\in V(G)\setminus S$, we introduce $O(n^2)$ successors, each corresponding to an range $[a, b]$ in $\mathcal{K}$.  We try to put $v$ in between the $a$th and the $b$th maximal cliques on this clique path to make a new solution, 
 so we need to delete non-neighbors of $v$ in between and delete neighbors of $v$ not in this range.  After that, this subset is extended to a solution, which is made a successor of $S$.
To show the existence of a path from $S$ to $S^*$, we number the vertices in $S^*$ from left to right with respect to a fixed clique path $\mathcal{K}^*$ for $G[S^*]$.  The convenience of this order is that neighbors of $s$, the first vertex in $S^*\setminus S$, in $[s-1]$ form a clique.
If vertices in $[s-1]$, a subset of both $S$ and $S^*$, are arranged in the same way in  $\mathcal{K}^*$ and the clique path for $G[S]$, then $s$ can be easily added after some maximal clique $G[S]$ containing $N(s)\cap [s-1]$.  However, an interval graph may have many different clique paths, and this is exactly the main obstacle the recognition algorithms need to overcome \cite{booth-76-pq-tree, hsu-99-recognizing-interval-graphs}.  Thus, adding $s$ may induce some victims. Our observation is that all the victims are in a small range in terms of the clique path $\mathcal{K}^*$, and all the changes we need to make to restore these victims are also confined in this range.  What we show is therefore a retaliation-free path from $S$ to a solution containing $[s]$, implying that the solution map is strongly connected.

In all the mentioned algorithms, the successor function has to be hand crafted for each graph class.
For example, our successor function for interval graph does not work for unit interval graphs.  Nor can the successor function of \cite{conte-19-polynomial-delay} for chordal graphs be applied to interval graphs.  In the following we will aim for approaches that are more general and systematic.

\paragraph{Algorithms in Incremental polynomial time.}
There is nothing inherent in solution maps about polynomial delay, though all of the mentioned algorithms based on solution maps are of this type.  The running time of such an algorithm is determined by the successor function, which is almost always determined by the number of successors a solution can have.  If we allow the number of successors to be a polynomial on both $n$ and the number of solutions, then the resulted enumeration algorithm runs in polynomial total time.
One ``weakness'' of the approach based on solution maps is the lack of a simple way to develop algorithms in incremental polynomial time.  We obviate this concern with the following observation.
\begin{theorem} \label{thm:connected-incp=totalp}
  For any hereditary graph class $\mathcal{P}$, the maximal connected induced $\mathcal{P}$ subgraphs problem can be solved in polynomial total time if and only if it can be solved in incremental polynomial time.
\end{theorem}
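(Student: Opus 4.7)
The direction from incremental polynomial time to polynomial total time is immediate: if there is an algorithm that outputs the first $s$ solutions in time $q(n, s)$ for some polynomial $q$, then evaluating at $s = N$ gives a total running time of $q(n, N)$, which is polynomial in $n$ and $N$.

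For the converse, suppose $A$ is a polynomial-total-time algorithm with running time bounded by $p(n, N)$. My plan is to construct an incremental-polynomial-time algorithm $B$ via a restart-and-filter scheme on sub-instances. At each step, $B$ maintains the list $L$ of solutions output so far. To produce the next solution, $B$ iterates over each vertex $v \in V(G)$ and, for each, restarts $A$ on a sub-instance enumerating only those solutions that contain $v$---this is precisely the $1$-restricted version introduced earlier in the paper. The first solution returned by any such restart that does not already lie in $L$ is output; if every restart exhausts its sub-instance without producing anything new, then $L$ contains every solution of $G$ and $B$ halts.

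Correctness is immediate from the fact that every maximal connected induced $\mathcal{P}$ subgraph contains at least one vertex, so any missing solution is detected through at least one choice of $v$. For the running-time bound, observe that the $1$-restricted instance at $v$ has some number $N_v$ of solutions, and the restart either finds a new solution within its first $|L|+1$ outputs (costing time at most $p(n, |L|+1)$) or exhausts all $N_v \le |L|$ solutions that are in $L$ (costing time at most $p(n, |L|)$). Summing over the $n$ vertices yields a per-step delay polynomial in $n$ and $|L|$, matching the definition of incremental polynomial time.

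The main obstacle is justifying that $A$ can in fact be applied to the $1$-restricted sub-instance with a polynomial-total-time guarantee. The equivalences between the unrestricted and $t$-restricted variants established earlier supply the needed reduction: an instance $(G, v)$ of the $1$-restricted problem maps in polynomial time to an instance of the unrestricted problem whose solutions correspond bijectively to those of $(G, v)$, and this bijection transfers $p(n, N)$-style bounds from the unrestricted to the $1$-restricted version with only polynomial overhead. Verifying that this reduction is both size- and solution-preserving---and therefore carries over total-time bounds and not merely incremental-time bounds---is the chief technical step, and it is here that the hereditary closure of $\mathcal{P}$ together with the connectivity constraint plays the crucial role, since without them a polynomial-total-time algorithm need not yield an incremental-polynomial-time algorithm at all.
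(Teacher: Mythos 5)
There is a genuine gap, on two counts. First, you misidentify the sub-instance you restart $A$ on. The $1$-restricted (input-restricted) version is a restriction on the \emph{input}: the instance $(G,v)$ comes with the promise that $G-v\in\mathcal{P}$, i.e., every forbidden set contains $v$; its output is still \emph{all} maximal (connected) $\mathcal{P}$ sets of $G$. It is not the problem of enumerating only those solutions that contain $v$, which is an output restriction that the paper gives no machinery for and which is not in general solvable by any reduction to the unrestricted problem of the kind you describe. Worse, the equivalence between the unrestricted and $t$-restricted variants (Theorem~\ref{thm:poly-total}, via Lemma~\ref{lem:t-restricted}) is itself proved \emph{using} Theorem~\ref{thm:connected-incp=totalp}, so invoking it here would be circular.

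Second, and more fundamentally, your running-time analysis assumes that a polynomial-total-time algorithm delivers its first $s$ outputs within $p(n,s)$ steps. That assumption is exactly the statement being proved: an algorithm with that property is already incremental polynomial. A $p(n,N)$-total-time algorithm may emit all of its output in the last step of its run, so aborting it after $p(n,|L|+1)$ steps yields no solutions at all --- only the one bit of information that the instance has more than $|L|+1$ solutions. The paper's proof is built entirely around this obstacle: it uses aborted runs of $A$ as an oracle to greedily delete vertices and isolate an induced subgraph $G_n$ whose total number of solutions is provably at most $n^3|\mathcal{S}|$ (via Proposition~\ref{lem:subgraph-cardinality}(iii)--(iv)), runs $A$ to completion on $G_n$ at affordable cost, and then argues that some maximal connected $\mathcal{P}$ set of $G_n$ is not contained in any already-found solution and hence extends (Proposition~\ref{lem:extension}) to a new solution of $G$. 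Your scheme contains no mechanism for producing such a bounded core sub-instance, so the per-step cost of each restart cannot be bounded by a polynomial in $n$ and $|L|$.
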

For the maximal induced $\mathcal{P}$ subgraphs problem, a similar statement as Theorem~\ref{thm:connected-incp=totalp} follows from the classical result of Bioch and Ibaraki~\cite{bioch-95-identification-dualization-positive-boolean-functions}.  But the class of connected $\mathcal{P}$ graphs is not hereditary in general, and hence Theorem~\ref{thm:connected-incp=totalp} cannot be derived from \cite{bioch-95-identification-dualization-positive-boolean-functions}.  Theorem~\ref{thm:connected-incp=totalp} and the result in \cite{bioch-95-identification-dualization-positive-boolean-functions} enable us to use solution maps transparently in developing algorithms for enumerating maximal (connected) induced $\mathcal{P}$ subgraphs in incremental polynomial time.
We remark that a similar claim on general enumeration problems is very unlikely \cite{strozecki-10-thesis, capelli-19-incremental-delay-enumeration}.
\paragraph{Restricted versions.}  In the design of successor functions, a special version of the maximal (connected) induced $\mathcal{P}$ subgraphs problem presents itself, where the input graph $G$ has a special vertex $v$ such that $G - v$ is in $\mathcal{P}$.  It also arises naturally in other approaches for enumerating maximal (connected) induced $\mathcal{P}$ subgraphs.  Cohen et al.~\cite{cohen-08-all-maximal-induced-subgraphs} called it the \emph{input-restricted version} and conducted a systematic study.  They managed to show that this ostensibly simpler version is the core of the original problem in terms of solvability in incremental polynomial time: The original problem can be solved in incremental polynomial time if and only if its input-restricted version can.  They also proved a similar statement for polynomial total time, which is rendered redundant by Theorem~\ref{thm:connected-incp=totalp}.  Moreover, the original problem can be solved with polynomial delay when its input-restricted version can be solved in polynomial time.

Since the appearance of \cite{cohen-08-all-maximal-induced-subgraphs}, there have been attempts at extending its core idea.
A ``natural'' way
seems to be defining a restricted version that is equipped with a special set of $t, t > 1$, vertices instead of a single vertex $v$; i.e., $G - Z$ is in $\mathcal{P}$ for some set $Z$ of $t$ vertices.  However, the extra $t - 1$ vertices in the set $Z$ turn out to be not helpful.  Consider, for instance, the forests.  For a graph $G$ and any $t$, we can make a new graph $H$ by introducing $t - 1$ copies of disjoint triangles to $G$, and then $H$ satisfies the new definition if and only if $G$ is input-restricted.  Therefore, solving this special version is not easier than solving the input-restricted version.

This bad example does not suggest a dead end, and there is a natural generation that does work.  We may view the input-restricted version as an enumeration problem by itself, and try to build a solution map to solve it.  Then in designing the successor function, in the sub-instance restricted to $S\cup \{v'\}$, aside from the vertex $v$ we had, we are bestowed with another vertex $v'$ such that the removal of either of $v$ and $v'$ leaves a subgraph in $\mathcal{P}$.  We are thus inspired to define the \emph{$t$-restricted version} of the maximal induced $\mathcal{P}$ subgraphs problem:
\begin{quote}
  There exists a set $Z$ of $t$ vertices in $G$ such that $G - z$ is in $\mathcal{P}$ for \emph{every} $z\in Z$.
\end{quote}
It is equivalent to that \emph{every} minimal induced subgraph of $G$ that is not in $\mathcal{P}$ contains \emph{all} vertices in $Z$.
This requirement is far stronger than $G - Z$ being in $\mathcal{P}$, though equivalent when $t = 1$.  
We are able to show that even this far more restricted version is still equivalent to the original problem in terms of enumerability in incremental polynomial time.
We remark that our proofs, based on solution maps, are significantly simpler than those on the input-restricted version \cite{cohen-08-all-maximal-induced-subgraphs}, which can now be viewed as $1$-restricted version.  With the benefit of hindsight, we can see that all the results of \cite{cohen-08-all-maximal-induced-subgraphs} can be easily interpreted using solution maps, with simpler proofs.

\begin{theorem}\label{thm:poly-total}
  Let $\mathcal{P}$ be a hereditary graph class.
  The maximal (connected) induced $\mathcal{P}$ subgraphs problem can be solved in incremental polynomial time if and only if there exists a positive integer $t$ such that its $t$-restricted version can be solved in polynomial total time.
\end{theorem}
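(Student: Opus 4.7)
The ``only if'' direction is immediate: the $t$-restricted version is a special case of the unrestricted problem (one simply ignores the set $Z$), so any incremental polynomial time algorithm for the latter solves the former within polynomial total time; the statement thus holds with $t = 1$. For the ``if'' direction, I would argue by downward induction on $k$, showing that a polynomial-total-time algorithm for the $k$-restricted version yields one for the $(k-1)$-restricted version. Telescoping from the assumed $t$-restricted algorithm down to $k = 0$ gives the unrestricted problem in polynomial total time, which Theorem~\ref{thm:connected-incp=totalp} (for the connected variation) and the equivalence of Bioch and Ibaraki cited above (for the original variation) then upgrade to incremental polynomial time.

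For the inductive step, fix a $(k-1)$-restricted instance $(G, Z)$. I partition the solutions by whether they contain $Z$. A solution $S$ missing some $z \in Z$ satisfies $S \subseteq V(G) \setminus \{z\}$; since $G - z \in \mathcal{P}$, there are only polynomially many such $S$ (the set $V(G) - z$ in the original variation, or the vertex sets of the connected components of $G - z$ in the connected variation), each readily enumerated and checked for maximality. For solutions $S$ with $Z \subseteq S$, I build a solution map $M$: for each such $S$ and each $v \in V(G) \setminus S$, invoke the inductively-assumed $k$-restricted algorithm on $(G[S \cup \{v\}],\, Z \cup \{v\})$, which is a valid $k$-restricted instance since $G[S \cup \{v\}] - z$ is an induced subgraph of $G - z \in \mathcal{P}$ for each $z \in Z$, and $G[S \cup \{v\}] - v = G[S] \in \mathcal{P}$. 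Every returned maximal $\mathcal{P}$ subgraph is then greedily extended to a maximal solution of $G$, which I declare a successor of $S$ in $M$.

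Strong connectedness of $M$ (restricted to solutions containing $Z$) is argued by the classical metric-decreasing recipe recalled in the paper. Fix a target $S^*$ and order the vertices so those of $S^*$ come first. Given $S \ne S^*$ with $Z \subseteq S$, set $v^* = \min(S^* \setminus S)$. The induced subgraph $G[(S^* \cap S) \cup \{v^*\}]$ belongs to $\mathcal{P}$ (being an induced subgraph of $G[S^*]$), so $(S^* \cap S) \cup \{v^*\}$ extends to some maximal induced $\mathcal{P}$ subgraph $S'$ of $G[S \cup \{v^*\}]$ that the sub-instance algorithm enumerates. Any greedy extension $S''$ of $S'$ to a maximal solution of $G$ still satisfies $S'' \supseteq (S^* \cap S) \cup \{v^*\}$, whence $|S^* \setminus S''| \le |S^* \setminus S| - 1$; moreover $Z \subseteq (S^* \cap S) \cup \{v^*\}$, so the extension stays inside the ``contains $Z$'' family. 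Iterating yields a path of length at most $|V(G)|$ from $S$ to $S^*$ in $M$.

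The main difficulty I expect is adapting the strong-connectedness argument to the connected variation, where $G[(S^* \cap S) \cup \{v^*\}]$ need not be connected and the greedy extension must additionally preserve connectedness while still containing $Z$. A promising remedy is to enumerate maximal \emph{connected} induced $\mathcal{P}$ subgraphs inside each sub-instance and to start the build-up from a connected subset of $S^* \cap S$ containing $Z$, padded with appropriate connectors drawn from $S^*$; the lexicographic metric then has to be refined accordingly, and I anticipate that the retaliation-free path framework of Theorem~\ref{thm:retaliation-free} will help handle the intermediate configurations in which parts of $S^* \cap S$ are temporarily dropped before being restored.
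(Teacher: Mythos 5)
Your overall architecture is exactly the paper's: a one-step reduction from the $(k-1)$-restricted to the $k$-restricted version (the paper's Lemma~\ref{lem:t-restricted}), telescoped down to $k=0$ and then upgraded from polynomial total time to incremental polynomial time via Theorem~\ref{thm:connected-incp=totalp} and Theorem~\ref{thm:incp=totalp}. Your treatment of the solutions missing a vertex of $Z$, the validity check for the sub-instances $(G[S\cup\{v\}],\,Z\cup\{v\})$, and the metric-decreasing argument for the (unconnected) maximal induced $\mathcal{P}$ subgraphs problem all coincide with the paper's proof. Two small points you leave implicit: you need an initial solution containing $Z$ to seed the traversal (extend $Z$ itself, after dispatching the degenerate cases where $Z$ is a forbidden set or is already a maximal $\mathcal{P}$ set), and you need Proposition~\ref{lem:subgraph-cardinality} to bound the number of solutions of each sub-instance by a polynomial in $n$ and $N$, so that the calls to the $k$-restricted algorithm stay within polynomial total time.

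The one genuine gap is the connected variation, which you flag but do not close, and the remedy you sketch (connectors drawn from $S^*$, a refined lexicographic metric, retaliation-free paths) is far heavier than what is needed. The paper's fix is simply a different potential function: instead of $|S\cap S^*|$, measure progress by the size of the \emph{largest connected component} $\widehat S$ of $G[S\cap S^*]$. Pick $v^*$ to be any neighbor of $\widehat S$ inside $S^*$; such a $v^*$ exists because $\widehat S\subsetneq S^*$ and $G[S^*]$ is connected. Then $\widehat S\cup\{v^*\}$ induces a connected $\mathcal{P}$ subgraph of $G[S\cup\{v^*\}]$, so it is contained in some maximal connected $\mathcal{P}$ set $S'$ of that sub-instance, and any extension $S''$ of $S'$ to a solution of $G$ has a component of $G[S''\cap S^*]$ with at least $|\widehat S|+1$ vertices. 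This strictly increases the potential, so $S^*$ is reached in at most $|S^*|$ steps; no retaliation-free paths are required anywhere in this theorem (that machinery is reserved for the polynomial-delay results of Section~\ref{sec:poly-delay}).
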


The strong requirement stipulated in defining the $t$-restricted version makes it significantly easier than the original problem.  All the algorithms in incremental polynomial time stated in Theorem~\ref{thm:all-classes} are obtained by reducing these problems to their $t$-restricted versions.  
Of these results, we would like to draw special attention to those on wheel-free graphs, though this graph class in its own sense may seem to be less interesting compared to others we study.
Lokshtanov~\cite{lokshtanov-08-wheel-free-deletion} proved that the vertex deletion problem toward wheel-free graphs is W[2]-hard with respect to standard parameterization, hence very unlikely fixed-parameter tractable.  This suggests that the complexity of the maximal induced $\mathcal{P}$ subgraphs problem can be quite different from its optimization counterpart.

Theorem~\ref{thm:poly-total} also implies, among others, the following result on graph classes that can be characterized by a finite set $\mathcal{F}$ of forbidden induced subgraphs.  Indeed, with $t$ being the maximum order of graphs in $\mathcal{F}$, the $t$-restricted version of the maximal (connected) induced $\mathcal{F}$-free subgraphs problem is trivial.
Eiter and Gottlob \cite{eiter-95-hypergraph-transversals} have proved this result for the maximal induced $\mathcal{F}$-free subgraphs problem, but their proof does not applies to the connected variation; see the discussion in the end of this section.
\begin{corollary}
  \label{thm:fintie}
  Let $\mathcal{F}$ be a finite set of graphs.  The maximal (connected) induced $\mathcal{F}$-free subgraphs problem can be solved in incremental polynomial time.  
\end{corollary}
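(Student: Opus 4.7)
The plan is to apply Theorem~\ref{thm:poly-total} with $t$ chosen as the maximum order of a graph in $\mathcal{F}$, and to argue that the $t$-restricted version of the maximal (connected) induced $\mathcal{F}$-free subgraphs problem can be solved in polynomial total time---indeed essentially trivially, as hinted in the paragraph preceding the statement.

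First I would unfold the definition of the $t$-restricted version for $\mathcal{P}$ being the class of $\mathcal{F}$-free graphs. We are given a graph $G$ together with a distinguished set $Z$ of $t$ vertices such that every minimal induced subgraph of $G$ that is not $\mathcal{F}$-free contains all of $Z$. Since such a minimal forbidden subgraph is an induced copy of some $F\in\mathcal{F}$, it has at most $t=|Z|$ vertices, so it must coincide with $G[Z]$. Hence there are only two cases: either $G$ is itself $\mathcal{F}$-free, or $G[Z]$ is the unique minimal forbidden induced subgraph of $G$, in which case a subset $S\subseteq V(G)$ induces an $\mathcal{F}$-free subgraph if and only if $Z\not\subseteq S$.

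For the non-connected variation, this settles matters immediately: in the first case the only solution is $V(G)$, while in the second the maximal $\mathcal{F}$-free subsets are precisely the $|Z|\le t$ sets $V(G)\setminus\{z\}$ with $z\in Z$.

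For the connected variation, I would loop over the at most $2^t$ proper subsets $Z'\subsetneq Z$ and, for each, list those connected components $C$ of $G-(Z\setminus Z')$ with $C\cap Z=Z'$. Every maximal connected $\mathcal{F}$-free induced subgraph of $G$ arises in this way (together with the connected components of $G$ itself in the trivial case when $G$ is $\mathcal{F}$-free). The only genuine---and mild---step is verifying maximality: one must check whether any $v\in Z\setminus Z'$ can be inserted into $C$, which is allowed exactly when $v$ has a neighbour in $C$ and $|Z\setminus Z'|\ge 2$ (so that $C\cup\{v\}$ still misses some vertex of $Z$). This check is routine and polynomial. Because $t$ is a fixed constant, at most $2^t n$ candidate solutions are produced and each is processed in polynomial time, so the $t$-restricted version is solvable in polynomial total time; Theorem~\ref{thm:poly-total} then delivers the corollary.
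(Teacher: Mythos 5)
Your proposal is correct and follows exactly the route the paper intends: take $t$ to be the maximum order of a graph in $\mathcal{F}$, observe that every forbidden set then coincides with $Z$ (so the $t$-restricted instance is trivial, with the non-superset solutions enumerated as in Proposition~\ref{lem:not-z}), and invoke Theorem~\ref{thm:poly-total}. The paper gives only the one-line remark preceding the corollary; your write-up supplies the same argument in full detail, including the correct single-vertex maximality check for the connected variation.
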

 We note that the vertex deletion problem to $\mathcal{F}$-free graphs for finite $\mathcal{F}$ has been well studied.  In particular, they are fixed-parameter tractable \cite{cai-96-hereditary-graph-modification} and admit constant-approximation \cite{lund-93-approximation-maximum-subgraph}.

 \paragraph{Reductions between  the two variations.}
Since we are dealing with both the maximal induced $\mathcal{P}$ subgraphs problem and its connected variation, it is really irksome if we have to develop two algorithms for each class.  Although Cohen et al.~\cite{cohen-08-all-maximal-induced-subgraphs} and Conte and Uno~\cite{conte-19-polynomial-delay} dealt with both variations, they stopped at noting that with one of them solved, a slight modification would be able to solve the other.  These modifications have to be done, however, case by case.

The main issue of the connected variation is that the set of connected graphs in a hereditary graph class $\mathcal{P}$, viewed as a graph class in its own regard, is mostly not hereditary.  
If there are two nonadjacent vertices in any connected graph $G$ in $\mathcal{P}$, then the edgeless graph on two vertices is an induced subgraph of $G$, hence in $\mathcal{P}$.  Therefore, the class of connected graphs in $\mathcal{P}$ remains hereditary only when $\mathcal{P}$ is the class of cluster graphs, in which the connected ones are the complete graphs, and the class of edgeless graphs.
A connected edgeless graphs has precisely one vertex, and this seems to be the only class on which the connected variation, which is trivial, is the easier between the two variations.
For a hereditary graph class $\mathcal{P}$, and a graph $G$, let us use $N_1$ and $N_2$ to denote, respectively, the number of maximal induced $\mathcal{P}$ subgraphs and the number of maximal connected induced $\mathcal{P}$ subgraphs of $G$.  It is not difficult to see that $N_1/N_2$ can be an exponential number on $n$, while $N_2/N_1$ can never be more than $n$.  In the trivial case when $\mathcal{P}$ is edgeless, $N_2$ is precisely $n$, while $N_1$ can be $3^{n/3}$.
For a nontrivial example, the graph consisting of $n/4$ disjoint 4-cycles has $2^{n/2}$ maximal induced forests, while only $n$ maximal induced trees.
This example applies to interval graphs, chordal graphs, etc.
Therefore, it is very unlikely we can use an enumeration algorithm for the maximal induced $\mathcal{P}$ subgraphs problem to solve its connected variation.

The other direction is promising.  Two simple observations help here.  First, if we can add vertices to a graph without making new forbidden induced subgraphs, then it does not change the number of maximal induced $\mathcal{P}$ subgraphs, though each of them gets more vertices.  Second, if the extra vertices make all the maximal induced $\mathcal{P}$ subgraphs connected, then by enumerating all maximal connected induced $\mathcal{P}$ subgraphs of the new graph, we obtain effortlessly all maximal induced $\mathcal{P}$ subgraphs of the original graph.
This idea works when $\mathcal{P}$ is closed under adding universal vertices, examples of which include interval graphs, chordal graphs, etc.
A slighter nontrivial reduction can be devised for several other graph classes, e.g., wheel-free graphs and triangle-free graphs.  These reductions focus us on the connected variation of the problems only, instead of working on both.

The more interesting connection we are able to establish, via Theorem~\ref{thm:poly-total}, is the following theorem.
Although the condition in the statement looks rather technical, it is satisfied by many natural graph classes, e.g., planar graphs, apart from those we study in Section~\ref{sec:incremental-poly}.
As for the $t$-restricted version of both variations, there are only a polynomial number solutions that are not a proper superset of $Z$, and they are easy to handle.  Therefore, the focus is on those solutions containing all vertices in $Z$.  We observe that under the connectivity condition, in every maximal induced $\mathcal{P}$ subgraph of $G$ that contains all vertices in $Z$, all the vertices in $Z$ are always in the same component.  As a result, there is a one-to-one mapping between such solutions for these two variations.
  
\begin{theorem}\label{thm:equivalence}
  Let $\mathcal{F}$ be a set of graphs such that every graph in $\mathcal{F}$ of order $c$ or above is biconnected.  The maximal induced $\mathcal{F}$-free subgraphs problem can be solved in incremental polynomial time  if and only if its  connected variation can be solved in incremental polynomial time.
\end{theorem}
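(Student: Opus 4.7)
The plan is to invoke Theorem~\ref{thm:poly-total} and prove that, with $t = c$, the $t$-restricted versions of the two variations reduce to one another in polynomial total time. Under the $c$-restricted hypothesis every minimal forbidden induced subgraph of $G$ must contain the whole set $Z$, so it has order at least $c$, and hence, by the assumption on $\mathcal{F}$, is biconnected. Biconnectedness is the only feature of $\mathcal{F}$ that the rest of the argument uses.

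The structural heart is a bijection between the solutions of the two variations that properly contain $Z$. For $X \supseteq Z$ write $R(X) := V(G) \setminus N[X]$. I would prove two assertions. First, whenever $S$ is a maximal induced $\mathcal{F}$-free subgraph of $G$ with $Z \subsetneq S$, all vertices of $Z$ lie in a single component $C$ of $G[S]$, $C$ is in turn a maximal connected induced $\mathcal{F}$-free subgraph of $G$, and $S = C \cup R(C)$. Second, conversely, for every maximal connected induced $\mathcal{F}$-free subgraph $C$ with $Z \subseteq C$ and $C \ne Z$, the set $S := C \cup R(C)$ is a maximal induced $\mathcal{F}$-free subgraph whose $Z$-component is exactly $C$.

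Every piece of both assertions follows the same template: if extending the candidate subgraph by some vertex $v$ is blocked, the minimal forbidden subgraph $F$ witnessing the block contains $v$ and all of $Z$; being biconnected, $F - v$ is connected and still contains $Z$, which pins down where $F - v$ must live and delivers the desired contradiction (for instance, forcing $z_1$ and $z_2$ back into one component of $G[S]$, or placing $v$ inside $N[C]$). With the bijection in hand, either algorithm can be converted into the other with polynomial overhead: non-degenerate solutions are translated across the bijection, while the polynomially many ``degenerate'' solutions that do not properly contain $Z$ are enumerated by brute force---these are essentially $V(G) \setminus \{z\}$ on the non-connected side, and connected components of $G - z$ to which $z$ cannot be appended on the connected side, as $z$ ranges over $Z$.

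I expect the hardest step to be the identity $S = C \cup R(C)$ in the first assertion, specifically the inclusion $R(C) \subseteq S$: the potential obstruction is a forbidden subgraph that reaches from $Z \subseteq C$ across vertices of $S \setminus C$ to some omitted vertex of $R(C)$, and biconnectedness has to be invoked precisely to rule out such ``bridging'' forbidden subgraphs and thereby make the bijection well-defined.
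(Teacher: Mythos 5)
Your proposal is correct and follows essentially the same route as the paper: the paper's Lemma~\ref{lem:disconnected} reduces the $t$-restricted versions to one another (for $t\ge c$) via exactly your bijection $C\mapsto C\cup R(C)=V(G)\setminus N(C)$, first showing with the biconnectedness argument that $Z$ lies in a single component of every maximal solution containing $Z$, handling the degenerate solutions by Proposition~\ref{lem:not-z}, and then concluding through Theorem~\ref{thm:poly-total}. The step you flag as hardest ($R(C)\subseteq S$, ruling out bridging forbidden subgraphs) is indeed where the paper also invokes biconnectedness, so no gap.
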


\paragraph{Characterizing the easy classes.}  We try to better understand maximal (connected) induced $\mathcal{P}$ subgraphs problems that can be solved with polynomial delay by considering its input-restricted version.
We say that a graph class $\mathcal{P}$ has the \emph{\textsc{cks} property}, after the authors of \cite{cohen-08-all-maximal-induced-subgraphs}, if the input-restricted version of the maximal (connected) induced $\mathcal{P}$ subgraphs problem 
can be solved in time polynomial on only $n$.
Since it is straightforward to see that the class of edgeless graphs (independent sets), the class of complete graphs (cliques), and the class of complete bipartite graphs (bicliques) have the \textsc{cks} property, the polynomial-delay algorithms for them can be immediately explained by this general result.
On the other hand, the polynomial-delay algorithms of~\cite{schwikowski-02-enumerate-fvs} and \cite{conte-19-polynomial-delay}, and our polynomial-delay algorithms for the maximal (connected) induced interval subgraphs problem
and for the maximal (connected) induced trivially perfect subgraphs problem
cannot be derived from this observation.  As we will see, none of these graph classes has the \textsc{cks} property.

It turns out that star forests, forests in which every tree is a star, play a crucial role in characterizing graph classes with the \textsc{cks} property.  Therefore, to find those graph classes with the \textsc{cks} property, it suffices  to consider those forbidding both a star forest and the complement of a star forest.
\begin{theorem}\label{lem:no-claw-forest}
  Let $\cal F$ be a nonempty set of graphs.  If the class of $\cal F$-free graphs has the \textsc{cks} property, then $\cal F$ contains at least one star forest and the complement of at least one star forest.
\end{theorem}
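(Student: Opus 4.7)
I prove the contrapositive: if $\mathcal F$ contains no star forest, the class of $\mathcal F$-free graphs fails the \textsc{cks} property. The second conclusion of the theorem then follows by complementation, since $G\mapsto \bar G$ is a bijection of input-restricted instances that preserves maximality and sends $\mathcal F$-freeness to $\bar{\mathcal F}$-freeness (with $\bar{\mathcal F}:=\{\bar F:F\in\mathcal F\}$), while ``$\mathcal F$ contains the complement of a star forest'' is equivalent to ``$\bar{\mathcal F}$ contains a star forest''. For any input-restricted instance $(G,v)$, the maximal $\mathcal F$-free induced subgraphs of $G$ containing $v$ biject with the maximal independent sets of the hypergraph $\mathcal H$ on $V(G)\setminus\{v\}$ whose hyperedges are $V(F)\setminus\{v\}$ over all induced copies of $F\in\mathcal F$ in $G$ that use $v$. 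Hence to refute \textsc{cks} it suffices to exhibit a family $(G_k,v)$ with $|V(G_k)|=\mathrm{poly}(k)$ on which $\mathcal H$ has $2^{\Omega(k)}$ maximal independent sets. Since $\mathcal F$ contains no star forest, every star forest is $\mathcal F$-free, so $G_k-v$ may be chosen as any star forest we please.

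\textbf{Construction.} The argument splits in two cases. \textit{Case~A}: some $F\in\mathcal F$ has the form $K_1\vee(aK_2+bK_1)$ with $a\ge 1$ (e.g.\ $K_3$, the paw, or a friendship graph in $\mathcal F$). Set $G_k=K_1\vee kK_2$ with $v$ the apex. Then $G_k-v=kK_2$ is a matching, hence a star forest and thus $\mathcal F$-free, and the induced copies of $F$ through $v$ yield a matching-like hypergraph $\mathcal H$ on $2k$ vertices; a direct count yields $\Omega(2^k)$ maximal independent sets. \textit{Case~B}: no such $F$ exists; in particular $K_3\notin\mathcal F$, and every $F\in\mathcal F$ has order at least four. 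Pick $F^*\in\mathcal F$ of minimum order and $u\in V(F^*)$ with $H:=F^*-u$ connected. Build $G_k$ by placing $k$ disjoint copies of $H$ and attaching an apex $v$ to each copy via the adjacency of $u$ in $F^*$. Each copy then carries one induced copy of $F^*$ through $v$, and since $F^*$ is not a star forest, each copy can be destroyed in at least two inequivalent ways, yielding $\mathcal H$ with $2^{\Omega(k)}$ maximal independent sets.

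\textbf{Main obstacle.} Case~B is the delicate part: we must bound the unwanted hyperedges of $\mathcal H$ coming from copies of other $F'\in\mathcal F$ that either lie entirely in $G_k-v$ (and so span several copies of $H$) or pass through $v$ while mixing vertices from multiple copies. Minimality of $F^*$ rules out the former for connected $F'$, but for disconnected $F'\in\mathcal F$ one must argue either that a suitable choice of $u$ makes $kH$ free of such $F'$, or, failing that, pad each copy of $H$ with a bit of star-forest structure to restore $\mathcal F$-freeness without disturbing the gadget's $F^*$-copy. Each cross-copy hyperedge of the second kind uses at most $|V(F^*)|$ gadgets simultaneously, so it ties together only $O(1)$ gadgets at once; an averaging argument then shows that these extra constraints reduce the number of maximal independent sets by at most a polynomial factor, keeping the $2^{\Omega(k)}$ lower bound intact.
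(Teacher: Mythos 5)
Your overall strategy -- prove the contrapositive, dispose of the ``complement of a star forest'' half by complementation, and build a one-apex gadget graph replicated $k$ times -- is the same skeleton as the paper's proof, and Case~A can be pushed through (every induced subgraph of $K_1\vee kK_2$ is either a star forest or again of the form $K_1\vee(aK_2+bK_1)$, so one can control \emph{all} of $\mathcal F$ there, not just the chosen $F$). But Case~B, which is the heart of the matter, has a genuine gap in two places. First, your construction requires a vertex $u$ of the minimum-order $F^*$ with $F^*-u$ connected; if $F^*$ is disconnected with two or more nontrivial components (e.g.\ $F^*=C_4+C_4$, which is not a star forest and has no proper induced subgraph in a reasonable $\mathcal F$), no such $u$ exists and the construction never starts. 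Second, and more seriously, your closing ``averaging argument'' -- that cross-gadget hyperedges each touch $O(1)$ gadgets and therefore shrink the number of maximal independent sets by only a polynomial factor -- is false in general: $\Theta(k^2)$ hyperedges, each involving only two gadgets, can collapse an exponential count to a polynomial one (take the hypergraph whose edges are all pairs $\{x_i,y_j\}$, $i\neq j$, over gadgets with choices $\{x_i,y_i\}$). So the claim that the extra constraints are harmless is exactly what needs to be proved, and it is not.

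The paper avoids both problems by not choosing $F$ of minimum order at all. Instead it orders graphs by the Lewis--Yannakakis-style sequence $\beta(\cdot)$ built from cutvertices and component sizes, takes $F$ to be $\le_R$-minimal in $\mathcal F$ under this ordering, and replicates the block $F[C_1\cup\{v\}]$ hanging off the pivot $v$ of the lexicographically largest component. The point of this ordering is Proposition~\ref{lem:beta-subgraph}: both $G-v$ and every candidate $G-X$ have $\beta$-value strictly below $\beta(F)$, hence below every member of $\mathcal F$, and are therefore $\mathcal F$-free \emph{outright} -- there are simply no unwanted hyperedges to average away, whether the members of $\mathcal F$ are connected or not. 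The lower bound $|C_1|^t$ with $|C_1|\ge 2$ then comes from the major index of $F$ exceeding $1$, which is precisely the assumption that $F$ is not a star forest. To repair your proof you would need to replace ``minimum order'' by such a potential function (or prove the missing $\mathcal F$-freeness of $kH$ and of the pruned sets directly), since the padding/averaging route does not work.
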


  Unfortunately, this condition is not sufficient, and many graph classes satisfy this condition but do not have the \textsc{cks} property.  Moreover, it is possible that a class $\mathcal{P}$ of graphs has the \textsc{cks} property but a proper subclass of $\mathcal{P}$ does not.  This fact makes a full characterization of graph classes with the \textsc{cks} property more difficult.

\paragraph{Related work.}  Let us put our work into context.  The aforementioned characterization of a hereditary graph class by a set $\cal F$ of forbidden induced subgraphs provides another perspective to view the maximal induced $\mathcal{P}$ subgraphs problem, but not its connected variation in general.
A subset $S$ of vertices is a solution if and only if $V(G)\setminus S$ intersects all \emph{forbidden sets} of $G$, i.e., a minimal set $X$ of vertices such that $G[X]\in \cal F$.  If we list all the forbidden sets of $G$ as a set system, then the problem becomes
enumerating \emph{minimal} vertex sets that intersect each of the forbidden sets.
This brings us to the well-studied problem of enumerating minimal hitting sets of a set system.  A set system is also known as a \emph{hypergraph}, and the problem is called \emph{hypergraph transversal} in literature.  Some important general results on enumeration of maximal induced $\mathcal{P}$ subgraphs, e.g., the algorithm of for graph classes characterized by a finite number of forbidden induced subgraphs, were first proved in this setting.
  Reducing to the {hypergraph transversal} problem was also a common approach used by many earlier heuristic enumeration algorithms, for maximal independent sets and for maximal forests.
  A graph may have an exponential number of simple cycles, and thus the maximal induced forests problem and its associated hypergraph transversal problem have significantly different input sizes: The latter, including the number of elements and the number of forbidden sets, may be exponential on the former.  This can happen for all graph classes $\mathcal{P}$ that have an infinite number of forbidden induced subgraphs.  Deciding whether a graph belongs to such a class $\mathcal{P}$ may be NP-hard, and hence the maximal induced $\mathcal{P}$ subgraphs problem cannot be solved in polynomial total time, though it is still possible for the corresponding hypergraph transversal problem, with all the forbidden sets given.
Thus, we have to exclude from our study those graph classes that cannot be recognized in polynomial time.
  For a graph class that can be recognized in polynomial time, the maximal (connected) induced $\mathcal{P}$ subgraphs problem is in \textsc{EnumP}, the counterpart of NP for enumeration \cite{capelli-19-incremental-delay-enumeration}.  (One may consider the oracle model, where whether a graph is in the class is answered by an oracle in $O(1)$ time, but this paper will not take this direction.)

The (minimal) hitting sets of a hypergraph $H$ define another hypergraph $H'$, and it is an easy exercise to verify that each set in $H$ is a (minimal) hitting set of $H'$.
The {hypergraph transversal} problem has a decision version: Given a pair of hypergraphs on the same set of elements, decide whether one consists of exactly the minimal hitting sets of the other.
Bioch and Ibaraki~\cite{bioch-95-identification-dualization-positive-boolean-functions} showed that the hypergraph traversal problem can be solved in polynomial total time if and only if it can be solved in incremental polynomial time, by showing that both are equivalent to that the decision version can be solved in polynomial time.  
Note that under the exponential time hypothesis, these two complexity classes for enumeration problems are not equal in general \cite{strozecki-10-thesis, capelli-19-incremental-delay-enumeration}.

The results of \cite{bioch-95-identification-dualization-positive-boolean-functions} were actually developed in the setting of monotone Boolean functions.
A Boolean function $f:\{0, 1\}^n \to \{0, 1\}$ is \emph{monotone} if $x \le y$ always implies $f(x) \le f(y)$.  The \emph{identification} problem is to find all minimal true vectors \emph{and} all maximal false vectors of $f$.  It is easy to see that the maximal induced $\mathcal{P}$ subgraphs problem is a special case of it.  Let $G$ be a graph on $n$ vertices, for any subset $X\subseteq V(G)$, we use $x$ to denote the {characteristic vector} of $X$, i.e., an $n$-dimension Boolean vector with $1$ at the $i$th position if and only if $v_i\in X$, and set $f(x) = 0$ if and only if $G[X]\in \cal P$.  Then minimal forbidden sets and the vertex sets of maximal induced $\mathcal{P}$ subgraphs of $G$ correspond to minimal true vectors {and} maximal false vectors of $f$, respectively.
In this terminology,  the maximal induced $\mathcal{P}$ subgraphs problem asks for maximal false vectors only, hence different in the output size.  (Moreover, the identification problem is usually asked in the oracle model.)
The decision version of the identification problem, known as {dualization of monotone Boolean functions},
is equivalent to the hypergraph traversal problem, and its incarnations can be found in database systems, artificial intelligence, and game theory.
Fredman and Khachiyan \cite{fredman-96-dualization-monotone-dnf} proved that {dualization of monotone Boolean functions}, hence hypergraph traversal, can be solved in quasi-polynomial time.
It has been open for nearly four decades whether this problem can be solved in polynomial time; see the survey of Eiter et al.~\cite{eiter-08-survey-monotone-dualization}.

The main motivation of studying the connected variation is its practical applications in database theory, where several important problems can be modeled as enumerating maximal connected induced $\mathcal{P}$ subgraphs.
We refer to Cohen et al.~\cite{cohen-08-all-maximal-induced-subgraphs} and references wherein for the background.  Since this variation cannot be directly cast into hypergraphs or monotone Boolean functions, fewer results on it have been known in literature \cite{cohen-08-all-maximal-induced-subgraphs, conte-19-polynomial-delay}, and they had to be dealt with separately.

We are exclusively concerned with maximal (connected) induced $\mathcal{P}$ subgraphs.  There is also research on enumerating all induced $\mathcal{P}$ subgraphs and enumerating all maximum induced $\mathcal{P}$ subgraphs.  The first is usually very easy, if not completely trivial, while the latter has to be very hard: After all, finding a single maximum induced subgraph in a nontrivial and hereditary graph class is already NP-hard \cite{lewis-80-node-deletion-np}.
Yet another, probably more practical, approach is to list top $k$ solutions, or solutions of costs at most (or at least) $k$.  This is frequently studied in the framework of parameterized computation.
There is work using the input size as the sole measure for the running time of enumeration algorithms, e.g., \cite{tomita-06-maximal-cliques}.
In summary, efforts toward a systematic understanding of enumeration are gaining momentum \cite{creignou-19-complexity-enumeration}. 

\paragraph{Outline.} The rest of the paper is organized as follows.  Section~\ref{sec:map} formally introduces the framework of solutions maps, and uses it to prove Theorem~\ref{thm:connected-incp=totalp}.  Section~\ref{sec:poly-delay} presents this idea of retaliation-free paths and uses it to solve the maximal (connected) induced interval subgraphs problem and the maximal (connected) induced trivially perfect subgraphs problem with polynomial delay.
Section~\ref{sec:incremental-poly} defines the $t$-restricted version and uses it to prove Theorem~\ref{thm:poly-total}, which is then used to design algorithms for several subclasses of chordal graphs as well as wheel-free graphs.
Section~\ref{sec:cks} gives a necessary condition for the \textsc{cks} property, and show that several graph classes have this property.

\section{Preliminaries}\label{sec:map}
All input graphs discussed in this paper are undirected, simple, and finite.  The vertex set and edge set of a graph $G$ are denoted by, respectively, $V(G)$ and $E(G)$.
For a subset $U\subseteq V(G)$, denote by $G[U]$ the subgraph of $G$ induced by $U$, and by $G - U$ the subgraph $G[V(G)\setminus U]$, which is further simplified to $G - v$ when $U = \{v\}$.
The \emph{neighborhood} of a vertex $v \in V(G)$, denoted by $N(v)$, comprises vertices adjacent to $v$, i.e., $N(v) = \{ u \mid uv \in E(G) \}$, and the \emph{neighborhood} of a vertex set $U$ is $N(U) = \bigcup_{v\in U}N(v)\setminus U$.  The \emph{complement} $\overline G$ of graph $G$ is defined on the same vertex set $V(G)$, where a pair of distinct vertices $u$ and $v$ is adjacent in $\overline G$ if and only if $u v \not\in E(G)$.
A \emph{$k$-path} (resp., \emph{$k$-cycle}) is a path (resp., cycle) on $k$ edges; note that they have $k+1$ vertices and $k$ vertices respectively.
A \emph{clique} is a set of pairwise adjacent vertices, and 
an \emph{independent set} is a set of pairwise nonadjacent vertices.

A graph is \emph{connected} if there is a path between any pair of vertices, and \emph{biconnected} if it remains connected after any vertex deleted.  A vertex $v$ is a \emph{cutvertex} if $G - v$ has more components than $G$.  A connected graph on three or more vertices either is biconnected or contains a cutvertex.

By a \emph{graph class} we mean a nonempty collection of graphs.
A graph class is \emph{trivial} if it contains a finite number of graphs or it contains all but a finite number of graphs.
A graph class $\mathcal{P}$ is \emph{hereditary} if it is closed under taking induced subgraphs: If a graph $G\in \cal P$, then so does $G[U]$ for any nonempty $U\subseteq V(G)$.
A graph is {\em $F$-free}, for some graph $F$, if it does not contain $F$ as an induced subgraph; for a set $\cal F$ of graphs, a graph is {\em $\cal F$-free} if it is {\em $F$-free} for every $F\in \cal F$.  A graph class is hereditary if and only if there is a (possibly infinite) set $\cal F$ of graphs, called the \emph{forbidden induced subgraphs} of this graph class, such that graphs in this class are precisely all $\cal F$-free graphs.  
Every graph $F$ in $\cal F$ is tacitly assumed to be minimal, in the sense that no proper induced subgraph of $F$ is in $\cal F$.
For the convenience of the reader, forbidden induced subgraphs of graph classes studied in this paper are collected in the appendix.

Since we are exclusively concerned with induced subgraphs, it suffices to output their vertex sets.
A set $U\subseteq V(G)$ is a \emph{$\mathcal{P}$ set} of $G$ if $G[U] \in \cal P$, and a \emph{maximal $\mathcal{P}$ set} if none of the proper supersets of $U$ is a $\mathcal{P}$ set of $G$.
Connected $\mathcal{P}$ sets and maximal connected $\mathcal{P}$ sets are defined analogously.
Focusing on (connected) $\mathcal{P}$ sets will simplify our presentation.   We formally define our main problems as follows.

\begin{problem}{Maximal (connected) induced $\mathcal{P}$ subgraphs}
  \Input & a graph $G$.\\
  \Prob & all maximal (connected) $\mathcal{P}$ sets of $G$.
\end{problem}

Throughout this paper, we use $n$ to denote the number of vertices in the input graph $G$, and $N$ the number of solutions, i.e., maximal (connected) $\mathcal{P}$ sets of the input graph $G$.  Clearly, $N < 2^n$.
An enumeration algorithm runs in \emph{polynomial total time} if the number of steps it uses to output all solutions is bounded by a polynomial on $n$ and $N$; in \emph{incremental polynomial time} if for $1\le s\le N$, the number of steps it uses to output the $s$th solution is bounded by a polynomial on $n$ and $s$; and with \emph{polynomial delay} if the number of steps it uses to output each solution is bounded by a polynomial on only $n$.  We start from a widely known fact.

\begin{proposition}[Folklore]\label{lem:enumeration-recognition}
  Let $\mathcal{P}$ be a hereditary graph class.  If the maximal (connected) induced $\mathcal{P}$ subgraphs problem can be solved in polynomial total time, then one can decide in polynomial time whether a (connected) graph is in $\mathcal{P}$.
\end{proposition}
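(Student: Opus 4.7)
The plan is to reduce recognition of $\mathcal{P}$ to a bounded-time simulation of the enumeration algorithm, exploiting the fact that when the input graph already lies in $\mathcal{P}$, the entire vertex set is the unique output.

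More concretely, suppose the enumeration problem admits an algorithm $\mathcal{A}$ whose running time on an $n$-vertex input with $N$ solutions is bounded by some polynomial $p(n,N)$. Given an arbitrary (connected) graph $G$, I would simulate $\mathcal{A}$ on $G$ for exactly $p(n,1)$ steps and accept iff the vertex set $V(G)$ appears among the outputs produced in that window. The correctness rests on the following dichotomy. If $G\in \mathcal{P}$ (and is connected, in the connected variant), then $V(G)$ itself is a (connected) $\mathcal{P}$ set of $G$, and by heredity it is the unique maximal (connected) $\mathcal{P}$ set; hence $N=1$, so $\mathcal{A}$ must terminate within $p(n,1)$ steps, and its single output is necessarily $V(G)$, so our procedure accepts. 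Conversely, if $G\notin \mathcal{P}$, then $V(G)$ is not a $\mathcal{P}$ set at all, so $\mathcal{A}$ can never output $V(G)$, and our procedure rejects regardless of how many other maximal solutions there are. The running time $p(n,1)$ is a polynomial in $n$, giving the desired polynomial-time recognition.

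The argument is essentially the same for the connected variation: we only invoke it on connected graphs, and for such a $G\in \mathcal{P}$ the full vertex set is a connected $\mathcal{P}$ set, so the same uniqueness argument applies. I do not expect any real obstacle here; the only subtle point is ensuring the counting of steps uses the correct polynomial $p$, which is part of the description of $\mathcal{A}$, and verifying that during the $p(n,1)$-step simulation we can recognize the output $V(G)$ in polynomial time (trivial, since comparing a produced set of vertices against $V(G)$ takes $O(n)$ time per output).
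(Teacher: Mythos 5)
Your proposal is correct and follows essentially the same approach as the paper: run the enumeration algorithm for the time bound corresponding to $N=1$ (i.e., $p(n,1)$ steps, written $a n^c$ in the paper), and accept exactly when $V(G)$ is among the outputs produced. The only cosmetic remark is that the uniqueness of the maximal solution when $G\in\mathcal{P}$ needs no appeal to heredity—it follows simply because every other vertex subset is a proper subset of the $\mathcal{P}$ set $V(G)$ and hence not maximal.
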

\begin{proof}
  Suppose that there is an algorithm $A$ solving the maximal (connected) induced $\mathcal{P}$ subgraphs problem with at most $a n^c N^{d}$ steps, where $a, c$, and $d$ are fixed constants.  On a (connected) graph in $\mathcal{P}$, the algorithm $A$ is guaranteed to finish in $a n^c$ steps because $N = 1$.
  We use the algorithm $A$ to decide whether a (connected) graph $G$ is in this graph class as follows.  We apply $A$ to $G$, aborted after $a n^{c}$ steps, and return ``yes'' if $A$ returns $V(G)$ itself, or ``no'' otherwise ($A$ may return nothing, or any number of proper subsets of $V(G)$).
\end{proof}

As alluded to earlier, our algorithms proceed by solving the same problem on smaller instances, and then extend each of the obtained solutions of the smaller instances to a solution of the original instance.  By definition, any (connected) $\mathcal{P}$ set of an induced subgraph of $G$ is a (connected) $\mathcal{P}$ set of $G$ itself.  
By \emph{extending} a (connected) $\mathcal{P}$ set $S$ of a graph $G$ we mean the operations of obtaining a maximal (connected) $\mathcal{P}$ set $S'$ of $G$ such that $S\subseteq S'$.
For a graph class that can be recognized in polynomial time, it is very straightforward to extend an arbitrary $\mathcal{P}$ set $S$ of a graph $G$ to a maximal one.  It suffices to try each vertex $v$ in $V(G)\setminus S$ to see whether $S\cup \{v\}$ is still a $\mathcal{P}$ set, thereby calling the recognition algorithm $O(n)$ times.
The point is that we do not need to check the same vertex twice: If $S\cup \{v\}$ is not a $\mathcal{P}$ set, then nor is $S'\cup \{v\}$ for any superset $S'$ of $S$.  The situation is slightly more complicated for the connected variation, where a na\"{i}ve algorithm that tries the vertices in $V(G)\setminus S$ in an arbitrary order may call the recognition algorithm $O(n^2)$ times.
Cohen et al.~\cite{cohen-08-all-maximal-induced-subgraphs} observed that $n$ calls will suffice if we always try the neighbors of a connected $\mathcal{P}$ set.
\begin{proposition}[{\cite{cohen-08-all-maximal-induced-subgraphs}}]
  \label{lem:extension}
  Let $\mathcal{P}$ be a hereditary graph class and $G$ a graph.  For any (connected) $\mathcal{P}$ set $S$ of $G$, we can find a maximal (connected) $\mathcal{P}$ set $S'$ of $G$ with $S\subseteq S'$ in time $n f(n)$, where $f(n)$ is the time to decide whether a (connected) graph is in $\mathcal{P}$.
\end{proposition}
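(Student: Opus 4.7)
The plan is to handle the two cases of the proposition in parallel, with the unconnected case serving as a warm-up for the connected one. The common idea is a greedy one-pass algorithm that uses the recognition oracle at most $n$ times, relying crucially on the heredity of $\mathcal{P}$ to avoid ever having to re-examine a vertex that has been ``tried and rejected''.

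For the non-connected case, I would fix an arbitrary enumeration $v_1, \dots, v_k$ of $V(G) \setminus S$ and process it in order, maintaining a current $\mathcal{P}$ set $S_i$ with $S_0 := S$. At step $i$, I invoke the recognition algorithm on $G[S_{i-1}\cup\{v_i\}]$; if the answer is positive I set $S_i := S_{i-1}\cup\{v_i\}$, otherwise $S_i := S_{i-1}$. The output is $S' := S_k$. Correctness is immediate: $S'$ is a $\mathcal{P}$ set containing $S$, and if some $v_i\not\in S'$ then $G[S_{i-1}\cup\{v_i\}] \notin \mathcal{P}$; since $S_{i-1} \subseteq S'$, the graph $G[S'\cup\{v_i\}]$ contains $G[S_{i-1}\cup\{v_i\}]$ as an induced subgraph, so by heredity $G[S'\cup\{v_i\}] \notin \mathcal{P}$ either. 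Hence $S'$ is maximal. The oracle is called exactly $|V(G)\setminus S| \le n$ times, giving total time $n f(n)$ (up to the low-order bookkeeping cost).

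For the connected case the naive version of the above is unsafe: a non-neighbor $v$ of the current $S_{i-1}$ will always fail the connectivity test, yet might later become a neighbor after other vertices are added, so we could be forced to revisit it, blowing up the count to $O(n^2)$ calls. Following the observation attributed to Cohen et al., I would instead maintain, along with the current connected $\mathcal{P}$ set $S_i$, its open neighborhood $N(S_i)$, and always pick the next candidate $v$ from $N(S_{i-1})$. Because $v$ is adjacent to $S_{i-1}$, the set $S_{i-1}\cup\{v\}$ is automatically connected, so the test ``$G[S_{i-1}\cup\{v\}]$ is a connected graph in $\mathcal{P}$'' reduces to a single call to the recognition algorithm for (connected) members of $\mathcal{P}$. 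If $v$ is accepted, I add it to $S$ and update $N(S)$; otherwise I mark $v$ as dead and never consult the oracle on it again.

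The key verification is that marking $v$ as dead is safe for the remainder of the run: if $G[S_{i-1}\cup\{v\}] \notin \mathcal{P}$, then by heredity $G[S'\cup\{v\}]\notin \mathcal{P}$ for every superset $S'\supseteq S_{i-1}$, so $v$ cannot belong to any connected $\mathcal{P}$ set extending $S_{i-1}$. Therefore every vertex of $V(G)\setminus S$ triggers at most one oracle call over the entire run, and the loop terminates once $N(S)$ contains only dead vertices—at which moment no vertex outside $S$ can be added while preserving both membership in $\mathcal{P}$ and connectivity, so $S$ is a maximal connected $\mathcal{P}$ set. The total cost is again at most $n f(n)$. The only real subtlety in the argument is the connected case's restriction of candidates to $N(S)$, and justifying it by the hereditary closure above is, I expect, the whole substance of the proof.
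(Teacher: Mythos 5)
Your proposal is correct and follows exactly the argument the paper sketches in the paragraph preceding the proposition (the paper cites Cohen et al.\ and does not give a separate formal proof): a greedy single pass using heredity to justify never retesting a rejected vertex, plus the restriction of candidates to $N(S)$ in the connected case so that connectivity is automatic and each vertex incurs at most one oracle call. No gaps; the heredity-based justification for permanently discarding a rejected neighbor is the one substantive point, and you state it correctly.
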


In extending a (connected) $\mathcal{P}$ set $S$, trying the vertices in $V(G)\setminus S$ in different orders may lead to different outputs.  As quite obvious to see, there can be an exponential number of maximal (connected) $\mathcal{P}$ sets that are supersets of $S$.  The other direction is simpler, though nontrivial.  The intersection of a maximal $\mathcal{P}$ set $S$ of $G$ with a subset $U\subseteq V(G)$ is always a $\mathcal{P}$ set of $G[U]$, but may or may not be maximal.  In the case that $S\cap U$ is indeed maximal in $G[U]$, it is the unique maximal $\mathcal{P}$ set of $G[U]$ contained in $S$.  On the other hand, the subgraph $G[S\cap U]$ may not be connected in the first place, and then there can be multiple maximal connected $\mathcal{P}$ set of $G[U]$ that are all subsets of $S$.  Bounds were given by Cohen et al.~\cite[Proposition 3.2]{cohen-08-all-maximal-induced-subgraphs}; note that (i) and (iii) follow from (ii) and (iv) respectively.
\begin{proposition}[{\cite{cohen-08-all-maximal-induced-subgraphs}}]
  \label{lem:subgraph-cardinality}
  Let $G$ be a graph and $\mathcal{P}$ a hereditary graph class.  For any $S\subseteq V(G)$,
  \begin{enumerate}[(i)]
  \item the number of maximal $\mathcal{P}$ sets of $G[S]$ is no more than that of $G$; 
  \item each maximal $\mathcal{P}$ set of $G$ contains at most one maximal $\mathcal{P}$ set of $G[S]$;
  \item the number of maximal connected $\mathcal{P}$ sets of $G[S]$ is no more than $n$ times of that of $G$; and
  \item each maximal connected $\mathcal{P}$ set of $G$ contains at most $n$ maximal connected $\mathcal{P}$ sets of $G[S]$.
  \end{enumerate}
\end{proposition}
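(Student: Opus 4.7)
The plan is to prove (ii) and (iv) directly, deriving (i) from (ii) and (iii) from (iv) exactly as the statement hints. The common tool in both cases is that, by heredity of $\mathcal{P}$, for any $\mathcal{P}$ set $M$ of $G$ the intersection $M \cap S$ is again a $\mathcal{P}$ set of $G[S]$, combined with Proposition~\ref{lem:extension} for producing maximal extensions of a given $\mathcal{P}$ set.

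For (ii), I would fix a maximal $\mathcal{P}$ set $M$ of $G$ and argue that any maximal $\mathcal{P}$ set of $G[S]$ contained in $M$ must equal $M \cap S$. Indeed, such a set sits inside $M \cap S$, which is itself a $\mathcal{P}$ set of $G[S]$ by heredity, so its maximality in $G[S]$ forces it to be all of $M \cap S$. Uniqueness is then immediate. From here, (i) follows by sending each maximal $\mathcal{P}$ set $X$ of $G[S]$ to any maximal $\mathcal{P}$ set of $G$ extending it, which exists by Proposition~\ref{lem:extension}; the assignment is injective because, by (ii), distinct images would have to contain the same $X$ as their unique maximal $\mathcal{P}$ subset of $G[S]$.

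For (iv), I would fix a maximal connected $\mathcal{P}$ set $M$ of $G$, note that $G[M \cap S] \in \mathcal{P}$ by heredity, and decompose it into its connected components $C_1, \ldots, C_k$, each a connected $\mathcal{P}$ set of $G[S]$, with $k \le |M \cap S| \le n$. If $Y$ is a maximal connected $\mathcal{P}$ set of $G[S]$ contained in $M$, then $Y \subseteq M \cap S$ and, being connected, $Y$ lies entirely in one component $C_i$; but $C_i$ is itself a connected $\mathcal{P}$ set of $G[S]$ containing $Y$, so maximality forces $Y = C_i$. Hence there are at most $n$ such $Y$. Statement (iii) then follows by the same extension argument as (i): every maximal connected $\mathcal{P}$ set of $G[S]$ is in particular a connected $\mathcal{P}$ set of $G$ and thus extends, via Proposition~\ref{lem:extension}, to some maximal connected $\mathcal{P}$ set of $G$, while (iv) caps the number of preimages of each target at $n$.

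I do not anticipate any genuine obstacle. The only place where care is needed is (iv): one must recognize that the factor $n$ is really the number of connected components of $G[M \cap S]$, and that the connectivity of $Y$, rather than a direct maximality argument in $G[S]$, is what confines $Y$ to a single component before maximality pins it down as $C_i$.
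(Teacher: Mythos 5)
Your proof is correct, and it follows exactly the derivation the paper indicates (the paper cites this as Proposition 3.2 of Cohen et al.\ and only remarks that (i) and (iii) follow from (ii) and (iv), which is precisely how you structure the argument). The key steps — that a maximal $\mathcal{P}$ set of $G[S]$ inside $M$ must equal $M\cap S$ by heredity plus maximality, and that a maximal connected one must coincide with a connected component of $G[M\cap S]$ — are the standard and intended ones.
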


\subsection{Solution maps}
We now formally introduce the general framework of using solution maps to solve enumeration problems, with general language instead of only the maximal induced $\mathcal{P}$ subgraphs problem. 
Let $\Pi$ be an enumeration problem, where each solution is a subset of a ground set $U$.  A \emph{successor function} $\operatorname{succ}: 2^U \times U\to 2^{2^U}$ maps a solution and an element to a set of solutions.  (One may alternatively define a {successor function} as $\operatorname{succ}: 2^U\to 2^{2^U}$, and they are practically equivalent.  The explicit specification of the vertex $v$ will be crucial for the next section.)  A \emph{solution map} of this problem is a multiple digraph $M$, or $M(U)$ to emphasize the instance, on all its solutions, where an arc from a solution $S$ to another solution $S'$, labeled with $u\in U$, indicates $S'\in \operatorname{succ}(S, u)$.  The algorithm can be obtained by simply conducting a breadth-first search on the solution map.
Simpler or more specific variations of the following statement have been used in, among others, \cite{schwikowski-02-enumerate-fvs, conte-19-polynomial-delay}.
  
\begin{theorem}[\cite{schwikowski-02-enumerate-fvs, conte-19-polynomial-delay}]
  \label{thm:solution-map}
  Let $\Pi$ be an enumeration problem and let $M$ be the solution map of $\Pi$ implied by the {successor function} $\operatorname{succ}$.   If there are polynomial functions $p$, $q$, and $r$ such that
  \begin{enumerate}[(i)]
  \item one solution of $\Pi$ can be found in $p(n, N)$ time,
  \item the solution map $M$ is strongly connected, and 
  \item the successor function $\operatorname{succ}$ can be evaluated in time $q(n, N)$ for any pair of input,
  \end{enumerate}
  then $\Pi$ can be solved in polynomial total time.  Moreover, if $p$, $q$, and $r$ are all polynomial functions on $n$ only, then $\Pi$ can be solved with polynomial delay.
\end{theorem}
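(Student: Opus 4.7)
The plan is to simulate a directed-graph traversal on the solution map $M$. First, I would invoke condition (i) to obtain a starting solution $S_0$ in time $p(n, N)$, insert it into a queue of unprocessed solutions together with a dictionary $D$ recording already discovered solutions, and then begin the main loop. At each iteration, I would pop a solution $S$ from the queue, output it, and then, for every $u \in U$, compute the set $\operatorname{succ}(S, u)$ using condition (iii); any returned solution not already in $D$ gets inserted into both $D$ and the queue.

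Since condition (ii) asserts that $M$ is strongly connected, $S_0$ reaches every other solution by a directed path, so the loop eventually enqueues and later pops every one of the $N$ solutions, which yields correctness. For the total-time bound, each of the $N$ solutions is popped exactly once, and processing one pop costs $|U|$ evaluations of $\operatorname{succ}$ at cost $q(n, N)$ each, plus lookups in $D$ at cost polynomial in $n$ (using, say, a balanced search tree keyed by a canonical encoding of a subset of $U$). Summing over all pops and adding the initial $p(n, N)$ term yields a bound polynomial in $n$ and $N$, establishing polynomial total time.

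The key for the polynomial-delay claim is the decision to output a solution at the moment it is popped rather than at the moment it is discovered: since every enqueued solution is eventually popped exactly once, every solution is output exactly once, and two consecutive outputs are separated by exactly one main-loop iteration. The cost of a single iteration---namely $|U|$ successor evaluations plus the induced dictionary work---is polynomial in $n$ whenever $p$, $q$, and $r$ are, so the delay between consecutive outputs is polynomial in $n$. The main obstacle I would expect to worry about is the \emph{barren-pop} scenario, in which popping $S$ yields no newly discovered successor and naively seems to block progress toward the next output; the output-on-pop rule is precisely what defuses it, because a barren pop still produces an output (of $S$ itself), leaving the buffered-but-not-yet-popped contents of the queue to sustain subsequent outputs until the queue is eventually drained.
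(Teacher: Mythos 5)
Your proposal is correct and is essentially the proof the paper intends: a BFS traversal of the solution map seeded by condition (i), with the standard bookkeeping trick of outputting a solution when it is dequeued (rather than when it is first discovered) so that consecutive outputs are separated by a single iteration's work, which is polynomial in $n$ alone when $p$ and $q$ are. The only loose end — the unused function $r$ in the statement — is an artifact of the paper's phrasing, not a gap in your argument.
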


For the maximal (connected) induced $\mathcal{P}$ subgraphs problem, the first condition of Theorem~\ref{thm:solution-map} is equivalent to the existence of a polynomial-time algorithm for deciding whether a (connected) graph is in $\mathcal{P}$.  If such an algorithm exists, we can call Proposition~\ref{lem:extension} to extend any single vertex set to an initial solution.
As a consequence of Proposition~\ref{lem:enumeration-recognition}, we do not concern ourselves with those graph classes for which such an algorithm does not exist, e.g., unit disk graphs \cite{breu-98-unit-disk-graph} and $3$-colorable graphs.
Since all the graph classes we study are well known to be recognizable in polynomial time, we will be focused on constructing the successor function and proving the strong connectedness of the solution map implied by it.

Two quick remarks on Theorem~\ref{thm:solution-map} are in order.   First, the bound on traversal time implies a bound on the maximum out-degree of $M$, while there is no requirement on in-degrees of the solution map $M$.  Second, this approach can be generalized (1) to the case where $M$ is not strongly connected, as long as one solution from each strongly connected component can be found in certain time bound; and (2) to the case where some nodes of $M$ have larger out-degrees, as long as one can start the search from a different node, and has already sufficient number of solutions before reaching the first solution in it: We can amortize the time by delaying the output of each solution; see \cite{johnson-88-generating-maximal-independent-sets} on more details.

\subsection{Reductions between the two variations}

We have seen that the number of maximal $\mathcal{P}$ sets of a graph may be significantly greater than that of maximal connected $\mathcal{P}$ sets of the same graph.  
Every maximal connected $\mathcal{P}$ set of a graph $G$ is a subset of some maximal $\mathcal{P}$ set of $G$, and two maximal connected $\mathcal{P}$ sets that are both subsets of a same maximal $\mathcal{P}$ set have to be disjoint.  Therefore, it is quite straightforward to verify that the number of maximal connected $\mathcal{P}$ sets of a graph cannot be more than $n$ times of the number of maximal $\mathcal{P}$ sets of the same graph.
The following fact is immediate from the definition of hereditary.

\begin{proposition}\label{lem:redundant-vertices}
  Let $\cal P$ be a hereditary graph class.  Let $G$ be a graph, and $U$ a set of vertices of $G$ that are not contained in any forbidden set of $G$.  A set $S\subseteq V(G)$ is a maximal $\cal P$ set of $G$ if and only if $U\subseteq S$ and $S\setminus U$ is a maximal $\cal P$ set of $G - U$.
\end{proposition}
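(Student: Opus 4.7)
The plan is to exploit the characterization of hereditary classes by forbidden induced subgraphs, which converts the statement into a bookkeeping exercise about which forbidden sets can appear in which induced subgraphs. Throughout, let $\mathcal{F}$ denote a set of forbidden induced subgraphs defining $\mathcal{P}$, so a set $X\subseteq V(G)$ is a forbidden set of $G$ exactly when $G[X]\in\mathcal{F}$. The hypothesis on $U$ then says: every forbidden set of $G$ is contained in $V(G)\setminus U$. I will prove both directions separately.

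For the forward direction, assume $S$ is a maximal $\mathcal{P}$ set of $G$. First I would argue $U\subseteq S$. If some $u\in U\setminus S$ existed, then $G[S\cup\{u\}]$ would contain some forbidden set $X\subseteq S\cup\{u\}$; but $u\notin X$ by the hypothesis on $U$, so $X\subseteq S$, contradicting $G[S]\in\mathcal{P}$. Hence $U\subseteq S$, and $S\setminus U$ is a $\mathcal{P}$ set of $G-U$ since it induces a subgraph of the $\mathcal{P}$ graph $G[S]$. To see it is maximal, take any $v\in V(G-U)\setminus(S\setminus U)$; note $v\notin S$, so by maximality of $S$ there is a forbidden set $X\subseteq S\cup\{v\}$ with $v\in X$. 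Again $X\cap U=\emptyset$ by hypothesis, so $X\subseteq(S\setminus U)\cup\{v\}$, which certifies that $(S\setminus U)\cup\{v\}$ is not a $\mathcal{P}$ set of $G-U$.

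For the reverse direction, assume $U\subseteq S$ and that $S\setminus U$ is a maximal $\mathcal{P}$ set of $G-U$. Any forbidden set $X$ contained in $S$ satisfies $X\cap U=\emptyset$, hence $X\subseteq S\setminus U$, which would contradict $(G-U)[S\setminus U]\in\mathcal{P}$; thus $G[S]\in\mathcal{P}$. For maximality, any $v\in V(G)\setminus S$ lies in $V(G-U)\setminus(S\setminus U)$ since $U\subseteq S$, so by hypothesis $(G-U)[(S\setminus U)\cup\{v\}]$ contains some forbidden set $X$; this same $X$ witnesses $G[S\cup\{v\}]\notin\mathcal{P}$.

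There is no real obstacle here: the whole argument is essentially the observation that forbidden sets are insensitive to adding or removing vertices from $U$, so maximality transfers cleanly in both directions. The only mild care needed is to use the fact that $U$ avoids \emph{every} forbidden set (not merely every forbidden set contained in $S$) when handling the ``add $v$'' steps, since the candidate forbidden sets witnessing failure of maximality live in $S\cup\{v\}$ or $(S\setminus U)\cup\{v\}$ rather than in $S$ itself.
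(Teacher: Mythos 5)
Your proof is correct, and it is precisely the routine argument the paper has in mind: the paper states this proposition without proof, calling it ``immediate from the definition of hereditary,'' and your write-up simply makes explicit the observation that forbidden sets avoid $U$ entirely, so membership in $\mathcal{P}$ and maximality both transfer between $G$ and $G-U$. Both directions check out, including the mildly delicate point that the forbidden sets witnessing non-maximality live in $S\cup\{v\}$ rather than in $S$.
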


Proposition~\ref{lem:redundant-vertices} enables us to reduce the maximal induced $\mathcal{P}$ subgraphs problem to its connected variation, under certain conditions.
A vertex in a graph $G$ is \emph{universal} if it is adjacent to all the other vertices in this graph.  We say that a graph class $\mathcal{P}$ is \emph{closed under adding universal vertices} if for any graph $G$ in $\mathcal{P}$, the graph obtained by adding a universal vertex to $G$, i.e., a new vertex with edges connecting it to all the vertices in $G$, is also in $\mathcal{P}$.

\begin{corollary}\label{lem:connected}
  Let $\mathcal{P}$ be a graph class that is closed under adding universal vertices.  If the maximal connected induced $\mathcal{P}$ subgraphs problem can be solved in time $f(n, N)$ for some function $f$, then the maximal induced $\mathcal{P}$ subgraphs problem can be solved in time $f(n, N)$ as well.
\end{corollary}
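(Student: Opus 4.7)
The plan is to reduce the maximal induced $\mathcal{P}$ subgraphs problem on $G$ to its connected variation on a slightly augmented graph $G^+$ obtained by attaching a single universal vertex $u$ to $G$. Since $\mathcal{P}$ is closed under adding universal vertices, I would first verify that $u$ is not contained in any forbidden set of $G^+$: any forbidden set containing $u$ would induce a graph of the form $H+u$ with $u$ universal, where $H$ itself (being a proper induced subgraph of a minimal forbidden subgraph) lies in $\mathcal{P}$, contradicting closure under universal vertices. Hence Proposition~\ref{lem:redundant-vertices} applies with $U=\{u\}$, giving a bijection between maximal $\mathcal{P}$ sets $S$ of $G$ and maximal $\mathcal{P}$ sets $S\cup\{u\}$ of $G^+$.

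Next I would observe that this bijection actually lands in the set of maximal \emph{connected} $\mathcal{P}$ sets of $G^+$. Every maximal $\mathcal{P}$ set of $G^+$ contains $u$ (by Proposition~\ref{lem:redundant-vertices}), and including a universal vertex automatically makes the induced subgraph connected. Conversely, for any maximal connected $\mathcal{P}$ set $T$ of $G^+$, the set $T\cup\{u\}$ induces $G^+[T]$ with a universal vertex added, which is still in $\mathcal{P}$ by hypothesis and remains connected; hence $u\in T$ by maximality. So the maximal connected $\mathcal{P}$ sets of $G^+$ are precisely the maximal $\mathcal{P}$ sets of $G^+$, and stripping off $u$ yields exactly the maximal $\mathcal{P}$ sets of $G$.

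The algorithm is then immediate: build $G^+$ in linear time, run the given algorithm for the maximal connected induced $\mathcal{P}$ subgraphs problem on $G^+$, and return each output with $u$ removed. The instance has $n+1$ vertices and the same number $N$ of solutions as the original, so the running time is $f(n+1,N)$, which absorbs into $f(n,N)$ for any of the complexity notions considered (polynomial delay, incremental polynomial, polynomial total time).

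I do not foresee a substantive obstacle here; the only subtle point is the verification that $u$ never belongs to a forbidden set, which is the place where the ``closed under adding universal vertices'' hypothesis is actually used and without which the reduction would fail. Everything else is routine bookkeeping combined with Proposition~\ref{lem:redundant-vertices}.
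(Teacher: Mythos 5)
Your proposal is correct and follows essentially the same route as the paper: add a universal vertex $u$ to form $G^+$, observe that every maximal $\mathcal{P}$ set of $G^+$ contains $u$ and is therefore connected, and conclude that the maximal $\mathcal{P}$ sets of $G$ correspond bijectively to the maximal connected $\mathcal{P}$ sets of $G^+$. Your version just makes explicit a couple of steps the paper leaves implicit (that $u$ lies in no forbidden set, so Proposition~\ref{lem:redundant-vertices} applies, and the converse direction that every maximal connected $\mathcal{P}$ set of $G^+$ must contain $u$).
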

\begin{proof}
  Let $G$ be any graph, and let $G'$ be obtained by adding a universal vertex $u$ to $G$.
Since $\mathcal{P}$ is closed under adding universal vertices, any maximal induced $\mathcal{P}$ subgraph of $G'$ contains $u$, hence being connected.  Therefore, a set $S\subseteq V(G)$ is a maximal $\mathcal{P}$ set of $G$ if and only if $S\cup \{u\}$ is a maximal connected $\mathcal{P}$ set of $G'$.
\end{proof}

A graph class $\mathcal{P}$ is {closed under adding universal vertices} if and only if none of its forbidden induced subgraph contains a universal vertex.  It is easy to verify that the following graph classes are {closed under adding universal vertices}: interval graphs, trivially perfect graphs, complete multi-partite graphs, split graphs, complete split graphs, pseudo-split graphs, and threshold graphs.  
On the other hand, for any $p \ge 2$, a complete $p$-partite graph is either edgeless or connected.  Therefore, an algorithm for the connected variation, together with an algorithm for enumerating maximal independent sets, can be used to enumerate maximal induced complete $p$-partite subgraphs.  Although the class of wheel-free graphs is not closed under adding universal vertices, we are still able to find an easy reduction for this purpose.

\begin{proposition}\label{lem:connected-wheel-free}
  If the maximal connected induced wheel-free subgraphs problem can be solved in time $f(n, N)$ for some function $f$, then the maximal induced wheel-free subgraphs problem can be solved in time $f(n, N)$.
\end{proposition}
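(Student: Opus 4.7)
The plan is to mimic the reduction of Corollary~\ref{lem:connected}, even though wheel-free graphs are not closed under adding universal vertices. Instead of a single universal vertex, I would attach an extra vertex $u$ to $G$ through a pendant at each original vertex. Formally, construct $G'$ from $G$ by adding a fresh vertex $u$ together with, for every $v \in V(G)$, a pendant $p_v$ whose only neighbors are $v$ and $u$; so $|V(G')| = 2n+1$, and $G'$ is connected through the star at $u$. The intended bijection between maximal wheel-free sets of $G$ and maximal connected wheel-free sets of $G'$ is
\[
  \phi(S) \;=\; S \,\cup\, \{u\} \,\cup\, \{p_v : v \in V(G)\},
\]
after which the algorithm is obvious: build $G'$, run the assumed algorithm for the connected variation, and emit $T' \cap V(G)$ for each output $T'$.

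The main work, and the only delicate point, is the structural claim that every induced wheel of $G'$ lies entirely inside $V(G)$---equivalently, no wheel uses $u$ or any $p_v$. I would verify this case by case. Neither $u$ nor any $p_v$ can be the hub of a wheel: the neighborhood $N_{G'}(u) = \{p_v : v \in V(G)\}$ is independent, and each $p_v$ has degree two, so neither can dominate an induced cycle of length at least three. If $u$ were to lie on the cycle of a wheel, its two cycle-neighbors would be two distinct pendants $p_v, p_{v'}$, and the hub, being adjacent to $u$, would itself have to be some pendant $p_w$; but $p_w$ has only two neighbors in $G'$, namely $u$ and $w$, so it cannot be adjacent to the three or more vertices of the cycle. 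A symmetric argument, using that every $p_v$ has only the two neighbors $u$ and $v$, rules out any pendant lying on the cycle of a wheel. Hence the induced wheels of $G'$ are exactly those of $G$.

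Granted this structural fact, the bijection is routine. Well-definedness: $\phi(S)$ meets $V(G)$ in $S$, so it is wheel-free; it is connected via the star at $u$; and any $v \notin S$ one tried to add would force, by maximality of $S$ in $G$, an induced wheel inside $G[S \cup \{v\}]$. Surjectivity: given a maximal connected wheel-free set $T'$ of $G'$, the set $T' \cup \{u\} \cup \{p_v : v \in V(G)\}$ is still wheel-free (no wheel touches an auxiliary vertex) and still connected (through $u$), so by maximality $T'$ already contains $u$ and every pendant; and then $T' \cap V(G)$ is maximal wheel-free in $G$, since any strict extension would lift through $\phi$ to a strict connected wheel-free superset of $T'$. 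Because $|V(G')| \leq 2n+1$ and the number of solutions is preserved under $\phi$, running the assumed algorithm on $G'$ and stripping the auxiliary vertices takes time $f(2n+1, N)$, which matches the claimed bound $f(n, N)$ for polynomial $f$ (as used implicitly in the analogous Corollary~\ref{lem:connected}).
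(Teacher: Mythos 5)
Your construction is exactly the paper's: attach a degree-two pendant $p_v$ (the paper's $v'$) to each original vertex and a vertex $u$ adjacent to all pendants, observe that no degree-two vertex can lie in a wheel so every wheel of $G'$ sits inside $V(G)$, and conclude that every maximal wheel-free set of $G'$ contains all auxiliary vertices and is therefore connected. The paper compresses the bijection step by citing Proposition~\ref{lem:redundant-vertices}, which you instead verify by hand; the argument is correct either way.
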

\begin{proof}
  Let $G$ be the input graph to the maximal induced wheel-free subgraphs problem.  For each vertex $v$ in $G$, we add a new vertex $v'$ and make it adjacent to only $v$, and then we add a new vertex $u$, and make it adjacent to all the new vertices.  Let the resulting graph be denoted by $G'$.
  Since the vertex $v'$ has degree two in $G'$, it is not in any wheel of $G'$.  It also follows that $u$ is not in any wheel either.  According to Proposition~\ref{lem:redundant-vertices}, every maximal induced wheel-free subgraph of $G'$ contains all the new vertices, hence being connected.  The statement follows.
\end{proof}

The same construction as used in the proof of Proposition~\ref{lem:connected-wheel-free} applies to triangle-free graphs.  With a similar idea as Proposition~\ref{lem:connected-wheel-free}, for $d\ge 2$, we can reduce the maximal induced $d$-degree-bounded subgraphs problem to the maximal connected induced $(d+1)$-degree-bounded subgraphs problem: Make a binary tree with $n$ leaves, and connect each leaf to a different vertex in $G$.  However, we are not able to reduce the maximal induced $\mathcal{P}$ subgraphs problem to the maximal connected induced $\mathcal{P}$ subgraphs problem when $\mathcal{P}$ is one of the following classes: cluster graphs (whose connected variation is the maximal cliques problem), unit interval graphs, block graphs, 3-leaf powers, and basic 4-leaf powers.  Another important graph class for which a reduction is difficult to make is the class of forests.

For most classes, it is very unlikely that we can reduce the connected variation to its counterpart.  The situation changes completely when it comes to $t$-restricted versions, where we only need to worry about solutions containing the special vertices.  The details are left to Section~\ref{sec:incremental-poly}.

\subsection{Proof of Theorem~\ref{thm:connected-incp=totalp}}
  Similar as the proof of Proposition~\ref{lem:enumeration-recognition}, we need to make calls to a given polynomial-total-time algorithm, which we may abort prematurely after certain number of steps.  One should be warned that we cannot make any assumptions on the behavior of the given algorithm.  In particular, the algorithm may output the solutions at the very end of its run, and hence we cannot expect any output unless we afford to wait for it to finish.  The information we can gather from an aborted run is that the number of solutions is larger than the specified value; in other words, the given algorithm is essentially used as an oracle in such cases.
We use this information to produce a ``core subgraph'' of the input graph that has a suitable number of solutions, not too many so that it can be solved in the desired time, and not too few so that one of its solutions can be extended to a new solution of the input graph.

\begin{figure}[h!]
  \tikz\path (0,0) node[draw=gray!50, text width=.9\textwidth, rectangle, rounded corners, inner xsep=20pt, inner ysep=10pt]{
    \begin{minipage}[t!]{\textwidth} \small
      Procedure $\textsc{next}(G, {\cal S}, A)$

      {\sc Input}: A graph $G$ with vertices $v_1$, $\ldots$, $v_n$, a collection $\cal S$ of solutions, and a $p(n, N)$-time 
      \\ \phantom{\sc Input:}
      algorithm $A$ for the maximal connected induced $\mathcal{P}$ subgraphs problem.
      \\
      {\sc Output}: A solution not in $\cal S$, or ``completed'' if there is no further solution.

      \begin{tabbing}
        AAa\=AAa\=AAa\=AAa\=MMMMMMMMMMMMMAAAAAAAAAAAAAAAAAAAAAAAAA\=A \kill
        1.\> apply algorithm $A$ to $G$, aborted after $p(n, |{\cal S}| + 1)$ steps;
        \\
        2. \> {\bf if} it finishes {\bf then}
        \\
        2.1. \>\> {\bf if} all solutions found are in $\cal S$ {\bf then return} ``completed'';
        \\
        2.2. \>\> {\bf else return} a solution not in $\cal S$;
        \\
        3. \> $G_0 \leftarrow G$; 
        \\
        4. \> {\bf for each} $i\leftarrow 1, \ldots, n$ {\bf do}
        \\
        4.1. \>\> apply algorithm $A$ to $G_{i-1} - v_{i}$, aborted after $p(n, n |{\cal S}| + 1)$ steps;
        \\
        4.2. \>\> {\bf if} it finishes {\bf then}
        \\
        4.2.1. \>\>\> {\bf if} a solution $S'$ of $G_{i-1} - v_{i}$ is not a subset of any set in $\cal S$ {\bf then}
        \\
        \>\>\>\> extend $S'$ to a solution $S$ of $G$ and {\bf return} $S$;
        \\
        4.2.2. \>\>\> $G_{i} \leftarrow G_{i-1}$;
        \\
        4.3. \>\> {\bf else} $G_{i} \leftarrow G_{i-1} - v_{i}$;
       \\
       5. \> apply algorithm $A$ to $G_n$; \qquad\qquad \comment{This time wait for it to finish.}
       \\
       6. \> find a solution $S'$ of $G_n$ that is not a subset of any set in $\cal S$;
       \\
       7. \> extend $S'$ to a solution $S$ of $G$ and {\bf return} $S$.
      \end{tabbing}  

    \end{minipage}
  };
  \caption{The procedure for finding the next solution of the maximal connected induced $\mathcal{P}$ subgraphs problem.}
\label{fig:alg-connected-incremental}
\end{figure}

\paragraph{Theorem~\ref{thm:connected-incp=totalp} (restated).}
  For any hereditary graph class $\mathcal{P}$, the maximal connected induced $\mathcal{P}$ subgraphs problem can be solved in polynomial total time if and only if it can be solved in incremental polynomial time.
\begin{proof}
  The if direction is trivially true, and we now show the only if direction.
  Let $G$ be the input graph, and by a solution we mean  a maximal connected $\mathcal{P}$ set of $G$.
  Suppose that algorithm $A$ solves the maximal connected induced $\mathcal{P}$ subgraphs problem with at most $p(n, N)$ steps for some polynomial function $p$.  By Proposition~\ref{lem:enumeration-recognition}, there is a polynomial function $q$ such that we can decide in $q(n)$ time whether a graph on $n$ vertices is a connected graph in $\mathcal{P}$.
We can thus use Proposition~\ref{lem:extension} to extend any connected $\mathcal{P}$ set of $G$ to a maximal connected $\mathcal{P}$ set in time $n q(n)$.
We repetitively call the procedure \textsc{next} described in Figure~\ref{fig:alg-connected-incremental}, which makes calls, abortive or not, to $A$ to find the next solution of $G$, until the procedure returns ``completed.'' 

The main work of procedure \textsc{next} is done in steps~4 and 5, so let us start from understanding them, and in particular, properties of $G_n$ when the procedure reaches step~5.  The procedure proceeds to step~4 only when the call to $A$ in step~1 did not finish, which means that there are more than $|\mathcal{S}|+1$ solutions of $G$.
This implies, in particular, that $G$ itself is not a connected graph in $\mathcal{P}$, and any solution of $G$ is a proper subset of $V(G)$.
Note that at least one call made in step~4.1 finishes: If it does not for the first $n-2$ iterations, then $V(G_{n-2}) = \{v_{n-1}, v_n\}$, and the call on $G_{n-2} - v_{i-1}$ surely returns because a graph on a single vertex is in $\mathcal{P}$ (which is nonempty by assumption).  Also note that when the condition in step~4.2 is satisfied, the condition in step~4.2.1 is trivially satisfied if $\mathcal{S} = \emptyset$.  Therefore, $\mathcal{S}$ is nonempty when the procedure reaches step~5.
We then argue that at least one call of $A$ made in step~4.1 does not finish if the procedure reaches step~5.
Let  $S$ be a solution of $G$ that is not in $\mathcal{S}$, and let $p$ be the smallest number such that $v_p\in V(G)\setminus S$.  Suppose that the calls of $A$ made in step~4.1 finish for all the first $p$ iterations, then $G_{p-1} = G$, and the condition of step~4.2.1 must be true in the $p$th iteration, because $S$ would be a solution of $G_{p-1} - v_p$.  Now suppose that $q$ is the largest number such that the call of $A$ made in step~4.1 of the $q$th iteration, on $G_{q-1} - v_q$, does not finish.  Then $G_n = G_q = G_{q-1} - v_q$, and there are at least $n |\mathcal{S}| + 1$ solutions of $G_n$.  This also implies that $V(G_n)$ is not empty.

For the correctness of this procedure, we show that whatever the results of the calls to $A$, procedure \textsc{next} always returns a correct answer: a new solution of $G$ if $|{\cal S}| < N$ or ``completed'' otherwise.  If $A$ finishes in step~1, which means that it has found all solutions of $G$ , then it is clear that step 2 returns the correct answer.  This must happen when $|{\cal S}| = N$.
Once a solution is returned by step~4.2.1, its correctness is ensured by Proposition~\ref{lem:extension}.
When the procedure reaches step~5, there are more than $n |\cal S|$ solutions of $G_n$.  Hence, the call made in step~5 is guaranteed to find all the solutions of $G_n$.  By Proposition~\ref{lem:subgraph-cardinality}(iv), at least one solution of $G_n$ is not contained in any solution in $\cal S$.  This justifies step~6 and then step~7 always returns a correct solution, again, by Proposition~\ref{lem:extension}.

We now calculate the running time of the procedure, for which the focus is on step~5, because this is the only call of $A$ that is never aborted.  
Let $v_i$ be a vertex in $G_n$; note that the call made on $G_{i - 1} - v_i$ (step~4.1) has finished.
If $G_{i - 1} - v_i$, an induced subgraph of $G$, has more than $n |\cal S|$ maximal connected $\mathcal{P}$ sets, then by Proposition~\ref{lem:subgraph-cardinality}(iv), at least one of them is not a subset of any solution in $\mathcal{S}$, and the condition of step~4.2.1 is true, whereupon the procedure should have terminated before reaching step~5.
  Therefore, the subgraph $G_{i - 1} - v_i$ has at most $n |\cal S|$ maximal connected $\mathcal{P}$ sets.
 Since $G_n - v_i$ is an induced subgraph of $G_{i - 1} - v_i$, by Proposition~\ref{lem:subgraph-cardinality}(iii), $G_n - v_i$ has at most $n^2 |\cal S|$ maximal connected $\mathcal{P}$ sets.
We have shown that each solution of $G_n$ is a proper subset of $V(G_n)$, which is hence also a solution of $G_n - v$ for some $v\in V(G_n)$.  Thus, the total number of solutions of $G_n$ is at most
$|V(G_n)| \cdot n^2|{\cal S}|\le n^3|{\cal S}|$, which means that step~5 takes time $p(n, n^3 |{\cal S}|)$.
Steps~1--3 take $p(n, |{\cal S}| + 1)$, $O(n^2)$, and $O(1)$ time respectively.  Step~4 takes $n \cdot ( p(n, n |{\cal S}| + 1) + n^2\cdot |{\cal S}| + n\cdot q(n))$ time.  Steps~6 and 7 take $O(n^2\cdot |{\cal S}| + n\cdot q(n))$ time.  Putting them together, we can conclude that the running time of procedure \textsc{next} is polynomial on $n$ and $|\cal S|$, and  this completes the proof.
\end{proof}

A similar result as Theorem~\ref{thm:connected-incp=totalp} for the maximal induced $\mathcal{P}$ subgraphs problem was proved by Bioch and Ibaraki~\cite{bioch-95-identification-dualization-positive-boolean-functions} in the setting of dualization of monotone Boolean functions.  A slight modification of our proof of Theorem~\ref{thm:connected-incp=totalp} also works for this classic result, and we include this direct and arguably simpler proof in the appendix.
\begin{theorem}[\cite{bioch-95-identification-dualization-positive-boolean-functions}]
  \label{thm:incp=totalp}
  For any hereditary graph class $\mathcal{P}$, the maximal induced $\mathcal{P}$ subgraphs problem can be solved in polynomial total time if and only if it can be solved in incremental polynomial time.
\end{theorem}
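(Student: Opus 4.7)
The plan is to adapt, almost verbatim, the procedure of Figure~\ref{fig:alg-connected-incremental} used for Theorem~\ref{thm:connected-incp=totalp}, with Proposition~\ref{lem:subgraph-cardinality}(i) and (ii) now playing the roles that (iii) and (iv) played in the connected setting. The dividend is that each maximal $\mathcal{P}$ set of $G$ contains at most one, rather than at most $n$, maximal $\mathcal{P}$ sets of any induced subgraph $G[U]$, so all abort thresholds shrink by a factor of $n$, and the procedure still delivers either a fresh solution or a ``completed'' certificate in time polynomial in $n$ and $|\mathcal{S}|$.

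First I would reuse the procedure of Figure~\ref{fig:alg-connected-incremental}, but with $A$ now the given polynomial-total-time algorithm for the non-connected problem, and with the threshold in step~4.1 changed from $p(n, n|\mathcal{S}|+1)$ to $p(n, |\mathcal{S}|+1)$. The rationale comes directly from Proposition~\ref{lem:subgraph-cardinality}(ii): once $G_{i-1} - v_i$ has more than $|\mathcal{S}|$ maximal $\mathcal{P}$ sets, distinct such sets must lie inside distinct members of $\mathcal{S}$, so at least one is not contained in any element of $\mathcal{S}$ and step~4.2.1 triggers. Correctness of the returns is then verified as in the connected proof: Proposition~\ref{lem:extension} extends any candidate $S'$ to a maximal $\mathcal{P}$ set $S$ of $G$, and $S \notin \mathcal{S}$ because any $S^{*}\in\mathcal{S}$ containing $S$ would also contain $S'$, contradicting the guard of step~4.2.1.

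The running time is where the tighter bounds materialise. If the procedure reaches step~5 then, letting $q$ be the largest index for which the step~4.1 call did not terminate, we have $G_n = G_{q-1} - v_q$, whose solution count exceeds $|\mathcal{S}|$, so part~(ii) again guarantees a solution of $G_n$ not contained in any element of $\mathcal{S}$ and steps~6--7 succeed. For the cost of step~5 itself, I would observe that for every $v_i\in V(G_n)$ the call on $G_{i-1} - v_i$ finished without tripping step~4.2.1, so by (ii) that graph has at most $|\mathcal{S}|$ maximal $\mathcal{P}$ sets, and part~(i) transfers this bound to $G_n - v_i$. Every solution of $G_n$ that is a proper subset of $V(G_n)$ stays maximal after deleting a single omitted vertex, because the induced subgraph on $T\cup\{u\}$ is unaffected by removing any $v_i\notin T\cup\{u\}$; hence $G_n$ has at most $n|\mathcal{S}|+1$ solutions in total, and step~5 costs $p(n, n|\mathcal{S}|+1)$. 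All other steps are manifestly polynomial in $n$ and $|\mathcal{S}|$.

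I do not foresee any genuinely new obstacle: the delicate core of the connected proof, namely engineering a subgraph $G_n$ whose solution count is small enough to enumerate yet large enough to yield a fresh solution, transports without change. The one routine but essential check is that maximality of a solution in $G_n$ is preserved after removing any single vertex outside the solution, which is precisely where parts~(i) and (ii) of Proposition~\ref{lem:subgraph-cardinality} do their work and where the non-connected version earns its tighter constants over the connected one.
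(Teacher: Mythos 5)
Your proposal is correct and is essentially the paper's own proof (given in the appendix): the same procedure with the step‑4.1 threshold lowered to $p(n,|\mathcal{S}|+1)$, Proposition~\ref{lem:subgraph-cardinality}(i)--(ii) in place of (iii)--(iv), and the same $O(n|\mathcal{S}|)$ bound on the number of solutions of $G_n$ obtained by noting that every solution of $G_n$ is a proper subset of $V(G_n)$ and hence survives as a maximal $\mathcal{P}$ set of some $G_n-v_i$. The only cosmetic difference is that you justify reaching step~5 via the largest index $q$ with a non-terminating call (whose existence you should note follows because otherwise $G_n=G$ and any unreported solution of $G$ would already have triggered step~4.2.1), whereas the paper instead shows by a minimal-counterexample argument that every $G_i$ has more than $|\mathcal{S}|+1$ solutions.
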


\section{Enumeration with polynomial delay}\label{sec:poly-delay}

We use the general terminology of enumeration problems to introduce retaliation-free paths, where every solution is a subset of some ground set $U$.  Let $\operatorname{succ}: 2^U \times U\to 2^{2^U}$ be the successor function for the problem such that for each solution $S$ and $x\in U\setminus S$, every solution in  $\operatorname{succ}(S, x)$ contains $x$.
For a subset $X\subseteq U$, we use $\Sigma(X)$ to denote all sequences of elements in $X$, and for a sequence $\sigma\in \Sigma(X)$ and $x\in X$ that is not in $\sigma$, let $\sigma+x$ denote the new sequence obtained by appending $x$ to the end of $\sigma$.
For a fixed solution $S^*$ of $U$, the \emph{victim function} $D:\Sigma(S^*) \to \Sigma(S^*)$ satisfies (1) the elements in $D(\langle\rangle)$ are precisely those in $S^*$, and (2) for an element $x$ in $D(\sigma)$, every element in $D(\sigma+x)$ is before $x$ in $D(\sigma)$.
Note that we do \emph{not} require that two elements in $D(\sigma+x)$ to appear in the same order in $D(\sigma+x)$ as in $D(\sigma)$. 

Let us motivate these definitions.  In the lexicographically metric used by Schwikowski and Speckenmeyer~\cite{schwikowski-02-enumerate-fvs} and Conte and Uno~\cite{conte-19-polynomial-delay}, elements in $U$ are numbered in a way that the smallest ones are in $S^*$.  This ordering of $S^*$ can be viewed as $D(\langle \rangle)$ in our setting, and for another solution $S$, we try to add to it elements in $S^*\setminus S$ in this order.  However, we allow our successor function to be defined in a way that every successor of a solution $S$ is lexicographically larger than $S$.  If the addition of an element $x$ leads to the removal of elements smaller than $x$ in $D(\langle \rangle)$, called \emph{victims}, then $x$ is a \emph{gainer}, and we immediately start dealing with its victims.   To keep track the gainers we associate a sequence $\sigma$ to each solution.
Elements in the sequence $D(\langle x \rangle)$ are all the possibly victims of $x$, and the order dictates who is to be added first.  If further victims are induced during processing a victim $y$ of a sequence $\sigma$ of gainers, then $y$ becomes a new gainer and is appended to the end of $\sigma$; if the victims of the last gainer in $\sigma$ have all been restored, it is removed from $\sigma$; otherwise, $\sigma$ remains unchanged, and we proceed to the next victim of the last element in $\sigma$.  In summary, the input of $D$ is a sequence of gainers, while the output of $D$ is all the possible victims of the last gainer, in a certain order.

 Let $M(U)$ be the implied solution map by the successor function $\operatorname{succ}$, and let $S$ be a solution of $U$.  A path $S_0 S_1 S_2\ldots$ of solutions in $M(U)$ is called a \emph{retaliation-free path} from $S$ if $S_0 = S$ and $\sigma_0 = \langle\rangle$, and for any $i$ with $S_{i} \ne S^*$,
\begin{enumerate}[(i)]
\item\label{retaliation-free-2} $S_{i+1}\in \operatorname{succ}(S_i, x_i)$, where $x_i$ is the first element of $D(\sigma_i)$ that is not in $S_i$; 
  \item\label{retaliation-free-3} every element in $(S_{i}\setminus S_{i+1}) \cap S^*$ appears in $D(\sigma_i)$; and
    \item\label{retaliation-free-4} $\sigma_{i+1}$ is the longest prefix $\sigma'$ of $\sigma_{i} + x_i$ such that at least one element in $D(\sigma')$ is not in $S_{i+1}$.
\end{enumerate}

Therefore, a retaliation-free path terminates only when it reaches $S^*$.  Note that for any $i\ge 0$, the sequence $\sigma_{i+1}$ either is a prefix of $\sigma_{i}$, or has precisely one more element than $\sigma_{i}$.  This implies, in particular, that every prefix of $\sigma_i$ is $\sigma_j$ for some $j \le i$.
We are now ready for the formal version of Theorem~\ref{thm:retaliation-free}.
\begin{theorem}\label{thm:formal-retaliation-free}
  Let $M$ be the solution map of an enumeration problem.  If 
  for any solution $S^*$, there exists a retaliation-free path from any other solution $S$, then $M$ is strongly connected.
\end{theorem}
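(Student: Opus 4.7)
My plan is to show that any retaliation-free path must be finite; the remark preceding the theorem (``a retaliation-free path terminates only when it reaches $S^*$'') then forces the path to end at $S^*$, and since condition~(i) turns each step $S_i \to S_{i+1}$ into an arc of $M$, the path yields a directed walk from $S$ to $S^*$ in $M$. Applying this to every ordered pair $(S, S^*)$ of solutions gives the desired strong connectedness, so the real work is finiteness.

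To set the argument up, I would organize the possible $\sigma$'s into a rooted tree $\mathcal{T}$ whose root is $\langle\rangle$ and whose children of a node $\sigma$ are the sequences $\sigma + x$ for $x \in D(\sigma)$. Property~(2) of the victim function forces $|D(\sigma + x)| < |D(\sigma)|$, hence $\mathcal{T}$ has depth at most $|S^*|$ and is finite. By~(iv) and the noted fact that $\sigma_{i+1}$ is either a prefix of $\sigma_i$ or equals $\sigma_i + x_i$, the sequence $\sigma_0, \sigma_1, \ldots$ traces a walk on $\mathcal{T}$ that at each step either descends one level to a child or jumps back to some ancestor. The crucial retaliation-free property, which I would isolate as a preliminary lemma, is that once we descend from $\sigma_i$ to $\sigma_i + x_i$ the element $x_i$ cannot leave $S$ throughout our stay in that subtree: indeed $x_i \notin D(\sigma_i + x_i)$, and for any $\sigma_j$ extending $\sigma_i + x_i$ iterated application of~(2) gives $D(\sigma_j) \subseteq D(\sigma_i + x_i)$, so condition~(iii) forbids $x_i$ from being among the removed elements.

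Finiteness is then proved by a nested induction. The outer induction on $|\sigma|$ reduces to the lemma that, from any state $(S_i, \sigma_i)$ on the path, in finitely many steps the path reaches either $S^*$ or a state $(S_j, \sigma_j)$ with $\sigma_j$ a proper prefix of $\sigma_i$. Given this lemma, the base case $\sigma_i = \langle\rangle$ admits no proper prefix and thus forces termination at $S^*$, while larger $|\sigma_i|$ chain into strictly shorter ones. The lemma itself is proved by an inner induction on $|D(\sigma_i)|$: each iteration while $\sigma_i$ remains unchanged either triggers a descent to $\sigma_i + x_i$---which terminates by the inner inductive hypothesis while locking $x_i$ into $S$---or stays at $\sigma_i$ because $D(\sigma_i + x_i)$ is already contained in $S_{i+1}$.

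The step I expect to be hardest is the accounting at a fixed level $\sigma_i$, since between two recursive descents the elements of $D(\sigma_i)$ previously added may be removed again and the first-missing position in $D(\sigma_i)$ may move backward. I plan to control this by a lexicographic potential on the tuple of first-missing positions across the prefixes of $\sigma_i$, using the locking-in observation to argue that every completed descent into a child $\sigma_i + x_i$ followed by its return strictly advances the first-missing position in $D(\sigma_i)$; hence only boundedly many such descents, and only boundedly many intervening stay-at-$\sigma_i$ iterations, can occur before the path either reaches $S^*$ or leaves $\sigma_i$ for a shallower ancestor. Assembling these bounds across the levels of $\mathcal{T}$ produces finiteness of the full path and completes the proof.
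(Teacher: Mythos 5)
Your architecture is genuinely different from the paper's. The paper first proves a global ``consecutiveness'' claim---for every $\sigma'$, the steps $j$ at which $\sigma'$ is a prefix of $\sigma_j$ form an interval---via an intricate contradiction argument, and then derives termination from the eventual constancy of $\sigma_j$. You instead work locally on the tree of sequences with a nested induction, and the parts of your plan that you spell out are sound: the locking lemma is correct ($D(\sigma_j)\subseteq D(\sigma_i+x_i)\not\ni x_i$ for every $\sigma_j$ extending $\sigma_i+x_i$, so the removal condition protects $x_i$), and so is the claim that a completed descent into $\sigma_i+x_i$ followed by a return to $\sigma_i$ strictly advances the first-missing position in $D(\sigma_i)$: during the stay only elements of $D(\sigma_i+x_i)$ can be removed from $S^*$, the longest-prefix condition forces all of them to be present at the moment of return, and elements of $D(\sigma_i)$ before $x_i$ that are outside $D(\sigma_i+x_i)$ were present at descent time and cannot be touched inside the subtree.

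The gap is exactly where you predicted, and the proposed potential does not close it. At a ``stay'' step $j$ with $\sigma_j=\sigma_{j+1}=\sigma_i$, the removal condition only guarantees $(S_j\setminus S_{j+1})\cap S^*\subseteq D(\sigma_i)$; it does not confine the removed elements to $D(\sigma_i+x_j)$ or to positions after $x_j$. So a stay step may delete an element of $D(\sigma_i)$ that precedes $x_j$ but lies outside $D(\sigma_i+x_j)$: this moves the first-missing position backwards without triggering a descent, and your lexicographic potential is not monotone (the same removal also regresses the first-missing position in $D(\sigma')$ for every shallower prefix $\sigma'$, since $D(\sigma_i)\subseteq D(\sigma')$). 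Concretely, take $D(\sigma_i)=\langle a,b\rangle$ with $D(\sigma_i+a)=D(\sigma_i+b)=\langle\,\rangle$, and suppose the chosen successor adds $a$ while removing $b$ and vice versa; every condition in the definition is satisfied at every step, yet the path alternates between two solutions forever, so neither the number of descents nor the number of stay steps at level $\sigma_i$ is bounded. To rescue the argument you need the stronger fact that at a non-descending step no element of $D(\sigma_i)$ strictly before $x_i$ is removed---equivalently, that the victims of $x_i$ are confined to $D(\sigma_i+x_i)$ together with elements after $x_i$ in $D(\sigma_i)$. This is what the paper's concrete applications actually verify, and it is also the monotone-progress claim on which the paper's own final counting step (``all elements in $D(\sigma_i)$ must be in $S_{i+d'}$ for some $d'\le d$'') rests; you must either derive it or add it as an explicit hypothesis, because without it the level-$\sigma_i$ accounting, and hence both inductions, collapse.
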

\begin{proof}
  We first argue that for any sequence $\sigma'\in \Sigma(S^*)$, those sequences in $\sigma_0, \sigma_1, \sigma_2, \ldots$ that have $\sigma'$ as a prefix are consecutive.
  Suppose for contradiction that there are $p$ and $q$, $0\le p < q$, such that both $\sigma_p$ and $\sigma_q$ start from the sequence $\langle i_1 i_2 \cdots i_s\rangle$ but $\langle i_1 i_2 \cdots i_s\rangle$ is not a prefix for some $\sigma_j$ with $p < j < q$.
We may assume without loss of generality that for all $j, p <j< q$, the sequence $\sigma_j$ starts with $\langle i_1 i_2 \cdots i_{s-1}\rangle$ but is not followed by $i_s$.  Otherwise, we can find another pair of $p'$ and $q'$ such that $p \le p' < q'\le q'$ and $q'-p' < q - p$, possibly with another sequence.
Under this assumption, $\sigma_{q-1} = \langle i_1 i_2 \cdots i_{s-1}\rangle$ and $\sigma_q = \langle i_1 i_2 \cdots i_{s}\rangle$.  By the definition of retaliation-free paths, $i_s\in S_{q}\setminus S_{q-1}$.

Let $i^*$ be the smallest element in $D(\langle i_1 i_2 \cdots i_{s-1} \rangle)$ such that (1) $\langle i_1 i_2 \cdots i_{s-1} i^* \rangle$ is a prefix of $\sigma_{q'}$ for some ${q'}, p < {q'} \le q$; and (2) for any $j, p<j <{q'}$, if $\sigma_{j} \ne \langle i_1 i_2 \cdots i_{s-1}\rangle$, then the $s$th element of $\sigma_{j}$ is after $i^*$ in $D(\langle i_1 i_2 \cdots i_{s-1} \rangle)$.  Note that the element $i^*$ exists and it is either $i_s$ or before $i_s$ in $D(\langle i_1 i_2 \cdots i_{s-1} \rangle)$ because $\sigma_q= \langle i_1 i_2 \cdots i_{s-1}i_s\rangle$.  
By our assumption,
$\sigma_{p+1}$ starts from $\langle i_1 i_2 \cdots i_{s-1}\rangle$ and it is not followed by $i_s$.  From (iii) of the definition of retaliation-free paths we can conclude that $\sigma_{p+1} = \langle i_1 i_2 \cdots i_{s-1}\rangle$, and all elements in $D(\langle i_1 i_2 \cdots i_{s-1} i_s \rangle)$ as well as $i_s$ itself are in $S_{p+1}$.

We argue that $i^*\in S_{p+1}$.  We have seen this if $i^* = i_s$, and in the rest $i^* \ne i_s$.
If $i^*$ is in $D(\langle i_1 i_2 \cdots i_{s-1} i_s \rangle)$, then it follows from (iii) of the definition of retaliation-free paths.  Otherwise, let $p'$ be the largest number with $p' < p$ such that $\sigma_{p'} = \langle i_1 i_2 \cdots i_{s-1} \rangle$ and $\sigma_{p'+1} = \langle i_1 i_2 \cdots i_{s-1} i_s\rangle$.  It exists because $\langle i_1 i_2 \cdots i_{s-1} i_s\rangle$ is a prefix of $\sigma_p$.  Then by (i) of the definition of retaliation-free paths, $i^*$ is in $S_{p'}$.  For every $j, p' < j < p$, the sequence $\langle i_1 i_2 \cdots i_{s-1} i_s\rangle$ is a prefix of $\sigma_{j}$, and hence a vertex in $(S_j\setminus S_{j+1})\cap S^*$ is in $D(\langle i_1 i_2 \cdots i_{s-1} i_s\rangle)$.  Therefore, we always have $i^*\in S_{p+1}$.

By the selection of $q'$, the solution  $S_{q'-1}$ does not contain $i^*$.  Suppose that $p'$ is the largest number such that $p' < q'$ and $i^*\in S_{p'}\setminus S_{p'+1}$.  By the selection of $i^*$, the $s$th element of $\sigma_{p'+1}$, (which is either the $s$th element of $\sigma_{p'}$, or $x_{p'}$, the first element of $D(\sigma_{q'})$ that is not in $S_{p'}$,) is after $i^*$ in $D(\langle i_1 i_2 \cdots i_{s-1} \rangle)$.  But since $i^*$ is not in $S_{p'+1}$, $\ldots$ $S_{q'}$, the first $s$ elements in $\sigma_{p'+1}$ should remain a prefix of $\sigma_{q'}$, a contradiction to that $\langle i_1 i_2 \cdots i_{s-1} i^* \rangle$ is a prefix of $\sigma_{q'}$.

We then argue that a retaliation-free path must reach $S^*$; i.e., it must terminate.  There are only a finite number of sequences, and each one is used in consecutive steps.  Suppose that the path does not terminate, then there is some number $i$ such that $\sigma_j = \sigma_i$ for all $j > i$.  Let $d$ denote the number of elements in $D(\sigma_i)$.  But then all elements in $D(\sigma_i)$ must be in $S_{i + d'}$ for some $d' \le d$, which contradicts that $\sigma_{i + d'} = \sigma_i$.
\end{proof}

\subsection{Trivially perfect graphs}

From a rooted tree $T$, we can define a graph $G$, called the comparability graph of $T$, as follows.  Its vertex set is the same as $V(T)$, and two distinct vertices $u$ and $v$ are adjacent in $G$ if and only if one of them is the ancestor of the other in $T$.  This can be generalized to a rooted forest $F$ by taking the comparability graph of each component of $F$ separately.  Such a graph is called a \emph{trivially perfect graph}, and the forest $F$ is a \emph{generating forest} of this graph \cite{wolk-62, yan-96-trivially-perfect}.

Let $T$ be a tree, and $v$ a vertex in $T$.  All the vertices in the subtree of $T$ rooted at $v$ are its \emph{descendants}.  Note that a vertex is a descendant of itself, and a \emph{proper descendant} of $v$ is a descendant of $v$ that is different from $v$ itself.  A vertex $x$ is a \emph{(proper) ancestor} of $v$ if $v$ is a (proper) descendant of $x$.   When there are more than one trees in the context, we may use $T$-descendant and $T$-ancestor to emphasize the tree to which we are referring.
For two sets $X$ and $Y$, we use $X\triangle Y$ to denote the symmetric difference of $X$ and $Y$, i.e., vertices in precisely one of $X$ and $Y$.

  Since the class of trivially perfect graphs is closed under adding universal vertices, we may focus on the connected variation, and hence all the generating forests are trees.  We now introduce the successor function for the problem.  Let $G$ be the input graph.  For each solution $S$ of $G$ and each vertex $v\in V(G)\setminus S$, we define the set $\operatorname{succ}(S, v)$ of successors as follows.  We fix a generating tree $T$ for $G[S]$.
For each vertex $u\in N(v)$, we let $S_u = S\cap N(u)$, create an empty multi-set $\mathcal{S}_u$, 
\begin{enumerate}[(tp1)]
\item[(tp0)] add $S_u\setminus N(v)$ to $\mathcal{S}_u$;
\item for each $v'\in S_u\cap N(v)$, add $S_u\setminus (A\cup D)$ to $\mathcal{S}_u$, where $A$ is the set of $T$-ancestors of $v'$ that are not adjacent to $v$, and $D$ is the set of proper $T$-descendants of $v'$ that are adjacent to $v$;
\item for each $v'\in S_u\cap N(v)$, add $S_u\setminus (N(v')\triangle N(v))$  to $\mathcal{S}_u$; and
\item for each pair of nonadjacent vertices $v_1, v_2\in S_u\cap N(v)$, find the least common ancestor $v'$ of $v_1$ and $v_2$, add $S_u\setminus (A\cup B\cup C)$  to $\mathcal{S}_u$, where $A$ comprises of the proper descendants of $v'$ that are proper ancestors of $v_1$ or $v_2$, $B$ the set of vertices adjacent to at least one of $v_1$ and $v_2$ but not $v$, and $C$ the set of vertices adjacent to $v$ but neither of $v_1$ and $v_2$.
 \end{enumerate}
For each subset $S'$ in $\mathcal{S}_u$, we use Proposition~\ref{lem:extension} to extend $S'\cup \{u, v\}$, which induces a connected subgraph because $u$ is universal in this subgraph, to a solution of $G$, and add this solution to $\operatorname{succ}(S, v)$.  It is worth noting that $u$ may or may not be universal in the solution extended from $S'\cup \{u, v\}$.
After the whole process is finished, there are $O(n^3)$ solutions in $\operatorname{succ}(S, v)$, all of which contain $v$.

\begin{lemma}\label{lem:tpg-successor-function}
Let $G$ be the input graph to
  the maximal connected induced trivially perfect subgraphs problem.
  For any solution  $S$ of $G$, and any vertex $v$ in $V(G)\setminus S$, every set in $\operatorname{succ}(S, v)$ is a solution of $G$ containing $v$, and $\operatorname{succ}(S, v)$ can be calculated in $O(n^5)$ time.
\end{lemma}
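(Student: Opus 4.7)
The plan is to prove the two claims separately: that every $S' \in \mathcal{S}_u$ yields a maximal connected $\mathcal{P}$ set of $G$ containing $v$, and that the whole construction runs in $O(n^5)$ time. For the first claim, since $v$ and $u$ are never removed during the extension of Proposition~\ref{lem:extension}, it suffices to verify that for every $S' \in \mathcal{S}_u$ the set $S' \cup \{u, v\}$ is already a connected trivially perfect set of $G$. Connectedness is immediate, as $S' \subseteq S_u \subseteq N(u)$ and $v \in N(u)$ make $u$ universal in $G[S' \cup \{u, v\}]$. By the $\{P_4, C_4\}$-free characterization of Wolk~\cite{wolk-62}, trivial perfectness reduces to forbidding induced $P_4$ and $C_4$; and since $G[S]$ is trivially perfect and a universal vertex lies in no induced $P_4$ or $C_4$, neither obstruction can fit inside $G[S' \cup \{u\}]$. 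Thus the whole task boils down to showing that no induced $P_4$ or $C_4$ contains $v$ in $G[S' \cup \{u, v\}]$, which I verify case by case, where helpful by exhibiting an explicit generating tree of $G[S' \cup \{u, v\}]$ rooted at $u$.

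Case (tp0) is immediate: $v$'s only neighbor in $S' \cup \{u, v\}$ is $u$, which rules out both induced $P_4$ (whose internal vertex through $v$ would have to be the universal $u$) and induced $C_4$ (which requires $v$ to have two neighbors). Case (tp2) is a twin argument: after deleting $N(v) \triangle N(v')$ from $S_u$, the vertex $v$ and the deleted $v'$ have identical neighborhoods within $S' \cup \{u\}$, so $G[S' \cup \{u, v\}]$ is isomorphic, via $v \leftrightarrow v'$, to the graph obtained by adding the universal vertex $u$ to the induced subgraph $G[S' \cup \{v'\}]$ of the trivially perfect $G[S]$, and adding a universal vertex preserves trivial perfectness. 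Case (tp1) exploits that after removing $A \cup D$, the $T$-ancestors of $v'$ surviving in $S'$ are precisely the $v$-neighbors among them and the surviving proper $T$-descendants of $v'$ are precisely the $v$-nonneighbors among them; inserting $v$ at the position of $v'$ as its new parent in the generating forest inherited from $T$ then yields a generating tree for $G[S' \cup \{u, v\}]$. Case (tp3) proceeds similarly: since $G[S]$ is connected, the generating forest $T$ is in fact a tree, so the least common ancestor $v' \in T$ of $v_1$ and $v_2$ is well defined, and the removals $A, B, C$ align the $v$-neighborhood inside $S'$ with the restricted set of $T$-descendants of $v'$, permitting the placement of $v$ between $v'$ and its $T$-parent in the generating tree. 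The main obstacle is case (tp3): certifying that $A \cup B \cup C$ eliminates every potential induced $P_4$ or $C_4$ through $v$ requires a careful case split on where the two non-$v$ vertices of such an obstruction lie, relative to both $v'$ and the anchors $v_1, v_2$.

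For the running time, a generating tree $T$ of $G[S]$ together with all ancestor relations and pairwise least common ancestors can be precomputed in $O(n^2)$ time. For each $u \in N(v)$, case (tp3) dominates with $O(|S_u \cap N(v)|^2) = O(n^2)$ candidate sets, each specifiable in $O(n)$ time; summed over the $O(n)$ choices of $u$ this produces $O(n^3)$ candidate sets in $O(n^4)$ time. Each candidate is extended to a solution via Proposition~\ref{lem:extension} in $O(n \cdot f(n))$ time; maintaining the generating tree incrementally makes a single-vertex trivial-perfectness check cost $O(n)$, so $f(n) = O(n)$ suffices and each extension runs in $O(n^2)$, yielding the claimed $O(n^5)$ aggregate.
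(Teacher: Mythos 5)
Your overall strategy matches the paper's: reduce correctness to showing that each $S'\cup\{u,v\}$ is a connected trivially perfect set (Proposition~\ref{lem:extension} handles the rest), argue connectedness via the universal vertex $u$, and then treat the four cases by exhibiting a generating tree. Cases (tp0) and (tp2) are fine as you argue them (your twin-swap reading of (tp2) is a clean way to put it). The problem is case (tp3), which is exactly where the content of the lemma lives, and there your write-up has both a deferral and an error. The deferral: you announce that one must do ``a careful case split on where the two non-$v$ vertices of such an obstruction lie'' but never carry it out; the paper does carry out this verification (it checks that the proper descendants of $v$ in the new tree are descendants of $v_1$ or $v_2$ and hence, having survived $B$, are adjacent to $v$; that the proper ancestors of $v$ are ancestors of $v'$, hence of $v_1$ or $v_2$, and are all the $S'$-neighbors of $v$ after $C$ is removed; and it handles the re-parenting of vertices orphaned by the deletions). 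The error: your proposed tree modification --- inserting $v$ \emph{between $v'$ and its $T$-parent} --- does not work. If $v'$ has a third child $v_3$ whose subtree survives the removals (such vertices are adjacent to neither $v_1$ nor $v_2$, so they are untouched by $A$ and $B$, and they escape $C$ precisely when they are \emph{not} adjacent to $v$), then in your tree $v$ becomes an ancestor of $v_3$, i.e., adjacent to it in the generated graph, while $vv_3\notin E(G)$. The paper's construction avoids this by placing $v$ as a \emph{child} of $v'$ and re-hanging only $v_1$, $v_2$ (with their subtrees) under $v$, leaving $v_3$ as a sibling of $v$.

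Two smaller points. Your $\{P_4,C_4\}$-free framing correctly localizes the obstruction to induced paths/cycles through $v$, but since you fall back on generating trees for (tp1) and (tp3) anyway, it does not buy you a shortcut around the case analysis. On the running time, your $O(n^5)$ bound hinges on an asserted $O(n)$ incremental trivial-perfectness test inside the extension step, which you do not justify; the paper uses a generic $O(n^3)$ extension per candidate (and, for what it is worth, its own arithmetic there yields $O(n^6)$ rather than the $O(n^5)$ in the statement), so a crude polynomial bound would have sufficed here without the unsupported claim.
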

\begin{proof}
  For the correctness, by Proposition~\ref{lem:extension}, it suffices to show that for every $S'\in \mathcal{S}_u$, the set $S'\cup \{u, v\}$ is a connected trivially perfect set of $G$.    Let $G' = G[S'\cup \{u, v\}]$, and we construct a generating tree for $G'$ as follows.  Let $T$ be the generating tree for $G[S]$ used in the calculation of $\mathcal{S}_u$.  We take the sub-forest of $T$ induced by $S'$, and make a new tree $T'$ by adding $u$ as the root to the sub-forest.

  If $S'$ is produced in (tp0), $v$ has no neighbor in $S'$, and adding $v$ as a child to $u$ makes a generating tree for $G'$.
  If $S'$ is produced in (tp1), then we can make a generating tree for $G'$ by adding $v$ as a child to $v'$.
  If $S'$ is produced in (tp2), then we can make a generating tree for $G'$ by adding $v$ in between $v'$ and its parent in $T$; i.e., adding $v$ as a child to the parent of $v'$, and then adding $v'$ as a child to $v$.

  If $S'$ is produced in (tp3), then we proceed as follows.
  We add $v$ as a child to $v'$, and add $v_1$ and $v_2$ as children to $v$.  For any other vertex in $S'$ whose parent is in $S\setminus S'$, we add it as a child of its lowest surviving ancestor, or $u$ if there is none.
  To see that this new tree is a generating tree for $G'$, it suffices to consider the adjacencies between all vertices and $v$.  Any proper descendant $x$ of $v$ is a descendant of $v_1$ or $v_2$; since $x$ is adjacent to $v_1$ or $v_2$ and remains in $S'$, it is adjacent to $v$ as well.  Similarly for proper ancestors of $v$: They are ancestors of $v'$, hence ancestors of $v_1$ or $v_2$.  By definition of generating trees, these are all the vertices in $S'$ that are adjacent to at least one of $v_1$ and $v_2$.  By the steps of (tp3), they are all the vertices in $S'$ that are adjacent to $v$.  

  For each $u$, we make $O(n^2)$ subsets.  We can produce each subset $S'$ in time $O(n^2)$, and extend $S'\cup\{u, v\}$ to a solution in $O(n^3)$ time.  The total time of producing $\operatorname{succ}(S, v)$ is thus  $O(n^3\cdot (n^2 + n^3)) = O(n^6)$.
\end{proof}

\begin{figure}[h!]
  \centering
  \begin{subfigure}[b]{.9\linewidth}
    \centering
    \begin{tikzpicture}[scale=2.5]
      \node[filled vertex] (v) at (2, 1.25) {};
      \node[filled vertex] (v9) at (2, .25) {};
      \node[uvertex, "$21$" above right] (z1) at (3.25, .7) {};
      \node[filled vertex] (z2) at (4., .5) {};
      \node[filled vertex] (z3) at (4., 0) {};
      \node[filled vertex] (u) at (1, 0.5) {};
      \node[uvertex, "$20$" left] (u2) at (.75, 0.5) {};
      \node[filled vertex] (w) at (2.9, 0.5) {};

      \foreach[count=\i, count=\j from 22] \p in {1, 2, 3, 5, 6, 7} {
        \node[uvertex, "$\j$" left] (a\i) at (\p*.5, 0.1) {};
        \node[filled vertex] (b\i) at (\p*.5-.1, -.2) {};
        \node[filled vertex] (c\i) at (\p*.5+.1, -.2) {};
        \draw (b\i) -- (a\i) -- (c\i);
      }
      \foreach \i in {1, 2, 3} {
        \draw (u) -- (c\i) -- (v);
      }
      \foreach \i in {4, 5, 6} {
        \draw (w) -- (c\i) -- (v);
        \draw (a\i) -- (z1);
        \draw (b\i) -- (z1);
        \draw (c\i) -- (z1);
      }

      \draw (w) -- (v) -- (u);
      \draw (v) -- (u2) (c1) -- (u2) -- (c2);
      \draw (z3) -- (z2) -- (z1) -- (z3);
      \draw (v9) -- (v) -- (z1) -- (w);      
    \end{tikzpicture}
    \caption{The input graph $G$, with a universal vertex omitted.  Round vertices are in $S^*$.}
  \end{subfigure}  

  \begin{subfigure}[b]{.43\linewidth}
    \centering
    \begin{tikzpicture}[every node/.style={filled vertex}, scale=1.6]
      \node["$9$"below ] (v9) at (2, .25) {};
      \node["{\footnotesize $10$}"] (v) at (2, 1.25) {};
      \node["{\footnotesize $18$}"] (z2) at (4., .5) {};
      \node["{\footnotesize $17$}" below] (z3) at (4., 0) {};
      \node["$4$"] (u) at (1, 0.5) {};
      \node["$8$"] (w) at (3, 0.5) {};

      \foreach[count=\i from 11] \p in {1, 2, 3, 5, 6, 7} 
      \node["{\footnotesize $\i$}" below] (b\i) at (\p*.5-.15, -.2) {};
      \foreach[count=\i] \p in {1, 2, 3, 5, 6, 7} {
        \node["$\p$" below] (c\i) at (\p*.5+.1, -.2) {};
      }
      \foreach \i in {1, 2, 3} {
        \draw (u) -- (c\i);
      }
      \foreach \i in {4, 5, 6} {
        \draw (w) -- (c\i);
      }

      \draw (w) -- (v) -- (u);
      \draw (z3) -- (z2);
      \draw (v9) -- (v);
    \end{tikzpicture}
    \caption{$S^*$ and the ordering.}
  \end{subfigure}  
  \quad
  \begin{subfigure}[b]{.43\linewidth}
    \centering
    \begin{tikzpicture}[scale=1.6]
      \node[uvertex, "{\footnotesize $21$}"] (z1) at (3.25, 1) {};
      \node[filled vertex] (z2) at (4., .5) {};
      \node[filled vertex] (z3) at (4., 0) {};
      \node[filled vertex] (v9) at (2, .25) {};
      
      \foreach[count=\i, count=\j from 22] \p in {1, 2, 3, 5, 6, 7} {
        \node[uvertex, "{\footnotesize $\j$}"] (a\i) at (\p*.5, 0) {};
        \node[filled vertex](b\i) at (\p*.5-.1, -.2) {};
        \node[filled vertex](c\i) at (\p*.5+.1, -.2) {};
        \draw (b\i) -- (a\i) -- (c\i);
      }
      \foreach \i in {4, 5, 6} {
        \draw (a\i) -- (z1);
      }

      \draw (z3) -- (z2) -- (z1);
    \end{tikzpicture}
    \caption{$S_0$; $\sigma_0 = \langle\rangle$; $\alpha(0) = 4$.}
  \end{subfigure}

  \begin{subfigure}[b]{.43\linewidth}
    \centering
    \begin{tikzpicture}[scale=1.6]
      \node[uvertex, "{\footnotesize $21$}" right] (z1) at (3.25, 1) {};
      \node[filled vertex] (z2) at (4., .5) {};
      \node[filled vertex] (z3) at (4., 0) {};
      \node[filled vertex] (v9) at (2, .25) {};
      \node[filled vertex, "$4$"] (u) at (1, 0.5) {};
      
      \foreach[count=\i from 4, count=\j from 25] \p in {5, 6, 7} {
        \node[uvertex, "{\footnotesize $\j$}"](a\i) at (\p*.5, 0) {};
        \node[filled vertex](b\i) at (\p*.5-.1, -.2) {};
        \node[filled vertex](c\i) at (\p*.5+.1, -.2) {};
        \draw (b\i) -- (a\i) -- (c\i);
        \draw (a\i) -- (z1);
      }
      \foreach[count=\i] \p in {1, 3} {
        \node[filled vertex](b\i) at (\p*.5-.1, -.2) {};
        \node[filled vertex](c\i) at (\p*.5+.1, -.2) {};
        \draw (c\i) -- (u);
      }
      \foreach \p in {2} {
        \node[uvertex, "{\footnotesize $23$}"](a\p) at (\p*.5, 0) {};
        \node[filled vertex](b\p) at (\p*.5-.1, -.2) {};
        \draw (a\p) -- (b\p);
      }

      \draw (z3) -- (z2) -- (z1);
    \end{tikzpicture}
    \caption{$S_1$; $\sigma_1 = \langle 4\rangle$; $\alpha(1) = 2$.}
  \end{subfigure}  
  \quad
  \begin{subfigure}[b]{.43\linewidth}
    \centering
    \begin{tikzpicture}[scale=1.6]
      \node[uvertex, "{\footnotesize $21$}" right] (z1) at (3.25, 1) {};
      \node[filled vertex] (z2) at (4., .5) {};
      \node[filled vertex] (z3) at (4., 0) {};
      \node[filled vertex] (v9) at (2, .25) {};
      \node[filled vertex] (u) at (1, 0.5) {};

      \foreach[count=\i from 4, count=\j from 25] \p in {5, 6, 7} {
        \node[uvertex, "{\footnotesize $\j$}"](a\i) at (\p*.5, 0) {};
        \node[filled vertex](b\i) at (\p*.5-.1, -.2) {};
        \node[filled vertex](c\i) at (\p*.5+.1, -.2) {};
        \draw (b\i) -- (a\i) -- (c\i);
        \draw (a\i) -- (z1);
      }
      \foreach[count=\i] \p in {1, 2, 3 } {
        \node[filled vertex](b\i) at (\p*.5-.1, -.2) {};
        \node[filled vertex](c\i) at (\p*.5+.1, -.2) {};
        \draw (c\i) -- (u);
      }
      \node[filled vertex, "$2$" above right] at (1.1, -.2) {};
            
      \draw (z3) -- (z2) -- (z1);
    \end{tikzpicture}
    \caption{$S_2$; $\sigma_2 = \langle\rangle$; $\alpha(2) = 8$.}
  \end{subfigure}  

  \begin{subfigure}[b]{.43\linewidth}
    \centering
    \begin{tikzpicture}[scale=1.6]
      \node[uvertex, "{\footnotesize $21$}" right] (z1) at (3.25, 1) {};
      \node[filled vertex] (z2) at (4., .5) {};
      \node[filled vertex] (z3) at (4., 0) {};
      \node[filled vertex] (v9) at (2, .25) {};
      \node[filled vertex] (u) at (1, 0.5) {};
      \node[filled vertex, "$8$" left] (w) at (3, 0.5) {};
      
      \foreach[count=\i from 6] \p in {7} {
        \node[uvertex, "{\footnotesize $27$}" right](a\i) at (\p*.5, 0) {};
        \node[filled vertex](b\i) at (\p*.5-.1, -.2) {};
        \draw (b\i) -- (a\i);
        \draw (a\i) -- (z1);
      }
      \foreach[count=\i] \p in {1, 2, 3, 5, 6} {
        \node[filled vertex](b\i) at (\p*.5-.1, -.2) {};
        \node[filled vertex](c\i) at (\p*.5+.1, -.2) {};
      }
      \foreach \i in {1, 2, 3} \draw (c\i) -- (u);
      \foreach \i in {4, 5} \draw (c\i) -- (w) (b\i) -- (z1);

      \draw (z1) -- (w);
      \draw (z3) -- (z2) -- (z1);
    \end{tikzpicture}
    \caption{$S_3$; $\sigma_3 = \langle 8\rangle$; $\alpha(3) = 7$.}
  \end{subfigure}  
  \quad
  \begin{subfigure}[b]{.43\linewidth}
    \centering
    \begin{tikzpicture}[scale=1.6]
      \node[uvertex, "{\footnotesize $21$}" right] (z1) at (3.25, 1) {};
      \node[filled vertex] (z2) at (4., .5) {};
      \node[filled vertex] (z3) at (4., 0) {};
      \node[filled vertex] (v9) at (2, .25) {};
      \node[filled vertex] (u) at (1, 0.5) {};
      \node[filled vertex] (w) at (3, 0.5) {};
      
      \foreach[count=\i] \p in {1, 2, 3, 5, 6, 7} {
        \node[filled vertex](b\i) at (\p*.5-.1, -.2) {};
        \node[filled vertex](c\i) at (\p*.5+.1, -.2) {};
      }
      \foreach \i in {1, 2, 3} \draw (c\i) -- (u);
      \foreach \i in {4, 5, 6} \draw (c\i) -- (w) (b\i) -- (z1);

      \node[filled vertex, "$7$"] at (3.6, -.2) {};
      \draw (z1) -- (w);
      \draw (z3) -- (z2) -- (z1);
    \end{tikzpicture}
    \caption{$S_4$; $\sigma_4 = \langle \rangle$; $\alpha(4) = 10$.}
  \end{subfigure}  

  \begin{subfigure}[b]{.43\linewidth}
    \centering
    \begin{tikzpicture}[scale=1.6]
      \node[filled vertex, "{\footnotesize $10$}"] (v) at (2, 1.25) {};
      \node[filled vertex] (v9) at (2, .25) {};
      \node[filled vertex] (z2) at (4., .5) {};
      \node[filled vertex] (z3) at (4., 0) {};
      \node[filled vertex] (u) at (1, 0.5) {};
      \node[filled vertex] (w) at (3, 0.5) {};
      
      \foreach[count=\i, count=\j from 22] \p/\q in {1/22, 3/24} {
        \node[uvertex, "{\footnotesize $\q$}"] (a\i) at (\p*.5, 0) {};
        \node[filled vertex](b\i) at (\p*.5-.1, -.2) {};
        \draw (b\i) -- (a\i);
      }
      \foreach[count=\i from 3] \p in {2, 5, 6, 7} {
        \node[filled vertex](b\i) at (\p*.5-.1, -.2) {};
        \node[filled vertex](c\i) at (\p*.5+.1, -.2) {};
      }
      \foreach \i in {3} \draw (c\i) -- (u);
      \foreach \i in {4, 5, 6} \draw (c\i) -- (w);


      \draw (w) -- (v) -- (u);      
      \draw (z3) -- (z2);
      \draw (v9) -- (v);      
    \end{tikzpicture}
    \caption{$S_5$; $\sigma_5 = \langle 10\rangle$; $\alpha(5) = 1$.}
  \end{subfigure}  
  \quad
  \begin{subfigure}[b]{.43\linewidth}
    \centering
    \begin{tikzpicture}[scale=1.6]
      \node[filled vertex] (v) at (2, 1.25) {};
      \node[filled vertex] (v9) at (2, .25) {};
      \node[filled vertex] (z2) at (4., .5) {};
      \node[filled vertex] (z3) at (4., 0) {};
      \node[uvertex, "{\footnotesize $20$}"] (u2) at (.75, 0.5) {};
      \node[filled vertex] (w) at (3, 0.5) {};
      
      \foreach[count=\i] \p in {3} {
        \node[uvertex, "{\footnotesize $24$}"] (a\i) at (\p*.5, 0) {};
        \node[filled vertex](b\i) at (\p*.5-.1, -.2) {};
        \draw (b\i) -- (a\i);
      }
      \foreach[count=\i from 2] \p in {1, 2, 5, 6, 7} {
        \node[filled vertex](b\i) at (\p*.5-.1, -.2) {};
        \node[filled vertex](c\i) at (\p*.5+.1, -.2) {};
      }
      \foreach \i in {2, 3} \draw (c\i) -- (u2);
      \foreach \i in {4, 5, 6} \draw (c\i) -- (w);

      \node[filled vertex, "$1$" above] at (.6, -.2) {};

      \draw (w) -- (v) -- (u2);      
      \draw (z3) -- (z2);
      \draw (v9) -- (v);      
    \end{tikzpicture}
    \caption{$S_6$; $\sigma_6 = \langle 10\rangle$; $\alpha(6) = 3$.}
  \end{subfigure}  

  \begin{subfigure}[b]{.43\linewidth}
    \centering
    \begin{tikzpicture}[scale=1.6]
      \node[filled vertex] (v) at (2, 1.25) {};
      \node[filled vertex] (v9) at (2, .25) {};
      \node[filled vertex] (z2) at (4., .5) {};
      \node[filled vertex] (z3) at (4., 0) {};
      \node[uvertex, "{\footnotesize $20$}"] (u2) at (.75, 0.5) {};
      \node[filled vertex] (w) at (3, 0.5) {};
      
      \foreach[count=\i from 1] \p in {1, 2, 3, 5, 6, 7} {
        \node[filled vertex](b\i) at (\p*.5-.1, -.2) {};
        \node[filled vertex](c\i) at (\p*.5+.1, -.2) {};
      }
      \foreach \i in {1, 2} \draw (c\i) -- (u2);
       \draw (c3) -- (v);
      \foreach \i in {4, 5, 6} \draw (c\i) -- (w);

      \node[filled vertex]["$3$" right] at (1.6, -.2) {};

      \draw (w) -- (v) -- (u2);      
      \draw (z3) -- (z2);
      \draw (v9) -- (v);      
    \end{tikzpicture}
    \caption{$S_7$; $\sigma_7 = \langle 10\rangle$; $\alpha(7) = 4$.}
  \end{subfigure}  
  \quad
  \begin{subfigure}[b]{.43\linewidth}
    \centering
    \begin{tikzpicture}[scale=1.6]
      \node[filled vertex] (v) at (2, 1.25) {};
      \node[filled vertex] (v9) at (2, .25) {};
      \node[filled vertex] (z2) at (4., .5) {};
      \node[filled vertex] (z3) at (4., 0) {};
      \node[filled vertex]["$4$"] (u) at (1, 0.5) {};
      \node[filled vertex] (w) at (3, 0.5) {};
      
      \foreach[count=\i] \p in {2} {
        \node[uvertex, "{\footnotesize $23$}"] (a\i) at (\p*.5, 0) {};
        \node[filled vertex] (b\i) at (\p*.5-.1, -.2) {};
        \draw (b\i) -- (a\i);        
      }
      \foreach[count=\i from 2] \p in {1, 3, 5, 6, 7} {
        \node[filled vertex](b\i) at (\p*.5-.1, -.2) {};
        \node[filled vertex](c\i) at (\p*.5+.1, -.2) {};
      }
      \foreach \i in {1,3} \draw (c\i) -- (u);
      \foreach \i in {4, 5, 6} \draw (c\i) -- (w);

      \node[filled vertex] at (3.6, -.2) {};

      \draw (w) -- (v) -- (u);
      \draw (z3) -- (z2);
      \draw (v9) -- (v);
    \end{tikzpicture}
    \caption{$S_8$; $\sigma_8 = \langle 10, 4\rangle$; $\alpha(8) = 2$.}
  \end{subfigure}  
  
  \caption{Demonstration of a retaliation-free path for the maximal connected induced trivially perfect subgraphs problem.  The solutions, $S^*$, $S_0$, $\ldots$, $S_8$, of $G$ are presented as generating trees in (b)--(k), and in all of them, the root, which is the universal vertex of $G$ and numbered 19, is omitted.
In every transition step, we use the universal vertex as $u$.  Finally, $S_9 = S^*$.}
  \label{fig:trivially-perfect-demonstration}
\end{figure}
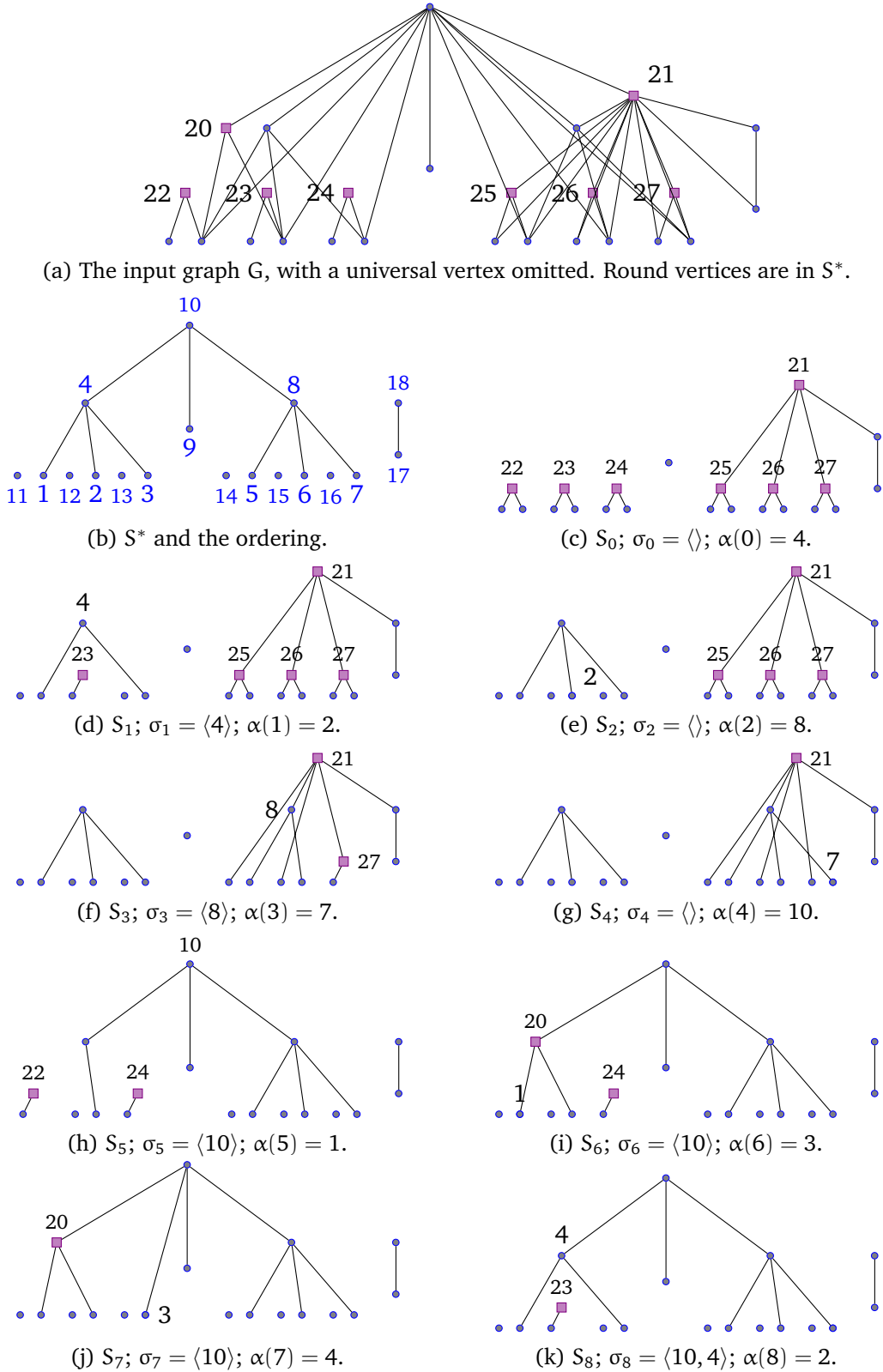

We add an arc with label $v$ from $S$ to each solution in $\operatorname{succ}(S, v)$.  We use the resulting multiple digraph, denoted by $M(G)$, as the solution map for the graph $G$.  By Lemma~\ref{lem:tpg-successor-function}, the out-degree of a node in $M(G)$ is at most $O(n^4)$.  In the rest we show that $M(G)$ is strongly connected by demonstrating a retaliation-free path from any solution $S$ to another solution $S^*$.   Following this path amounts to building a generating tree for $G[S^*]$ in the bottom-up manner, and we make sure that during the whole process, all victims of a gainer $v$ are the proper $T^*$-descendants of $v$.  See Figure~\ref{fig:trivially-perfect-demonstration} for an illustration of this process.

\begin{lemma}\label{lem:trivially-perfect-solution-map}
  The solution map $M(G)$ constructed above is strongly connected.
\end{lemma}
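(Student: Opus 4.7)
The strategy is to apply Theorem~\ref{thm:formal-retaliation-free}: fix an arbitrary target solution $S^*$ and exhibit a retaliation-free path in $M(G)$ from any other solution $S$ to $S^*$. Fix a generating tree $T^*$ for $G[S^*]$, choosing its root $r^*$ in $S\cap S^*$ whenever this intersection is nonempty; the degenerate case $S\cap S^*=\emptyset$ can be dispatched by one preliminary step that brings some $S^*$-vertex into the working solution. Number $V(G)$ so that the vertices of $S^*$ come first, listed in a postorder traversal of $T^*$, so that every $T^*$-descendant precedes its $T^*$-ancestors. Define the victim function by letting $D(\langle\rangle)$ be this postorder of $S^*$ and, for every $x\in D(\sigma)$, letting $D(\sigma+x)$ be the postorder listing of the proper $T^*$-descendants of $x$; the monotonicity requirement in the definition of the victim function is then immediate.

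Starting from $S_0=S$ with $\sigma_0=\langle\rangle$, at step $i$ let $x_i$ be the first element of $D(\sigma_i)$ missing from $S_i$. A short induction shows that every $T^*$-child of $x_i$ already lies in $S_i$, and that $r^*$ is preserved throughout, so $r^*\in N(x_i)\cap S_i$ whenever $x_i\neq r^*$. We then build $S_{i+1}$ by cases on the number of $T^*$-children of $x_i$, always taking $u=r^*$ in the recipes of the successor function. If $x_i$ is a $T^*$-leaf, recipe~(tp1) with $v'=r^*$ inserts $x_i$ without disturbing $S^*\cap S_i$. If $x_i$ has a unique $T^*$-child $c$, then $c\in S_i$ and $x_i,c$ are true twins in $G[S^*]$, so recipe~(tp2) with $v'=c$ again preserves $S^*\cap S_i$. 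If $x_i$ has at least two $T^*$-children, we pick two such siblings $c_1,c_2$, which lie in $S_i$ and are nonadjacent in $G$, and invoke recipe~(tp3) with this pair; the resulting subset, after the greedy extension of Proposition~\ref{lem:extension}, yields $S_{i+1}\ni x_i$. Finally, we update $\sigma_{i+1}$ per the retaliation-free definition: append $x_i$ to $\sigma_i$ if any proper $T^*$-descendant of $x_i$ is now missing, and otherwise trim to the longest prefix whose $D$-value contains a missing element.

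The heart of the argument is condition~(ii) of a retaliation-free path: $(S_i\setminus S_{i+1})\cap S^*$ consists only of proper $T^*$-descendants of $x_i$. In the leaf and single-child cases this is essentially immediate from the shape of (tp1) and (tp2). In the multi-child case, we rely on two structural observations about $T^*$: every $T^*$-ancestor of $x_i$ is $G$-adjacent to each of $x_i,c_1,c_2$, hence escapes the excluded sets $A,B,C$ of recipe~(tp3); and every $S^*$-vertex outside the $T^*$-subtree rooted at $x_i$ that is not a $T^*$-ancestor of $x_i$ is $G$-nonadjacent to $x_i$ (by the defining property of generating trees), so it does not meet $N(x_i)$ and is again undisturbed. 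Consequently every removed $S^*$-vertex is a proper $T^*$-descendant of $x_i$, i.e., lies in $D(\sigma_i+x_i)$, as required. Theorem~\ref{thm:formal-retaliation-free} then delivers the strong connectedness of $M(G)$.

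The main obstacle is the multi-child case, because the sets $A,B,C$ in recipe~(tp3) are defined relative to the \emph{current} generating tree $T_i$ for $G[S_i]$ rather than to $T^*$, so one must check that this mismatch does not accidentally exclude some $S^*$-vertex we wish to preserve. Handling this requires choosing $T_i$ carefully (for instance, ensuring that $r^*$ is the root of $T_i$, so that the $T_i$-least common ancestor of $c_1,c_2$ is a vertex through which the ancestor/descendant relations of $T^*$ can be read off) and then combining this choice with the two structural observations above. A secondary subtlety is that the greedy extension of $S'\cup\{u,x_i\}$ via Proposition~\ref{lem:extension} may introduce vertices outside $S^*$, but such additions never contribute to the victim metric and are therefore harmless.
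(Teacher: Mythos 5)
Your skeleton matches the paper's: the same postorder numbering of $T^*$, the same victim function ($D(\sigma+x)=$ proper $T^*$-descendants of $x$), and the same case split on the number of $T^*$-children of $x_i$, with (tp2) for the single-child case and (tp3) with two $T^*$-children for the multi-child case. The single-child and multi-child analyses are essentially the paper's. The gap is in the leaf case, which is exactly where the retaliation-free machinery has to do its work, since a victim of a gainer $z$ is eventually restored as a $T^*$-leaf relative to the part of $T^*$ still missing. First, ``(tp1) with $v'=r^*$'' is ill-typed: $v'$ must range over $S_u\cap N(v)$, and with $u=r^*$ the root is not a candidate. More substantively, the claim that this step ``inserts $x_i$ without disturbing $S^*\cap S_i$'' is false: attaching $x_i$ under a vertex $v'$ forces the removal of every proper $T_i$-descendant of $v'$ adjacent to $x_i$ (the set $D$ in (tp1)), and if $v'$ sits near the root this set can contain the pending gainer $z$ itself and $T^*$-ancestors of $z$ --- none of which lie in $D(\sigma_i)$ --- so conditions (ii)--(iii) of a retaliation-free path are violated and the path ``retaliates.'' Even at the top level ($\sigma_i=\langle\rangle$) inserting a leaf generally expels some $T^*$-ancestors of $x_i$ from $S_i$; there this is harmless only because those ancestors have larger postorder numbers, which is an argument you never make.

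What is missing is a mechanism for locating, inside the current tree $T_i$, \emph{where} the subtree of the last gainer $z$ lives, so that a leaf victim can be attached at a point whose (tp1)-fallout is confined to proper $T^*$-descendants of $z$. The paper achieves this by carrying an extra invariant along the path: for every gainer $v$ in $\sigma_k$, two $T^*$-descendants of \emph{distinct} $T^*$-children of $v$ survive in $S_k$ (this is why (tp3) insists on keeping two children). A leaf victim of $z$ is then attached under the lowest common $T_k$-ancestor of these two witnesses that lies in $S^*$, and one checks that every $S^*$-vertex sacrificed by that choice is a proper $T^*$-descendant of $z$, i.e., lies in $D(\sigma_k)$, and is numbered after $\alpha(k)$. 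Your proposed remedy --- ``choosing $T_i$ carefully, e.g., so that $r^*$ is its root and LCA relations of $T^*$ can be read off'' --- is not available to you: the successor function fixes one generating tree per solution, independently of the target $S^*$, and the strong-connectedness proof must work for whatever tree the algorithm happened to fix. Without the two-witness invariant (or an equivalent device) the leaf case, and hence the whole induction, does not go through.
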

\begin{proof}
  Let $S^*$ be a fixed destination solution of $G$.  We show that for any solution $S$ of $G$, there is a retaliation-free path from $S$ to $S^*$, and then the statement follows from Theorem~\ref{thm:formal-retaliation-free}.

  We fix a generating tree $T^*$ of $G[S^*]$, and denote by $r^*$ the root of $T^*$.
  We take the post-order ordering of $T^*$,
  and extend it to an ordering of $V(G)$ by numbering vertices in $V(G)\setminus S^*$ arbitrarily from $|S^*| + 1$ to $n$.
  Then $S^* = \{v_1, \ldots, v_{|S^*|}\}$ is the lexicographically smallest among all solutions of $G$.
  We define the victim function $D:\Sigma(S^*) \to \Sigma(S^*)$ as follows.  For each output of $D$, its elements are ordered from the smallest to the largest.   Hence, we only specify the set of elements.
The elements in the sequence $D(\langle\rangle)$ is $S^*$.  
For a sequence $\sigma$ and vertex $v$, elements in $D(\sigma + v)$ are the proper $T^*$-descendants of $v$ if $v\in D(\sigma)$, and empty otherwise.  Note that $D(\sigma)$ is nonempty only if every vertex in $\sigma$ is a $T^*$-ancestors of its successor.

For $k\ge 0$ with $S_k\ne S^*$, denote by $T_k$ the generating tree for $G[S_k]$ used in the successor function.  Let ${\alpha(k)}$ denote the first vertex in $D(\sigma_k)$ that is not in $S_k$. 
We argue that the transition from $S_k$ to $S_{k+1}$ satisfies the definition of retaliation-free paths, and

\begin{enumerate}[($\star$)]
\item for every $v\in \sigma_k$, there are two $T^*$-descendants of different $T^*$-children of $v$ in $S_{k}$.
\end{enumerate}
This invariant holds vacuously for $k = 0$, because $\sigma_0 = \langle\rangle$.  We always use $u = r^*$.

Case 1, ${\alpha(k)}$ is a $T^*$-leaf.
If $\sigma_k= \langle\rangle$, then we take the solution $S_{k+1}$ by (tp0).
Since $S_{k}\setminus S_{k+1}\subseteq N(\alpha(k))$, every vertex in $(S_{k}\setminus S_{k+1})\cap S^*$ is a $T^*$-ancestor of $\alpha(k)$, and is after $\alpha(k)$ in $D(\sigma_k)$.  Moreover, $\sigma_{k+1}$ remains empty, and hence invariant ($\star$) remains vacuously true.

In the rest  $\sigma_k\ne \langle\rangle$.  Denote by $z$ the last vertex of the sequence $\sigma_k$.  By the definition of $D(\sigma_k)$, the vertex $\alpha(k)$ is a proper $T^*$-descendant of $z$.
By the invariant ($\star$), there are two $T^*$-descendants of different $T^*$-children of $z$ in $S_{k}$; let them be $c_1$ and $c_2$.
We take $p$ to be the lowest common $T_k$-ancestor of $c_1$ and $c_2$ that is in $S^*$; it exists because $z\in S^*$ and is a common $T_k$-ancestor of $c_1$ and $c_2$.
By the selection of $p$, a $T_k$-ancestor $p'$ of $p$ is adjacent to both $c_1$ and $c_2$; if $p'$ is in $S^*$, then it has to be a $T^*$-ancestor of $z$, hence adjacent to $\alpha(k)$.
We use the solution $S_{k+1}$ by (tp1) with $v'= p$.  A vertex $x$ in $(S_{k}\setminus S_{k+1})\cap S^*$ has to be a proper $T_k$-descendant of $p$, and adjacent to $\alpha(k)$.  Since $\alpha(k)$ is a leaf of $T^*$, the vertex $x$ is a $T^*$-ancestor of $\alpha(k)$, and has a larger number than $\alpha(k)$.  Thus, $\sigma_{k+1}$ is a prefix of $\sigma_{k}$.  If both $c_1$ and $c_2$ remain in $S_{k+1}$, then they are two $T^*$-descendants of different $T^*$-children of $z$.  If $c_1\not\in S_{k+1}$, then it is a $T^*$-ancestor of $\alpha(k)$, and $\alpha(k)$ and $c_2$ are two $T^*$-descendants of different $T^*$-children of $z$.  It is similar if $c_2\not\in S_{k+1}$.  Otherwise, both $c_1$ and $c_2$ remain in $S_{k+1}$, and the condition in ($\star$) always holds for $z$.  A vertex $y$ in $\sigma_k$ different from $z$ is a $T^*$-ancestor of $z$.  Since $(S_{k}\setminus S_{k+1})\cap S^*$ are all $T^*$-descendants of one $T^*$-child of $y$, the condition in ($\star$) remains true for $y$.

Case 2, $\alpha(k)$ has a single  $T^*$-child $c$.
We take the solution $S_{k+1}$ by (tp2) with $v'= c$.  By the selection of $v'$, a vertex in $(S_{i}\setminus S_{i+1})$ is adjacent to precisely one of $\alpha(k)$ and $v'$, and hence cannot be in $S^*$.  Therefore, $(S_{i}\setminus S_{i+1})\cap S^*=\emptyset$, which implies that $\sigma_{k+1}$ is a prefix of $\sigma_{k}$, and ($\star$) holds for $k + 1$.  

Case 3, $\alpha(k)$ has more than one $T^*$-child.  We take the solution $S_{k+1}$ by (tp3) with $v_1$ and $v_2$ being any two different $T^*$-children of $\alpha(k)$.
By the post-order of the vertices, both $v_1$ and $v_2$ are in $S_k$.
Every common $T_k$-ancestor of $v_1$ and $v_2$ in $S^*$ is a $T^*$-ancestor of $\alpha(k)$.  For $i = 1, 2$, if a vertex $x$ in $S^*$ is a $T_k$-descendant of $v_i$ or a $T_k$-ancestor of $v_i$ that is not a $T_k$-ancestor of $v_{3-i}$, then $x$ is a $T^*$-descendant of $\alpha(k)$.  Therefore, both sets $A$ and $B$ in (tp3) are disjoint from $S^*$, and $(S_{i}\setminus S_{i+1})\cap S^*\subseteq N(\alpha(k))\setminus N(v_1, v_2)$.  Since every $T^*$-ancestor of $\alpha(k)$ is a common $T_k$-ancestor of $v_1$ and $v_2$, every vertex in $(S_{i}\setminus S_{i+1})\cap S^*$ is a $T^*$-descendant of $\alpha(k)$, which is in $D(\sigma_k)$.  We now verify ($\star$) remains true.  Since $(S_{k}\setminus S_{k+1})\cap S^*$ are all $T^*$-descendants of $\alpha(k)$, the condition in ($\star$) remains true for every vertex in $\sigma_k$, which is a $T^*$-ancestor of $\alpha(k)$.  The condition holds for $\alpha(k)$ itself because $v_1$ and $v_2$ are the vertices required by the condition.
\end{proof}

\begin{lemma}
  The maximal induced trivially perfect subgraphs problem
  and the maximal connected induced trivially perfect subgraphs problem
  can be solved with polynomial delay.
\end{lemma}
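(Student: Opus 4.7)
The plan is to derive this as a straightforward consequence of the machinery already assembled. For the connected variation, I would invoke Theorem~\ref{thm:solution-map} with the successor function $\operatorname{succ}(S, v)$ constructed above. The three hypotheses of that theorem are already in hand: an initial solution can be produced by picking any vertex and extending it via Proposition~\ref{lem:extension} using the polynomial-time recognition algorithm for trivially perfect graphs (they are exactly the $\{P_4, C_4\}$-free graphs, so recognition takes $O(n^4)$ time or better); Lemma~\ref{lem:tpg-successor-function} shows that $\operatorname{succ}(S, v)$ can be evaluated in polynomial time and every element is a valid solution containing $v$; and Lemma~\ref{lem:trivially-perfect-solution-map} shows that the solution map $M(G)$ is strongly connected. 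Since all three bounds depend only on $n$, Theorem~\ref{thm:solution-map} delivers a polynomial-delay algorithm.

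For the (non-connected) maximal induced trivially perfect subgraphs problem, I would reduce it to the connected variation by means of Corollary~\ref{lem:connected}. The class of trivially perfect graphs is closed under adding universal vertices (if $F$ is a generating forest for $G$, adding a new root adjacent to all roots of $F$ yields a generating tree for $G$ with a new universal vertex), so the corollary applies directly: solving the connected variation on $G' = G + u$, where $u$ is a fresh universal vertex, yields all maximal trivially perfect subgraphs of $G$ within the same time bound.

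Putting these together, both variations admit polynomial-delay enumeration algorithms. The main conceptual obstacle -- strong connectedness of the solution map -- has already been overcome in Lemma~\ref{lem:trivially-perfect-solution-map} via retaliation-free paths, so what remains in the proof is essentially bookkeeping: verifying the time bound of each $\operatorname{succ}(S, v)$ evaluation, noting that the out-degree $O(n^4)$ and the number of labels are both polynomial in $n$, and observing that the standard breadth-first-search traversal of a strongly connected solution map with polynomial-time successor function realizes polynomial delay. No further structural insight is needed; the lemma is a one-paragraph wrap-up citing Theorem~\ref{thm:solution-map}, Lemmas~\ref{lem:tpg-successor-function} and \ref{lem:trivially-perfect-solution-map}, and Corollary~\ref{lem:connected}.
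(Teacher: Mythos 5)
Your proposal is correct and follows exactly the paper's own route: the connected variation is handled by Theorem~\ref{thm:solution-map} together with Lemmas~\ref{lem:tpg-successor-function} and \ref{lem:trivially-perfect-solution-map}, and the unrestricted variation is then obtained via Corollary~\ref{lem:connected} since trivially perfect graphs are closed under adding universal vertices. The extra details you supply (recognition, initial solution, out-degree bound) are consistent with what the paper leaves implicit.
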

\begin{proof}
The first result follows from Theorem~\ref{thm:solution-map}, and Lemmas~\ref{lem:tpg-successor-function} and \ref{lem:trivially-perfect-solution-map}.  The second then follows from Proposition~\ref{lem:connected} because the class of trivially perfect graphs is closed under adding universal vertices.
\end{proof}

\subsection{Interval graphs}

A graph is an \emph{interval graph} if its vertices can be assigned to intervals on the real line such that there is an edge between two vertices if and only if their corresponding intervals intersect.
Fulkerson and Gross~\cite{fulkerson-65-interval-graphs} showed that a graph $G$ is an interval graph if and only if its maximal cliques can be arranged in a way that for any $v\in V(G)$, the maximal cliques containing $v$ appear consecutively, and such an arrangement is called a \emph{clique path} of the graph.
Since the class of interval graphs is closed under adding universal vertices, we may focus on the maximal connected induced interval subgraphs problem, which we build a solution map to solve.
To show that the solution map is strongly connected, we fix a clique path for each solution, and show that for any pair of solutions $S$ and $S^*$, there exists a path from $S$ to $S^*$ that can be viewed as ``adding vertices in $S^*\setminus S$ to $S$ from left to right as they appear in the clique path of $G[S^*]$.''

The key difficulty of carrying out this idea, just the same as in the recognition of interval graphs, is that an interval graph may have more than one clique path.  Booth and Lueker~\cite{booth-76-pq-tree} invented the complex PQ-tree data structure to manage the different arrangements of maximal cliques.
Using the notion of modules, Hsu~\cite{hsu-95-recognition-cag} was able to give a similar and simpler characterization.  A subset $U$ of vertices forms a \emph{module} of $G$ if all vertices in $U$ have the same neighborhood outside $U$.  In other words, for any pair of vertices $u,v \in U$, a vertex $x \not\in U$ is adjacent to $u$ if and only if it is adjacent to $v$ as well.  Two adjacent vertices that form a module are called \emph{true twins}; they have the same closed neighborhood.  The set $V(G)$ and all singleton vertex sets are modules, called \emph{trivial}.   A graph on four or more vertices is \emph{prime} if it contains only trivial modules.
  The following observation of Hsu~\cite{hsu-95-recognition-cag} is behind Hsu and Ma's recognition algorithms for interval graphs~\cite{hsu-99-recognizing-interval-graphs}.
  
\begin{theorem}[\cite{hsu-95-recognition-cag}]
  \label{thm:prime-interval-graph}
  A prime interval graph has a unique clique path, up to full reversal.
\end{theorem}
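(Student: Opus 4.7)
The plan is to prove this by contradiction: suppose the prime interval graph $G$ admits two clique paths $\mathcal{K}=(K_1,\ldots,K_m)$ and $\mathcal{K}'=(K'_1,\ldots,K'_m)$ with $\mathcal{K}' \notin \{\mathcal{K},\mathcal{K}^R\}$, where $\mathcal{K}^R$ denotes the reversal of $\mathcal{K}$, and then construct a nontrivial module of $G$, contradicting primality.

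First I would align the two paths and pick the leftmost point of disagreement. By reversing $\mathcal{K}'$ if needed, let $i$ be the smallest index with $K_i \ne K'_i$; by minimality, every vertex whose clique interval is entirely to the left of position $i$ occupies the same range in both paths. Because $K_i$ is still some maximal clique of $G$, it reappears at some position $j > i$ of $\mathcal{K}'$, and symmetrically $K'_i$ reappears in $\mathcal{K}$ at some position inside the window. The segment $\mathcal{K}'[i..j]$ is therefore a nontrivial permutation of the multiset $\mathcal{K}[i..j]$.

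Next I would define the candidate module $U$ as the set of vertices $v$ whose clique interval is contained entirely in $[i,j]$ in $\mathcal{K}$ (equivalently in $\mathcal{K}'$). Vertices of $V(G)\setminus U$ fall into three kinds: those whose interval sits strictly left of position $i$, those strictly right of $j$, and those that straddle one of the two endpoints of the window. For the first two kinds, the consecutive-ones property immediately forces no adjacency to any vertex of $U$. For a straddling vertex $x$ whose $\mathcal{K}$-interval covers $[i,j]$, the consecutive-ones property applied inside $\mathcal{K}'$ forces $x$'s $\mathcal{K}'$-interval to cover $[i,j]$ as well, so $x$ is adjacent to every vertex of $U$; a symmetric argument handles one-sided straddlers. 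Thus $U$ is a module.

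The main obstacle, and the most delicate part of the plan, is confirming that $U$ is nontrivial ($2 \le |U| \le |V(G)|-1$) and that the straddling case really does behave uniformly. For nontriviality, I would argue that the very vertices witnessing the permutation $\mathcal{K}[i..j] \ne \mathcal{K}'[i..j]$ must lie in $U$: any such witness vertex whose interval extended past the window on either side would, by the straddler argument above, be pinned to the same range in both paths and so could not be rearranged; this yields $|U|\ge 2$. For $|U|<|V(G)|$, either $i>1$ or $j<m$ gives an excluded vertex directly, and if $i=1$ and $j=m$ the argument can be iterated on a strictly shorter window. The careful case analysis on straddlers, and the verification that primality forbids the degenerate configurations in which a straddling vertex could see $U$ nonuniformly, is where the argument requires the most attention.
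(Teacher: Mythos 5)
Your overall strategy --- two essentially different clique paths force a nontrivial module --- is exactly the spirit of the paper's Lemma~\ref{lem:modules} (the paper itself only cites Hsu for Theorem~\ref{thm:prime-interval-graph} and proves that constructive lemma instead). However, there is a genuine gap at the step ``the segment $\mathcal{K}'[i..j]$ is therefore a nontrivial permutation of the multiset $\mathcal{K}[i..j]$.'' Nothing forces the cliques occupying positions $i,\dots,j$ of $\mathcal{K}'$ to be the same set as those at positions $i,\dots,j$ of $\mathcal{K}$: the position $j$ of $K_i$ in $\mathcal{K}'$ can be reached while cliques from beyond position $j$ of $\mathcal{K}$ have been pulled into the window. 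Concretely, take intervals $a=[1,5]$, $b=[1,1]$, $c=[2,2]$, $d=[3,3]$, $e=[4,4]$, $f=[5,5]$, $g=[2,3]$, $h=[4,5]$, $m=[2,5]$, with maximal cliques $K_1=\{a,b\}$, $K_2=\{a,c,g,m\}$, $K_3=\{a,d,g,m\}$, $K_4=\{a,e,h,m\}$, $K_5=\{a,f,h,m\}$. One checks that $(K_1,K_4,K_5,K_2,K_3)$ is also a clique path. Here $i=2$ and $K_2$ sits at position $j=4$ of $\mathcal{K}'$, but positions $2$--$4$ of $\mathcal{K}'$ carry $\{K_4,K_5,K_2\}\neq\{K_2,K_3,K_4\}$. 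Your $U$ (vertices with $\mathcal{K}$-interval inside $[2,4]$) is $\{c,d,e,g\}$, and the one-sided straddler $h$ is adjacent to $e$ but not to $c$, so $U$ is \emph{not} a module. This is precisely the case your sentence ``a symmetric argument handles one-sided straddlers'' sweeps under the rug: one-sided straddlers are adjacent to all of $U$ only if the permutation claim holds, and it need not.

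The argument can be repaired, but it needs a real change rather than a patch. One fix is to define $j$ not as the position of $K_i$ in $\mathcal{K}'$ but as the \emph{least} index $j\ge i$ with $\{K_i,\dots,K_j\}=\{K'_i,\dots,K'_j\}$ as sets (a minimal common interval of the two orderings); with that choice your straddler analysis goes through, since a vertex entering the window from either side is forced by consecutiveness in $\mathcal{K}'$ to cover the whole window, and your witnesses $u\in K_i\setminus K'_i$, $w\in K'_i\setminus K_i$ land in $U$, giving $|U|\ge 2$. You would still need to treat the degenerate case $i=1$ with $K_1$ interior to $\mathcal{K}'$ explicitly rather than ``iterating on a shorter window.'' The paper's Lemma~\ref{lem:modules} sidesteps all of this by anchoring the window at an \emph{end} clique $K_\ell$ of the first path and taking as module the union of the window cliques \emph{minus} the boundary intersection $K_p\cap K_\ell$; in the example above it correctly produces the module $\{c,d,e,f,g,h\}$ where your construction fails. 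As written, your proof does not go through.
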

We need a constructive version of Theorem~\ref{thm:prime-interval-graph}.
\begin{lemma}\label{lem:modules}
  Let $G$ be an interval graph with $\ell$ maximal cliques, and let $K_1$, $\ldots$, $K_\ell$ be a clique path of $G$.  If there is another clique path of $G$ of which (1) $K_\ell$ is not one of the ends, and (2) $K_p$ is the end that becomes disconnected from $K_1$ by the removal of $K_\ell$, then

  $$\bigcup_{j\in \{p, \ldots, \ell\}\cup J} K_j \setminus (K_p\cap K_\ell),$$
  is a nontrivial module of $G$, where $J$ is the set of indices $j$ with $K_j$ in between $K_\ell$ and $K_p$ in the second clique path.
\end{lemma}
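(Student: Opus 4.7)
Let me abbreviate $C:=K_p\cap K_\ell$ and call an index in $\{p,\ldots,\ell\}\cup J$ (or its corresponding maximal clique) \emph{relevant}. The plan is to verify that $M$ is a module by showing that every vertex outside $M$ is adjacent either to all of $M$ or to none, and then to check nontriviality. The key preliminary step is the claim that the relevant maximal cliques are \emph{precisely} those maximal cliques of $G$ which contain at least one vertex of $M$.

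First I would prove this characterization. The easy direction is that for any relevant $K_{j'}$, maximality of $K_{j'}$ prevents $K_{j'}\subseteq C$, so $K_{j'}\setminus C\subseteq M$ is nonempty. For the converse, fix $w\in M$ and a maximal clique $K_{j'}\ni w$; I want $j'$ to be relevant. Since $w\notin C$, either $w\notin K_p$ or $w\notin K_\ell$, and I split into three cases ($w\in K_p\setminus K_\ell$, $w\in K_\ell\setminus K_p$, and $w\notin K_p\cup K_\ell$). Each case is handled by reading off the indices of cliques containing $w$ from the consecutive-ones property of an appropriate clique path. In the first two cases the second clique path does the job: $w$'s position interval there must avoid the position of $K_\ell$ or of $K_p$, confining it to one side and so yielding relevant indices. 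The third case is the subtle one: $w$'s first-path interval lies in $\{1,\ldots,p-1\}$ or $\{p+1,\ldots,\ell-1\}$; the second option is immediately fine, and for the first I would use that $w\in K_{j_0}$ for some relevant $j_0$ to force $j_0\in J$ with $j_0<p$, then appeal to consecutive-ones in the second clique path to conclude $w$'s second-path interval lies strictly between the positions of $K_\ell$ and $K_p$, so every clique containing $w$ has index in $J$.

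Granted the characterization, the module property follows cleanly. Fix $u\notin M$. Either $u\in C$, in which case first-path consecutive-ones gives $u\in K_j$ for all $p\le j\le\ell$ and second-path consecutive-ones applied to the positions of $K_p$ and $K_\ell$ gives $u\in K_j$ for all $j\in J$, so $u$ lies in every relevant clique; or $u\notin C$, in which case the definition of $M$ directly says $u$ is in no relevant clique. Because every clique containing a vertex of $M$ is relevant, in the first case $u$ is adjacent to every vertex of $M$ and in the second $u$ is adjacent to none of them. This is exactly the module property.

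Nontriviality is quick: $K_p$ and $K_\ell$ are distinct maximal cliques, so $K_p\setminus C$ and $K_\ell\setminus C$ are nonempty and disjoint, giving $|M|\ge 2$; the hypothesis forces $p\ne 1$ (since $K_p$ and $K_1$ are separated in the second path by the removal of $K_\ell$), and any vertex of $K_1\setminus K_2$ sits in no clique past $K_1$ and hence outside every relevant clique, so $V(G)\setminus M$ is nonempty. The main obstacle in executing the plan is the third subcase of the characterization, where the first-path interval of $w$ lies entirely in $\{1,\ldots,p-1\}$; neither clique path alone suffices, and one must combine the contiguity in the first path with the contiguity in the second to conclude that all indices touched by $w$ land in $J$.
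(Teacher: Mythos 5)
Your proposal follows essentially the same route as the paper: both arguments come down to showing, via the consecutive-ones property applied to each of the two clique paths, that every maximal clique containing a vertex of $M$ is one of the relevant cliques, while every vertex of $K_p\cap K_\ell$ lies in all of them; your write-up is more detailed than the paper's terse version and, unlike it, explicitly checks nontriviality. One slip: in the case $w\in K_\ell\setminus K_p$ the second clique path does \emph{not} do the job. Avoiding the position of $K_p$, which is an \emph{end} of that path, confines $w$'s position interval to nothing useful --- the interval contains the position of $K_\ell$ and could a priori stretch from there toward $K_1$ across non-relevant cliques. The correct argument for this case is the mirror image of your case-1 argument carried out in the \emph{first} clique path: there $w$'s interval contains the end position $\ell$ and avoids position $p$, hence is contained in $\{p+1,\ldots,\ell\}$, so all indices are relevant. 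With that one-line repair the proof is complete.
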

\begin{proof}
  Let $U = \bigcup_{j\in \{p, \ldots, \ell\}\cup J} K_j \setminus (K_p\cap K_\ell)$, we show that $N(v)\setminus U =  K_p\cap K_\ell$ for every $v\in U$.  Since $v$ is in a clique that is between $K_p$ and $K_\ell$, in at least one of the two clique paths.  It follows from the definition of clique paths that $K_p\cap K_\ell\subseteq N(v)$.  It remains to show that $N(v)\setminus U\subseteq  K_p\cap K_\ell$.  We may assume that $K_\ell$ and $K_p$ are the right ends of the two clique paths.  If every maximal clique $K_i$ containing $v$ has $i\in J\cup \{p, \ldots, \ell\}$, then $N(v)\setminus U\subseteq  K_p\cap K_\ell$.  Suppose that $v$ is contained in a maximal clique $K_i$ with $i\not\in J\cup \{p, \ldots, \ell\}$, then $K_i$ lies to the left of $K_p$ in the first clique path and lies to the left of $K_\ell$ in the second.  Then since $v$ can be found in both sides of $K_p$ in the first clique path, it has to be in $K_p$ as well.  For the same reason, $v\in K_\ell$.  But then $v$ is in $K_p\cap K_\ell$, and should not be in $U$, a contradiction.  This concludes the proof.
\end{proof}

Let $K_1$, $K_2$, $\ldots$, $K_\ell$ be a clique path of an interval graph $G$.  For any nonempty vertex set $U$ of $G$, we may produce a clique path of $G[U]$ as follows.  For $i = 1, \ldots, \ell$, we replace $K_i$ by $K_i\cap U$, (or equivalently, remove vertices in $K_i\setminus U$ from $K_i$,) and then remove those sets that are not maximal cliques of $G[U]$ as well as duplicate ones.
Let $x$ and $y$ be two nonadjacent vertices in an interval graph $G$.  In any clique path of $G$, the group of maximal cliques containing $x$ and those containing $y$ are disjoint, and the first group is either to the left, or to the right of the second.  We observe that in the clique path for the subgraph $G[U]$ obtained as above, a pair of nonadjacent vertices in $U$ have the same order as in the clique path of $G$.

We are now ready to introduce the successor function for the maximal connected induced interval subgraphs problem.  Let $G$ be the input graph.  For each solution $S$ of $G$ and each vertex $v\in V(G)\setminus S$, we define the set $\operatorname{succ}(S, v)$ of successors as follows. 
Let $K_1$, $\ldots$, $K_{\ell}$, denoted by $\mathcal{K}$, be any fixed clique path for $G[S]$.
For the convenience of presentation, we append two empty sets to both ends of the clique path, and refer to them as $K_0$ and $K_{\ell+1}$ respectively.
For each $a$ with $0\le a \le \ell$, we produce a subset of $S$ by removing 
\begin{equation}
  \label{eq:transtion1}
  \tag{\texttt{succ1}}
 K_a\cap K_{a+1}\setminus N(v); \text{ and } N(v)\setminus (K_a\cup K_{a+1}).
\end{equation}
For each pair of $a, b$ with $0\le a < b \le \ell+1$ and $a < b - 1$, we produce a subset of $S$ by removing
\begin{equation}
  \label{eq:transtion2}
  \tag{\texttt{succ2}}
 K_a\cap K_b\setminus N(v); N(v)\setminus K_a; \bigcup^{b-1}_{j=a+1}K_j\setminus (K_a\cup K_b); \text{ and } (K_b\setminus K_a)\cap N(K_a\setminus K_b)
\end{equation}
 from $S$, and another subset by removing 
\begin{equation}
  \label{eq:transtion3}
  \tag{\texttt{succ3}}
 K_a\cap K_b\setminus N(v); N(v)\setminus K_b; \bigcup^{b-1}_{j=a+1}K_j\setminus (K_a\cup K_b); \text{ and }
 (K_a\setminus K_b)\cap N(K_b\setminus K_a).
\end{equation}
from $S$. 
What we try to achieve by these operations is to make space to accommodate the maximal cliques containing $v$ between $K_a$ and $K_b$, where we let $b = a +1$ for \eqref{eq:transtion1}.
For each $S'$ of the $\ell + 1 + \ell (\ell - 1) = O(n^2)$ subsets,
we use Proposition~\ref{lem:extension} to extend this component of $G[S']$ that contains $v$ to a solution of $G$.
These solutions form the set $\operatorname{succ}(S, v)$; note that each solution in $\operatorname{succ}(S, v)$ contains $v$.

\begin{lemma}\label{lem:interval-successor-function}
Let $G$ be the input graph to 
  the maximal connected induced interval subgraphs problem.
  For any solution  $S$ of $G$, and any vertex $v$ in $V(G)\setminus S$, every set in $\operatorname{succ}(S, v)$ is a solution of $G$ containing $v$, and $\operatorname{succ}(S, v)$ can be calculated in $O(n^5)$ time.
\end{lemma}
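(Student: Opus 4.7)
The claim has two parts: correctness (each element of $\operatorname{succ}(S,v)$ is a maximal connected interval subgraph of $G$ containing $v$) and the $O(n^5)$ running-time bound. My plan for correctness is to show that for each of the three removal rules \eqref{eq:transtion1}, \eqref{eq:transtion2}, \eqref{eq:transtion3}, the set $S'\cup\{v\}$ induces an interval graph; then the component of $G[S'\cup\{v\}]$ containing $v$ is a connected interval subgraph of $G$, and Proposition~\ref{lem:extension} extends it to a maximal one containing $v$, which is exactly what $\operatorname{succ}(S,v)$ is supposed to collect.

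To show $G[S'\cup\{v\}]$ is an interval graph I plan to exhibit, starting from $\mathcal{K}$ restricted to $S'$, an explicit clique path. For \eqref{eq:transtion1} I would splice the two cliques $\{v\}\cup(K_a\cap N(v)\cap S')$ and $\{v\}\cup(K_{a+1}\cap N(v)\cap S')$ between $K_a\cap S'$ and $K_{a+1}\cap S'$; the removal of $K_a\cap K_{a+1}\setminus N(v)$ ensures that every surviving vertex of the overlap is a neighbor of $v$, the removal of $N(v)\setminus(K_a\cup K_{a+1})$ confines $v$'s neighbors in $S'$ to the spliced region, and a case analysis over the five possible types of surviving vertex establishes the consecutive-cliques property. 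For \eqref{eq:transtion2} (and symmetrically \eqref{eq:transtion3}), the third term kills every vertex of $K_{a+1}\cup\cdots\cup K_{b-1}$ not absorbed into $K_a\cup K_b$, the second term confines $v$'s neighbors in $S'$ to $K_a$, and the first and fourth terms together force the surviving fragments of $K_{a+1},\ldots,K_{b-1}$ to behave like sub-cliques of $K_a\cap S'$; the required clique path is then obtained by collapsing these intermediate positions and inserting a clique $\{v\}\cup(K_a\cap N(v)\cap S')$ just before $K_b\cap S'$.

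The running-time bound is routine: there are $\ell+1\le n$ choices of $a$ in \eqref{eq:transtion1} and $O(n^2)$ choices of $(a,b)$ in \eqref{eq:transtion2} and \eqref{eq:transtion3}, giving $O(n^2)$ candidate subsets; each $S'$ is computable in $O(n^2)$ time once $\mathcal{K}$ and $N(v)$ are at hand; and Proposition~\ref{lem:extension} extends each such set in $O(n\cdot f(n))=O(n^3)$ time, where $f(n)=O(n^2)$ suffices for recognizing an interval graph. The total is therefore $O(n^2)\cdot O(n^3)=O(n^5)$. The main obstacle I expect is the correctness verification for \eqref{eq:transtion2} and \eqref{eq:transtion3}: one must argue that the fourth term $(K_b\setminus K_a)\cap N(K_a\setminus K_b)$ removes exactly what it must, since any surviving vertex of $K_b\setminus K_a$ adjacent to a surviving vertex of $K_a\setminus K_b$ would, together with $v$, create an obstruction to the consecutive-cliques property once $v$ is spliced in. The first three removal terms are intuitive; this fourth one, which is where I anticipate the bulk of the case-analysis effort, is the place where the apparently ad hoc rule reveals its necessity.
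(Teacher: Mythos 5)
Your proposal follows essentially the same route as the paper's proof: build an explicit clique path for $G[S'\cup\{v\}]$ from $\mathcal{K}$ restricted to $S'$ with the cliques $(N(v)\cap K'_a)\cup\{v\}$ and $(N(v)\cap K'_b)\cup\{v\}$ spliced between $K'_a$ and $K'_b$ (for \eqref{eq:transtion2} the second of these is absorbed into the first since $N(v)\cap S'\subseteq K_a$), verify that every maximal clique survives and that the cliques containing each vertex are consecutive, and then count $O(n^2)$ candidate subsets times $O(n^3)$ extension time. One small correction to your sketch of \eqref{eq:transtion2}: the surviving fragments of $K_{a+1},\ldots,K_{b-1}$ need only collapse into $K'_a$ \emph{or} $K'_b$ (not into $K'_a$ alone), which is exactly what the third and fourth removal terms together guarantee.
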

\begin{proof}
  Let $S'$ denote the subset obtained from $S$ by removing the subsets defined in one of \eqref{eq:transtion1}--\eqref{eq:transtion3}.  For the correctness, by Proposition~\ref{lem:extension}, it suffices to show that $S'\cup \{v\}$ is an interval set of $G$.
  We construct a clique path for $G[S'\cup \{v\}]$ as follows.  We may use $b = a + 1$ for \eqref{eq:transtion1}.
  For $i = 1, \ldots, \ell$, replace $K_i$ by $K'_i = K_i\cap S'$; for $j=a+1, \ldots, b-1$, remove $K'_j$; add two cliques $(N(v)\cap K'_a)\cup \{v\}$ and $(N(v)\cap K'_b)\cup \{v\}$ in between $K'_a$ and $K'_b$; and then remove all those sets that are not maximal cliques of $G[S'\cup \{v\}]$ as well as duplicate ones.  We now prove that the result is a clique path for $G[S'\cup \{v\}]$.

  We show first that every maximal clique $K$ of $G[S'\cup \{v\}]$ is present.  The only maximal cliques containing $v$ are $(N(v)\cap K'_a)\cup \{v\}$ and $(N(v)\cap K'_b)\cup \{v\}$, possibly only one of them.  If $v\not\in K$, then $K$ is a subset of some maximal clique of $G[S]$, hence $K'_j$ for some $j, 1\le j \le \ell$.  We only need to worry when $a < j < b$, then $K'_j$ has been removed.  This cannot happen for \eqref{eq:transtion1}.  For \eqref{eq:transtion2}, if $K'_j$ is a subset of neither $K'_a$ nor $K'_b$, then there must be some vertex $x\in K_a\setminus K_b$ in $K'_j$.  But then all vertices in $K'_j\cap K_b\setminus K_a$ have been removed, and $K'_j\subseteq K'_a$, a contradiction.  The argument is similar for \eqref{eq:transtion3}.

  We then argue that for every vertex $x\in S'\cup \{v\}$, maximal cliques of $G[S'\cup \{v\}]$ containing $x$ appear consecutively, for which it suffices to consider those vertices in $K'_a\cup K_b'$ and $v$ itself.  The only maximal cliques containing $v$ are $(N(v)\cap K'_a)\cup \{v\}$ and $(N(v)\cap K'_b)\cup \{v\}$, (one of which might be a subset of the other,) hence together.  For each vertex $x$ in $K'_a\cap K'_b$, there are $a' \le a$ and $b' \ge b$ such that $K_{a'}, \ldots, K_{b'}$ are all the maximal cliques of $G[S]$ containing $x$.  On the other hand, $x$ has to be in $N(v)$, and hence both $(N(v)\cap K'_a)\cup \{v\}$ and $(N(v)\cap K'_b)\cup \{v\}$ contain it.  Thus, the maximal cliques of $G[S'\cup \{v\}]$ containing $x$ are consecutive.  For a vertex $x\in K'_a\setminus K'_b$,
  the maximal cliques of $G[S'\cup \{v\}]$ containing $x$ are $K'_{a''}, \ldots, K'_a$, for some $a'' \le a$, together with possibly $(N(v)\cap K'_a)\cup \{v\}$, hence consecutive.  It is symmetric for vertices in $K'_b\setminus K'_a$.
  
  For each pair of indices $a$ and $b$, we can obtain $S'$ in time $O(n^2)$, and extend $S'$ to a solution in $O(n^3)$ time.  The total time of producing $\operatorname{succ}(S, v)$ is thus  $O(n^2\cdot (n^2 + n^3)) = O(n^5)$.
\end{proof}

For each solution $S'$ in $\operatorname{succ}(S, v)$, we add an arc from $S$ to $S'$, and label it with $v$.  We use the resulting multiple digraph, denoted by $M(G)$, as the solution map for the graph $G$.  By Lemma~\ref{lem:interval-successor-function}, the out-degree of a node in $M(G)$ is at most $O(n^3)$.  In the rest we show that $M(G)$ is strongly connected by demonstrating a retaliation-free path from any solution $S$ to another solution $S^*$.  
Suppose that we fix a clique path for $G[S^*]$, and let $v^*$ be a vertex in $S^*\setminus S$ that is leftmost in this clique path, and let $\widehat S$ denote those vertices in $S\cap S^*$ that are to the left of $v^*$.
  Then the set of neighbors of $v^*$ in $\widehat S$, if nonempty, is a clique.
If in the clique path for $G[S]$ used in the calculation of $\operatorname{succ}(S, v^*)$, maximal cliques that are supersets of $N(v^*)\cap\widehat S$ appear in one end of $\widehat S$, then one successor of $S$ contains $\widehat S\cup \{v^*\}$.  
Otherwise, there may be victims.
According to Theorem~\ref{thm:prime-interval-graph}, we are always in the easy case if $G[\widehat S]$ is prime.  If there are victims, all of them must be contained in some nontrivial module of $G[\widehat S]$ by Lemma~\ref{lem:modules}.  What we try to do is to find the two borders of this module on the clique path for $G[S^*]$, where vertices inside are the vertices for $D(\langle v^*\rangle)$, and this gives us the required information to build a retaliation-free path.  See Figure~\ref{fig:interval-demonstration} for an illustration of this process.

\begin{figure}[h!]
  \centering
  \begin{subfigure}[b]{.8\linewidth}
    \centering\scriptsize
    \begin{tikzpicture}[scale=1.2]
      \begin{scope}[every path/.style={{|[right]}-{|[left]}}]
        \foreach \x in {1, ..., 11} \node[gray] at (\x-1, -.15) {$\x$};
        \foreach[count=\i] \l/\r/\y/\c in {-.02/.02/4/, 0/1/3/,
          0.98/1.02/4/, 1/8/1/, 1/9/0/,
          1.98/2.02/4/, 1.98/2.02/3/,
          2.98/3.02/4/, 3/4/3/,3.98/4.02/4/,
          4.98/5.02/5/, 5/6/4/,
5/7/3/, 5/8/2/, 6/7/5/, 6.98/7.02/4/, 8/9/3/, 
          9/10/2/, 9.98/10.02/3/}{
          \draw[\c, thick] (\l-.02, \y/5) node[left] {$\i$} to (\r+.02, \y/5);
        }
      \end{scope}
    \end{tikzpicture}
    \caption{Solution $S^*$.}
  \end{subfigure}  
  
  \begin{subfigure}[b]{.8\linewidth}
    \centering\scriptsize
    \begin{tikzpicture}[scale=1.2]
      \begin{scope}[every path/.style={{|[right]}-{|[left]}}]
        \foreach \x in {1, ..., 11} \node[gray] at (\x-1, .05) {$\x$};
        \foreach[count=\i] \l/\r/\y/\c in {-.02/.02/4/, 0/1/3/,
          0.98/1.02/4/, 1/10/1/white, 1/8/1/white,
          1.98/2.02/4/, 1.98/2.02/3/,
          3.98/4.02/4/, 3/4/3/, 2.98/3.02/4/, 
          6.98/7.02/3/, 6/7/4/, 5/6/3/white, 5/7/2/, 5/6/3/, 4.98/5.02/4/, 7.08/7.10/3/white,
        8/9/2/, 9/10/3/}{
          \draw[\c,thick] (\l-.02, \y/5) node[left] {$\i$} to (\r+.02, \y/5);
        }
        \foreach \l/\r/\y/\i in {9.98/10.02/2/20, 1/8/1/24}{
          \draw[violet, thick] (\l-.02, \y/5) node[left] {$\i$} to (\r+.02, \y/5);
        }
      \end{scope}
    \end{tikzpicture}
    \caption{$S_0$; $\sigma_0 = \langle \rangle$; $S^*\setminus S_0 = \{4, 5, 13\}$; $\alpha(0) = 4$.}
  \end{subfigure}

  \begin{subfigure}[b]{.8\linewidth}
    \centering\scriptsize
    \begin{tikzpicture}[scale=1.2]
      \begin{scope}[every path/.style={{|[right]}-{|[left]}}]
        \foreach \x in {1, ..., 11} \node[gray] at (\x-1, -.15) {$\x$};
        \foreach[count=\i] \l/\r/\y/\c in {-.02/.02/4/black, 0/1/3/black,
          0.98/1.02/4/black, 1/10/1/black, 1/9/0/black,
          6.98/7.02/4/black, 6.98/7.02/3/black,
          9/10/4/black, 8/9/3/black, 7.98/8.02/4/black, 
          4.98/5.02/5/black, 4/5/4/black, 3/6/3/black, 2/5/2/black, 3/4/5/black, 2.98/3.02/4/black, 1.98/2.02/3/black}{
          \draw[\c,thick] (\l-.02, \y/5) node[left] {$\i$} to (\r+.02, \y/5);
        }
        \foreach[count=\i from 20] \l/\r/\y/\c in {5.98/6.02/2/violet, 9.98/10.02/3/violet}{
          \draw[violet, thick] (\l-.02, \y/5) node[left] {$\i$} to (\r+.02, \y/5);
        }
      \end{scope}
    \node at (4, 1.25) {};
    \end{tikzpicture}
    \caption{$S_1$; $\sigma_1 = \langle \rangle$; $S^*\setminus S_1 = \{18, 19\}$; $\alpha(1) = 18$.}
  \end{subfigure}

\begin{subfigure}[b]{.75\linewidth}
    \centering\scriptsize
    \begin{tikzpicture}[scale=1.2]
      \begin{scope}[every path/.style={{|[right]}-{|[left]}}]
        \foreach \x in {1, ..., 10} \node[gray] at (\x-1, -.15) {$\x$};
        \foreach[count=\i] \l/\r/\y/\c in {-.02/.02/3/, 0/1/2/,
          0.98/1.02/3/, 1/7/1/, 1/9/0/,
          3.98/4.02/3/, 3.98/4.02/2/,
          2.98/3.02/4/white, 1.98/2.02/3/, 2/3/2/, 
          4.98/5.02/.5/white, 5.98/6.02/4/, 
          5/6/2/, 6/7/3/, 5.98/6.02/5/, 6.98/7.02/4/white, 7/9/2/, 
          8.98/9.02/3/violet, 9.98/10.02/2/white}{
          \draw[thick, \c] (\l-.02, \y/5) node[left] {$\i$} to (\r+.02, \y/5);
        }
        \foreach \l/\r/\y/\i in {
          4.98/5.02/3/20, 2.98/3.02/3/22, 7.98/8.02/3/23}{
          \draw[violet, thick] (\l-.02, \y/5) node[left] {$\i$} to (\r+.02, \y/5);
        }
      \end{scope}
    \end{tikzpicture}
    \caption{$S_2$; $\sigma_2 = \langle 18 \rangle$; $S^*\setminus S_2 = \{8, 11, 16, 19\}$; $\alpha(2) = 16$.}
  \end{subfigure}  

  \begin{subfigure}[b]{.75\linewidth}
    \centering\scriptsize
    \begin{tikzpicture}[scale=1.2]
      \begin{scope}[every path/.style={{|[right]}-{|[left]}}]
        \foreach \x in {1, ..., 10} \node[gray] at (\x-1, -.15) {$\x$};
        \foreach[count=\i] \l/\r/\y/\c in {-.02/.02/4/, 0/1/3/,
          0.98/1.02/4/, 1/8/1/, 1/9/0/,
          1.98/2.02/4/, 1.98/2.02/3/,
          2.98/3.02/4/white, 2.98/3.02/4/, 3/4/3/, 
          6.98/7.02/5/white, 7/8/4/, 5/7/3/, 6/8/2/, 6/7/5/, 5.98/6.02/4/, 8/9/3/, 
          8.98/9.02/2/, 10.98/11.02/4/white}{
          \draw[thick, \c] (\l-.02, \y/5) node[left] {$\i$} to (\r+.02, \y/5);
        }
        \foreach \l/\r/\y/\i in {4.98/5.02/4/20, 3.98/4.02/4/22}{
          \draw[violet, thick] (\l-.02, \y/5) node[left] {$\i$} to (\r+.02, \y/5);
        }
      \end{scope}
      \node at (4, 1.25) {};
    \end{tikzpicture}
    \caption{$S_3$; $\sigma_3 = \langle 18 \rangle$; $S^*\setminus S_3 = \{8, 11, 9\}$; $\alpha(3) = 11$.}
  \end{subfigure}  

  \begin{subfigure}[b]{.85\linewidth}
    \centering\scriptsize
    \begin{tikzpicture}[scale=1.2]
      \begin{scope}[every path/.style={{|[right]}-{|[left]}}]
        \foreach \x in {1, ..., 12} \node[gray] at (\x-1, -.15) {$\x$};
        \foreach[count=\i] \l/\r/\y/\c in {-.02/.02/4/, 0/1/3/,
          0.98/1.02/4/, 1/8/1/, 1/10/0/,
          1.98/2.02/4/, 1.98/2.02/3/,
          2.98/3.02/4/white, 2.98/3.02/4/, 3/4/3/, 
          5/6/5/, 6/7/4/,
          6/7/3/, 6/8/2/, 6.98/7.02/5/, 7.98/8.02/4/white, 8/10/3/, 
          10/11/2/, 10.98/11.02/4/}{
          \draw[thick, \c] (\l-.02, \y/5) node[left] {$\i$} to (\r+.02, \y/5);
        }
        \foreach \l/\r/\y/\i in {
          4/5/4/22, 8.98/9.02/2/23}{
          \draw[violet, thick] (\l-.02, \y/5) node[left] {$\i$} to (\r+.02, \y/5);
        }
      \end{scope}
    \node at (4, 1.25) {};
    \end{tikzpicture}
    \caption{$S_4$; $\sigma_4 = \langle 18, 11 \rangle$; $S^*\setminus S_4 = \{8, 16\}$; $\alpha(4) = 16$.}
  \end{subfigure}  

  \begin{subfigure}[b]{.85\linewidth}
    \centering\scriptsize
    \begin{tikzpicture}[scale=1.2]
      \begin{scope}[every path/.style={{|[right]}-{|[left]}}]
        \foreach \x in {1, ..., 12} \node[gray] at (\x-1, -.15) {$\x$};
        \foreach[count=\i] \l/\r/\y/\c in {-.02/.02/4/, 0/1/3/,
          0.98/1.02/4/, 1/9/1/, 1/10/0/,
          1.98/2.02/4/, 1.98/2.02/3/,
          2.98/3.02/4/white, 2.98/3.02/4/, 3/4/3/, 
          5/6/5/, 6/7/4/,
          6/8/3/, 6/9/2/, 7/8/5/, 7.98/8.02/4/, 9/10/3/, 
          10/11/2/, 10.98/11.02/4/}{
          \draw[thick, \c] (\l-.02, \y/5) node[left] {$\i$} to (\r+.02, \y/5);
        }
        \foreach \l/\r/\y/\i in {
          4/5/4/22}{
          \draw[violet, thick] (\l-.02, \y/5) node[left] {$\i$} to (\r+.02, \y/5);
        }
      \end{scope}
    \node at (4, 1.25) {};
    \end{tikzpicture}
    \caption{$S_5$; $\sigma_5 = \langle 18 \rangle$; $S^*\setminus S_5 = \{8\}$; $\alpha(5) = 8$.}
  \end{subfigure}  
  
  \caption{Demonstration of a retaliation-free path for the maximal connected induced interval subgraphs problem.  The solutions, $S^*$, $S_0$, $\ldots$, $S_5$, of $G$ are presented as clique paths (interval models) in (a)--(g).  The graph contains 23 vertices, of which the first 19 are numbered in (a).  The neighborhoods of other vertices are as follows: $N(20) = \{4,5, 13, 19\}$; $N(21) = \{4, 8\}$; $N(22) = \{4, 5, 8, 10, 11\}$; $N(23) = \{5, 16, 17\}$; and $N(24) = \{1,2, \ldots, 18\}$.  Finally, $S_6 = S^*$.}  
  \label{fig:interval-demonstration}
\end{figure}
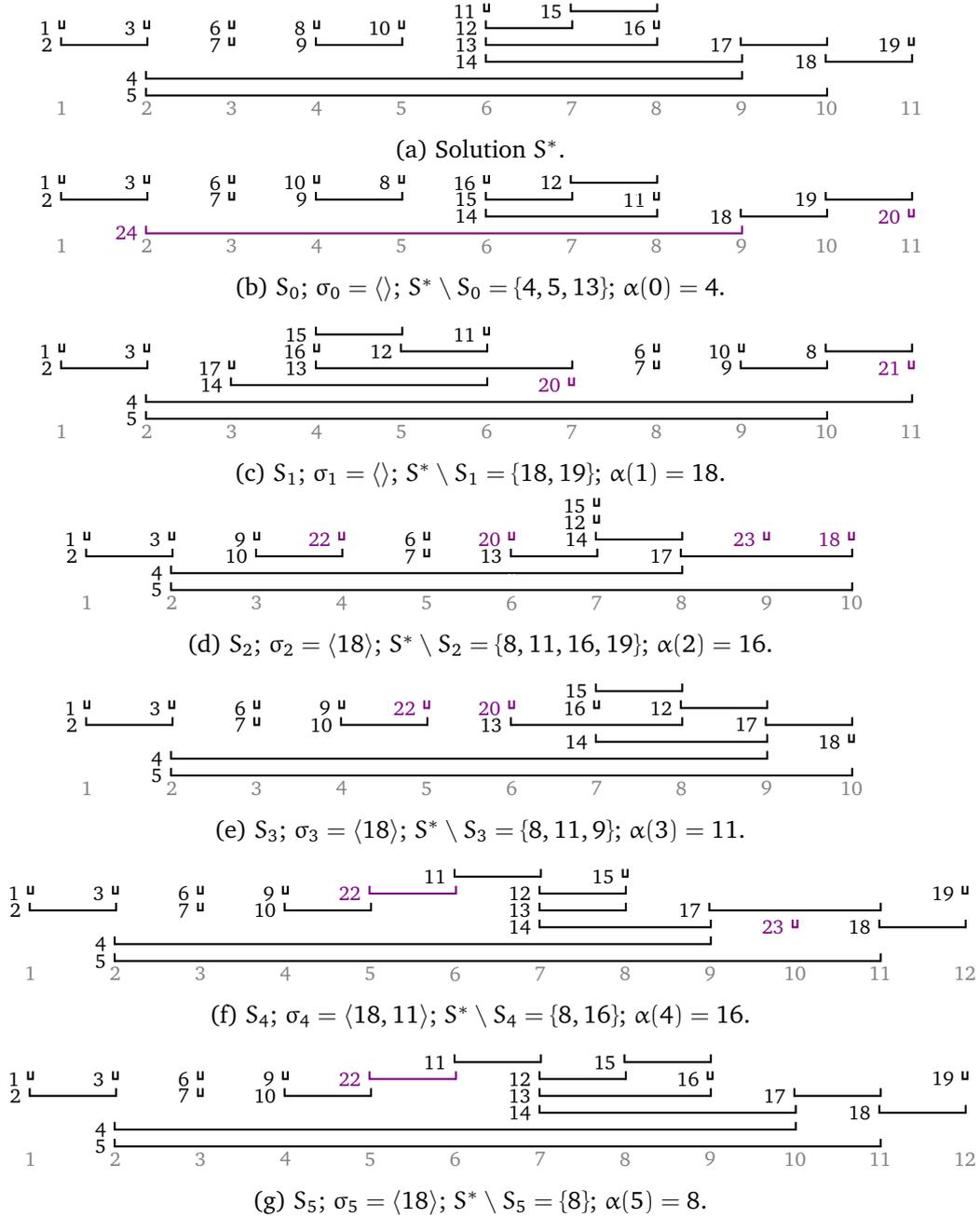

\begin{lemma}\label{lem:interval-retaliation-path}
  The solution map $M(G)$ is strongly connected.
\end{lemma}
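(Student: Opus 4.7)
The plan is to apply Theorem~\ref{thm:formal-retaliation-free}: for each target solution $S^*$ and every other solution $S$, we will exhibit a retaliation-free path from $S$ to $S^*$ in $M(G)$. Fix a clique path $\mathcal{K}^* = K^*_1, \ldots, K^*_{\ell^*}$ of $G[S^*]$, and order the vertices of $S^*$ from left to right according to the first maximal clique of $\mathcal{K}^*$ containing them (breaking ties arbitrarily inside each $K^*_i$); extend this to a numbering $v_1, \ldots, v_n$ of $V(G)$. Under this ordering, for any $v \in S^*$ the neighbors of $v$ among the smaller-numbered vertices of $S^*$ form a clique (a prefix of the maximal cliques of $\mathcal{K}^*$ that contain $v$), which is exactly what makes $v$ easy to insert on the left end of a clique path.

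For the victim function we set $D(\langle\rangle)$ to $S^*$ in this order. For a gainer $v$ appended to a sequence $\sigma$, the entries of $D(\sigma + v)$ will be the $S^*$-vertices trapped inside the nontrivial module of $G[\widehat S\cup\{v\}]$ identified by Lemma~\ref{lem:modules}, where $\widehat S$ is the set of previously processed vertices (those numbered below $v$ and lying in every $S_j$ with $\sigma_j$ extending $\sigma$); if no such nontrivial module exists, $D(\sigma+v)$ is empty. Critically, every entry of $D(\sigma+v)$ is numbered strictly below $v$, hence comes before $v$ in $D(\sigma)$ as required by the definition of the victim function.

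To build the transition out of $S_k$, let $\alpha(k)$ be the first entry of $D(\sigma_k)$ absent from $S_k$ and fix a clique path $\mathcal{K}_k$ of $G[S_k]$. Inductively, the vertices of $S_k\cap S^*$ numbered below $\alpha(k)$ occupy a connected prefix of $\mathcal{K}_k$ whose relative order on nonadjacent pairs agrees with $\mathcal{K}^*$. The neighbors of $\alpha(k)$ in this prefix form a clique and therefore sit in a single contiguous run $K_a, \ldots, K_b$ of $\mathcal{K}_k$; applying \eqref{eq:transtion1} (when $b=a+1$) or \eqref{eq:transtion2}/\eqref{eq:transtion3} with these indices produces a successor $S_{k+1}$ containing $\alpha(k)$. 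In the easy case where this run abuts the processed end of $\mathcal{K}_k$, no $S^*$-vertex is removed, so $\sigma_{k+1}$ is a prefix of $\sigma_k$; otherwise, Lemma~\ref{lem:modules} forces every deleted $S^*$-vertex to lie inside the nontrivial module that defines $D(\sigma_k+\alpha(k))$, whence $\sigma_{k+1}=\sigma_k+\alpha(k)$, fulfilling clauses (i)--(iii) of a retaliation-free path.

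The main obstacle is guaranteeing that the module exposed by Lemma~\ref{lem:modules} at step $k$ depends only on $\sigma_k$ (through $\mathcal{K}^*$ and the numbering) and not on the incidental choices that produced $S_k$ and $\mathcal{K}_k$. To handle this one shows that throughout the nested descent into a module, the processed prefix $\widehat S_k$ continues to match a prefix of $\mathcal{K}^*$ up to the right end of the current module, so that the module structure viewed from $\mathcal{K}^*$ governs every subsequent wave of gainers and victims, and the $\mathcal{K}_k$ inside that module is free to be arranged so that the next required clique run is again contiguous. Granted this bookkeeping, the retaliation-free property follows immediately from the case analysis above and Theorem~\ref{thm:formal-retaliation-free} yields strong connectedness of $M(G)$.
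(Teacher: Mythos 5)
Your high-level strategy coincides with the paper's: order $S^*$ along a fixed clique path $\mathcal{K}^*$, use Lemma~\ref{lem:modules} to confine victims to a nontrivial module, and invoke Theorem~\ref{thm:formal-retaliation-free}. But the argument has a genuine gap exactly where the paper does its real work, and one of your supporting premises is false.

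First, your inductive claim that ``the vertices of $S_k\cap S^*$ numbered below $\alpha(k)$ occupy a connected prefix of $\mathcal{K}_k$ whose relative order on nonadjacent pairs agrees with $\mathcal{K}^*$'' is precisely what cannot be guaranteed: the clique path $\mathcal{K}_k$ used by the successor function is an \emph{arbitrary fixed} clique path of $G[S_k]$, not one you get to arrange, so your later remark that ``$\mathcal{K}_k$ inside that module is free to be arranged'' is not available. This is the stated obstacle of the whole construction (an interval graph has many clique paths), and the paper's proof does not maintain any such prefix invariant. Instead it takes $\mathcal{K}_k$ as given, locates the run $K^i_p,\ldots,K^i_q$ of cliques containing $N(\alpha(i))\cap K^*_{s(i)-1}$, and proves via the indices $p',q',p'',q''$ and a hole-in-$G[S^*]$ contradiction that at least one of two cutting directions works, so that \eqref{eq:transtion2} or \eqref{eq:transtion3} can be applied with all $S^*$-losses confined to a module. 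Without some replacement for that case analysis, you have not shown that a suitable successor exists at all, nor that clause (ii) of the retaliation-free definition (every removed $S^*$-vertex lies in $D(\sigma_i)$) holds.

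Second, you leave the internal order of $D(\sigma+v)$ unspecified and your bookkeeping paragraph implicitly processes victims left to right. The paper reverses the orientation of $\mathcal{K}^*$ at every nesting level: after inserting the gainer $\alpha(i)$ at position $s(i)$, the clique $K^*_{s(i)-1}$ survives intact (invariant (i)) and becomes the anchor from which the module's victims are restored, working back toward $z$. Restoring them from the $z$ side has no such surviving anchor, so the clique containing the first victim's processed neighbors need not sit at an end of the relevant range, and re-inserting it can damage vertices outside $D(\sigma_i)$, breaking retaliation-freeness. The alternation, together with the sets $A_i$ and invariants (ii)--(v), is what makes the nested descent terminate correctly; ``granted this bookkeeping'' is conceding the theorem rather than proving it.
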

\begin{proof}
  Let $S^*$ be a fixed destination solution of $G$.  We show that for any solution $S$ of $G$, there is a retaliation-free path from $S$ to $S^*$, and then the statement follows from Theorem~\ref{thm:formal-retaliation-free}.
  
  We fix a clique path, ${K}_1$, $\ldots$, ${K}_{\ell^*}$ for $G[S^*]$, where ${\ell^*}$ is the number of maximal cliques of $G[S^*]$.  We may add two empty sets $K_0$ and $K_{\ell^*+1}$ to the two ends respectively.
For each $v\in S^*$, let
\[
  \operatorname{lp}(v) = \min\{j\mid v\in K_j\} \text{ and }\operatorname{rp}(v) = \max\{j\mid v\in K_j\}.
\]
Note that $\{ [\operatorname{lp}(v), \operatorname{rp}(v)] \mid v\in S^*\}$ is an interval model for $G[S^*]$.
It is easy to use definition to verify that ${K}_{\ell^*+1}$, ${K}_{\ell^*}$, $\ldots$, ${K}_1$, ${K}_0$ is also a clique path for $G[S^*]$.  To distinguish them, we use $\overrightarrow{\mathcal{K}}$ to denote the path ${K}_0$, ${K}_1$, $\ldots$, ${K}_{\ell^*}$, ${K}_{\ell^*+1}$, and $\overleftarrow{\mathcal{K}}$ its full reversal.  Note that  $\overrightarrow{\mathcal{K}}$ and $\overleftarrow{\mathcal{K}}$ are different unless $S^*$ is a clique.
Moreover, the $i$th clique on $\overleftarrow{\mathcal{K}}$ is the $(\ell^*+1-i)$th clique on $\overrightarrow{\mathcal{K}}$.

For two nonadjacent vertices $u$ and $v$ in $S^*$, let $B(u, v)$ denote the set of vertices between $u$ and $v$ in $\mathcal{K}$, i.e.,
  \[
    B(u, v)=
    \begin{cases}
      \bigcup_{j=\operatorname{rp}(u)+1}^{\operatorname{lp}(v)-1} K_j
      & \text{if } \operatorname{rp}(u) < \operatorname{lp}(v),
      \\
      \bigcup_{j=\operatorname{rp}(v)+1}^{\operatorname{lp}(u)-1} K_j
      & \text{if } \operatorname{rp}(v) < \operatorname{lp}(u).
    \end{cases}
  \]

We are ready to define the function $D:\Sigma(S^*) \to \Sigma(S^*)$.
For the convenience of presentation, we introduce two dummy vertices $v_\dashv$ and $v_\vdash$, and define $\operatorname{lp^*}(v_\vdash) = \operatorname{rp^*}(v_\vdash) = 0$ and $\operatorname{lp^*}(v_\dashv) = \operatorname{rp^*}(v_\dashv) = \ell^*+1$.
For each input $\sigma$ of $D$, we add both dummy vertices at its beginning; in particular, $v_\dashv$ the first and $v_\vdash$ the second.  Hence the empty sequence will be interpreted as $\langle v_\dashv v_\vdash\rangle$.
The elements in the sequence $D(\langle  v_\dashv v_\vdash\rangle)$ is $S^*$, and they are ordered from left to right as they appear in the clique path $\overrightarrow{\mathcal{K}}$: A vertex $u\in S^*$ is before another $v\in S^*$ in $D(\langle  v_\dashv v_\vdash\rangle)$ if (1) $\operatorname{lp}(u)< \operatorname{lp}(v)$, or (2) $\operatorname{lp}(u)= \operatorname{lp}(v)$ and $\operatorname{rp}(u)< \operatorname{rp}(v)$, and true twins of $G[S^*]$ are ordered arbitrarily.
We define $D(\sigma+v)$ recursively as follows.  Let $z$ be the last vertex of $\sigma$.  If $v\not\in S^*\setminus N(z)$, (we consider $N(\vdash)$ to be empty,) or $v$ is not in the sequence $D(\sigma)$, then $D(\sigma+v)$ is defined to be the empty sequence.  Otherwise, we define $D(\sigma+v)$ to be vertices in $B(z, v)$ with the following ordering.  If the ordering of $D(\sigma)$ is from left to right on $\overrightarrow{\mathcal{K}}$, then the ordering for $D(\sigma+v)$ is from left to right on $\overleftarrow{\mathcal{K}}$, with true twins of $G[S^*]$ ordered arbitrarily, and vice versa.
Note that $u$ is to the left of $v$ in $\overleftarrow{\mathcal{K}}$ if (1) $\operatorname{rp}(u)> \operatorname{rp}(v)$, or (2) $\operatorname{rp}(u)= \operatorname{rp}(v)$ and $\operatorname{lp}(u)> \operatorname{lp}(v)$.  %
One should note that the ordering for $D(\sigma+v)$ is not the reverse of the same set of vertices in $D(\sigma)$.  Indeed, for two vertices $u$ and $v$ both appearing in $D(\sigma+v)$, it is possible that $u$ is before $v$ in both  $D(\sigma)$ and $D(\sigma+v)$, either when they are true twins, or when $\operatorname{lp}(u)< \operatorname{lp}(v)< \operatorname{rp}(v)< \operatorname{rp}(u)$.
We remark that $S^*$ can be viewed as $B(v_\vdash, v_\dashv)$.
Abusing notation, we may sometimes treat $D(\sigma)$ as a set, by which we mean the set of vertices that are in the sequence of $D(\sigma)$.

If $K_1\not\subseteq S$, then the first vertex $v$ of $D(\langle v_\dashv v_\vdash\rangle)$ that is not in $S$ has to be in $K_1\setminus S$.
The successor obtained by \eqref{eq:transtion1} with $a$ being any number such that $K_1\cap S\subseteq K_a$ contains $v$ and all vertices in $K_1\cap S$.  Moreover, $\sigma_1 = \sigma_0 = \langle v_\dashv v_\vdash\rangle$.  Therefore, we may assume that $K_1\subseteq S$.  
Let $S_0 = S$.

  For $i\ge 0$ with $S_i\ne S^*$, we make the following definitions.
  Let $z$ and $y$ be the last two vertices in $\sigma_i$; i.e.,  $\sigma_i = \langle \cdots y z\rangle$.  We choose the clique path $\overrightarrow{\mathcal{K}}$ or $\overleftarrow{\mathcal{K}}$ in which $z$ is to the left of $y$, and let it be denoted by ${\mathcal{K}^*}$.  (One may note that we are using $\overrightarrow{\mathcal{K}}$ if and only if $\sigma_i$ has an even length, disregarding whether the two dummy vertices are counted.)  For a vertex $v\in S^*$, the values of $\operatorname{lp^*}(v)$ and $\operatorname{rp^*}(v)$ are respectively, the smallest number $j$ and, respectively, the largest number $j$ such that the $j$th clique of ${\mathcal{K}^*}$ contains $v$.  Note that $\operatorname{lp^*}(v)$ is either $\operatorname{lp}(v)$ or $\ell^* + 1 - \operatorname{lp}(v)$.
  Denote by ${\alpha(i)}$ the first vertex of $D(\sigma_i)$ that is not in $S_i$, which exists because $S_i \ne S^*$ and the calculation of $\sigma_{i+1}$. Let
  \[
    r(i) = \operatorname{rp^*}(z) + 1; t_i = \operatorname{lp^*}(y); \text{ and } s(i) = \operatorname{lp^*}({\alpha(i)}).
  \]
 Let $K^i_1, \ldots, K^i_\ell$ be the clique path of $G[S_i]$ used in the successor function.

 We prove a stronger statement than required by the definition of retaliation-free paths.  We set $A_0=\emptyset$.  We prove that for every $i\ge 0$, there exist a successor $S_{i +1}$ in $\operatorname{succ}(S_i, {\alpha(i)})$ and a set $A_{i+1}\subseteq K^*_{r(i+1)}$
 such that the following invariants are maintained

  \begin{enumerate}[(i)]
  \item $K^*_{r(i)}\subseteq S_{i}$;
  \item if $\sigma_{i}\ne \langle v_\dashv v_\vdash\rangle$, then $A_{i+1}\not\subseteq N({\alpha(i)})$;
  \item if $\sigma_{i}= \langle v_\dashv v_\vdash\rangle$, then $(S_{i}\setminus S_{i+1})\cap S^*\subseteq D(\sigma_i)\setminus K^*_{r(i)}( = S^*\setminus K_1)$; 
  \item if $\sigma_{i}\ne \langle v_\dashv v_\vdash\rangle$, then $(S_{i}\setminus S_{i+1})\cap S^*\subseteq \bigcup_{j=r(i)+1}^{t'(i) - 1} K^*_j\setminus (K^*_{r(i)}\cup K^*_{t'(i)})$; and 
  \item  if $\sigma_{i}\ne \langle v_\dashv v_\vdash\rangle$, then there exists $t'(i)$ with $s(i) < t'(i)< t(i)$ such that $A_{i-1}\subseteq K^*_{t'(i)-1}\cap K^*_{t'(i)} \subseteq A_{i}$.
  \end{enumerate}
  Recall that $\sigma_{i+1}$ to be the longest prefix $\sigma'$ of $\sigma_{i} + {\alpha(i)}$ such that $D(\sigma')\not\subseteq S_{i+1}$.

  We may assume that these invariants hold for $0, \ldots, i-1$, and we show that they hold for $i$ as well.  In the base case, $i=0$, invariant (i) is the assumption $K_1\subseteq S_0$, and other invariants hold vacuously.
  Note that $r(i) <s(i) < t(i)$ because ${\alpha(i)}$ is in $D(\sigma_i)$ and the invariant $K^*_{r(i)}\subseteq S_i$.  Moreover, from $A_i \subseteq K^*_{r(i)}$ and $A_i \subseteq K^*_{t'(i)}$ it can be inferred that $A_i \subseteq K^*_{j}$ for all $j, r(i)\le j\le t'(i)$.

  Let us deal with the simple cases, when $D(\sigma_i + {\alpha(i)})\subseteq S_{i+1}$, i.e., there are no victims.  If the condition in invariant (iii) or (iv), depending on whether $\sigma_{i}= \langle v_\dashv v_\vdash\rangle$ is true, and every vertex in $(S_{i}\setminus S_{i+1})\cap S^*$ is after ${\alpha(i)}$ in the sequence $D(\sigma_i)$, then $\sigma_{i+1}$ is a prefix of $\sigma_i$.  We find the largest $k < i$ such that $\sigma_k = \sigma_i$; it exists because the length of $\sigma_{j+1}$ can be at most one plus that of $\sigma_{j}$ for all $j$.  We set $A_{i+1} = A_k$ and $t'(i+1) = t'(k)$.
  Invariant (i) follows from invariants (iii) and (iv) for $j=k, \ldots, i-1$; note that $r(i+1) = r(k)$, and either $r(i+1)\le r(j)$, or $r(i+1)\ge t(j)$.
   Invariants (ii) and (v) follow from that the same invariant holds in step $k$.  Invariant (iii) or (iv) holds by assumption.
 
  In the first simple case, $(S_{i}\setminus S_{i+1})\cap S^*$ is not empty; let $v'$ be any vertex in this set.
  We claim that the successor obtained by \eqref{eq:transtion1} with any $a$ satisfying $K^*_{s(i)}\cap S_i\subseteq K^i_a$ has no victims.   We have nothing to show if $(S_{i}\setminus S_{i+1})\cap S^*$ is empty.  Now suppose that $x$ is any vertex in $(S_{i}\setminus S_{i+1})\cap S^*$.  By \eqref{eq:transtion1}, $x$ is adjacent to exactly one of $v'$ and ${\alpha(i)}$.  Hence, it is not in $A_i$.  From $\operatorname{lp^*}(v') = \operatorname{lp^*}({\alpha(i)}) = s(i)$ it follows that $\operatorname{lp^*}(x) > s(i)$; moreover, since $x$ is adjacent to $v'$ or ${\alpha(i)}$, we have $\operatorname{rp^*}(x) < t'(i)$ when $\sigma_{i}\ne \langle v_\dashv v_\vdash\rangle$.
Hence, $x$ is after ${\alpha(i)}$ in $D(\sigma_i)$.

  Henceforth, we may assume that $(S_{i}\setminus S_{i+1})\cap S^*$, then $N({\alpha(i)})\cap K^*_{s(i) - 1} = K^*_{s(i)} \cap S_i$.  As consequences, ${\alpha(i)}$ is the first vertex of $K^*_{s(i)}\setminus K^*_{s(i)-1}$ in the sequence $D(\sigma_i)$; and a vertex $x$ is before ${\alpha(i)}$ in $D(\sigma_i)$ if and only if it is in $B(z, {\alpha(i)})$.
  Let $p, q$ be the numbers such that maximal cliques of $G[S_i]$ that are supersets of $N({\alpha(i)})\cap K^*_{s(i) - 1}$ are precisely $K^i_p, \ldots, K^i_q$.

  The second simple case is when there exists $q' > q$ such that $A_i\subseteq (K^i_{q'-1}\cap K^i_{q'})\cap B(z, {\alpha(i)})\subseteq N({\alpha(i)})$, and
  $K^i_j\setminus K^i_q$ is disjoint from $B(z, {\alpha(i)})$ for all $j$ with $q <j < q'$.  Then we take the successor $S_{i+1}$ obtained by \eqref{eq:transtion2} with $a = q$ and $b = q'$.  Since $q \le q'-1$, we have $(K^i_{q}\cap K^i_{q'})\setminus N({\alpha(i)}) \subseteq (K^i_{q'-1}\cap K^i_{q'}) \setminus N({\alpha(i)})$, and it is disjoint from $B(z, {\alpha(i)})$.  Since $K^i_q$ is a superset of $N({\alpha(i)})\cap K^*_{s(i) - 1}$, the set $N({\alpha(i)})\setminus K^i_q$ is disjoint from $B(z, {\alpha(i)})$.  Finally, by assumption, $K^i_j\setminus K^i_q$ is disjoint from $B(z, {\alpha(i)})$ for all $j$ with $q <j < q'$; note that a vertex in $(K_{q'}\setminus K_q)\cap N(K_q\setminus K_{q'})$ is in $K^i_j$ for some $j, q <j < q'$.  Therefore, the new solution $S_{i+1}$ satisfies that $(S_{i}\setminus S_{i+1})\cap S^* \subseteq D(\sigma_i)\setminus B(z, {\alpha(i)})$.  It is symmetric and similar if there exists $p' < p$ such that $A_i\subseteq (K^i_{p'}\cap K^i_{p'+1})\cap B(z, {\alpha(i)})\subseteq N({\alpha(i)})$ and $K^i_j\setminus K^i_p$ is disjoint from $B(z, {\alpha(i)})$ for all $j$ with $p' <j < p$.  Then we use the solution obtained by \eqref{eq:transtion3} with $a = p'$ and $b = p$.  In the rest we may assume that neither of them is true.  
 In particular, neither $(K^i_{p-1}\cap K^i_{p})\cap B(z, {\alpha(i)})$ nor $(K^i_{q}\cap K^i_{q+1})\cap B(z, {\alpha(i)})$ is a subset of $N({\alpha(i)})$.

 Let $p'$ be the smallest number such that $K^i_{p'}\cap K^i_{p}\cap B(z, {\alpha(i)})\setminus N({\alpha(i)})$ is not empty,
 and
let $q'$ be the largest number such that $K^i_{q'}\cap K^i_{q}\cap B(z, {\alpha(i)})\setminus N({\alpha(i)})$ is not empty.  Then $p' < p$ and $q' > q$.
We also find the smallest number $p''$ such that $K^i_{p''}\cap K^i_{p'}\cap B(z, {\alpha(i)})\setminus N({\alpha(i)})$ is not empty, and the largest number $q''$ such that $K^i_{q''}\cap K^i_{q'}\cap B(z, {\alpha(i)})\setminus N({\alpha(i)})$ is not empty.  Since $p''=p'$ and $q''=q'$ satisfy the conditions respectively, $p'' \le p'$ and $q'' \ge q'$; on the other hand, it is possible that one or both of $p'' \le p'$ and $q'' \ge q'$ hold with equality.  We argue that at least one of the following is true:
\begin{enumerate}[(I)]
\item no vertex in $B(z, {\alpha(i)})$ is in $K^i_{j}\setminus K^i_{p'}$ for any $j, p''\le j < p'$; and
\item no vertex in $B(z, {\alpha(i)})$ is in $K^i_{j}\setminus K^i_{q'}$ for any $j, q' < j \le q''$.
\end{enumerate}
  Note that (1) holds vacuously when $p'' = p'$ and  (2) holds vacuously when $q'' = q'$.
Suppose that neither is true, then $p'' < p'$ and $q'' > q'$.
We can find $x\in K^i_{p'}\cap K^i_{p}\cap B(z, {\alpha(i)})\setminus N({\alpha(i)})$,  $x'\in K^i_{p''}\cap K^i_{p'}\cap B(z, {\alpha(i)})\setminus N({\alpha(i)})$, and $x''\in B(z, {\alpha(i)})\cap (K^i_{j}\setminus K^i_{p'})$ for any $j, p''\le j < p'$.  Let $c$ be any vertex in $N({\alpha(i)})\cap K^*_{s(i) - 1}$, then $c$ is adjacent to $x$.  Since ${\alpha(i)}$ is not adjacent to $x$, it cannot be adjacent to $x'$ or $x''$; otherwise there is a hole in $G[S^*]$.   Moreover, by the selection of $p'$ and $p''$, neither of $x'$ and $x''$ is adjacent to $c$.  Likewise, we can find vertices $w, w'$, and $w''$, where $w$ is adjacent to $c$ but not ${\alpha(i)}$ and neither of $y'$ and $y''$ is adjacent to $c$.  Note that there is no edge between $x', x''$ and $w', w''$.
 In any clique path for $G[S^*]$, the maximal cliques containing ${\alpha(i)}$ have to be in between those containing $x'$ and  those containing $w'$.  This is however not true for $\mathcal{K}^*$.  This contradiction means that at least one of (I) and (II) is true.

 In the final case, we have victims.
 We may assume without loss of generality that (II) is true if only one of them holds true; i.e., no vertex in $B(z, {\alpha(i)})$ is in $K^i_{j}\setminus K^i_{q'}$ for any $j, q' < j \le q''$.
 If both (I) and (II) hold true, we assume without loss of generality that maximal cliques that are supersets of $K^*_{r(i)}$ are to the left of $K^i_p$.
 We take the solution $S_{i+1}$ in $\operatorname{succ}(S_i, {\alpha(i)})$ by \eqref{eq:transtion2} with $a = q$ and $b = q' + 1$.
  Since $(z, {\alpha(0)})$ is a subset of both $S^*$ and $S_i$, from $\mathcal{K}^*$ and $\mathcal{K}^i$ we can derive two clique paths for $G[B(z, {\alpha(0)})]$.  The maximal clique $K^*_{s(i) - 1}$ is an end in one of them but not the other.  Therefore, by Lemma~\ref{lem:modules}, we can find a nontrivial module $U$ of $G[B(z, {\alpha(0)})]$ that contains all the vertices in $B(z, {\alpha(0)})\setminus S_{i+1}$; note that $K^*_{r(i)}$ is not a subset of this module.
     Let $r'$ be the largest number such that $U \subseteq \bigcup_{j=r'+1}^{s(i)-1}$.
Then $r(i) \le r'(i) < s(i)$.
 We let $A_{i+1} = K^*_{r'+1}\cap K^*_{s(i)-1}$; note that $A_i \subset A_{i+1}$ because $K^i_{p'}\cap K^i_{p}\cap B(z, {\alpha(i)})\setminus N({\alpha(i)})\subseteq A_{i+1}$.

 It remains to verify that the invariants hold true.
 Since $\sigma_{i+1} = \sigma_i + {\alpha(i)}$, the clique path for step ${i+1}$ is the full reversal of $\mathcal{K}^*$.
 First, the $({r(i+1)})$th clique of the clique path for step ${i+1}$ is $K^*_{s(i)-1}$, hence a subset of $B(z, {\alpha(i)})$.  Second, $A_i \subset A_{i+1}\not\subseteq N({\alpha(i)})$ because $K^i_{p'}\cap K^i_{p}\cap B(z, {\alpha(i)})\setminus N({\alpha(i)})\subseteq A_{i+1}\setminus N({\alpha(i)})$.
 Invariants (iii) and (iv) follow from the selection of $a$ and $b$, and the definition of the successor function.  For invariant (v), let $t'(i+1) = \ell^*+1- r'$, then $s'(i+1) < s'(i+1) <t(i+1) = \ell^*+1- r(i)$.  This concludes the proof.
\end{proof}

\begin{lemma}
  The maximal induced interval subgraphs problem
  and the maximal connected induced interval subgraphs problem
  can be solved with polynomial delay.
\end{lemma}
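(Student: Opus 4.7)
The proof is essentially a one-line bookkeeping argument assembling what has already been established. The plan is to invoke Theorem~\ref{thm:solution-map} to convert the solution-map construction into a polynomial-delay enumeration algorithm for the connected variation, and then invoke Corollary~\ref{lem:connected} to transfer the result to the non-connected version.

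Concretely, for the maximal connected induced interval subgraphs problem, the three hypotheses of Theorem~\ref{thm:solution-map} are verifiable as follows. First, a single solution can be found in polynomial time: use Proposition~\ref{lem:extension} applied to any single-vertex set, together with the well-known polynomial-time recognition of interval graphs, to extend it to a maximal connected interval set of $G$. Second, Lemma~\ref{lem:interval-retaliation-path} (via Theorem~\ref{thm:formal-retaliation-free}) guarantees that the solution map $M(G)$ defined by the successor function of Section~3.2 is strongly connected. Third, Lemma~\ref{lem:interval-successor-function} shows that $\operatorname{succ}(S,v)$ can be evaluated in $O(n^5)$ time, a polynomial in $n$ alone. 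Since all three bounds are polynomials in $n$ only, Theorem~\ref{thm:solution-map} yields a polynomial-delay algorithm for the connected variation.

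For the non-connected variation, I would then observe that the class of interval graphs is closed under adding a universal vertex (a universal vertex corresponds to an interval containing all others, which does not introduce any forbidden induced subgraph). Hence Corollary~\ref{lem:connected} applies and transfers the polynomial-delay bound from the connected version to the maximal induced interval subgraphs problem verbatim.

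There is no real obstacle here: all the substantive work has been done in Lemmas~\ref{lem:interval-successor-function} and~\ref{lem:interval-retaliation-path} and in the general framework of Theorem~\ref{thm:solution-map}; the proof is a citation-assembly exercise. The only point worth a sentence of emphasis is that the polynomial bound on the successor-function evaluation depends on $n$ alone (not on the number of solutions), which is exactly what Theorem~\ref{thm:solution-map} requires to upgrade polynomial total time to polynomial delay.
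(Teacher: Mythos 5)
Your proposal is correct and follows essentially the same route as the paper: the connected variation is handled by Theorem~\ref{thm:solution-map} together with Lemmas~\ref{lem:interval-successor-function} and~\ref{lem:interval-retaliation-path}, and the non-connected version is then obtained via Corollary~\ref{lem:connected} since interval graphs are closed under adding universal vertices. Your explicit check of the three hypotheses of Theorem~\ref{thm:solution-map} (in particular that the successor-function bound depends on $n$ alone) is a correct and slightly more careful rendering of the same citation-assembly argument.
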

\begin{proof}
  The first result follows from Theorem~\ref{thm:solution-map}, and Lemmas~\ref{lem:interval-successor-function} and \ref{lem:interval-retaliation-path}.  The second then follows from Proposition~\ref{lem:connected} because the class of interval graphs is closed under adding universal vertices.
\end{proof}

\section{Enumeration in incremental polynomial time}\label{sec:incremental-poly}

We generalize the core concept of Cohen et al.~\cite{cohen-08-all-maximal-induced-subgraphs}, namely, the input-restricted version of the (connected) maximal induced $\mathcal{P}$ subgraphs problem.
For any nonnegative integer $t$, we formally define the $t$-restricted version of the problem as follows.

\begin{problem}{Maximal (connected) induced $\mathcal{P}$ subgraphs, $t$-restricted version}
  \Input & a graph $G$ and a set $Z$ of $t$ vertices such that $Z$ is a subset of \emph{every} forbidden set of $G$.\\
  \Prob & all maximal (connected) $\mathcal{P}$ sets of $G$.
\end{problem}

For the situations where the $t$-restricted version is motivated, there must be forbidden sets in $G$, which is hence not in $\mathcal{P}$, and we are interested in solutions that are supersets of $Z$.
Since we now study the $t$-restricted version as a problem by itself, we do not make such assumptions.  In particular, $G$ might be in $\mathcal{P}$; in this case, it is vacuously that $Z$ is a subset of {every} forbidden set of $G$.  We characterize those solutions that are not supersets of $Z$ with the following proposition.

\begin{proposition}\label{lem:not-z}
  Let $\mathcal{P}$ be a hereditary graph class, and $G$ a graph.  If there is a set $Z$ of $t, t \ge 1$, vertices in $G$ such that {every} forbidden set of $G$ contains $Z$, then at most $|Z|$ maximal $\mathcal{P}$ sets and at most $|Z|n$ maximal connected $\mathcal{P}$ sets of $G$ do not contain $Z$ as a subset.  Moreover, these sets can be found in polynomial time.
\end{proposition}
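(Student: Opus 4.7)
The plan hinges on a single observation: the hypothesis forces $G - z \in \mathcal{P}$ for every $z \in Z$. Indeed, every forbidden set of $G$ contains $Z$ and hence $z$, so deleting $z$ destroys all forbidden sets of $G$, and $G - z$ is $\mathcal{F}$-free for the family $\mathcal{F}$ of forbidden induced subgraphs of $\mathcal{P}$. This one fact drives both counting bounds and the algorithmic part.

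For the maximal $\mathcal{P}$ sets bound, take any maximal $\mathcal{P}$ set $S$ that fails to contain $Z$, and pick $z \in Z \setminus S$. Then $S \subseteq V(G) \setminus \{z\}$, and because $G - z \in \mathcal{P}$ and $\mathcal{P}$ is hereditary, every subset of $V(G) \setminus \{z\}$ is a $\mathcal{P}$ set; maximality of $S$ therefore forces $S = V(G) \setminus \{z\}$. Consequently there are at most $|Z|$ such solutions, one per element of $Z$, and they are obtained by simply outputting each $V(G) \setminus \{z\}$ and discarding all of them when $G \in \mathcal{P}$ (in which case $V(G)$ is the unique maximal $\mathcal{P}$ set and it contains $Z$).

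For the connected variation I would run essentially the same argument through components. If $S$ is a maximal connected $\mathcal{P}$ set with some $z \in Z \setminus S$, let $C$ be the component of $G - z$ that contains $S$. By heredity $G[C] \in \mathcal{P}$, and $C$ is connected in $G$ since it is connected in $G - z$; thus $C$ is a connected $\mathcal{P}$ set of $G$ containing $S$, and maximality yields $S = C$. Each $G - z$ has fewer than $n$ components, so summing over $z \in Z$ gives at most $|Z|(n-1) \le |Z|\,n$ candidates. All components of all $G - z$ can be listed in polynomial time, and for each candidate $C$ we certify maximality by checking whether $G[C \cup \{z\}]$ remains a connected $\mathcal{P}$ set (if it does, $C$ is not maximal in $G$ and is discarded).

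I do not foresee any real obstacle: the argument is a direct heredity-plus-maximality squeeze. The one minor subtlety worth flagging is that a single $S$ may witness $z \in Z \setminus S$ for several different $z$'s and thus be generated more than once by the above procedure, but this only improves the counting bounds, and a trivial duplicate-removal pass keeps the enumeration polynomial.
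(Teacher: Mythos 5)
Your proposal is correct and follows essentially the same route as the paper: the hypothesis gives $G-z\in\mathcal{P}$ for each $z\in Z$, so every maximal $\mathcal{P}$ set missing some $z$ must equal $V(G)\setminus\{z\}$, and every maximal connected $\mathcal{P}$ set missing some $z$ must be a component of $G-z$, yielding the bounds $|Z|$ and $|Z|(n-1)\le |Z|n$ together with an obvious polynomial-time enumeration. Your additional remark that maximality of a candidate component $C$ of $G-z$ can be certified by testing only $C\cup\{z\}$ is a valid refinement (since $N(C)\subseteq\{z\}$), though the paper simply checks maximality directly.
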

\begin{proof}
  If $G$ is in $\mathcal{P}$, then $V(G)$ is the only maximal $\mathcal{P}$ set of $G$, while the components of $G$ are the only maximal connected $\mathcal{P}$ sets of $G$.  In the rest $G$ is not in $\mathcal{P}$.  For every $z\in Z$, the set $V(G)\setminus \{z\}$ is a $\mathcal{P}$ set of $G$ by assumption; it is maximal because its only proper superset $V(G)$ is not a $\mathcal{P}$ set.
  These $|Z|$ sets are all the maximal $\mathcal{P}$ sets of $G$ that are not superset of $Z$.
  Every maximal connected $\mathcal{P}$ set of $G$ that does not contains all vertices in $Z$ is a subset of $V(G)\setminus \{z\}$ for some $z\in Z$, hence a component in $G - z$.  There are at most $|Z|(n-1)$ such sets, and it suffices to check each of them to see whether it is maximal.  This concludes the proof.
\end{proof}

The $0$-restricted and the $1$-restricted versions in our definition are respectively the original problem and the input-restricted version defined in \cite{cohen-08-all-maximal-induced-subgraphs}.
The following, together with Theorem~\ref{thm:connected-incp=totalp} and \ref{thm:incp=totalp}, implies Theorem~\ref{thm:poly-total} immediately.
\begin{lemma}
  \label{lem:t-restricted}
   Let $\mathcal{P}$ be a hereditary graph class, and $t$ a nonnegative integer.
   The $t$-restricted version of the maximal (connected) induced $\mathcal{P}$ subgraphs problem can be solved in polynomial total time if and only if the ($t+1$)-restricted version of the same problem can be solved in polynomial total time.
\end{lemma}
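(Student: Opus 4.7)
The forward direction is immediate: a $(t+1)$-restricted instance $(G, Z)$ becomes a $t$-restricted one upon discarding any single vertex of $Z$, because every forbidden set of $G$ still contains the remaining $t$ vertices.

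For the reverse direction, suppose the $(t+1)$-restricted version is solvable in polynomial total time. The plan is to build a solution map whose successor function invokes the assumed algorithm, and then to invoke Theorem~\ref{thm:solution-map} (the polynomial-total-time variant). Given a $t$-restricted input $(G, Z)$, I would first apply Proposition~\ref{lem:not-z} to output directly the at most $|Z|n$ maximal (connected) $\mathcal{P}$ sets that do not contain $Z$; it then suffices to enumerate those solutions $S$ with $S \supseteq Z$. For such an $S$ and any $v \in V(G) \setminus S$, the successor function treats the subinstance $(G[S \cup \{v\}],\, Z \cup \{v\})$, which is a legitimate $(t+1)$-restricted instance because $G[S] \in \mathcal{P}$ has no forbidden set and hence every forbidden set of $G[S \cup \{v\}]$ must contain $v$ in addition to $Z$. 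Applying the assumed algorithm returns every maximal (connected) $\mathcal{P}$ set $T$ of $G[S \cup \{v\}]$; each is extended via Proposition~\ref{lem:extension} to a maximal (connected) $\mathcal{P}$ set $S'$ of $G$, and these $S'$ populate $\operatorname{succ}(S, v)$. The polynomial bound on time per successor call follows by combining the hypothesis on the assumed algorithm with the cardinality bounds of Proposition~\ref{lem:subgraph-cardinality}.

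The main obstacle will be establishing strong connectedness of this solution map over the $Z$-containing solutions, especially in the connected variation, where the naive seed $(S \cap S^*) \cup \{v\}$ need not be connected. My plan is to fix a target solution $S^*$, pick a Steiner tree $H \subseteq G[S^*]$ spanning $Z$, and order the vertices of $S^* = \{z_1, \ldots, z_m\}$ so that $V(H)$ comes first in a DFS from a root in $Z$ and the remaining vertices follow in a BFS from $V(H)$. This guarantees that every prefix induces a connected subgraph of $G[S^*]$ and that $Z \subseteq V(H)$ lies in the first $|V(H)|$ positions. For any other $S \supseteq Z$ let $i^*$ be the smallest index with $z_{i^*} \notin S$ and use $v = z_{i^*}$. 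The seed $\{z_1, \ldots, z_{i^*}\}$ lies in $S^* \cap (S \cup \{v\})$, is connected by the prefix property, and is therefore a (connected) $\mathcal{P}$ set of $G[S \cup \{v\}]$; extending it inside this subinstance yields a maximal (connected) $\mathcal{P}$ set $T$ among the outputs of the assumed algorithm, and the corresponding successor $S' \supseteq \{z_1, \ldots, z_{i^*}\}$ has a strictly longer $S^*$-prefix than $S$. The delicate subcase is when $V(H) \not\subseteq S$, so that the seed may miss part of $Z$ and $S'$ could fall outside the $Z$-containing node set; in that situation I would replace $H$ by a Steiner tree for $Z$ inside $G[S \cap S^*]$ whenever $Z$ lies in a single component of that subgraph, or otherwise choose $v$ so as to bridge two $Z$-components of $G[S \cap S^*]$ and enlarge the seed with the resulting short path so that $Z \cup \{v\}$ is connected.

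Once strong connectedness is in hand, Theorem~\ref{thm:solution-map} converts the successor function into a polynomial-total-time algorithm for the $t$-restricted version. The boundary case $t = 0$ needs no modification: $Z$ is empty, the Steiner-tree considerations are vacuous, and the argument reduces to the standard lex-style successor of Schwikowski--Speckenmeyer. Combined with Theorems~\ref{thm:connected-incp=totalp} and \ref{thm:incp=totalp}, which interchange polynomial total time with incremental polynomial time, this chain of results then yields Theorem~\ref{thm:poly-total}.
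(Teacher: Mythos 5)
Your successor function is exactly the paper's: for a solution $S\supseteq Z$ and $v\in V(G)\setminus S$, the pair $(G[S\cup\{v\}], Z\cup\{v\})$ is a legitimate $(t+1)$-restricted instance, the assumed algorithm is run on it, its outputs are extended via Proposition~\ref{lem:extension}, the non-$Z$-containing solutions are dispatched by Proposition~\ref{lem:not-z}, and the time bound follows from Proposition~\ref{lem:subgraph-cardinality}. Where you diverge is the strong-connectedness argument, and that is where the proposal has a gap. The paper uses no ordering of $S^*$ and no prefix metric at all: for the plain variation it seeds the subinstance with all of $S\cap S^*$ plus one vertex $v^*\in S^*\setminus S$ and measures progress by $|S\cap S^*|$; for the connected variation it seeds with the largest component $\widehat S$ of $G[S\cap S^*]$ plus one neighbor $v^*$ of $\widehat S$ inside $S^*$ and measures progress by $|\widehat S|$. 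Both seeds are (connected) $\mathcal{P}$ sets simply because they are subsets of $S^*$, so some output of the assumed algorithm contains them and the metric strictly increases in at most $|S^*|$ steps; no victims, no Steiner tree, no retaliation-free machinery, and in the plain case the seed contains $Z$ automatically, so the path never leaves the domain of the successor function.

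The concrete gap is the ``delicate subcase'' you flag yourself, and your patch does not close it. If $V(H)\not\subseteq S$, the seed $\{z_1,\dots,z_{i^*}\}$ may omit part of $Z$, the maximal set $T$ returned for the subinstance may discard the omitted $Z$-vertices, and the resulting successor $S'$ may fail to contain $Z$. At the next step the first missing vertex $z_{j^*}$ can then lie in $Z$ itself, in which case $Z\cup\{z_{j^*}\}=Z$ has only $t$ elements and $(G[S'\cup\{z_{j^*}\}], Z)$ is a $t$-restricted, not a $(t+1)$-restricted, instance, so the assumed algorithm cannot be invoked. Your repair --- choose $v$ to bridge two $Z$-components of $G[S\cap S^*]$ and ``enlarge the seed with the resulting short path'' --- is not available in this framework: each subinstance adds exactly one vertex to $S$, so the seed must be a subset of $S\cup\{v\}$, whereas reconnecting two components of $G[S\cap S^*]$ inside $G[S^*]$ may require several vertices of $S^*\setminus S$. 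Replacing the ordered-prefix metric by the paper's largest-component metric (with seed $\widehat S\cup\{v^*\}$) removes the difficulty entirely; the DFS/BFS ordering and the Steiner tree are not needed.
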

\begin{proof}
  The only if direction is trivial: If $(G, Z)$ is a $(t + 1)$-restricted instance, then  $(G, Z\setminus \{z\})$ for any vertex $z\in Z$ is a $t$-restricted instance.  
  For the if direction, suppose that algorithm $A$ solves the $(t+1)$-restricted version of the problem in time $p(n, N)$ for some polynomial function $p$.
  Let $(G, Z)$ be a $t$-restricted instance.

  Let us start from some trivial cases.  If $G\in \cal P$, then the only maximal $\mathcal{P}$ set of $G$ is $V(G)$, and the only maximal connected $\mathcal{P}$ sets are the components of $G$.   We have characterized in Proposition~\ref{lem:not-z} all solutions that are not supersets of $Z$.  If $Z$ is a forbidden set, then there is no other solutions.  If $Z$ is a maximal $\mathcal{P}$ set, then there cannot be any other maximal (connected) $\mathcal{P}$ sets of $G$; the set $Z$ itself is a maximal connected $\mathcal{P}$ sets of $G$ if and only if $G[Z]$ is connected.
All the special cases can be checked in polynomial time, and hence the problem can be solved in polynomial time if any of them is true.  In the rest we assume that $G\not\in \cal P$ and that $Z$ is a $\mathcal{P}$ set of $G$ but  not maximal.

For each solution $S$ with $Z\subset S$, and each vertex $v\in V(G)\setminus S$, we claim that $(G[S\cup\{v\}], Z\cup \{v\})$ is a ($t+1$)-restricted instance of the problem.  (For the connected variation, it suffices to consider each vertex $v\in N(S)$.)  Let $G' = G[S\cup\{v\}]$ and $Z' = Z\cup \{v\}$.  Since $G' - v = G[S]\in \cal P$, every forbidden set of $G'$ contains $v$.  On the other hand, by assumption, every forbidden set of $G'$, which is also a forbidden set of $G$, contains $Z$.  Therefore, every forbidden set of $G'$ contains $Z'$, and we can use algorithm $A$ to solve the instance $(G', Z')$.  For each maximal (connected) $\mathcal{P}$ set $S'$ of $G'$ that is different from $S$, we use proposition~\ref{lem:extension} to extend $S'$ to a solution $S''$ of $G$.  The set $\operatorname{succ}(S, v)$ comprises of all these solutions $S''$.
According to Proposition~\ref{lem:subgraph-cardinality}(i) and (iii), the number of maximal (connected) $\mathcal{P}$ set of $G'$ is at most $n N$.  Thus, the successor function can be calculated in time $n\cdot p(n, n N)$.

It remains to show that the solution map $M(G)$ defined as above is strongly connected, for which we show the existence of a path from any solution $S$ to any other solution $S^*$.  We proceed differently for the two variations.  The claim then follows from Theorem~\ref{thm:solution-map}.

Consider first the maximal induced $\mathcal{P}$ subgraphs problem.  Let $\widehat S = S\cap S^*$, and let $v^*$ be any vertex in $S^*\setminus S$, which exists because $S\ne S^*$.  Since $\widehat S \cup\{v^*\}\subseteq S^*$, it is a $\mathcal{P}$ set.  There exists a maximal $\mathcal{P}$ set $S'$ of $G[S\cup\{v^*\}]$ such that $\widehat S \cup\{v^*\}\subseteq S'$.  Let $S''$ be the set in $\operatorname{succ}(S, v^*)$ that is obtained from extending $S'$.  Then $|S''\cap S^*| \ge |\widehat S| + 1 > |\widehat S| = |S\cap S^*|$.  Thus, we can reach $S^*$ from $S$ after at most $|S^*|$ steps.

Now consider the maximal connected induced $\mathcal{P}$ subgraphs problem.  Let $\widehat S$ be the largest component in the subgraph induced by $S\cap S^*$, and let $v^*$ be any neighbor of $\widehat S$ in $S^*$.  Note that $v^*$ exists because $\widehat S \subset S^*$ and $G[S^*]$ is connected.  Since $G[\widehat S]$ is connected, and $v^*$ is a neighbor of $\widehat S$, the subgraph induced by $\widehat S \cup\{v^*\}$ is connected as well.   Then from $\widehat S \cup\{v^*\}\subseteq S^*$ we can conclude that $\widehat S \cup\{v^*\}$ is a connected $\mathcal{P}$ set.  There exists a maximal connected $\mathcal{P}$ set $S'$ of $G[S\cup\{v^*\}]$ such that $\widehat S \cup\{v^*\}\subseteq S'$.  Let $S''$ be the set in $\operatorname{succ}(S, v^*)$ that is obtained from extending $S'$.  Then the largest component in the subgraph induced by $S''\cap S^*$ has at least $|\widehat S| + 1$ vertices.  Thus, we can reach $S^*$ from $S$ after at most $|S^*|$ steps.
\end{proof}

Before we use Theorem~\ref{thm:poly-total} to develop new algorithms, let us mention that
Theorem~\ref{thm:poly-total} subsumes the main result of Cohen et al.~\cite[Characterization 1]{cohen-08-all-maximal-induced-subgraphs}, which corresponds to the simplest case, namely, when $t = 1$. 

We have explained in Section~\ref{sec:map} that the connected variation is more challenging because the lack of hereditary property.
We have also seen easy reductions from the maximal induced $\mathcal{P}$ subgraphs problem to its connected variation, but not the other way round.  The following is the technical version of Theorem~\ref{thm:equivalence}.  The main observation here is that as a consequence of the connectivity condition, vertices in $Z$ are in the same component in every maximal induced $\mathcal{P}$ subgraph of $G$ that contains all vertices in $Z$.  Since, as mentioned, our main work is on these solutions, the $t$-restricted version of the two variations are practically equivalent under the stated condition.
Also note that a graph class is closed under disjoint union if and only if each forbidden induced subgraph of this class is connected.

\begin{lemma}\label{lem:disconnected}
  Let $\mathcal{F}$ be a set of graphs such that every graph in $\mathcal{F}$ of order $c$ or above is biconnected.  If for some $t\ge c$, the $t$-restricted version of the maximal induced $\mathcal{F}$-free subgraphs problem can be solved in polynomial time (polynomial total time), then the $t$-restricted version of the maximal connected induced $\mathcal{F}$-free subgraphs problem can be solved in polynomial time (polynomial total time). 
\end{lemma}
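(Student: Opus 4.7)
The plan is to combine Proposition~\ref{lem:not-z} with a one-to-one correspondence between the maximal $\mathcal{F}$-free sets of $G$ that contain $Z$ and the maximal connected $\mathcal{F}$-free sets of $G$ that contain $Z$. Specifically, I will define $\Phi(S)$ to be the connected component of $G[S]$ containing $Z$ whenever this is well-defined. Since Proposition~\ref{lem:not-z} already enumerates all solutions of either variation that are \emph{not} supersets of $Z$ in polynomial time (at most $|Z|n$ of them), enumerating the connected variation reduces to running the given algorithm for the non-connected variation, filtering outputs containing $Z$, and applying $\Phi$.

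The crucial step, and the only place the biconnectivity hypothesis is used, is to show that $\Phi(S)$ is well-defined: for every maximal $\mathcal{F}$-free set $S\supseteq Z$ (the trivial case $G\in\mathcal{F}$-free being handled directly by Proposition~\ref{lem:not-z}), all vertices of $Z$ lie in the same component of $G[S]$. To prove it, pick any $v\in V(G)\setminus S$; maximality of $S$ gives a forbidden set $F\subseteq S\cup\{v\}$, and since $G[S]$ is $\mathcal{F}$-free we must have $v\in F$. Now $|F|\ge |Z|\ge c$, so by hypothesis $G[F]$ is biconnected; in particular $G[F\setminus\{v\}]$ is still connected. Since $v\notin Z$ (as $v\notin S\supseteq Z$), we have $Z\subseteq F\setminus\{v\}\subseteq S$, giving a connected induced subgraph of $G[S]$ that contains all of $Z$.

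A similar extract-a-forbidden-set argument then shows that $\Phi(S)$ is in fact a \emph{maximal} connected $\mathcal{F}$-free set (trying to add any $v\notin\Phi(S)$ either merges components of $G[S]$, a contradiction, or, if $v\notin S$, produces a forbidden set inside $\Phi(S)\cup\{v\}$ by the same biconnectivity trick), that $\Phi$ is injective (two distinct maximal extensions of the same $\Phi$-image would each swallow the forbidden set produced by a vertex in their symmetric difference), and that $\Phi$ is surjective (extending any maximal connected $\mathcal{F}$-free set $C\supseteq Z$ to a maximal $\mathcal{F}$-free set $S$ via Proposition~\ref{lem:extension} yields $\Phi(S)=C$, since $C$ is connected, contains $Z$, sits inside $\Phi(S)$, and is maximal as a connected $\mathcal{F}$-free set). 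The two solution counts therefore differ by at most $O(|Z|n)$, so calling the given algorithm, filtering, and applying $\Phi$ yields an algorithm for the connected variation within the same polynomial-time or polynomial-total-time bound.

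The main obstacle is exactly the key observation that $Z$ stays in a single component: biconnectivity of $G[F]$ is what lets us delete the intruding vertex $v$ and keep a connected witness on $F\setminus\{v\}$. Without it, $v$ could act as a cutvertex separating pieces of $Z$, $\Phi(S)$ would be ill-defined, and there would be no obvious way to recover maximal connected solutions from maximal (possibly disconnected) ones.
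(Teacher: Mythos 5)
Your proof is correct and follows essentially the same route as the paper: the biconnectivity of every forbidden set containing $Z$ forces $Z$ into a single component of any maximal solution containing $Z$, the $Z$-containing solutions of the two variations are then in bijection, and Proposition~\ref{lem:not-z} covers the remaining solutions. The only cosmetic difference is the direction in which the bijection is written: the paper sends a maximal connected set $S$ to the maximal set $V(G)\setminus N(S)$ and proves an if-and-only-if, whereas you send a maximal set to its $Z$-component $\Phi(S)$, which is the inverse map.
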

\begin{proof}
  Let $\mathcal{P}$ be the class of $\mathcal{F}$-free graphs.
  First, by proposition~\ref{lem:enumeration-recognition}, the fact that the maximal (connected) induced $\mathcal{P}$ subgraphs problem can be solved in polynomial total time implies a polynomial-time algorithm for deciding in polynomial time whether a (connected) graph is in $\mathcal{P}$.  Even the algorithm works only on connected graphs, we can use it to decide whether a general graph $G$ is in $\mathcal{P}$ as follows.  We check whether $G$ contains any induced subgraph in $\{ F\in \mathcal{F}\mid |F| \le c\}$ in $O(n^{c})$ time; we return ``no'' if one is found.  Otherwise, we check whether each component of $G$ is in $\mathcal{P}$, and we return ``yes'' if and only if all of them are.  Since a forbidden induced subgraph $F$ of order larger than $c$ is connected, if $F$ is an induced subgraph of $G$, then it is completely contained in one component of $G$.

  Let $G$ be the input graph and $Z$ a set of $t$ vertices such that $Z$ is a subset of {every} forbidden set of $G$.  Since $t \ge c$, every forbidden induced subgraph of $G$ is biconnected by assumption.  If $G$ is actually in $\mathcal{P}$, then the only maximal $\mathcal{P}$ set of $G$ is $V(G)$, and the only maximal connected $\mathcal{P}$ sets of $G$ are its components.  Otherwise, there exists at least one forbidden set in $G$; it is a superset of $Z$, and hence has order at least $t$ and is connected, by assumption.  Therefore, vertices in $Z$ are in the same component $C$ of $G$.
  If $G$ is not connected, then it suffices to consider $C$.   Every other component of $G$ different from $C$ is a maximal connected $\mathcal{P}$ set of $G$, and the other maximal connected $\mathcal{P}$ sets of $G$ are the maximal connected $\mathcal{P}$ sets of $C$.  On the other hand, by Proposition~\ref{lem:redundant-vertices}, each maximal $\mathcal{P}$ set of $G$ consists of a maximal $\mathcal{P}$ set of $C$, and all the other components of $G$.
  In the rest we may assume without loss of generality that $G$ is connected and that $G$ is not in $\mathcal{P}$.

  We argue first that in any maximal induced $\mathcal{P}$ subgraph of $G$ that contains all vertices in $Z$, vertices in $Z$ are in the same component.  Suppose for contradiction that $S$ is a maximal $\mathcal{P}$ set of $G$ but vertices in $Z$ are in different components of $G[S]$.  For any vertex $x\in V(G)\setminus S$, the maximality of $S$ implies that $G[S\cup \{x\}]$ is not in $\mathcal{P}$.  Let $X$ be a forbidden set of $G[S\cup \{x\}]$; by assumption, $Z\cup \{x\}\subseteq X$.  However, either $G[X]$ is disconnected, or the vertex $x$ is a cutvertex in $G[X]$, contradicting that every graph in $\mathcal{F}$ of order $\ge c$ is biconnected.

  We have characterized in Proposition~\ref{lem:not-z} all maximal (connected) $\mathcal{P}$ sets of $G$ that are not supersets of $Z$.
  Therefore, for this proof, it suffices to consider supersets of $Z$.  We show that if $G$ is a connected graph not in $\mathcal{P}$, then a set $S\subseteq V(G)$ with $Z\subseteq S$ and $G[S]$ connected is a maximal connected $\mathcal{P}$ set of $G$ if and only if $V(G)\setminus N(S)$ is a maximal $\mathcal{P}$ set of $G$.
  Note that $V(G)\setminus N(S)$ consists of two parts, $S$, and $V(G)\setminus N[S]$, and there is no edge between them.  The set $V(G)\setminus N[S]$ is a $\mathcal{P}$ set because it is disjoint from $Z$.
  For the only if direction, suppose that $S$ is a maximal connected $\mathcal{P}$ set of $G$.  Then for any $x\in N(S)$,  there must be a forbidden set in $S\cup\{x\}$.  Since any forbidden induced subgraph of $G$ is biconnected, $V(G)\setminus N(S)$ is a $\mathcal{P}$ set.  Therefore, $V(G)\setminus N(S)$ is a maximal $\mathcal{P}$ set.
  For the if direction, suppose that $V(G)\setminus N(S)$ is a maximal $\mathcal{P}$ set of $G$.  Then $S$
  is a connected $\mathcal{P}$ set of $G$.  If it is not maximal, then there is some $x\in N(S)$ such that $S\cup \{x\}$ is a connected $\mathcal{P}$ set of $G$.  But then $V(G)\setminus N(S)$ remains a $\mathcal{P}$ set of $G$ with $x$ added: In the subgraph induced by $(V(G)\setminus N(S))\cup \{x\}$, either $G[S\cup \{x\}]$ is a component or $x$ is a cutvertex.  This concludes the proof.
\end{proof}

Lemma~\ref{lem:disconnected} and Theorem~\ref{thm:poly-total} imply Theorem~\ref{thm:equivalence}.
In passing we remark that the connectivity requirement in Lemma~\ref{lem:disconnected} cannot be relaxed to one.  See Figure~\ref{fig:biconnected-necessary} for an example.

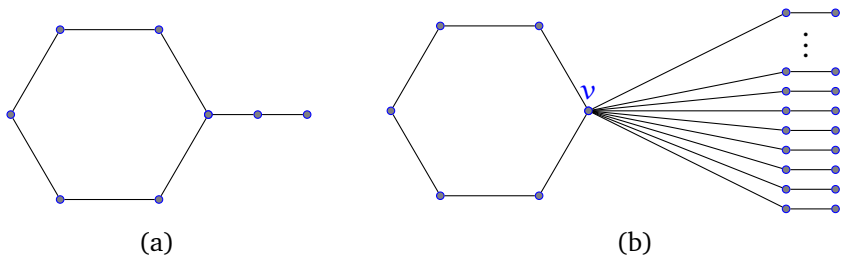
\begin{figure}[h!]
  \centering
  \begin{subfigure}[b]{.25\linewidth}
    \centering
    \begin{tikzpicture}[scale=1.3]
      \draw [white, rounded corners] (2,-1) rectangle (4, .5);
      \begin{scope}[every node/.style={filled vertex}]
        \node (v0) at (1, 0) {};
        \foreach[count=\i from 0] \x in {1, ..., 6}{
          \node (v\x) at (60*\x : 1) {};
          \draw (v\i) -- (v\x);
        }
        \node(x) at (1.5, 0) {};
        \node(y) at (2, 0) {};
        \draw (v0) -- (x) -- (y);              
      \end{scope}
    \end{tikzpicture}
    \caption{}
  \end{subfigure}
  \qquad
  \begin{subfigure}[b]{.43\linewidth}
    \centering
    \begin{tikzpicture}[scale=1.3]
      \draw [white, rounded corners] (2,-1) rectangle (4, .5);
      \begin{scope}[every node/.style={filled vertex}]
        \node["$v$"] (v0) at (1, 0) {};
        \foreach[count=\i from 0] \x in {1, ..., 6}{
          \node (v\x) at (60*\x : 1) {};
          \draw (v\i) -- (v\x);
        }
        \foreach[count=\i] \p in {-5, ..., 2, 5} {
          \node(x\i) at (3, \p/5) {};
          \node(y\i) at (3.5, \p/5) {};
          \draw (v0) -- (x\i) -- (y\i);              
        }
      \end{scope}
      \node at (3.2, .75) {$\vdots$};
    \end{tikzpicture}
    \caption{}
  \end{subfigure}
  \caption{Let $\mathcal{F}$ be the set of graphs obtained by identifying a vertex of a cycle and a $2$-path, e.g., (a).  The graph in (b), in which every forbidden set contains $v$, has only $\ell$ (the length of the cycle) maximal connected induced $\mathcal{F}$-free subgraphs, while an exponential number of maximal induced $\mathcal{F}$-free subgraphs.}
  \label{fig:biconnected-necessary}
\end{figure}

In the rest of this section we show nontrivial applications of Theorem~\ref{thm:poly-total}, in solving the maximal (connected) induced $\mathcal{P}$ subgraphs problem for graph classes that have an infinite number of forbidden induced subgraphs.  With Lemma~\ref{lem:disconnected} and Theorem~\ref{thm:equivalence}, we can ignore the connected requirement.

\subsection{Chordal graphs and subclasses}
A graph is \emph{chordal} if it contains no holes, i.e., a simple cycle on four or more vertices.
Conte and Uno~\cite{conte-19-polynomial-delay} have presented algorithms solving the maximal (connected) induced chordal subgraphs problem with polynomial delay.  Our results on chordal graphs are inferior to theirs.  We include them here because they make a very nice use of Theorem~\ref{thm:poly-total}.  Moreover, since other results in this section are developed based on the algorithm for chordal graphs, including them here also makes this section self-contained.

We start from a nice characterization of instances in the $3$-restricted version of the problem.
A path with ends $x$ and $y$ is called an \stpath{x}{y}, and all other vertices on the path are its \emph{inner vertices}.  A path is nontrivial if its length is at least two; for an induced path, this means in particular that its two ends are not adjacent to each other.

\begin{lemma}\label{lem:restricted-chordal-structure}
  Let $G$ be a graph, and $Z$ a set of three vertices of $G$.  If every vertex of $G$ is contained in some hole, and every hole of $G$ contains $Z$, then
  \begin{enumerate}[(i)]
  \item $G - Z$ has $3 - |E(G[Z])|$ components, and 
  \item each component in $G - Z$ is adjacent to precisely two nonadjacent vertices in $Z$.
  \end{enumerate}
\end{lemma}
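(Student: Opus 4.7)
The plan is to establish (i) and (ii) through three claims: (a) each component $C$ of $G - Z$ is adjacent to at least two nonadjacent vertices in $Z$; (b) $|N(C) \cap Z| \le 2$; and (c) no two distinct components of $G - Z$ share the same nonadjacent pair of neighbors in $Z$. Since the number of nonadjacent pairs in $Z$ equals $3 - |E(G[Z])|$, and the arcs of any one hole realize all such pairs, (a)--(c) together yield both (i) and (ii). The bound $|E(G[Z])| \le 2$ is itself forced by the hypothesis: if $G[Z]$ were a triangle, every hole containing $Z$ would need all three $G[Z]$-edges as hole-edges, forcing $H$ to be the triangle $z_1 z_2 z_3$, contradicting $|H| \ge 4$. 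Claim (a) follows by taking $v \in C$ and any hole $H$ through $v$: the arc of $H$ on which $v$ lies is entirely contained in $C$ and connects two necessarily nonadjacent vertices of $Z$ (arcs between adjacent $Z$-vertices have no interior). Claim (c) follows by an interchange argument: concatenating induced $z_i$-$z_j$ paths through $C$ and through $C'$ (internally disjoint since $C \cap C' = \emptyset$ and no edges cross distinct components) yields an induced hole avoiding the third vertex of $Z$, contradicting the hypothesis.

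The main obstacle is claim (b), which I split on $|E(G[Z])|$. If $z \in Z$ is adjacent to both other $Z$-vertices, its two hole-neighbors on any hole are forced to be those other two $Z$-vertices (to prevent the $G[Z]$-edges from becoming chords), so $z$ has no neighbor outside $Z$; this handles the case $G[Z]$ is a path. For $G[Z]$ a single edge $z_1 z_2$ and $C$ adjacent to all three $z_i$, I take a shortest induced $z_1$-$z_2$ path $P$ whose interior lies in $C$: if $|P| \ge 3$, then $P$ closed by the edge $z_1 z_2$ is an induced hole missing $z_3$; if $|P| = 2$, then the middle vertex $u$ is adjacent to both $z_1$ and $z_2$, and on any hole through $u$ the edges $uz_1$, $uz_2$ and $z_1 z_2$ would all have to be hole-edges simultaneously, which is incompatible with $u$ having only two hole-neighbors.

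The hardest subcase is $G[Z]$ independent with $C$ adjacent to all three $z_i$. By the same concatenation argument as in (c), the subgraph $G[C \cup \{z_i, z_j\}]$ admits no cycle through both $z_i$ and $z_j$ (otherwise two internally disjoint induced paths would give a hole in this subgraph avoiding $z_k$). By Menger, $z_i$ and $z_j$ are therefore separated in $G[C \cup \{z_i, z_j\}]$ by a cut vertex $c_{ij} \in C$. Considering the hole $H$ through a vertex of $C$ (whose arcs all lie in $C$ by the analogue of (c) applied to each arc), the arc $A$ between $z_i$ and $z_j$ must pass through $c_{ij}$, and the three cut vertices $c_{12}, c_{13}, c_{23}$ are pairwise distinct on $V(H)$ since arcs of an induced cycle share no interior vertex. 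The contradiction is then drawn by combining a shortest $C$-path between two of these cut vertices with the $H$-segment between them that passes through the remaining pair of $Z$-vertices, producing a cycle in the appropriate $G[C \cup \{z_i, z_j\}]$ that meets both $z_i$ and $z_j$, violating the no-cycle condition. The main delicacy, which I expect to be the principal obstacle, is to ensure via careful shortcutting that the resulting cycle is simple even when the two composing paths share interior vertices in $C$.
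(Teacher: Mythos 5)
Your claims (a) and (c), the preliminary observation that $G[Z]$ is not a triangle, and the two subcases of (b) in which $G[Z]$ has at least one edge are all correct; in fact your proof of (a) via the arc of a hole through $v\in C$ is cleaner than the paper's two-step argument (the paper first shows $|N(C)\cap Z|\ge 2$ via a separator argument and then separately rules out adjacent pairs). The derivation of (i) and (ii) from (a)--(c) is also sound. The problem is the case you correctly identify as the hardest one, $Z$ independent and $C$ adjacent to all three $z_i$, and the gap occurs at the very first step of your treatment of it: the assertion that $G[C\cup\{z_i,z_j\}]$ admits no cycle through both $z_i$ and $z_j$, ``otherwise two internally disjoint induced paths would give a hole in this subgraph avoiding $z_k$.'' Two internally disjoint induced $z_i$-$z_j$ paths do \emph{not} in general concatenate to an induced cycle, because there may be edges between the interiors of the two paths. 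In claim (c) this concatenation works precisely because the two interiors lie in \emph{distinct} components of $G-Z$, so no such edges exist; here both interiors lie in the same component $C$, and nothing prevents them from being joined by edges. Concretely, a cycle through two nonadjacent vertices does not imply the existence of a hole at all: in a diamond ($K_4$ minus an edge) the two nonadjacent vertices lie on a common cycle, yet the graph is chordal. So the existence of the cut vertices $c_{ij}$, on which the entire remainder of your argument (Menger, the three distinct cut vertices on $H$, and the final recombination) rests, is not established.

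This is not a cosmetic issue: the difficulty you are skipping is exactly what makes the paper's proof of this case long. The paper fixes a hole $H$, takes the two $H$-neighbors $x,y$ of $z_3$, connects them by an induced path $P$ inside $G-Z$, shows every vertex of $P$ must be adjacent to $z_3$ (else a hole avoids $z_1,z_2$), and then exploits the fact that the neighborhood of each vertex of $P$ on the path $H-z_3$ is a consecutive interval; the contradiction comes from comparing these intervals for the second vertices $x',y'$ of $P$. None of this is recoverable from your cut-vertex framework without first proving the ``no cycle through $z_i,z_j$'' claim by some genuinely different means. (Separately, even granting the cut vertices, your final recombination step needs more than ``careful shortcutting'': the $C$-path from $c_{13}$ to $c_{12}$ may be forced through $c_{23}$, so one has to run the argument symmetrically for all three pairs and derive a contradiction from the resulting mutual separation properties; as written, this step is also incomplete, though it is the more fixable of the two.)
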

\begin{proof}
  Let $Z = \{z_1, z_2, z_3\}$.  By assumptions there exists at least one hole in $G$, which contains $Z$.  Hence, $Z$ cannot be a clique.  Moreover, no vertex in $V(G)\setminus Z$ can be adjacent to all three vertices in $Z$.  We establish this lemma by a sequence of claims.

  Our first claim is that each component of $G - Z$ is adjacent to at least two vertices in $Z$.  Suppose for contradiction that a component $C$ of $G - Z$ has only one neighbor in $Z$; without loss of generality let it be $z_1$.  Then $\{z_1\}$ is a \stsep{v}{z_2} for every $v\in C$, and there cannot be a hole containing $v$ and $z_2$, contradicting the assumptions.  

  The second claim is that if two vertices in $Z$ are adjacent, then there is no component of $G - Z$ adjacent to both of them.  Suppose for contradiction that $z_1 z_2\in E(G)$ while $z_1$ and $z_2$ are both adjacent to a component $C$ of $G - Z$.  No vertex in $C$ can be adjacent to both $z_1$ and $z_2$; a vertex in $N(z_1)\cap N(z_2)$ cannot be on a hole with them.  Since $C$ is connected and adjacent to $z_1$ and $z_2$, we can find a \stpath{u}{v} $P$ in $G[C]$ such that $u$ and $v$ are the only neighbor of $z_1$ and $z_2$, respectively, on this path.
Then together with edge $z_1 z_2$, the path $P$ forms a hole avoiding $z_3$, contradicting the assumptions.

  The third claim is that there cannot be two components of $G - Z$ adjacent to the same pair of vertices in $Z$.  Suppose for contradiction that components $C_1$ and $C_2$ of $G - Z$ are both adjacent to $z_1$ and $z_2$.  By the second claim, we know that $z_1 z_2\not\in E(G)$.  We can find a nontrivial \stpath{z_1}{z_2} $P_1$ with inner vertices from $C_1$ and another nontrivial \stpath{z_1}{z_2} $P_2$ with inner vertices from $C_2$.  These two paths together make a hole avoiding $z_3$, contradicting the assumptions.  

  The last and main claim is that there cannot be a component of $G - Z$ adjacent to all three vertices in $Z$.  Suppose for contradiction that there is such a component of $G - Z$.  Then by the previous three claims, it is the only component of $G - Z$ and $Z$ is an independent set.
  We take any hole $H$ of $G$, which, by assumption, visits all three vertices in $Z$.
  Let $x$ and $y$ be the two neighbors of $z_3$ on $H$ such that the hole can be traversed in the order of $x \cdots z_1\cdots z_2\cdots y z_3$.  Since $G - Z$ is connected, we can find an induced \stpath{x}{y} $P$ in it.  This path is nontrivial because $x$ and $y$ are not adjacent.  All vertices on $P$ are adjacent to $z_3$; otherwise there is a hole avoiding $z_1$ and $z_2$.  Let $x'$ and $y'$ be the neighbors of, respectively,  $x$ and $y$ on $P$.  Neither of them can be in $N(z_1)\cap N(z_2)$, because they are already adjacent to $z_3$.

  Let $P'$ be the \stpath{x}{y} obtained from $H$ by deleting $z_3$.  We number the vertices such that it is $x = v_1 v_2 \ldots v_{|H| - 1} = y$.
  The neighbors of $x'$ on this path must be consecutive; otherwise, there is a hole involving only $x'$ and some vertices on $P'$.
  Since $x'$ is adjacent to $v_1$ and nonadjacent to at least one of $z_1$ and $z_2$, both on this path, there exists $a < |H| - 1$ such that the neighborhood of $x'$ on $P'$ is $\{v_1, \ldots, v_a\}$.  For the same reason, there exists $b > 1$ such that the neighborhood of $y'$ on $P'$ is $\{v_b, \ldots, v_{|H| - 1}\}$.  As a result, $x' \ne y'$.
 If $a \ge b$, then $z_3 x \cdots v_b y'$ is a hole of $G$ avoiding $z_2$, which is impossible.

  In the rest, $a < b$.  If $x'$ and  $y'$ are adjacent, then $x' v_a \cdots v_b y'$ is a hole avoiding $z_3$.  Now that $x' y'\not\in E(G)$, let $x''$ be the third vertex on $P$, i.e., the next neighbor of $x'$.  It is adjacent to $v_a$: Otherwise we can find an induced \stpath{x''}{v_a} from the path $x''\cdots y' v_b \cdots v_a$, of which the inner vertices are nonadjacent to $x'$, and with $x'$, it makes a hole avoiding $z_3$.   Then $x''$ is adjacent to the two ends of the path $z_3v_1 \ldots v_a$, but nonadjacent to at least one inner vertex of it, namely, $x$, and there is a hole involving $x''$ and vertices on this path.  This hole avoids $z_2$, which is impossible.

  Assertion (ii) of the lemma follows from the first and the last claims.  For assertion (i), the second and third claims imply that the number of components of $G - Z$ is at most $3 - |E(G[Z])|$.  On the other hand, suppose that $z_1 z_2\not\in E(G)$ and there is no component of $G - Z$ adjacent to $z_1$ and $z_2$, then $z_3$ is a cutvertex, which is impossible.  This concludes the proof.
\end{proof}

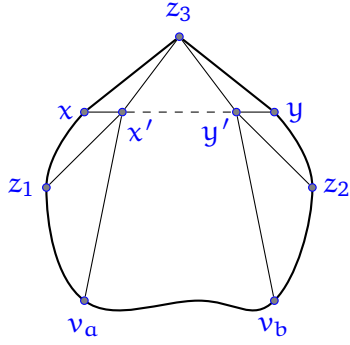
\begin{figure}[h!]
  \centering
  \begin{tikzpicture}[every node/.style={filled vertex}, scale=1.]
    \draw[thick] plot[smooth, tension=.7] coordinates {(0,0) (-0.5,-1) (0,-2.5) (1.5,-2.5) (2.5,-2.5) (3,-1) (2.5,0)};
    \node["$x$" left] (x1) at  (0,0) {};
    \node["$y$" right] (x2) at (2.5,0) {};

    \node["$v_a$" below] (v3) at (0, -2.5) {};
    \node["$v_b$" below] (v4) at (2.5, -2.5) {};

    \node["$z_1$" left] (z1) at (-0.5,-1) {};
    \node["$z_2$" right] (z2) at (3,-1) {};
    \node["$z_3$"] (z3) at (1.25,1) {};
    \node["$x'$" below right] (y1) at (0.5,0) {};
    \node["$y'$" below left] (y2) at (2,0) {};

    \draw[thick] (x1) -- (z3) -- (x2);
    \draw (y1) -- (z3) -- (y2);
    \draw (x1) edge (y1);
    \draw (y2) edge (x2);
    \draw[dashed] (y1) edge (y2);
    \draw (y1) edge (v3);
    \draw (y2) edge (v4);
    \draw (y1) edge (z1);
    \draw (y2) edge (z2);
  \end{tikzpicture}
  \caption{Illustration for the proof of Lemma~\ref{lem:restricted-chordal-structure}.}
  \label{fig:chordal}
\end{figure}

One may show that for each component $C$ of $G - Z$, the subgraph induced by $C\cup N(C)$ is an interval graph, of which the only two simplicial vertices are $N(C)$.  We will not prove this fact because it is not used in our algorithm.

The characterization presented in Lemma~\ref{lem:restricted-chordal-structure} only holds when every vertex is visited by some hole.  We can find all such vertices in polynomial time.
\begin{proposition}\label{lem:hole-free-vertex}
  We can decide in $O(n^4)$ time whether a vertex is contained in some hole.
\end{proposition}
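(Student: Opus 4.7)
The plan is to reduce the question to a sequence of connectivity checks in modified subgraphs.  For a fixed vertex $v$, I will show that $v$ lies on a hole if and only if there exist two non-adjacent neighbors $u,w\in N(v)$ such that $u$ and $w$ are in the same connected component of $G' \coloneqq G - \bigl((N(v)\cup\{v\})\setminus\{u,w\}\bigr)$.  The algorithm then simply iterates over all $O(n^{2})$ such pairs $(u,w)$, and for each pair runs a BFS in $G'$ in $O(n+m)=O(n^{2})$ time, giving a total running time of $O(n^{4})$.

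For the forward direction, suppose $v$ lies on a hole $H$, and let $u,w$ be its two $H$-neighbors.  Because $H$ has length at least four, $uw\notin E(G)$; and because $H$ is induced, no inner vertex of $H$ is a neighbor of $v$.  Hence $H - v$ is a $u$-$w$ path in $G'$.

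For the reverse direction, fix any pair $(u,w)$ passing the test and let $P = y_{0}y_{1}\cdots y_{k}$ be a \emph{shortest} $u$-$w$ path in $G'$, where $y_{0}=u$ and $y_{k}=w$.  Since $uw\notin E(G)$ we have $k\ge 2$, so the closed walk $C\coloneqq v\,y_{0}y_{1}\cdots y_{k}\,v$ has length at least four.  I claim $C$ is an induced cycle.  The inner vertices $y_{1},\dots,y_{k-1}$ belong to $V(G')\subseteq V(G)\setminus N(v)$, so none of them is adjacent to $v$; and within $P$, any chord $y_{i}y_{j}$ with $|i-j|\ge 2$ would contradict the minimality of $P$.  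Thus $C$ is a hole containing $v$.

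The only step that needs slight care is confirming that the equivalence is an ``if and only if,'' which I carry out using the shortest-path trick above; the rest is bookkeeping.  The running time is straightforward: there are at most $\binom{|N(v)|}{2}=O(n^{2})$ candidate pairs, each inducing an $O(n^{2})$ connectivity check on a graph of order at most $n$, for a total of $O(n^{4})$.
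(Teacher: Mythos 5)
Your proof is correct and takes essentially the same approach as the paper's: enumerate the $O(n^{2})$ non-adjacent pairs $u,w\in N(v)$ and test whether they are connected once the rest of $N[v]$ is deleted, with an induced (shortest) $u$-$w$ path in that subgraph closing up with $v$ into a hole. You are in fact slightly more careful than the paper, which loosely writes ``connected in $G-N[v]$'' where it means $G-(N[v]\setminus\{u,w\})$, and you make the extraction of an induced path explicit via the shortest-path argument.
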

\begin{proof}
  Let $v$ be the vertex.  For each pair of vertices $u, w\in N(v)$ such that $uw\not\in E(G)$, we check whether they are connected in $G - N[v]$.  Every induced \stpath{u}{w} in this subgraph makes a hole together with $v$.  It is also easy to verify that if $v$ is in some hole, then such a pair must exist.
\end{proof}

We are now ready to solve the $3$-restricted version of the maximal induced chordal subgraphs problem.
\begin{lemma}\label{lem:alg-restricted-chordal}
The $3$-restricted version of the maximal induced chordal subgraphs problem can be solved in polynomial time.
\end{lemma}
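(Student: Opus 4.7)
The plan is to reduce the problem to enumerating minimal $u$-$v$ separators in a chordal graph. First, dispose of the easy cases: if $G$ is already chordal the only solution is $V(G)$, and by Proposition~\ref{lem:not-z} the maximal chordal subsets that do not contain all of $Z$ are precisely the sets $V(G)\setminus\{z\}$ for $z\in Z$ (each chordal because every hole of $G$ contains $Z$, hence $z$). So from now on I focus on enumerating maximal chordal subsets $S$ with $Z\subseteq S$.

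Using Proposition~\ref{lem:hole-free-vertex} I compute in polynomial time the set $V_1$ of vertices lying in no hole of $G$, and set $V_0=V(G)\setminus V_1$. Since the holes of any induced subgraph of $G$ are exactly the holes of $G$ contained in it, every maximal chordal $S$ containing $Z$ must contain $V_1$, and conversely $S\setminus V_1$ is a maximal chordal subset of $G':=G[V_0]$ containing $Z$. Moreover $G'$ satisfies the hypotheses of Lemma~\ref{lem:restricted-chordal-structure}: any hole of $G'$ is a hole of $G$ and hence contains $Z$, and any vertex of $V_0$ lies in a hole of $G$, which is entirely contained in $V_0$. Applying the lemma, $G'-Z$ has at most three components $C_1,\ldots,C_k$, each adjacent to a specific nonadjacent pair $z_{i_l},z_{j_l}\in Z$.

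The heart of the argument is the following characterisation: a set $S$ with $Z\subseteq S\subseteq V_0$ induces a chordal subgraph if and only if for some $l$ the vertices $z_{i_l}$ and $z_{j_l}$ are disconnected in $G[(S\cap C_l)\cup\{z_{i_l},z_{j_l}\}]$. One direction is immediate: a hole in $G[S]$ contains $Z$ and so decomposes into one induced $z_{i_l}$-$z_{j_l}$ path through each component. For the other direction, if each component admits such an induced path $P_l$, concatenating the $P_l$'s with the (at most two) edges of $G[Z]$ yields a closed walk of length at least four, and no chord can exist because inner vertices of $P_l$ lie in $C_l$ and have no neighbours outside $C_l$ other than $z_{i_l}$ and $z_{j_l}$. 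In particular $G'[C_l\cup\{z_{i_l},z_{j_l}\}]$ is itself chordal, as any hole in it would be a hole of $G$ avoiding the third vertex of $Z$.

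Given the characterisation, every maximal chordal $S$ with $Z\subseteq S$ ``breaks'' exactly one component: there is a unique $l$ such that $C_l\setminus S$ is a minimal $z_{i_l}$-$z_{j_l}$ separator in the chordal graph $G'[C_l\cup\{z_{i_l},z_{j_l}\}]$, while $C_{l'}\subseteq S$ for every $l'\ne l$. ``At least one broken'' is the chordality requirement. ``At most one broken'' is the key maximality argument: if two components $C_l$ and $C_{l'}$ were both broken, any $v\in C_l\setminus S$ could be reinstated without creating a hole, since any putative hole through $v$ would need an induced path in $C_{l'}$ as well, which is prevented. Conversely, each minimal separator $X$ in $G'[C_l\cup\{z_{i_l},z_{j_l}\}]$ gives a maximal chordal $S=V(G)\setminus X$: chordality follows from the characterisation, and maximality because reinstating any $v\in X$ reconnects $z_{i_l}$ and $z_{j_l}$ and combines with the intact paths in the other components to form a hole. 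Since a chordal graph has at most $n-1$ minimal separators and they can be listed in polynomial time, and there are at most three components, the whole enumeration runs in polynomial time. The main obstacle I anticipate is a clean proof of the chord-freeness in the concatenation argument, together with the ``at most one broken'' maximality step, which requires handling each value of $|E(G[Z])|\in\{0,1,2\}$ uniformly.
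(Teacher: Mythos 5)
Your proposal is correct and follows essentially the same route as the paper: strip the hole-free vertices, invoke Lemma~\ref{lem:restricted-chordal-structure} to decompose $G-Z$ into at most three components each attached to a nonadjacent pair of $Z$, and identify the complements of solutions containing $Z$ with minimal separators between the two attachment vertices inside one (chordal) component. You spell out more explicitly the ``exactly one component is broken'' maximality argument and the chord-freeness of the concatenated hole, both of which the paper leaves largely implicit, but the decomposition and the key reduction to minimal-separator enumeration in chordal graphs are identical.
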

\begin{proof}
  Let $G$ be the input graph and $Z$ the set of three vertices such that every hole of $G$ contains $Z$.  We can use Proposition~\ref{lem:hole-free-vertex} to find all vertices that are not in any hole.   By Proposition~\ref{lem:redundant-vertices}, we can remove them and focus on the remaining graph, of which every vertex is contained in some hole.  By Lemma~\ref{lem:restricted-chordal-structure}, there are precisely $3 - |E(G[Z])|$ components in $G - Z$, each of which is adjacent to two nonadjacent vertices in $Z$.

  Let $Z =\{z_1, z_2, z_3\}$ and suppose $z_1$ and $z_2$ are nonadjacent.  Denote by $C$ the component of $G - Z$ that is adjacent to $z_1$ and $z_2$.  For each minimal \stsep{z_1}{z_2} $Y$ in $G[C\cup \{z_1, z_2\}]$, every \stpath{z_1}{z_2} in $G - Y$ visits the vertex $z_3$.  Therefore, $G - Y$ is chordal.  It is maximal because adding any vertex $x\in Y$ back will introduce an induced \stpath{z_1}{z_2} with inner vertices from $C$.  There is another induced \stpath{z_1}{z_2} with inner vertices from other components of $G - Z$ and $z_3$.  These two paths make a hole.
  Therefore, a set $S'\subseteq V(G)$ is a maximal chordal set of $G$ if and only if $V(G)\setminus S'$ is such a minimal separator.

The subgraph induced by $C\cup \{z_1, z_2\}$ is chordal, because it does not contain $z_3$.  A chordal graph has at most $n$ minimal separators, and they can be found in linear time \cite{blair-91-chordal-graphs-clique-trees}.  There are at most three such subgraphs, and hence there are at most $n$ solutions, and they can be found in polynomial time.
\end{proof}

The algorithm for the maximal induced chordal subgraphs problem follows from Theorem~\ref{thm:poly-total} and Lemma~\ref{lem:alg-restricted-chordal}.
Since every hole is biconnected, the algorithm for the connected variation then follows from Lemma~\ref{lem:disconnected}.
\begin{lemma}
  The maximal induced chordal subgraphs problem
  and
  the maximal connected induced chordal subgraphs problem
  can be solved in incremental polynomial time.
\end{lemma}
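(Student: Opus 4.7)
The plan is to simply chain together the machinery developed earlier in the paper; no new structural argument is needed beyond the characterization already captured in Lemma~\ref{lem:alg-restricted-chordal}.

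First, for the non-connected variation, I would apply Theorem~\ref{thm:poly-total} with $t = 3$. Lemma~\ref{lem:alg-restricted-chordal} already furnishes a polynomial-time algorithm for the $3$-restricted version of the maximal induced chordal subgraphs problem, and polynomial time is in particular polynomial total time, so the hypothesis of Theorem~\ref{thm:poly-total} is met. Theorem~\ref{thm:poly-total} then promotes this to an incremental-polynomial-time algorithm for the full (non-restricted) problem.

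For the connected variation, the cleanest route is via Theorem~\ref{thm:equivalence}: the forbidden induced subgraphs of the class of chordal graphs are precisely the holes $C_k$ with $k \ge 4$, and every such cycle is biconnected. Taking $c = 4$, the hypothesis of Theorem~\ref{thm:equivalence} is satisfied, and the equivalence it provides transfers the incremental-polynomial-time algorithm from the previous paragraph to the connected setting. (Equivalently, one could bypass Theorem~\ref{thm:equivalence} and argue directly: Lemma~\ref{lem:disconnected} with $c = 4$ converts the $4$-restricted non-connected algorithm, obtained from Lemma~\ref{lem:alg-restricted-chordal} by the trivial reduction noted in the proof of Lemma~\ref{lem:t-restricted}, into a polynomial-time algorithm for the $4$-restricted connected version, to which Theorem~\ref{thm:poly-total} then applies.)

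I do not expect any real obstacle: all of the structural content has already been absorbed into Lemma~\ref{lem:alg-restricted-chordal} (which itself rests on the characterization in Lemma~\ref{lem:restricted-chordal-structure} together with the classical fact that a chordal graph has at most $n$ minimal separators) and into the general framework of Theorems~\ref{thm:poly-total} and~\ref{thm:equivalence}. The only point that truly deserves a sentence of justification is the choice $c = 4$ in Theorem~\ref{thm:equivalence}, and this is immediate from inspection of the forbidden subgraphs of the chordal class.
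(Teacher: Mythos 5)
Your proposal is correct and matches the paper's argument: the paper likewise obtains the non-connected case from Theorem~\ref{thm:poly-total} together with Lemma~\ref{lem:alg-restricted-chordal}, and the connected case from the biconnectivity of holes via Lemma~\ref{lem:disconnected} (which, with Theorem~\ref{thm:poly-total}, is exactly what underlies Theorem~\ref{thm:equivalence}). The parenthetical route you sketch, lifting the $3$-restricted algorithm to $t=4$ before applying Lemma~\ref{lem:disconnected}, is a harmless variant of the same chain.
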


The class of unit interval graphs is a subclass of chordal graphs.  Apart from holes, its forbidden induced subgraphs include claw, net, and tent.  Likewise, the forbidden induced subgraphs of the class of block graphs, of the class of 3-leaf powers \cite{dom-06-leaf-power}, and the class of basic 4-leaf powers \cite{brandstadt-08-4-leaf-powers} are holes and some small graphs.  See the appendix for details.  It is straightforward to use the observations above to solve the maximal induced $\mathcal{P}$ subgraphs problem for these graph classes.

\begin{lemma}\label{lem:alg-unit-interval}
  The maximal (connected) induced $\mathcal{P}$ subgraphs problem can be solved in incremental polynomial time for the following graph classes: unit interval graphs, block graphs, 3-leaf powers, and basic 4-leaf powers.
\end{lemma}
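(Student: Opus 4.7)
The plan is to reduce each of the four classes to the chordal case via Theorem~\ref{thm:poly-total} together with Lemma~\ref{lem:alg-restricted-chordal}. For each class $\mathcal{P}$ in the statement, the set of forbidden induced subgraphs (as tabulated in the appendix) decomposes as $\mathcal{F}_0 \cup \mathcal{H}$, where $\mathcal{F}_0$ is a finite list of small graphs---claw, net, tent for unit interval graphs; diamond for block graphs; bull, dart, gem for $3$-leaf powers; and the analogous finite list for basic $4$-leaf powers---and $\mathcal{H}$ is the set of all holes. Set $t := 1 + \max\{|V(F)| : F \in \mathcal{F}_0\}$. By Theorem~\ref{thm:poly-total}, it suffices to solve the $t$-restricted version of the maximal induced $\mathcal{P}$ subgraphs problem in polynomial total time.

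The driving observation is the following. Given a $t$-restricted instance $(G, Z)$, since $|Z| = t$ strictly exceeds the order of every graph in $\mathcal{F}_0$, no induced copy of any $F \in \mathcal{F}_0$ inside $G$ could contain $Z$. But the $t$-restriction hypothesis forces every forbidden set of $G$ to contain $Z$, so $G$ itself must be $\mathcal{F}_0$-free. As $\mathcal{F}_0$-freeness is hereditary, for every $S \subseteq V(G)$ one has $G[S] \in \mathcal{P}$ if and only if $G[S]$ is chordal, and hence the maximal $\mathcal{P}$ sets of $G$ coincide with its maximal chordal sets. Picking any $3$-element subset $Z_0 \subseteq Z$, the pair $(G, Z_0)$ is a valid $3$-restricted instance of the maximal induced chordal subgraphs problem, since every hole of $G$ contains $Z \supseteq Z_0$; Lemma~\ref{lem:alg-restricted-chordal} then enumerates its solutions in polynomial time.

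For the connected variation the same reduction goes through almost verbatim. Setting $c := t$, every graph in $\mathcal{F}_0$ has order strictly less than $c$, while every hole is biconnected, so the hypothesis of Lemma~\ref{lem:disconnected} is met. Applying Lemma~\ref{lem:disconnected} converts the polynomial-time algorithm for the $t$-restricted non-connected problem into a polynomial-time algorithm for its connected counterpart, and Theorem~\ref{thm:poly-total} once more upgrades this to incremental polynomial time for the original (connected) problem.

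I do not expect a genuine obstacle: the substantive work has already been carried out in Lemma~\ref{lem:alg-restricted-chordal} and in the $t$-restriction framework of Section~\ref{sec:incremental-poly}. The only care required is a routine check against the appendix to confirm that, for each of the four classes, the forbidden-subgraph list truly consists of a finite collection of small graphs together with all holes, so that the choice of $t$ is legitimate and the $\mathcal{F}_0$-freeness argument forcing the chordal reduction applies uniformly.
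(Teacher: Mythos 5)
Your proof is correct and follows essentially the same route as the paper: reduce via Theorem~\ref{thm:poly-total} to the $t$-restricted version, observe that the small forbidden subgraphs cannot occur, and hand the remaining hole-only instance to Lemma~\ref{lem:alg-restricted-chordal} as a $3$-restricted chordal instance, with Lemma~\ref{lem:disconnected} covering the connected variation. The only difference is that you take $t$ one larger than the maximum order of the small obstructions (the paper takes $t$ equal to it, e.g.\ $t=6$ for unit interval graphs), which neatly removes the boundary case the paper must treat separately where $G[Z]$ is itself a net or tent, and also makes the hypothesis $t\ge c$ of Lemma~\ref{lem:disconnected} hold cleanly.
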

\begin{proof}
  We consider its $6$-restricted version of the maximal induced unit interval subgraphs problem.  Let $G$ be the input graph and $Z$ the set of six vertices such that every forbidden set of $G$ contains $Z$.  There cannot be any claw in $G$.  On the other hand, if $G$ contains any net or tent, then it is $Z$, and it is the only forbidden set.  In this case, the solutions are $\{V(G)\setminus \{v\}\mid v\in Z\}$.  In the rest, $G$ does not contain any claw, net, or tent.  In other words, all forbidden induced subgraphs of $G$ are holes.  We can take any three vertices $Z'\subset Z$, and consider $(G, Z')$ as a $3$-restricted version of the maximal induced chordal subgraphs problem.  It can be solved by Lemma~\ref{lem:alg-restricted-chordal}.  By Theorem~\ref{thm:poly-total} and Theorem~\ref{thm:connected-incp=totalp}, the maximal induced unit interval subgraphs problem can be solved in incremental polynomial time.  Moreover, the only forbidden induced subgraphs that are not biconnected are the claw and the net, the algorithm for the connected variation then follows from Lemma~\ref{lem:disconnected}.

  The proofs for the other three graph classes are word-by-word translations from that of unit interval graphs, hence omitted.
\end{proof}

We remark that the maximal (connected) induced unit interval subgraphs problem can be solved directly (actually more efficiently) as follows.  This time we consider the connected variation.  Let $(G, Z)$ be a $6$-restricted instance of the maximal connected induced unit interval subgraphs problem.  From \cite[Proposition 2.4]{cao-17-unit-interval-editing} one can infer that either it is a trivial instance, where $Z$ is the only forbidden set, or every component of $G$ is a proper Helly circular-arc graph.  In the second case, there are $O(n)$ solutions and they can be found in $O(n)$ time.

However, the approach used in Lemma~\ref{lem:alg-unit-interval} cannot be directly generalized to graph classes that have an infinite number of forbidden induced subgraphs apart from holes.  The most famous graph class with this property are arguably the class of interval graphs and the class of strongly chordal graphs.  We have solved the former (in a better time) in the previous section.  The latter remains open to the best of our knowledge.

\subsection{Wheel-free graphs}
A \emph{wheel} consists of an induced cycle and another vertex, called the \emph{center} of the wheel, that is adjacent to all vertices on the cycle.  Note that only the center can have a degree larger than three in a wheel, and a wheel of four vertices is a clique, in which every vertex can be viewed as the center.  Since all wheels are biconnected, by Lemma~\ref{lem:disconnected}, we focus on the maximal induced wheel-free subgraphs problem.

Let $G$ be the input graph.  If there is a universal vertex $v\in V(G)$ such that $G - v$ is wheel-free, then $(G, v)$ is a special instance of the $1$-restricted version of the problem.  This special version is indeed equivalent to finding all the maximal induced forests of $G - v$, because every wheel of $G$ has $v$ as its center.  Enumerating induced wheel-free subgraphs of $G$, apart from components of $G - v$, is then equivalent as enumerating induced forests of $G - v$, which can be solved by the algorithm of Schwikowski and Speckenmeyer~\cite{schwikowski-02-enumerate-fvs}.  This however does not work when the special vertex of the $1$-restricted version is not universal.

We solve this problem by considering the $5$-restricted version.  Let ($G, Z$) be the instance.  Either $Z$ contains a vertex that is the center of all the wheels in $G$, then we are in the easy case previously mentioned; or we can partition $V(G)$ into center vertices and cycle vertices.

\begin{lemma}\label{lem:wheel-free}
  The $5$-restricted version of the maximal induced wheel-free subgraphs problem can be solved in polynomial total time.
\end{lemma}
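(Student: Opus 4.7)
The plan is to prove a structural dichotomy on a $5$-restricted instance $(G, Z)$ and then handle each branch by reducing to something already solvable. Because every wheel of $G$ contains the fixed five-vertex set $Z$, every wheel has at least five vertices, so $G$ is $K_4$-free and each wheel is a $W_k$ with $k \ge 4$; in such a wheel the center is the unique vertex of degree $k$ and the other $k$ vertices are cycle vertices of degree three. In particular, ``center'' and ``cycle vertex'' are unambiguous roles across all wheels of $G$.

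The dichotomy I would establish is: either (A) some $z \in Z$ is the center of \emph{every} wheel of $G$, or (B) there is a partition $V(G) = C \sqcup R$ such that in every wheel of $G$ the center lies in $C$ and the cycle lies in $R$. The intuition is that since $Z$ appears inside every wheel simultaneously, the role of each vertex of $Z$ across different wheels cannot vary without forcing another vertex of $Z$ to also flip roles; chasing such inconsistencies either lands us in (A) or forces a globally consistent partition. I would prove this by fixing a reference wheel $W_0$, reading off which elements of $Z$ are the center and which are cycle vertices there, and then showing that in any other wheel $W$ the same $z \in Z$ must still play the same role as in $W_0$; otherwise two wheels of $G$ would share precisely the configuration of $Z$ but disagree on the center, yielding an additional induced wheel that violates the role assignment unless case (A) already held.

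In case (A), let $v \in Z$ be the common center. Then $G - v$ is wheel-free, so $V(G)\setminus\{v\}$ is the unique maximal solution avoiding $v$, and a subset $S \ni v$ is wheel-free iff $G[N(v)\cap S]$ is acyclic, since any wheel in $G[S]$ is a wheel of $G$ and therefore must have $v$ as its center with cycle vertices inside $N(v)$. The map $F \mapsto F \cup (V(G)\setminus N(v))$ is then a bijection between maximal induced forests of $G[N(v)]$ and maximal wheel-free sets of $G$ containing $v$; invoking the Schwikowski--Speckenmeyer algorithm~\cite{schwikowski-02-enumerate-fvs} enumerates the latter with polynomial delay, which is more than enough for polynomial total time.

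In case (B), wheels of $G$ are exactly pairs $(c,H)$ with $c \in C$ and $H$ an induced cycle of $G[N(c)\cap R]$, so $G[C]$ and $G[R]$ are themselves wheel-free and a set $S$ is wheel-free iff $G[N(c)\cap S\cap R]$ is acyclic for every $c \in S\cap C$. I would enumerate solutions by iterating over the choices of $S \cap C$ (which I expect to be polynomially bounded, because the centers of $Z$ are fixed in $C$ and dropping any single center $c$ kills precisely the wheels centered at $c$); for each such choice, the remaining task is to pick $S \cap R$ as a common maximal ``forest-on-each-$N(c)$'' set, which can again be carried out by repeatedly applying Schwikowski--Speckenmeyer together with Proposition~\ref{lem:extension}. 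Finally, Proposition~\ref{lem:not-z} accounts for the at most $|Z|=5$ maximal solutions that are not supersets of $Z$. The main obstacle is case (B): both showing that the partition actually exists under the $5$-restricted hypothesis, and arguing that the center patterns $S\cap C$ producing distinct maximal solutions are only polynomially many, so that the overall running time stays polynomial in $n$ and the number of solutions.
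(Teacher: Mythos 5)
Your setup and case~(A) are sound and essentially match the paper: after discarding vertices in no wheel (Proposition~\ref{lem:redundant-vertices}), the absence of $K_4$ makes centers unambiguous, and the dichotomy you state is exactly the paper's, which it obtains more directly than your ``reference wheel'' sketch by counting adjacencies inside $Z$ --- a vertex lying in some wheel is adjacent to either all five vertices of $Z$ (it is a center) or at most two of them (it is a cycle vertex), which immediately yields the partition $C\sqcup R$ (and also shows $Z\subseteq R$, so your remark that ``the centers of $Z$ are fixed in $C$'' is backwards).

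The genuine gap is in case~(B). First, the number of distinct traces $S\cap C$ over maximal solutions $S$ is \emph{not} polynomially bounded in $n$: take an induced path on $Z$ and $k$ independent gadgets, each consisting of two private vertices $a_i,b_i$ closing $Z$ into an induced $7$-cycle $H_i$ together with a center $c_i$ adjacent exactly to $V(H_i)$; every subset of $\{c_1,\dots,c_k\}$ arises as $S\cap C$ for some maximal solution, giving $2^k$ traces. So ``iterating over the choices of $S\cap C$'' is not a polynomial-total-time procedure unless you specify an output-sensitive way to generate only the relevant traces, which you do not. Second, even for a fixed center set, enumerating all maximal $S\cap R$ such that $G[N(c)\cap S\cap R]$ is \emph{simultaneously} a forest for every chosen $c$ is a multi-constraint problem that a single (or an unspecified ``repeated'') invocation of Schwikowski--Speckenmeyer does not solve. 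The paper avoids both issues with an incremental search tree: it starts from the solution $R$ itself, and from each already-found solution $S$ and each center $c\in C\setminus S$ it runs Schwikowski--Speckenmeyer only on the subgraph induced by the vertices of $S\cap N(c)$ lying on cycles there, extends each resulting forest to a maximal solution via Proposition~\ref{lem:extension}, bounds the number of candidates produced per step by $N$ using Proposition~\ref{lem:subgraph-cardinality}(i), and proves completeness by induction on $|S\cap C|$. You need some mechanism of this kind (or another argument charging the work to actual solutions) to close case~(B).
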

\begin{proof}
  We can decide whether a vertex in $G$ is in a wheel as follows.  For each vertex $v\in V(G)$, we check whether the subgraph induced by $N(v)$ is a forest.  If not, we mark $v$, and all the vertices that are contained in some cycles of $G[N(v)]$.  After the marking is done for all the vertices in $G$, those unmarked vertices are not contained in any wheel.  By Proposition~\ref{lem:redundant-vertices}, we may assume that every vertex in $G$ is contained in some wheel.  Suppose that $Z$ is a set of five vertices such that every wheel in $G$ contains all the five vertices in $Z$.  A particular consequence of this assumption is that there cannot be any $4$-clique, i.e., a wheel on four vertices.  As a consequence, the center of a wheel in $G$ is unambiguous.

  If one vertex $u$ in $Z$ is adjacent to all the other four vertices in $Z$, then $u$ has to be the center of every wheel of $G$.  By assumption, every vertex in $V(G)\setminus \{u\}$ is  contained in some wheel, hence adjacent to $u$.  In this case, $V(G)\setminus \{u\}$ is a solution, and every other solution contains $u$.  A set $S\subseteq V(G)$ that contains $u$ is a solution if and only if $S\setminus\{u\}$ is a forest.  Therefore, we can call the algorithm of Schwikowski and Speckenmeyer~\cite{schwikowski-02-enumerate-fvs} to find all maximal induced forests of $G - u$, and for each of them, add $u$ to produce a solution of $G$.

  Now that no vertex in $Z$ is universal in $G[Z]$, the maximal degree in $G[Z]$ is at most two.
  Since every vertex $x$ in $V(G)\setminus Z$ is in some wheel, $x$ is adjacent to either all vertices or at most two vertices in $Z$.  Therefore, the vertex set $V(G)$ can be partitioned into $A$ and $B$, where every vertex in $A$ is the center of some wheel, but not on the cycle of any wheel, and every vertex in $B$ is on the cycle of some wheel but not a center.  Note that $Z\subseteq B$.

  Since there is no center vertex in $B$, the induced subgraph $G[B]$ is wheel-free.  Since every vertex in $A$ is the center of some wheel, of which the vertices on the cycle are from $B$, the wheel-free induced subgraph $G[B]$ is maximal.  Therefore, $B$ is a solution.
  We create a search tree $\cal T$ containing only $B$ initially, and add all the solutions of $G$ into $\mathcal{T}$.  For each solution $S$ in $\mathcal{T}$ with $A\not\subseteq S$, and each center vertex $c\in A\setminus S$, we add the following solutions to $\mathcal{T}$.
  We identify the subset $B'$ of vertices contained in cycles in $G[S\cap N(c)]$; in other words, those vertices in $S$ that can form a wheel with $c$.  We call the algorithm of  Schwikowski and Speckenmeyer \cite{schwikowski-02-enumerate-fvs} to find all maximal induced forests of $G[B']$.  For each $F$ of them, we extend $F\cup (S\cap B\setminus B')$ to a solution $S$; note that vertices in $S\cap B\setminus B'$ do not form any wheel with $c$.  This solution is inserted into $\mathcal{T}$ if it was not already there.  Since each solution obtained as such has a different intersection with $B'$, they are different.  Their number is thus upper bounded by $N$ by Proposition~\ref{lem:subgraph-cardinality}(i).  The total running time of this algorithm is polynomial total.

We now argue the correctness of the algorithm by induction on the number of vertices from $A$ in a solution.  The base case is $0$, and the only solution disjoint from $A$ is $B$.  Now suppose that for every solution with at most $i$ vertices from $A$ has been found, we show that every solution with precisely $i + 1$ vertices from $A$ are found as well.  Let $S$ be a solution with $|S\cap A| = i + 1$.  The set of solutions $S'$ satisfying
\[
  S\cap B\subseteq S' \text{ and }  S'\cap A\subseteq S
\]
is not empty because $S\cap B\subseteq B$ and $B\cap A = \emptyset$.  Let $S'$ be a solution from this set with $|S'\cap A|$ maximized, and let $c$ be any vertex in $A\setminus S'$.  Then one solution we find for $S'$ and $c$ contains $(S\cap B) \cup (S\cap A) \cup \{c\}$.  By the selection of $S'$, which maximize $|S'\cap A|$, this solution has to be $S$.  This concludes the proof.
\end{proof}

The main result of this subsection then follows from Theorem~\ref{thm:poly-total} and Lemmas~\ref{lem:wheel-free} and \ref{lem:disconnected}.
\begin{lemma}
  The maximal induced wheel-free subgraphs problem
  and the maximal connected induced wheel-free subgraphs problem
  can be solved in incremental polynomial time.
\end{lemma}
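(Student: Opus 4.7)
The plan is to combine the three tools already developed in this section. First, Lemma~\ref{lem:wheel-free} shows that the $5$-restricted version of the maximal induced wheel-free subgraphs problem can be solved in polynomial total time. Applying Theorem~\ref{thm:poly-total} to the hereditary class $\mathcal{P}$ of wheel-free graphs, with $t = 5$, we obtain immediately that the (unrestricted) maximal induced wheel-free subgraphs problem can be solved in incremental polynomial time. This handles the first half of the statement.

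For the connected variation, the strategy is to lift Lemma~\ref{lem:wheel-free} to the connected setting via Lemma~\ref{lem:disconnected}, and then apply Theorem~\ref{thm:poly-total} again. The point to verify is that the forbidden induced subgraphs of the wheel-free class, namely the wheels themselves, satisfy the biconnectivity hypothesis of Lemma~\ref{lem:disconnected}. Every wheel has order at least $4$: the smallest wheel is $K_4$, and a wheel on $k+1\ge 5$ vertices consists of a center adjacent to every vertex of a $k$-cycle. In both cases, removing any single vertex leaves a connected graph (a triangle in the case of $K_4$; a path plus a center in the general case). Hence every forbidden induced subgraph is biconnected, and we may take $c = 4 \le 5$ in Lemma~\ref{lem:disconnected}.

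Thus Lemma~\ref{lem:disconnected}, together with Lemma~\ref{lem:wheel-free}, implies that the $5$-restricted version of the maximal connected induced wheel-free subgraphs problem can also be solved in polynomial total time. A second application of Theorem~\ref{thm:poly-total}, this time to the connected variation with $t = 5$, yields an incremental-polynomial-time algorithm for the maximal connected induced wheel-free subgraphs problem, completing the proof.

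The main obstacle was already overcome in Lemma~\ref{lem:wheel-free} via the case analysis splitting on whether some vertex of $Z$ is universal (reducing to the forest enumeration of Schwikowski and Speckenmeyer) or not (forcing a clean partition into center vertices and cycle vertices); at the present stage there is no new difficulty, as the conclusion is purely a composition of the reductions already available.
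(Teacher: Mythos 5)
Your proposal is correct and matches the paper's argument exactly: the paper derives this lemma as an immediate consequence of Theorem~\ref{thm:poly-total} together with Lemmas~\ref{lem:wheel-free} and~\ref{lem:disconnected}, using the biconnectivity of all wheels just as you verify. No gaps.
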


\section{The \textsc{cks} property}\label{sec:cks}

We say that a graph class $\mathcal{P}$ has the \emph{\textsc{cks} property} if 
\begin{quote}
   In any graph $G$ that contains a maximal $\mathcal{P}$ set of size $n - 1$, there are at most a polynomial number of maximal $\mathcal{P}$ sets.
\end{quote}
The class of edgeless graphs clearly has this property: If $V(G)\setminus \{v\}$ is a maximal independent set, then all edges of $G$ are incident to $v$, and hence the only other maximal independent set of $G$ is $V(G)\setminus N(v)$.  To characterize graph classes with the \textsc{cks} property, we need the following folklore result.
Note that a trivial and hereditary graph class contains either all graphs or only a finite number of graphs.
\begin{proposition}\label{lem:finite-graph-class}
  A hereditary graph class $\mathcal{P}$ has a finite number of graphs if and only if there are positive integers $p$ and $q$ such that $p$-clique and $q$-independent set are its forbidden induced subgraphs.
\end{proposition}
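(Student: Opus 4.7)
The plan is to prove both directions directly, using Ramsey's theorem for the nontrivial implication. Throughout, write $I_q$ for the edgeless graph on $q$ vertices (the $q$-independent set), and recall from the conventions set earlier in the paper that every member of the forbidden induced subgraph set $\mathcal{F}$ of $\mathcal{P}$ is minimal, i.e., no proper induced subgraph of it lies outside $\mathcal{P}$.

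For the ``only if'' direction, suppose $\mathcal{P}$ contains only finitely many graphs, and let $n$ be larger than the order of every graph in $\mathcal{P}$. Then $K_n \notin \mathcal{P}$ and $I_n \notin \mathcal{P}$, so the sets $\{k \ge 1 : K_k \notin \mathcal{P}\}$ and $\{k \ge 1 : I_k \notin \mathcal{P}\}$ are both nonempty. Let $p$ and $q$ be their respective minima. Every proper induced subgraph of $K_p$ is $K_i$ with $i < p$, which lies in $\mathcal{P}$ by the minimality of $p$; hence $K_p$ is a minimal forbidden induced subgraph, i.e., $K_p \in \mathcal{F}$. The argument for $I_q \in \mathcal{F}$ is identical.

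For the ``if'' direction, suppose $K_p, I_q \in \mathcal{F}$ for some positive integers $p, q$. Let $R = R(p,q)$ be the classical Ramsey number, so that every graph on at least $R$ vertices contains either $K_p$ or $I_q$ as an induced subgraph. If $G \in \mathcal{P}$ had at least $R$ vertices, then by Ramsey's theorem $G$ would contain $K_p$ or $I_q$ as an induced subgraph, and by the hereditariness of $\mathcal{P}$ this induced subgraph would lie in $\mathcal{P}$, contradicting that it is a forbidden induced subgraph. Hence every graph in $\mathcal{P}$ has fewer than $R$ vertices, and since there are only finitely many graphs on fewer than $R$ vertices, $\mathcal{P}$ is finite.

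The only nontrivial ingredient is the appeal to Ramsey's theorem; the rest is a short bookkeeping argument about minimality of forbidden induced subgraphs. No case analysis or further graph-theoretic machinery is required.
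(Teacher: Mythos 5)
Your proof is correct and follows essentially the same route as the paper: the ``if'' direction is the standard Ramsey-number bound, and the ``only if'' direction takes $p$ and $q$ minimal so that $K_p$ and $I_q$ are automatically minimal forbidden induced subgraphs. The only difference is that you spell out the Ramsey step and the minimality check in slightly more detail than the paper does.
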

\begin{proof}
  The if condition is a simple result of Ramsey theorem.  For the only if direction, let $p$ and $q$ be the smallest numbers such that $p$-clique and $q$-independent set are not in $\mathcal{P}$; they exist because there are only a finite number of graphs in $\mathcal{P}$.  Then $p$-clique and $q$-independent set are forbidden induced subgraphs of $G$ because all their induced subgraphs, if not empty, are smaller cliques or independent sets, hence in $\mathcal{P}$ by the selection of $p$ and $q$.
\end{proof}

As aforementioned, the enumeration of maximal cliques and the enumeration of maximal independent sets are equivalent (up to a linear factor in the running time).  This observation holds in general.
\begin{proposition}\label{lem:complement-and-subclass}
  Let $\mathcal{P}$ be a nontrivial hereditary graph class. If  $\mathcal{P}$ has the \textsc{cks} property, then so are the complement class of $\mathcal{P}$.
\end{proposition}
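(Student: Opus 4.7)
The plan is to appeal to graph complementation, which provides a natural bijection between $\mathcal{P}$ sets of $\overline{G}$ and $\overline{\mathcal{P}}$ sets of $G$. The key identity is $\overline{G[S]} = \overline{G}[S]$ for every $S\subseteq V(G)$, which immediately yields $G[S]\in \overline{\mathcal{P}}$ if and only if $\overline{G}[S]\in \mathcal{P}$. Hence, as subsets of $2^{V(G)}$, the $\overline{\mathcal{P}}$ sets of $G$ coincide with the $\mathcal{P}$ sets of $\overline{G}$; and because maximality is a property of the inclusion order alone, the maximal $\overline{\mathcal{P}}$ sets of $G$ are precisely the maximal $\mathcal{P}$ sets of $\overline{G}$.

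With this bijection at hand, the main claim is almost immediate. Suppose $G$ contains a maximal $\overline{\mathcal{P}}$ set of size $n-1$. Then $\overline{G}$ contains a maximal $\mathcal{P}$ set of size $n-1$, so the CKS property of $\mathcal{P}$ applied to $\overline{G}$ bounds the number of maximal $\mathcal{P}$ sets of $\overline{G}$ by a polynomial in $n$. Transferring back through the bijection, $G$ has the same (polynomial) number of maximal $\overline{\mathcal{P}}$ sets, which is the CKS property for $\overline{\mathcal{P}}$.

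The only routine checks are that $\overline{\mathcal{P}}$ inherits the standing hypotheses on $\mathcal{P}$, namely being hereditary and nontrivial. Hereditariness follows from the same complementation identity: if $H\in \overline{\mathcal{P}}$ and $U\subseteq V(H)$, then $\overline{H}\in \mathcal{P}$ and, by hereditariness of $\mathcal{P}$, $\overline{H}[U] = \overline{H[U]}\in \mathcal{P}$, so $H[U]\in \overline{\mathcal{P}}$. Nontriviality is preserved because complementation is an involution on the set of isomorphism classes of graphs, so $\overline{\mathcal{P}}$ and its complement (in the universe of all graphs) have the same cardinalities as $\mathcal{P}$ and its complement, respectively. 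I do not anticipate any real obstacle: the statement is in essence just the symmetry afforded by graph complementation, and the CKS property as stated in this section is invariant under this symmetry.
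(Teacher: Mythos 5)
Your proof is correct and is exactly the complementation argument the paper has in mind (the paper omits the proof, remarking only that the clique/independent-set duality "holds in general"). The bijection $S \mapsto S$ between maximal $\overline{\mathcal{P}}$ sets of $G$ and maximal $\mathcal{P}$ sets of $\overline{G}$, via $\overline{G[S]} = \overline{G}[S]$, is precisely the intended reasoning, and your routine checks of hereditariness and nontriviality are fine.
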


The forbidden induced subgraph of the class of edgeless graphs is a single edge, or $1$-path.  The graph class forbidding $2$-path is cluster graphs, which, as we will see, also has the \textsc{cks} property.  However, for $\ell \ge 3$, the class of $\ell$-path-free graphs no longer has it.  See Figure~\ref{fig:p4-free} for an example on $3$- and $4$-path.
For $s \ge 0$, an \emph{($s$-)star} is a complete bipartite graph of which one part has one vertex and the other $s$ vertices; here we treat a single vertex as a (degenerated) star.
We can view $1$- and $2$-path as  $1$- and $2$-star, respectively.  Again for $s\ge 3$, the class of $s$-star-free graphs does not have the \textsc{cks} property; see Figure~\ref{fig:claw-free}.
We say that a graph is a \emph{star forest} if every component of the graph is a star.  It turns out that star forests play a crucial role in characterizing graph classes with the \textsc{cks} property.

\begin{figure}[h!]
  \centering
  \begin{subfigure}[b]{.43\linewidth}
    \centering
    \begin{tikzpicture}
      \draw [white, rounded corners] (2,-1) rectangle (4, .5);
      \begin{scope}[every node/.style={filled vertex}]
        \node["$v$"] (v) at (0, 1.5) {};
        \foreach[count=\i] \p in {-3, ..., 0, 1, 3} {
          \node(x\i) at (\p, 0) {};
          \node(y\i) at (\p, -0.5) {};
          \draw (v) -- (x\i) -- (y\i);              
        }
      \end{scope}
      \node at (2, -.25) {$\cdots$};
    \end{tikzpicture}
    \caption{}
    \label{fig:p4-free}
  \end{subfigure}
  \qquad
  \begin{subfigure}[b]{.43\linewidth}
    \centering
    \begin{tikzpicture}
      \draw [gray, rounded corners, fill=gray!20] (-3.5,-1) rectangle (3.74, .5);
      \begin{scope}[every node/.style={filled vertex}]
        \node["$v$"] (v) at (2, 1.5) {};
        \node["$u$"] (u) at (0, 1.5) {};
        \draw (v) -- (u);
        \foreach[count=\i] \p in {-3, ..., 0, 1, 3} {
          \node(x\i) at (\p, 0) {};
          \node(y\i) at (\p+.25, -0.5) {};
          \draw [blue, thick,dashed] (x\i) edge (y\i);
          \draw (x\i) -- (u) -- (y\i);              
        }
      \end{scope}
      \node at (2, -.25) {$\cdots$};
    \end{tikzpicture}
    \caption{}
    \label{fig:claw-free}
  \end{subfigure}  
  \caption{Examples showing that the $3$-path-free graphs, $4$-path-free graphs, and claw-free graphs do not have the \textsc{cks} property.  (a) A tree $G$ on $2k + 1$ vertices.  The subgraph $G- v$ does not contain any 3-path, but $G$ has $2^k$ maximal induced $3$-path-free subgraphs, and  $k\cdot 2^{k - 1} + 1$ maximal induced $4$-path-free subgraphs.  (b) A graph $G$ on $2 k + 2$ vertices.  The $2k$ vertices in the gray box form $k$ pairs, and two vertices in it are adjacent if and only if they are \emph{not} in the same pair.  The vertex $u$ is universal and is the only neighbor of $v$.  Both $G-v$ and $G - u$ are claw-free, but $G$ has other $2^k$ maximal induced claw-free subgraphs.
  } 
  \label{}
\end{figure}
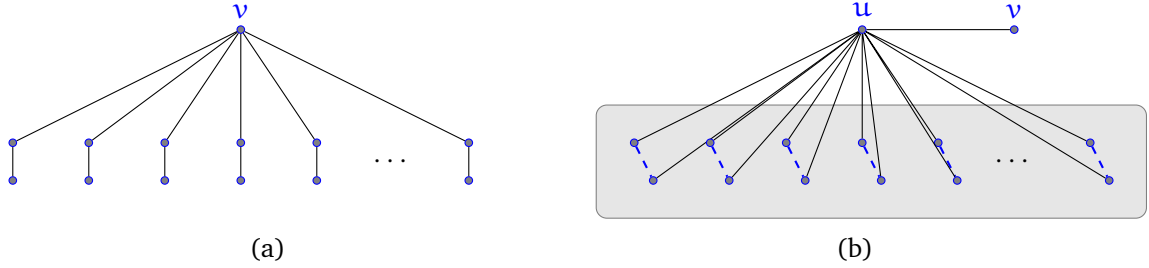
  
Before presenting the proof of Lemma~\ref{lem:no-claw-forest}, which is inspired by Lewis and Yannakakis~\cite{lewis-80-node-deletion-np}, let us consider a special and simple case, where every graph in $\cal F$ is biconnected and has at least three vertices.  Let $F$ be a graph in $\cal F$ with the least number of vertices, and let $s = |V(F)|$; note that $s\ge 3$.  By the selection of $F$, if every biconnected component (maximal biconnected subgraph) of a graph contains less than $s$ vertices, then the graph is $\cal F$-free.  Let $t$ be a positive integer.  We take $t$ disjoint copies of $F$, and identify one vertex from each copy; denoted by $v$ this identified vertex, and $G$ the resulting graph, which has $n = (s- 1) t + 1$ vertices.  Then $G - v$ and $G - \{v_1, \ldots, v_t\}$, where $v_i, 1 \le i\le t$, is a vertex in the $i$th copy of $F$ different from $v$, are maximal induced subgraphs of $G$ that are $\cal F$-free.  They are $\cal F$-free because each biconnected component of them has less than $s$ vertices, and they are maximal because adding any vertex back would introduce an $F$.
Therefore, $G$ has $(s- 1)^t + 1 =( s-1)^{(n-1)/(s - 1)} + 1$ solutions, which is exponential on $n$ because $s\ge 3$.

  For the general case, the main difficulty of \cite{lewis-80-node-deletion-np} lies in handling disjoint copies of forbidden induced subgraphs, while for us it is in forbidden stars.  Our definitions below are mostly borrowed from \cite{lewis-80-node-deletion-np}, but they have small deviations from theirs, with the excuse of simplicity.  Let $H$ be an arbitrary graph.  For each component $C$ of $H$, we define a sequence as follows.  If $C$ is not biconnected, then for each cutvertex $x$ of $C$, we can sort the components of $C - x$ by their cardinalities, which  gives a sequence
\[
  \langle n_{1}, n_{2}, \ldots, n_{p}  \rangle, \text{ where } n_{1} \ge n_{2} \ge \cdots\ge n_{p}.
\]
We fix a cutvertex $x$ such that the sequence defined by $x$ is the lexicographically smallest among all cutvertexs of $C$; we call it the \emph{pivot} of $C$, and denote the sequence by $\alpha(C)$.  If $C$ is biconnected, then we let any vertex of $C$ be the pivot, and set $\alpha(C) = \langle\, |C| - 1\, \rangle$.  Finally, sort the components of $H$ such that $\alpha(C_{1}) \ge_L \alpha(C_{2}) \ge_L \cdots\ge_L \alpha(C_{k})$, where  $\ge_L$ denotes the lexicographical ordering, and let
\begin{equation}
  \label{eq:beta}
  \beta(H) := \langle \alpha(C_{1}), \alpha(C_{2}), \ldots, \alpha(C_{k})  \rangle.  
\end{equation}
The first number in $\alpha(C_{1})$ is called the \emph{major index} of $H$.  Star forests are precisely the graphs with major index $0$ (only when they are independent sets) or $1$.
\begin{proposition}\label{lem:beta-subgraph}
  Let $H$ be a graph and $H'$ a proper induced subgraph of $H$.  Then $\beta(H') <_R \beta(H)$.
\end{proposition}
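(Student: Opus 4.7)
The plan is to reduce, by induction on $|V(H)|-|V(H')|$, to the case $H' = H - v$ for a single vertex $v$, and then to compare the $\alpha$-values entry by entry in the sorted-decreasing sequence $\beta$. Let $C$ be the component of $H$ containing $v$; every component of $H$ other than $C$ appears unchanged in $H - v$, so I only need to understand how the single entry $\alpha(C)$ is replaced by the $\alpha$-values of the $m \ge 0$ components $C'_1, \ldots, C'_m$ into which $C - v$ decomposes ($m = 0$ if $C = \{v\}$, $m = 1$ if $v$ is not a cutvertex of $C$, and $m \ge 2$ otherwise). The main claim is $\alpha(C'_j) <_L \alpha(C)$ for every $j$; granting it, $\beta(H-v)$ differs from $\beta(H)$ at or after the position of the removed entry $\alpha(C)$ and has a strictly smaller entry there, or no entry at all, so $\beta(H-v) <_R \beta(H)$.

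The claim rests on one uniform bound: for any connected graph $D$ with $|D| \ge 2$, the first entry of $\alpha(D)$ equals $|D|-1$ when $D$ is biconnected, and is at most $|D|-2$ when $D$ has a cutvertex, since removing a cutvertex produces at least two components of total size $|D|-1$. Two easy sub-cases follow quickly from this. If $C$ is biconnected then $\alpha(C) = \langle |C|-1\rangle$ while $|C'_1| \le |C|-1$, so the first entry of $\alpha(C'_1)$ is at most $|C|-2$, strictly smaller. If $v$ coincides with the pivot $x$ of $C$, then $\alpha(C) = \langle |C'_1|, \ldots, |C'_m|\rangle$ sorted decreasingly, and each $\alpha(C'_j)$ begins with an entry of at most $|C'_j|-1 < \max_j |C'_j|$.

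The main work concerns the case where $C$ has a cutvertex, $v$ is distinct from the pivot $x$, and $v$ lies in some component $D_{i^*}$ of $C - x$ of size $n_{i^*}$. For any $C'_j$ not containing $x$, a short check shows $C'_j$ is a component of $D_{i^*} - v$, so $|C'_j| \le n_{i^*}-1 \le n_1-1$, and the uniform bound gives the first entry of $\alpha(C'_j)$ at most $n_1 - 2 < n_1$. For the unique $C'_{j^*}$ containing $x$, I plan to use $x$ itself as a cutvertex witness inside $C'_{j^*}$: the components of $C'_{j^*} - x$ are $\{D_i : i \ne i^*\}$ together with those pieces of $D_{i^*} - v$ adjacent to $x$, so the sorted-decreasing sequence from $x$ is obtained from $\langle n_1, \ldots, n_p\rangle$ by replacing $n_{i^*}$ with strictly smaller entries of total size at most $n_{i^*}-1$, which makes it lex-strictly-smaller than $\alpha(C)$; since $\alpha(C'_{j^*})$ is the lex-minimum over all its cutvertices, the bound propagates.

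The main source of friction is the degenerate configuration in which $x$ is no longer a cutvertex of $C'_{j^*}$. A short count using $p \ge 2$ together with the connectivity of $C$ forces $p = 2$ and $D_{i^*} = \{v\}$, so $C'_{j^*} = \{x\} \cup D_{3-i^*}$ with $|C'_{j^*}| = n_{3-i^*}+1$; here the uniform bound, combined with $n_{3-i^*} \le n_1$, gives either $\alpha(C'_{j^*}) = \langle |C|-2\rangle$ as a proper prefix of $\alpha(C) = \langle |C|-2, 1\rangle$ (biconnected subcase) or the first entry of $\alpha(C'_{j^*})$ at most $|C|-3 < n_1$ (cutvertex subcase), closing the argument. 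I expect this bookkeeping to be the only really tedious step.
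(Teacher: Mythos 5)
Your overall strategy is sound, and it is in fact considerably more detailed than the paper's own argument: the paper disposes of this proposition in two sentences, simply asserting that $\alpha(C')\le_L\alpha(C)$ whenever a component $C'$ of $H'$ sits inside a component $C$ of $H$, with equality only for $C'=C$. Your reduction to single-vertex deletion, the uniform bound on the first entry of $\alpha$, and the case analysis over the location of $v$ relative to the pivot supply exactly the verification the paper leaves implicit, and the sorted-multiset comparison at the end (all new entries lie strictly below the one removed entry $\alpha(C)$, so the first discrepancy in the sorted sequences favours $\beta(H-v)$) is correct.

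There is, however, one false claim in your degenerate case. You assert that when $x$ fails to be a cutvertex of $C'_{j^*}$, connectivity forces $D_{i^*}=\{v\}$. It does not: take $C$ to be the path $u\,x\,v\,w$ with pivot $x$, so $\alpha(C)=\langle 2,1\rangle$ with $D_{i^*}=\{v,w\}$, and delete $v$. Then $C-v$ has components $\{x,u\}$ and $\{w\}$, the component $\{w\}$ of $D_{i^*}-v$ is not adjacent to $x$, and $x$ is not a cutvertex of $C'_{j^*}=\{x,u\}$, yet $D_{i^*}\ne\{v\}$. What the count actually forces is only $p=2$ together with ``no component of $D_{i^*}-v$ is adjacent to $x$,'' which still yields $C'_{j^*}=\{x\}\cup D_{3-i^*}$; but your subsequent numerology ($\alpha(C)=\langle |C|-2,1\rangle$, first entry at most $|C|-3$) depends on the false identity. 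The repair is mechanical: the first entry of $\alpha(C'_{j^*})$ is at most $|C'_{j^*}|-1=n_{3-i^*}\le n_1$, with equality only if $C'_{j^*}$ is biconnected and $n_{3-i^*}=n_1$, in which case $\alpha(C'_{j^*})=\langle n_1\rangle$ is a proper prefix of $\langle n_1,n_2\rangle$ and hence lexicographically smaller (you should also state explicitly that a proper prefix is taken to be $<_L$, a convention the paper uses but never spells out). With that correction the proof is complete.
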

\begin{proof}
  Each component $C'$ of $H'$ is a subset of some component $C$ of $H$, and $\alpha(C) \ge_L \alpha(C')$.  They are equal only when $C = C'$, but they cannot be all equal because $H'$ is a proper subgraph of $H$. 
\end{proof}

\begin{proof}[Proof of Lemma~\ref{lem:no-claw-forest}]
  We prove the contrapositive of the statement.  We may assume without loss of generality that $\cal F$ does not contain any star forest; otherwise it suffices to consider the complement graph class by Proposition~\ref{lem:complement-and-subclass}.  This implies, in particular, that $\cal F$ does not contain any edgeless graph; or equivalently, all edgeless graphs are $\cal F$-free.  

  The sequence of $\beta$ defined above determines a total ordering $R$ among all graphs (though different graphs may have the same sequence).  Take $F$ to be a smallest graph in $\cal F$ with respect to this ordering.   By Proposition~\ref{lem:beta-subgraph}, any graph $H$ with $\beta(H)<_R\beta(F)$ is $\cal F$-free.
  Let $F_1$ be the first component of $F$ in the definition (1), $v$ its pivot, and $C_1$ a component of $F_1 - v$ with the maximum cardinality.  Let $t$ be a positive integer.  We take the disjoint union of $F - V(C_1)$ and $t$ disjoint copies of $F[C_1\cup \{v\}]$, and identify the $t + 1$ copies of the pivot of $F_1$ into a single vertex; we denote the vertex by $v$ and the component containing $v$ by $C$.
  This finishes the construction of the graph $G$.
  
  We first argue that $G - v$ is $\cal F$-free.  Since each component of $C - v$ is an induced subgraph of $F_1$, its $\alpha$-sequence is smaller than $\alpha(F_1)$.  Then $\beta(G - v) <_R \beta(F)$.

  We take a vertex from each of the $t$ copies of $C_1$, and let $X$ denote this set of vertices.  Note that the $\alpha$-sequence of each component of $C' = C - X$  is smaller than $\alpha(F_1)$: The number of components of $C' - v$ having size $|C_1|$ is one less than that of $F_1 - v$.  Therefore, $\beta(G - X) <_R \beta(F)$, which means that $G - X$ is $\cal F$-free.  It is maximal because adding any vertex $x\in X$ back to $G - X$ introduces a copy of $F$.
  Since $F$ is not a star forest, $|C_1|$, which is the major index of $F$, is larger than $1$.   We have seen $|C_1|^t$ maximal induced $\cal F$-free subgraphs of $G$, which is exponential on $n$.  Therefore, the class of $\cal F$-free graphs does not have the \textsc{cks} property.
\end{proof}
Unfortunately, this necessary condition given in Lemma~\ref{lem:no-claw-forest} is not sufficient to ensure a graph class to have the \textsc{cks} property, except for those graph classes defined by a single forbidden induced subgraph.

\begin{corollary}\label{cor:h-free}
  Let $F$ be a fixed graph.  The class of $F$-free graphs has the \textsc{cks} property if and only if $F$ or its complement is a path on at most three vertices.
\end{corollary}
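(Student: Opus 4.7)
The plan is to apply Lemma~\ref{lem:no-claw-forest} to settle the necessary direction, and then verify the sufficient direction by direct structural analysis of the two nontrivial cases together with Proposition~\ref{lem:complement-and-subclass}.

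For the \emph{only if} direction, Lemma~\ref{lem:no-claw-forest} applied to $\mathcal{F} = \{F\}$ forces $F$ itself to be both a star forest and the complement of a star forest. Every star forest is $K_3$-free, so $F$ must be $\{K_3, 3K_1\}$-free, and Ramsey's theorem $R(3,3) = 6$ yields $|V(F)| \le 5$. On five vertices the only such graph (up to isomorphism) is $C_5$, which is not a star forest. On four vertices a brief enumeration of $\{K_3, 3K_1\}$-free graphs gives only $2K_2$, the $3$-path, and $C_4$: the latter two are not star forests, while the complement of $2K_2$ is $C_4$, again not a star forest. Hence $|V(F)| \le 3$, and checking three-vertex graphs leaves exactly the $2$-path and its complement $K_2 + K_1$; combined with the trivial cases on at most two vertices, one concludes that $F$ or $\overline{F}$ must be a path on at most three vertices.

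For the \emph{if} direction, by Proposition~\ref{lem:complement-and-subclass} it suffices to verify the \textsc{cks} property for edgeless graphs (when $F$ is the $1$-path) and for cluster graphs (when $F$ is the $2$-path). For edgeless graphs: if $V(G) \setminus \{v\}$ is a maximal independent set, then every edge of $G$ is incident to $v$, so the only maximal independent sets of $G$ are $V(G) \setminus \{v\}$ and $\{v\} \cup (V(G) \setminus N[v])$. For cluster graphs: if $G - v$ has clique components $C_1, \ldots, C_k$, write $A_j = N(v) \cap C_j$ and $B_j = C_j \setminus N(v)$. Any maximal cluster set of $G$ containing $v$ must place $v$ in a clique-component $\{v\} \cup A'$ with $A' \subseteq A_j$ for a single $j$; since $C_j$ is a clique, once $A' \ne \emptyset$ no vertex of $B_j$ can be included, while vertices of $B_{j'}$ for $j' \ne j$ can (and must, by maximality) all be included, and maximality of the clique itself forces $A' = A_j$. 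This yields $S = \{v\} \cup A_j \cup \bigcup_{j' \ne j} B_{j'}$ for each $j$ with $A_j \ne \emptyset$, plus the possible ``$v$-isolated'' solution $\{v\} \cup \bigcup_j B_j$ and the solution $V(G) \setminus \{v\}$ itself, giving at most $k + 2 = O(n)$ maximal cluster subgraphs.

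The main obstacle is the finite but careful case analysis required in the first direction, confirming that no graph on four or five vertices is simultaneously a star forest and the complement of one; everything else reduces to straightforward structural bookkeeping.
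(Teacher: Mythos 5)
Your proof is correct and follows essentially the same route as the paper: Lemma~\ref{lem:no-claw-forest} for necessity, a finite case analysis to pin $F$ down to at most three vertices, and a direct count of maximal independent/cluster sets (plus Proposition~\ref{lem:complement-and-subclass}) for sufficiency. The only difference is cosmetic: you organize the case analysis via Ramsey's theorem and an enumeration of $\{K_3,3K_1\}$-free graphs, whereas the paper observes directly that any star forest on four or more vertices has a complement containing a $3$- or $4$-cycle; both reduce to the same small check.
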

\begin{proof}
  If $F$ is not a star forest, then the class of $F$-free graphs does not have the \textsc{cks} property because of Lemma~\ref{lem:no-claw-forest}.  Suppose then that $F$ is a star forest.   If $F$ has four or more vertices, then $\overline F$ contains a $3$- or $4$-cycle, hence not a star forest.  This is also true when $F$ consists three vertices and no edge.
  In either case, the class of $F$-free graphs does not have the \textsc{cks} property by Lemma~\ref{lem:no-claw-forest}.  Therefore, it remains to verify that the class of $F$-free graphs has the \textsc{cks} property when $F$ has at most three vertices and is not the complement of a triangle; it is easy to verify that either $F$ or $\overline F$  is a path on at most three vertices.
    
  The graph class is trivial when $F$ has only one vertex.
  If $F$ is $1$-path, then $F$-free graphs are precisely edgeless graphs, which we have discussed.
  The class of $2$-path-free graphs is cluster graphs.
  Let $G$ be a graph and $v\in V(G)$ such that $G - v$ is a maximal induced cluster subgraph of $G$.   In any other maximal cluster subgraph of $G$, either $v$ is with vertices from one clique of $G - v$, or it forms a clique by itself.  If $v$ is with clique $K$, then the subgraph has to be $G - ((K\setminus N(v)) \cup (N(v)\setminus K))$; otherwise, it is $G - N(v)$.  The total number of maximal induced cluster subgraphs is at most $n + 1$, the number of components of $G - v$ plus two.  Therefore, the class of cluster graphs has the \textsc{cks} property and this concludes the proof.
\end{proof}

According to Corollary~\ref{cor:h-free} and Proposition~\ref{lem:complement-and-subclass}, the following graph classes have the \textsc{cks} property because they all forbid the complement of $2$-path, and hence are subclasses of the complement of cluster graphs.  Forbidden induced subgraphs of these graph classes are listed in the appendix.

\begin{corollary}
  The following graph classes have the \textsc{cks} property: complete bipartite graphs, complete $p$-partite graphs for any positive integer $p$, and complete split graphs.   
\end{corollary}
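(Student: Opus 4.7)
The plan is to verify the \textsc{cks} property directly for each of the three classes, using a case analysis that adapts the argument behind Corollary~\ref{cor:h-free}. The unifying feature is that each of these classes forbids $\overline{P_3} = K_2+K_1$ as an induced subgraph, and hence is a subclass of the complete multipartite graphs. By Corollary~\ref{cor:h-free} together with Proposition~\ref{lem:complement-and-subclass}, the class of $(K_2+K_1)$-free graphs has the \textsc{cks} property, since its complement class is the class of $P_3$-free (cluster) graphs and $P_3$ is a path on three vertices. Because subclass containment does not preserve the \textsc{cks} property in general, I will confirm the polynomial bound directly for each listed class by reworking the counting argument in the presence of the extra constraints.

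I would handle complete bipartite graphs first. Suppose $G - v$ is a maximal complete bipartite set with parts $A, B$. Since any subset of $V(G-v)$ is itself a complete bipartite subgraph of $G$ with the induced bipartition, the only maximal complete bipartite set avoiding $v$ is $G - v$ itself. A maximal set $S$ containing $v$ must place $v$ in one of the two parts, and a short analysis shows that only the two candidates $\{v\}\cup(A\setminus N(v))\cup(B\cap N(v))$ and $\{v\}\cup(B\setminus N(v))\cup(A\cap N(v))$ can be maximal, giving at most three maximal complete bipartite subgraphs in total. For complete $p$-partite graphs, the same argument with $p$ color classes $C_1, \ldots, C_p$ yields the maximal set $\{v\} \cup (C_i\setminus N(v)) \cup \bigcup_{j\ne i}(C_j\cap N(v))$ for each placement of $v$ in class $C_i$, so at most $p+1 \le n+1$ maximal sets.

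For complete split graphs, the decomposition of $G - v$ is into a clique $K$ and an independent set $I$ with all $K$-$I$ edges present. Again the only maximal complete split set avoiding $v$ is $G - v$. Maximal sets containing $v$ will be classified by whether $v$ joins the new clique part or the new independent part, with the further possibility that a single vertex of $I$ migrates into the clique part (or a single vertex of $K$ into the independent part). Each such structural template contributes at most $O(n)$ maximal sets, indexed by the migrating vertex, so the total count is polynomial in $n$. The hard part is precisely this case analysis: because the clique/independent set decomposition of a complete split graph is generally not unique, several structurally different maximal sets can coexist, and one must enumerate all templates and verify that none admits a super-polynomial number of sets before the polynomial bound can be declared.
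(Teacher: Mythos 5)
Your route is genuinely different from the paper's. The paper disposes of this corollary in one sentence: all three classes forbid $\overline{P_3}$, hence are subclasses of the complement of cluster graphs, and the claim is read off from Corollary~\ref{cor:h-free} together with Proposition~\ref{lem:complement-and-subclass}. You instead verify the \textsc{cks} property directly for each class by an explicit case analysis of the maximal $\mathcal{P}$ sets of a graph $G$ with $G-v$ a maximal $\mathcal{P}$ set, in the same style as the paper's Proposition~\ref{lem:trivial} for split and pseudo-split graphs. Your reason for doing so is sound and worth stating: the \textsc{cks} property is not inherited by subclasses (the paper itself exhibits a counterexample just before this corollary), so the bare containment in the $\overline{P_3}$-free graphs does not by itself yield the conclusion; one has to exploit the specific structure of each class. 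In that sense your approach is the more conservative one and makes explicit an argument the paper compresses, at the cost of three separate case analyses instead of one appeal to an earlier result.

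Two loose ends in your execution. First, for complete $p$-partite graphs your count of $p+1$ misses the maximal sets in which $v$ forms a class of its own: since $v$ must then be adjacent to everything else in the set, and a set meeting all $p$ classes of $G-v$ plus $\{v\}$ would have $p+1$ classes, one class $C_i$ must be dropped entirely, giving the additional candidates $\{v\}\cup\bigcup_{j\ne i}(C_j\cap N(v))$ for each $i$. This only raises the bound to roughly $2p+1$, so the conclusion stands, but the enumeration of templates should include it (the same remark applies to the degenerate bipartite case where one part is empty). Second, and more importantly, the complete split case is not actually carried out: you list the templates informally and then state that "the hard part" of enumerating them and bounding each one remains to be done. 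Since that verification is exactly the content of the claim for that class, this part of your proposal is a plan rather than a proof; completing it amounts to redoing the four-case analysis the paper gives for split graphs in Proposition~\ref{lem:trivial}, simplified by the fact that all clique--independent-set edges must be present.
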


We are not able to fully characterize graph classes that have the \textsc{cks} property.  Our investigation suggests that only few of those classes that
satisfy the necessary condition given in Lemma~\ref{lem:no-claw-forest} 
do have the \textsc{cks} property.  For graph classes defined by two forbidden induced subgraphs, one start forest $F$  and the other the complement of a star forest $F'$, here are some of those graph classes that have: (1) $F$ being $2 K_2$ and $F'$ consisting of three isolated vertices; (2) both $F$ and $F'$ being $2 K_2$; (3) $F$ being any star and $F'$ consisting of isolated vertices; (4) $F$ and $F'$ being any star; and (5) $F$ consisting of a star and an isolated vertex and $F'$ being a star.  To make the situation worse, it is possible that a class $\mathcal{P}$ of graphs has the \textsc{cks} property but a proper subclass of $\mathcal{P}$ does not.  For example, the class defined by (6) $F$ consisting of a star and an isolated vertex and $F'$ consisting of three isolated vertices is a subclass of (5), but it does not have the \textsc{cks} property.
    
The last part of this section is devoted to showing that the following graph classes have the \textsc{cks} property.  Note that independent sets are precisely graphs of maximum degree 0.  In general, for any fixed integer $d$, the forbidden induced subgraphs of maximum degree-$d$ graphs comprise all $(d + 2)$-vertex graphs that have a universal vertex, e.g., claw, paw, diamond, and $4$-clique when $d = 2$.  They include $({d+1})$-star, a star forest, and $({d+2})$-clique, the complement of a star forest.

\begin{proposition}\label{lem:trivial}
  The following graph classes have the \textsc{cks} property:  split graphs, pseudo-split graphs, threshold graphs,
  and graphs of maximum degree $d$ for any fixed integer $d$.
\end{proposition}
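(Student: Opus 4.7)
The plan is to establish the \textsc{cks} property separately for each of the four classes, with a common setup. If $V(G)\setminus\{v\}$ is a maximal $\mathcal{P}$ set of $G$, then every other maximal $\mathcal{P}$ set of $G$ must contain $v$, so it suffices to bound the number of such solutions and add one for the set $V(G)\setminus\{v\}$ itself.

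For graphs of maximum degree $d$, I would parametrize each solution $S\ni v$ by the intersection $T := S\cap N(v)$; since $v$'s degree in $G[S]$ is at most $d$, one has $|T|\le d$, yielding at most $O(n^d)$ candidates for $T$. For fixed $T$, every vertex outside $N[v]$ already satisfies the degree bound because its degree in $G[S]$ is bounded by its degree in $G-v$, which is at most $d$. The only remaining obstruction is that each $u\in T$ with $\deg_{G-v}(u)=d$ and no neighbors in $N(v)\setminus T$ must lose at least one neighbor lying in $N(u)\setminus N[v]$. This amounts to a minimal-hitting-set problem with at most $|T|\le d$ sets each of size at most $d$, and hence has only $d^d=O(1)$ minimal solutions per $T$; together with a short maximality check, this yields $O(n^d)$ solutions in total.

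For split and pseudo-split graphs, I would invoke the (nearly) unique canonical decomposition $V(G-v)=K\cup I$ of Hammer and Simeone, where $K$ is a clique and $I$ an independent set (augmented by an optional five-cycle in the pseudo-split case). Any solution $S\ni v$ carries its own decomposition $(K_S,I_S)$; since a clique of $G-v$ meets $I$ in at most one vertex and an independent set meets $K$ in at most one vertex, the shape of $S$ relative to $(K,I)$ is controlled by a small number of combinatorial parameters (which side of the partition $v$ joins, and the identities of the at most two vertices crossing the partition). A case analysis then shows that each of these $O(n^{O(1)})$ shapes is compatible with only a bounded number of extensions, and maximality selects a unique one. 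For threshold graphs I would additionally use the vicinal preorder on $V(G-v)$: the vertices can be totally ordered so that for any pair, the open neighborhood of one is contained in the closed neighborhood of the other. For $G[S]$ to be threshold, $v$'s position under this order on $S\cap V(G-v)$ is one of at most $n$ choices, and for each the maximal compatible $S$ is uniquely determined by taking every vertex whose adjacency to $v$ is consistent with the chosen position.

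The main obstacle will be handling the split partition's minor non-uniqueness and, for pseudo-split graphs, the optional $C_5$: a single $S$ may correspond to several split partitions, and I must avoid overcounting while verifying that every maximal $S$ does arise from some shape. I expect the cleanest handling of the pseudo-split case to proceed by first guessing the (at most one) induced $C_5$ of $G[S]$ in $O(n^5)$ ways and then reducing to the split case on the remaining vertices. The threshold argument will additionally require checking that the candidate $S$ is globally threshold rather than merely compatible at $v$, which should follow because $G[S]-v$ is an induced subgraph of $G-v$ and hence threshold.
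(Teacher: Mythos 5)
Your overall strategy matches the paper's for the two cases it works out in detail. For degree-$d$ graphs the paper likewise parametrizes solutions containing $v$ by $S\cap N(v)$ (at most $n^d$ choices) and then bounds by $O(d^{2d})$ the ways of dropping one neighbor of each overloaded vertex of $N(v)\cap S$; your hitting-set phrasing is the same argument, and your observation that vertices outside $N[v]$ are never obstructions is exactly why it closes. For split graphs the paper also classifies a solution by how a split partition $C'\uplus I'$ of it (with $C'$ a maximal clique of the solution, which is how the paper sidesteps the non-uniqueness you worry about) meets the fixed partition $C\uplus I$ of $G-v$: since $|C'\cap I|\le 1$ and $|I'\cap C|\le 1$, the four resulting cases give $O(n^2)$ solutions. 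Your ``shape'' parametrization is the same idea, and the assertion that each shape forces a unique maximal extension is exactly the content of the paper's case analysis.

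Where you diverge, two points deserve care. For pseudo-split graphs, ``guessing the $C_5$ in $O(n^5)$ ways and reducing to the split case'' is not a black-box reduction: once the pentagon $Q$ of the solution is fixed, the residual problem is to find a maximal pair consisting of a clique drawn from the common neighbors of $Q$ and an independent set drawn from the non-neighbors of $Q$. This is not an input-restricted instance of the maximal split subgraph problem (nothing guarantees the residual graph minus one vertex is split), so you cannot invoke the split-graph \textsc{cks} bound and must instead redo the short case analysis with the extra adjacency constraints --- which is essentially what the paper does, obtaining $O(n^4)$ by showing that a pentagon different from the one in $G-v$ must consist of $v$, two vertices of $C$ and two of $I$, after which the clique and independent parts are forced. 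For threshold graphs the paper merely says the split argument adapts, so your vicinal-preorder route is genuinely different; but the claim that taking ``every vertex whose adjacency to $v$ is consistent with the chosen position'' yields a threshold graph is not automatic. Totality of the preorder must hold for \emph{all} pairs of $G[S]$, and inserting $v$ can destroy comparability of a pair $u,w\in S\setminus\{v\}$ when $v$ is adjacent to the smaller but not the larger of the two; you need to argue that $N(v)\cap S$ is (up to ties between vertices that become twins in $G[S]-v$) an up-set of the preorder of $G-v$, so that the $O(n)$ positions really do cover all maximal solutions and each position determines at most one. With those two repairs the proposal goes through and gives the same polynomial bounds as the paper.
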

\begin{proof}
  We first show that the class of split graphs has the \textsc{cks} property.  The vertex set of a split graph can be partitioned into a clique $C$ and an independent set $I$, which may not be unique.  Given a graph $G$ such that $G-v$ is a maximal induced split subgraph of $G$ for some $v\in V(G)$, we show that there are polynomial number of maximal induced split subgraphs of $G$. Let $C\uplus I$ be a partition of $G-v$ such that $C$ is a maximal clique in $G-v$. 
  By the maximality of $G - v$, vertex $v$ is adjacent to some vertex in $I$ and nonadjacent to some vertex in $C$; in other words, $C\cup \{v\}$ is not a clique and $I\cup \{v\}$ is not an independent set.  For each maximal induced split subgraph $G[U]$ of $G$, we consider the clique in a split partition $C'\uplus I'$ of $G[U]$, where $C'$ is a maximal clique of $G[U]$.
  We separate the discussion into four cases, based on whether $C'$ contains $v$ and whether $C'$ intersects $I$.
\begin{itemize}
\item Case 1, $C' = C$.  Then $I'$ is either $I$ or $(I\setminus N(v))\cup \{v\}$.
\item Case 2, $C' = (N(v)\cap C)\cup\{v\}$.  Then $I'$ is either $I$ or $I''\cup \{u\}$ for some $u\in C\setminus C'$ and $I''= I\setminus N(u)$.  There are at most $n$ such maximal split subgraphs of $G$.
\item  Case 3, $C' = (N(v)\cap N(u)\cap C)\cup \{v, u\}$ for some vertex $u\in I\cap N(v)$.  Then $I'$ is either $I\setminus \{u\}$ or $I\setminus (N(w) \cup \{u\}) \cup \{w\}$ for some $w\in C\setminus C'$.  There are less than $n^2$ such maximal split subgraphs of $G$.
\item Case 4, $C' = (N(u)\cap C)\cup \{u\}$ for some vertex $u\in I$.
  Note that $v$ must be in $I'$ because $G - v$ is a maximal induced split subgraph.  As a consequence, $I'$ is disjoint from $I\cap N(v)$.  Then $I'$ is either $\{v,w\}\cup I\setminus N(\{w, v\})$ for some $w\in C\setminus (N(u)\cup N(v))$, or $\{v\}\cup (I\setminus \{u\}\setminus N(v))$.  There are less than $n^2$ such maximal split subgraphs of $G$.
\end{itemize}
In total, there are at most $O(n^2)$ maximal induced split subgraphs of $G$. 

Next, we show that the class of pseudo-split graphs has the \textsc{cks} property.  According to \cite{maffray-94-pseudo-split}, a graph is pseudo-split if and only if its vertex set can be partitioned into three (possibly empty) sets $C$, $S$, and $I$ such that
  (1) $C$ is a complete graph, I is independent and $S$ (if nonempty) induces a pentagon ($5$-cycle);
  (2)  every vertex in $C$ is adjacent to every vertex in $S$; and
  (3) no vertex in $I$ is adjacent to any vertex in $S$.
  If $S\ne\emptyset$, then the pseudo-split partition is unique, and $S$ is the only pentagon of the graph.

  Given a graph $G$ such that $G-v$ is a maximal induced pseudo-split subgraph of $G$ for some $v\in V(G)$, we show that there are polynomial number of maximal induced pseudo-split subgraphs of $G$.
  If there is no pentagon in $G$, then every induced pseudo-split subgraph of $G$ is an induced split subgraph of $G$.  Therefore, it reduces to the split graphs discussed above and we are done.  Henceforth we may assume $G$ contains a pentagon.
  Let $C\uplus S\uplus I$ be a partition of $G - v$, and we look for another maximal induced pseudo-split subgraph of $G$; let $C'\uplus S'\uplus I'$ be a partition of it.
By the maximality of $G - v$, vertex $v$ is adjacent to some vertex in $I\cup S$ and nonadjacent to some vertex in $C\cup S$. 
  
  \begin{itemize}
  \item Case 1, $S$ is empty.  There are only polynomial number of solutions with no pentagon (since it is same as split graphs).  We are hence focused on solutions with $S' \ne \emptyset$.  Clearly $S'$ must contain $v$, two vertices from $C$, and two vertices from $I$, because $C$ is a clique and $I$ is an independent set.  Therefore, $C' = C \cap N(S')$ and $I' = I\setminus N(S')$.
    There are at most ${|C| \choose 2} {|I| \choose 2} < \frac{n^4}{64}$ such maximal pseudo-split subgraphs of $G$.
  \item Case 2, $S$ is not empty.
  For each $x\in S$, the number of solutions not containing $x$ is bounded exactly the same as above: We are looking for pseudo-split subgraphs of $G - x$.  There are $O(n^4)$ such solutions in total.  In the rest we consider solutions containing all the five vertices in $S$.  Then $S$ has to be the pentagon in every such solution, i.e., $S' = S$, and $v$ is either in $C'$ or $I'$.  This cannot be arranged when $v$ is adjacent to $S$ but not all of them.  If $v$ is adjacent to every vertex in $S$, then the only solution is $C' = (C\cap N(v)) \cup \{v\}$ and $I' = I$.  Otherwise, $v$ is nonadjacnet to any vertex in $S$, then the only solution is $C'=C$ and $I' = (I\setminus N(v) ) \cup \{v\}$.
\end{itemize}
In total, there are at most $O(n^4)$ maximal induced pseudo-split subgraphs of $G$.

The proof for threshold graphs is very similar on that for split graphs, hence omitted.

Finally, we consider the class of graphs of maximum degree $d$ for any fixed integer $d$.  Suppose that the maximum degree of $G - v$ is at most $d$, while the maximum degree of $G$ is larger than $d$.  Let $G[U]$ be a maximal induced subgraph of $G$ with maximum degree at most $d$.  If $v\in U$, then $|U\cap N(v)| \le d$.  There are $n^d$ possible choices of them.  For each $u\in N(v)\cap U$, if the degree of $u$ is $d + 1$, then one of $N(u) \setminus \{v\}$ is not in $U$.  There are at most $d^2$ such vertices, and hence less than $d^2 \choose d$ choices.  Therefore, there are at most $O(d^{2d} n^d)$, which is polynomial on $n$ when $d$ is fixed, maximal induced subgraphs of $G$ of maximum degree $d$ in total. 
\end{proof}

The algorithms then follow from the result of Cohen et al.~\cite{cohen-08-all-maximal-induced-subgraphs}.
\begin{lemma}[\cite{cohen-08-all-maximal-induced-subgraphs}]
  \label{lem:cohen-poly-delay}
  Let $\mathcal{P}$ be a hereditary graph class.  The maximal (connected) induced $\mathcal{P}$ subgraphs problem can be solved with polynomial delay if the input-restricted version of the problem can be solved in polynomial time.
\end{lemma}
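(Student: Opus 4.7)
The plan is to follow the solution-map recipe of Theorem~\ref{thm:solution-map}, in the same spirit as the proof of Lemma~\ref{lem:t-restricted}, but tightened so that the stronger assumption (polynomial time on $n$ alone, rather than polynomial total time) translates into polynomial delay rather than incremental polynomial time. I will handle both variations in parallel, writing $\mathcal{P}$-set throughout and treating the connected case identically except where noted.

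First I would define the successor function. For each solution $S$ of $G$ and each $v\in V(G)\setminus S$ (and, in the connected variation, $v\in N(S)$), observe that $(G[S\cup\{v\}], v)$ is a valid input for the input-restricted version: $G[S\cup\{v\}]-v=G[S]\in\mathcal{P}$. I would apply the given polynomial-time algorithm $\mathcal{A}$ to $(G[S\cup\{v\}], v)$; for every maximal (connected) $\mathcal{P}$ set $S'$ of $G[S\cup\{v\}]$ returned by $\mathcal{A}$ with $S'\neq S$, I extend $S'$ to a maximal (connected) $\mathcal{P}$ set $S''$ of $G$ via Proposition~\ref{lem:extension}, and declare $S''\in \operatorname{succ}(S,v)$. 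Because $\mathcal{A}$ runs in time polynomial in $n$ only, the total number of its output sets is polynomial in $n$; the full computation of $\operatorname{succ}(S,v)$, including the $O(n)$ extensions, is therefore polynomial in $n$, and the out-degree of every node of the resulting solution map $M(G)$ is polynomial in $n$.

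Next I would verify strong connectedness of $M(G)$ using the same reachability argument as in the proof of Lemma~\ref{lem:t-restricted}, for which the $1$-restricted setting is already enough. Fix any pair of solutions $S\neq S^*$. In the induced-subgraphs version, pick any $v^*\in S^*\setminus S$; then $(S\cap S^*)\cup\{v^*\}\subseteq S^*$ is a $\mathcal{P}$ set of $G[S\cup\{v^*\}]$, and hence is contained in some maximal $\mathcal{P}$ set $S'$ of $G[S\cup\{v^*\}]$, which by definition $\mathcal{A}$ enumerates. The successor $S''\in\operatorname{succ}(S,v^*)$ obtained from $S'$ therefore satisfies $|S''\cap S^*|\ge|S\cap S^*|+1$, so the distance from $S$ to $S^*$ in $M(G)$ is at most $|S^*|$. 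For the connected variation, take $v^*\in S^*$ to be a neighbor in $G[S^*]$ of the largest component of $G[S\cap S^*]$; the same argument, using connectedness of $(S\cap S^*)\cup\{v^*\}$, shows that some successor of $S$ strictly enlarges the largest component of the intersection with $S^*$. An initial solution is obtained by running $\mathcal{A}$ on $(G[\{v\}\cup\{v'\}], v)$ for any edge or non-edge of $G$ and extending; this uses $p(n)=\mathrm{poly}(n)$ time.

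With the out-degree, the successor-evaluation time, and the initialization all bounded by a polynomial in $n$ alone, Theorem~\ref{thm:solution-map} immediately delivers polynomial delay. The only delicate point is the reduction of the bound on $|\operatorname{succ}(S,v)|$ from $\mathrm{poly}(n, N)$ (as in Lemma~\ref{lem:t-restricted}) to $\mathrm{poly}(n)$; this is not really an obstacle but a direct consequence of the hypothesis, since polynomial time in $n$ forces the output size of $\mathcal{A}$ to be polynomial in $n$. Everything else is a specialization of the $t=1$ case of Theorem~\ref{thm:poly-total}.
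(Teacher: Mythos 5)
Your argument is correct, but it is worth noting that the paper itself does not prove this lemma: it imports it from Cohen et al.\ and only remarks, near the end of Section~1, that ``all the results of \cite{cohen-08-all-maximal-induced-subgraphs} can be easily interpreted using solution maps, with simpler proofs.'' Your proposal is precisely that reinterpretation, i.e.\ the $t=1$ instance of the construction in the proof of Lemma~\ref{lem:t-restricted} with the time bounds tightened: the hypothesis that the input-restricted algorithm runs in time polynomial in $n$ alone caps its output size, hence the out-degree of every node and the cost of evaluating $\operatorname{succ}$, by a polynomial in $n$, and the reachability argument (growing $|S\cap S^*|$, or the largest component of $G[S\cap S^*]$ in the connected variation) is unchanged, so the ``moreover'' clause of Theorem~\ref{thm:solution-map} yields polynomial delay. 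The original proof of Cohen et al.\ proceeds differently --- they organize the solutions into a rooted tree via a canonical parent test and run a depth-first traversal with deferred output --- so what your route buys is uniformity with the rest of the paper and the avoidance of that bookkeeping, at the cost of relying on the strong-connectedness framework. One small point you should make explicit: both the extension step (Proposition~\ref{lem:extension}) and the construction of an initial solution presuppose a polynomial-time recognition algorithm for (connected) $\mathcal{P}$ graphs. This does follow from the hypothesis --- incrementally testing $H[\{v_1,\dots,v_{i+1}\}]$ by running the restricted algorithm on $(H[\{v_1,\dots,v_{i+1}\}],v_{i+1}])$ and checking whether the full vertex set is among the maximal $\mathcal{P}$ sets returned --- but as written your proof uses it silently.
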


\paragraph{Acknowledgment.} The author would like to thank Mamadou Moustapha Kant{\'{e} for bringing Theorem 5.2 of \cite{eiter-95-hypergraph-transversals} (part of Corollary~\ref{thm:fintie}) to our attention, and for very helpful comments on an early version of the paper.

\appendix
\section{Appendix: Summary of graph classes and results}

The forbidden induced subgraphs of all graph classes studied in this paper are summarized in Table~\ref{fig:classes-containment}.  The graphs can be found in Figure~\ref{fig:small-graphs}, where by convention we use $P_\ell$ to denote ($\ell - 1$)-path.
For a comprehensive treatment and for references to the extensive literature on these graph classes, one may refer to the monograph of \cite{golumbic-2004-perfect-graphs}, the survey of \cite{brandstadt-99-graph-classes}, and its companion website \url{http://www.graphclasses.org/}.

\begin{figure}[h]
  \centering\footnotesize
  \begin{subfigure}[b]{0.11\linewidth}
    \centering
    \begin{tikzpicture}[every node/.style={filled vertex},scale=.5]
      \node (a) at (-1,0) {};
      \node (c) at (1,0) {};
      \node (b) at (-1,2) {};
      \draw (b) -- (a) -- (c);
    \end{tikzpicture}
    \caption{$P_3$}\label{fig:p3}    
  \end{subfigure}
  \,
  \begin{subfigure}[b]{0.11\linewidth}
    \centering
    \begin{tikzpicture}[every node/.style={filled vertex},scale=.5]
      \node (a) at (-1,0) {};
      \node (c) at (1,0) {};
      \node (b) at (-1,2) {};
      \draw (b) -- (a);
    \end{tikzpicture}
    \caption{$\overline{P_3}$}\label{fig:p3-complement}    
  \end{subfigure}
  \,
  \begin{subfigure}[b]{0.11\linewidth}
    \centering
    \begin{tikzpicture}[every node/.style={filled vertex},scale=.5]
      \node (a) at (-1,0) {};
      \node (c) at (1,0) {};
      \node (b) at (-1,2) {};
      \node (d) at (1,2) {};
      \draw (a) -- (b) (c) -- (d);
      \node[white] at (1.25,0) {};
      \node[white] at (-1.25,0) {};
    \end{tikzpicture}
    \caption{$2 K_2$}\label{fig:2k2}
  \end{subfigure}
  \,
  \begin{subfigure}[b]{0.11\linewidth}
    \centering
    \begin{tikzpicture}[every node/.style={filled vertex},scale=.5]
      \node (a) at (-1,0) {};
      \node (c) at (1,0) {};
      \node (b) at (-1,2) {};
      \node (d) at (1,2) {};
      \draw (d) -- (b) -- (a) -- (c);
      \node[white] at (1.25,0) {};
      \node[white] at (-1.25,0) {};
    \end{tikzpicture}
    \caption{$P_4$}\label{fig:p4}    
  \end{subfigure}
  \,
  \begin{subfigure}[b]{0.11\linewidth}
    \centering
    \begin{tikzpicture}[every node/.style={filled vertex}, scale=.5]
      \node (a) at (-1,0) {};
      \node (c) at (1,0) {};
      \node (b) at (-1,2) {};
      \node (d) at (1,2) {};
      \draw (a) -- (b) -- (d) -- (c) -- (a);
    \end{tikzpicture}
    \caption{$C_4$}\label{fig:c4}
  \end{subfigure}
  \,
  \begin{subfigure}[b]{0.11\linewidth}
    \centering
    \begin{tikzpicture}[every node/.style={filled vertex}, scale=.6]
      \node (a) at (18:1) {};
      \node (b) at (90:1) {};
      \node (c) at (162:1) {};
      \node (d) at (234:1) {};
      \node (e) at (-54:1) {};
      \draw (a) -- (b) -- (c) -- (d) -- (e) -- (a);
    \end{tikzpicture}
    \caption{$C_5$}\label{fig:c5}    
  \end{subfigure}
  \,
  \begin{subfigure}[b]{0.11\linewidth}
    \centering
    \begin{tikzpicture}[every node/.style={filled vertex}, scale=.8]
      \node (v1) at (0,1) {};
      \node (v2) at (0,0) {};
      \node (v3) at (1,1) {};
      \node (v4) at (1,0) {};
      \node (v5) at (0.5,1.5) {};
      \draw (v3) -- (v1) -- (v2) -- (v4) -- (v3) -- (v5) -- (v1);
    \end{tikzpicture}
    \caption{house}\label{fig:house}
  \end{subfigure}

  \begin{subfigure}[b]{0.15\linewidth}
    \centering
    \begin{tikzpicture}[scale=.25]
      \node [filled vertex] (a1) at (-2, 0) {};
      \node [filled vertex] (v) at (0, -2.5) {};
      \node [filled vertex] (b1) at (2, 0) {};
      \node [filled vertex] (c) at (0, 2.5) {};
      \node at (3, 0) {}; //position adjustment
      \node at (-3, 0) {}; //position adjustment
      \draw (b1) -- (c) -- (a1) -- (v) -- (b1) -- (a1);
    \end{tikzpicture}
    \caption{diamond}\label{fig:diamond}
  \end{subfigure}
  \,
  \begin{subfigure}[b]{0.11\linewidth}
    \centering
    \begin{tikzpicture}[scale=.25]
      \node [filled vertex] (a1) at (-3., 0) {};
      \node [filled vertex] (v) at (0, 0) {};
      \node [filled vertex] (b1) at (3., 0) {};
      \node [filled vertex] (c) at (0,3.5) {};
      \node at (-3.5, 0) {};
      \node at (3.5, 0) {};
      \draw[] (a1) -- (v) -- (b1);
      \draw[] (v) -- (c);
    \end{tikzpicture}
    \caption{claw}\label{fig:claw}
  \end{subfigure}
  \,
  \begin{subfigure}[b]{0.11\linewidth}
    \centering
    \begin{tikzpicture}[every node/.style={filled vertex},scale=.6]
      \node (a) at (-1,0.5) {};
      \node (b) at (-.5,0) {};
      \node (c) at (.5,0) {};
      \node (d) at (1,0.5) {};
      \node (e) at (0,2) {};
      \draw (e) -- (a) -- (b) -- (c) -- (d) -- (e);
      \draw (b) -- (e) -- (c);
    \end{tikzpicture}
    \caption{gem}\label{fig:gem}    
  \end{subfigure}
  \,
  \begin{subfigure}[b]{0.15\linewidth}
    \centering
    \begin{tikzpicture}[scale=.2]
      \node [filled vertex] (s) at (0,6) {};
      \node [filled vertex] (a) at (-5,0) {};
      \node [filled vertex] (a1) at (-2,0) {};
      \node [filled vertex] (b1) at (2,0) {};
      \node [filled vertex] (b) at (5,0) {};
      \node [filled vertex] (c) at (0,3.5) {};
      \draw[] (a) -- (a1) -- (b1) -- (b);
      \draw[] (c) -- (s);
      \draw[] (a1) -- (c) -- (b1);
    \end{tikzpicture}
    \caption{net}\label{fig:net}    
  \end{subfigure}
  \,
  \begin{subfigure}[b]{0.15\linewidth}
    \centering
    \begin{tikzpicture}[scale=.2]
      \node [filled vertex] (s) at (0,6) {};
      \node [filled vertex] (a) at (-4,0) {};
      \node [filled vertex] (a1) at (0, 0) {};
      \node [filled vertex] (b) at (4,0) {};
      \node [filled vertex] (c1) at (-2,3) {};
      \node [filled vertex] (c2) at (2,3) {};
      \draw[] (a) -- (a1) -- (b) -- (c2) -- (s) -- (c1) -- (a);
      \draw[] (c1) -- (c2) -- (a1) -- (c1);
    \end{tikzpicture}
    \caption{tent}\label{fig:tent}    
  \end{subfigure}
  \,    
  \begin{subfigure}[b]{0.13\linewidth}
    \centering
    \begin{tikzpicture}[scale=.2]
      \node [filled vertex] (v1) at (-3,0) {};
      \node [filled vertex] (v2) at (-3,-3) {};
      \node [filled vertex] (v3) at (3,-3) {};
      \node [filled vertex] (v4) at (3,0) {};
      \node [filled vertex] (v5) at (3,3) {};
      \node [filled vertex] (v6) at (-3,3) {};
      \draw (v1) -- (v2)  -- (v3) -- (v4) -- (v5) -- (v6) -- (v1) -- (v4);
    \end{tikzpicture}
    \caption{domino}\label{fig:domino}
  \end{subfigure}

  \begin{subfigure}[b]{0.2\linewidth}
    \centering
    \begin{tikzpicture}[scale=.25]
      \node [filled vertex] (a) at (-5, 0) {};
      \node [filled vertex] (b) at (-2.5, 0) {};
      \node [filled vertex] (c) at (0,0) {};
      \node [filled vertex] (d) at (2.5,0) {};
      \node [filled vertex] (e) at (5,0) {};
      \node [filled vertex] (v) at (0,2.5) {};
      \node [filled vertex] (u) at (0,5) {};
      \draw (a) -- (b) -- (c) -- (d) -- (e);
      \draw (u) -- (v) -- (c) ;
    \end{tikzpicture}
    \caption{long claw}\label{fig:f2}
  \end{subfigure}
  \begin{subfigure}[b]{0.2\linewidth}
    \centering
    \begin{tikzpicture}[scale=.2]
      \node [filled vertex] (s) at (0,2.8) {};
      \node [filled vertex] (a) at (-7, 0) {};
      \node [filled vertex] (a1) at (-4, 0) {};
      \node [filled vertex] (v) at (0, 0) {};
      \node [filled vertex] (b1) at (4, 0) {};
      \node [filled vertex] (b) at (7, 0) {};
      \node [filled vertex] (c) at (0,-3.5) {};
      \draw[] (a) -- (a1) -- (v) -- (b1) -- (b) -- (c) -- (a);
      \draw[] (b1) -- (c) -- (a1);
      \draw[] (v) -- (c) -- (s);
    \end{tikzpicture}
    \caption{whipping top}\label{fig:whipping-top}
  \end{subfigure}
  \begin{subfigure}[b]{0.23\linewidth}
    \centering
    \begin{tikzpicture}[scale=.34]
      \node [filled vertex] (s) at (0,3.8) {};
      \node [filled vertex] (a) at (-5, 0) {};
      \node [filled vertex] (a2) at (-3, 0) {};
      \node [filled vertex] (bi) at (0, 0) {};
      \node [filled vertex] (b2) at (3, 0) {};
      \node [filled vertex] (b) at (5, 0) {};
      \node [filled vertex] (c) at (0,2.2) {};
      \draw[] (a) -- (a2) (b2) -- (b);
      \draw[] (bi) -- (c) -- (s);
      \draw[] (a2) -- (c) -- (b2);
      \draw[dashed] (a2) -- (b2);
    \end{tikzpicture}
    \caption{\dag}\label{fig:dag}
  \end{subfigure}
  $\,$
  \begin{subfigure}[b]{0.23\linewidth}
    \centering
    \begin{tikzpicture}[scale=.34]
      \node [filled vertex] (s) at (0,3.8) {};
      \node [filled vertex] (a) at (-5, 0) {};
      \node [filled vertex] (a2) at (-3, 0) {};
      \node [filled vertex] (bi) at (0, 0) {};
      \node [filled vertex] (b2) at (3, 0) {};
      \node [filled vertex] (b) at (5, 0) {};
      \node [filled vertex] (c1) at (-1,2.2) {};
      \node [filled vertex] (c2) at (1,2.2) {};
      \draw[] (a) -- (a2) (b2) -- (b);
      \draw[] (bi) -- (c1) -- (s) -- (c2) -- (bi);
      \draw[] (a) -- (c1) -- (c2) -- (b);
      \draw[] (a2) -- (c1) -- (b2) -- (c2) -- (a2);
      \draw[dashed] (a2) -- (b2);
    \end{tikzpicture}
    \caption{\ddag}\label{fig:ddag}
  \end{subfigure}
  \caption{Forbidden induced graphs. }
  \label{fig:small-graphs}
\end{figure}
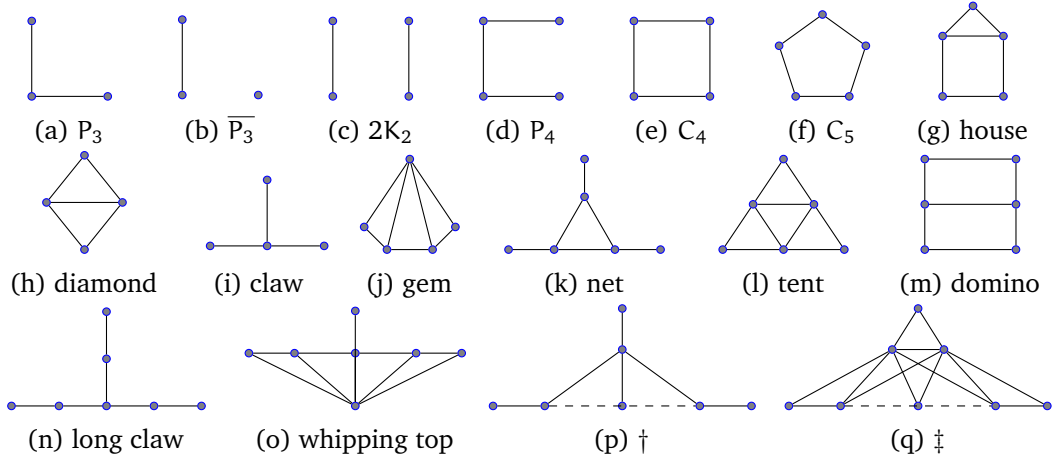

\begin{table}[h]
  \caption{Graph classes studied in this paper, their forbidden induced subgraphs and enumeration complexity.  In this table, $\ell \ge 4$.}
  \label{fig:classes-containment}
  \centering
  \begin{tabular}{l l | l}
    \toprule
    Graph class & forbidden induced subgraphs & enumeration
    \\
    \midrule
    {edgeless} & $P_2$
    \\
    {cluster} & $P_3$
    \\
    {complete $p$-partite} & $\overline{P_3}, K_{p + 1}$
    \\
    {maximum degree-$d$} & all $(d + 2)$-vertex graphs & \textsc{cks} property
    \\
    & with a universal vertex
    \\
    {threshold} & $2 K_2, C_4, P_4$
    \\
    {split} & $2 K_2, C_4, C_5$
    \\
    {complete split} & $\overline{P_3}, C_4$
    \\
    {pseudo-split} & $2 K_2, C_4$
    \\
    \midrule
    {acyclic} & $C_3, C_\ell$ & 
    \\
    {chordal} & $C_\ell$ & poly-delay
    \\
    {interval} & $C_\ell$, caws
    \\
    \midrule
    {unit interval} & $C_\ell$, claw, net, tent
    \\
    {block} & $C_\ell$, diamond & incremental poly 
    \\
    3-leaf power & $C_\ell$, bull, dart, gem \cite{dom-06-leaf-power}
    \\
    4-leaf power & $C_\ell$, $K_5-e$, tent, and other six \cite{brandstadt-08-4-leaf-powers}
    \\
    \bottomrule 
  \end{tabular}
\end{table}

\subsection{A direct proof of Theorem~\ref{thm:incp=totalp}}

\begin{figure}[h!]
  \tikz\path (0,0) node[draw=gray!50, text width=.9\textwidth, rectangle, rounded corners, inner xsep=20pt, inner ysep=10pt]{
    \begin{minipage}[t!]{\textwidth} \small
      Procedure $\textsc{next}(G, {\cal S}, A)$

      {\sc Input}: A graph $G$ with vertices $v_1$, $\ldots$, $v_n$, a collection $\cal S$ of solutions, and
      \\ \phantom{\sc Input:}
      a $p(n, N)$-time algorithm $A$ for the maximal induced $\mathcal{P}$ subgraphs problem.
      \\
      {\sc Output}: A solution not in $\cal S$, or ``completed'' if there is no further solution.

      \begin{tabbing}
        AAa\=AAa\=AAa\=AAa\=MMMMMMMMMMMMMAAAAAAAAAAAAAAAAAAAAAAAAA\=A \kill
        1.\> apply algorithm $A$ to $G$, aborted after $p(n, |{\cal S}| + 1)$ steps;
        \\
        2. \> {\bf if} it finishes {\bf then}
        \\
        2.1. \>\> {\bf if} all solutions found are in $\cal S$ {\bf then return} ``completed'';
        \\
        2.2. \>\> {\bf else return} a solution not in $\cal S$;
        \\
        3. \> $G_0 \leftarrow G$; 
        \\
        4. \> {\bf for each} $i\leftarrow 1, \ldots, n$ {\bf do}
        \\
        4.1. \>\> apply algorithm $A$ to $G_{i-1} - v_{i}$, aborted after $p(n, |{\cal S}| + 1)$ steps;
        \\
        4.2. \>\> {\bf if} it finishes {\bf then}
        \\
        4.2.1. \>\>\> {\bf if} a maximal $\mathcal{P}$ set $S'$ of $G_{i-1} - v_{i}$ is not a subset of any solution in $\cal S$ {\bf then}
        \\
        \>\>\>\> extend $S'$ to a solution $S$ of $G$ and {\bf return} $S$;
        \\
        4.2.2. \>\>\> $G_{i} \leftarrow G_{i-1}$;
        \\
        4.3. \>\> {\bf else} $G_{i} \leftarrow G_{i-1} - v_{i}$;
       \\
       5. \> apply algorithm $A$ to $G_n$; \qquad\qquad \comment{This time wait for it to finish.}
       \\
       6. \> find a maximal $\cal P$ set $S'$ of $G_n$ that is not a subset of any solution in $\cal S$;
       \\
       7. \> extend $S'$ to a solution $S$ of $G$ and {\bf return} $S$.
      \end{tabbing}  

    \end{minipage}
  };
\caption{The procedure for finding the next solution of the maximal induced $\mathcal{P}$ subgraphs problem.}
\label{fig:alg-incremental}
\end{figure}

We give here a direct proof of Theorem~\ref{thm:incp=totalp}.  The original proof of Bioch and Ibaraki~\cite{bioch-95-identification-dualization-positive-boolean-functions} was through.

\paragraph{Theorem~\ref{thm:incp=totalp} (restated).}
  For any hereditary graph class $\mathcal{P}$, the maximal induced $\mathcal{P}$ subgraphs problem can be solved in polynomial total time if and only if it can be solved in incremental polynomial time.
\begin{proof}
  The if direction is trivially true, and we now show the only if direction.
  Let $G$ be the input graph, and by a solution we mean  a maximal $\mathcal{P}$ set of $G$.
  Suppose that algorithm $A$ solves the maximal induced $\mathcal{P}$ subgraphs problem with at most $p(n, N)$ steps for some polynomial function $p$.  By Proposition~\ref{lem:enumeration-recognition}, there is a polynomial function $q$ such that we can decide in $q(n)$ time whether a graph on $n$ vertices is in $\mathcal{P}$.
We can thus use Proposition~\ref{lem:extension} to extend any $\mathcal{P}$ set of $G$ to a maximal $\mathcal{P}$ set in time $n q(n)$.
We repetitively call the procedure \textsc{next} described in Figure~\ref{fig:alg-incremental}, which makes calls, abortive or not, to $A$ to find the next solution of $G$, until the procedure returns ``completed.'' 

For the correctness of this procedure, we show that whatever the results of the calls to $A$, procedure \textsc{next} always returns a correct answer: a new solution of $G$ if $|{\cal S}| < N$ or ``completed'' otherwise.  If the call of $A$ in step~1 finishes, which means that it has found all solutions of $G$ , then it is clear that step 2 returns the correct answer.  This must happen when $|{\cal S}| = N$.
Therefore, when the procedure proceeds to step~4, there are more than $|\mathcal{S}|+1$ solutions of $G_0$, i.e., $G$.
Once a solution is returned by step~4.2.1, its correctness is ensured by Proposition~\ref{lem:extension}.
Otherwise, the procedure continues to step~5, and we show that in this case, there are more than $|\mathcal{S}|+1$ solutions of $G_i$ for all $i = 0, \ldots, n$.   Suppose that $p$ is the smallest number such $G_p$ does not have that number of solutions, then $G_p \ne G_{p-1}$, i.e., $G_p = G_{p-1} - v_p$, but this cannot happen because the procedure should have entered step~4.2 in the $p$th iteration.
  The call made in step~5 is guaranteed to find all the solutions of $G_n$.  By Proposition~\ref{lem:subgraph-cardinality}(ii), at least one solution of $G_n$ is not contained in any solution in $\cal S$.  This justifies step~6 and then step~7 always returns a correct solution, again, by Proposition~\ref{lem:extension}.

We now calculate the running time of the procedure, for which the focus is on step~5, because this is the only call of $A$ that is never aborted.  
Let $v_i$ be a vertex in $G_n$; note that the call of $A$ made in step~4.1 of the $i$th iteration on $G_{i - 1} - v_i$ (step~4.1) has finished.
If $G_{i - 1} - v_i$, an induced subgraph of $G$, has more than $|\cal S|$ maximal $\mathcal{P}$ sets, then by Proposition~\ref{lem:subgraph-cardinality}(ii), at least one of them is not a subset of any solution in $\mathcal{S}$, and the condition of step~4.2.1 is true, whereupon the procedure should have terminated before reaching step~5.
Therefore, the subgraph $G_{i - 1} - v_i$ has at most $|\cal S|$ maximal $\mathcal{P}$ sets.
By Proposition~\ref{lem:subgraph-cardinality}(i), $G_n - v_i$ has at most $|\cal S|$ maximal $\mathcal{P}$ sets because it is an induced subgraph of $G_{i - 1} - v_i$.
The fact that $G_n$ has more than $|\mathcal{S}| + 1$ maximal $\mathcal{P}$ sets, as we have seen above, implies that $G_n$ itself is not in $\mathcal{P}$, and hence each solution of $G_n$ is a proper subset of $V(G_n)$, which is hence also a solution of $G_n - v$ for some $v\in V(G_n)$.  Thus, the total number of solutions of $G_n$ is at most
$|V(G_n)| \cdot |{\cal S}|\le n |{\cal S}|$, which means that step~5 takes time $p(n, n |{\cal S}|)$.
Steps~1--3 take $p(n, |{\cal S}| + 1)$, $O(n^2)$, and $O(1)$ time respectively.  Step~4 takes $n \cdot ( p(n, |{\cal S}| + 1) + n^3\cdot q(n))$ time.  Steps~6 and 7 take $O(n^2\cdot |{\cal S}| + n\cdot q(n))$ time.  Putting them together, we can conclude that the running time of procedure \textsc{next} is polynomial on $n$ and $|\cal S|$, and  this completes the proof.
\end{proof}
\end{document}